\numberwithin{equation}{section}
\def\beq{\begin{equation}}
\def\eeq{\end{equation}}
\def\bit{\begin{itemize}}
	\def\eit{\end{itemize}}
\def\eqalign#1{\null\vcenter{\def\\{\cr}\openup\jot\m@th
		\ialign{\strut$\displaystyle{##}$\hfil&$\displaystyle{{}##}$\hfil
			\crcr#1\crcr}}\,}
\newcommand{\R}{{\mathbb R}}
\newcommand{\C}{{\mathbb C}}
\newcommand{\al}{\alpha}
\newcommand{\be}{\beta}
\newcommand{\ga}{\gamma}
\newcommand{\Ga}{\Gamma}
\newcommand{\La}{\Lambda}
\newcommand{\ep}{\varepsilon}
\newcommand{\de}{\delta}
\newcommand{\ka}{\kappa}
\newcommand{\om}{\omega}
\newcommand{\Om}{\Omega}
\newcommand{\di}{\displaystyle}
\newcommand{\ovl}{\overline}
\newcommand{\ds}{\displaystyle}
\def\bigO{{\cal O}}
\newenvironment{proof}%
{\rm \trivlist \item[\hskip \labelsep{\bf Proof. }]}%
{\hspace*{\fill}$\Box$\endtrivlist}
\begin{document}
	\tikzset{middlearrow/.style={
			decoration={markings,
				mark= at position 0.6 with {\arrow{#1}} ,
			},
			postaction={decorate}
		}
	}
	
	\newtheorem{theorem}{Theorem}
	\newtheorem{acknowledgement}[theorem]{Acknowledgement}
	\newtheorem{remark}[theorem]{Remark}
	\newtheorem{lemma}[theorem]{Lemma}
	\newtheorem{proposition}[theorem]{Proposition}
	\newtheorem{corollary}[theorem]{Corollary}
	\numberwithin{equation}{section}
	\numberwithin{theorem}{section}
	
	% To do the Cauchy principal value integral with \dashint
	\def\Xint#1{\mathchoice
		{\XXint\displaystyle\textstyle{#1}}%
		{\XXint\textstyle\scriptstyle{#1}}%
		{\XXint\scriptstyle\scriptscriptstyle{#1}}%
		{\XXint\scriptscriptstyle\scriptscriptstyle{#1}}%
		\!\int}
	\def\XXint#1#2#3{{\setbox0=\hbox{$#1{#2#3}{\int}$ }
			\vcenter{\hbox{$#2#3$ }}\kern-.59\wd0}}
	\def\ddashint{\Xint=}
	\def\dashint{\Xint-}
	% end of commands for Cauchy principal value integral
	
	%\title{Piecewise constant thinning of one-cut regular ensembles of random Hermitian matrices}
	\title{Asymptotics of Hankel determinants with a Laguerre-type or Jacobi-type potential and Fisher-Hartwig singularities}
	\author{Christophe Charlier\footnote{Department of Mathematics, KTH Royal Institute of Technology, Lindstedtsv\"{a}gen 25, SE-114 28 Stockholm, Sweden. e-mail: cchar@kth.se}, Roozbeh Gharakhloo\footnote{Department of Mathematics, Colorado State University, 1874 Campus Delivery, Fort Collins, CO 80523-1874, USA. e-mail: roozbeh.gharakhloo@colostate.edu}}
	
	\maketitle
	
	\tikzset{->-/.style={decoration={
				markings,
				mark=at position #1 with {\arrow{latex}}},postaction={decorate}}}
	
	\tikzset{-<-/.style={decoration={
				markings,
				mark=at position #1 with {\arrowreversed{latex}}},postaction={decorate}}}

\begin{abstract}
We obtain large $n$ asymptotics of $n \times n$ Hankel determinants whose weight has a one-cut regular potential and Fisher-Hartwig singularities. We restrict our attention to the case where the associated equilibrium measure possesses either one soft edge and one hard edge (Laguerre-type) or two hard edges (Jacobi-type). We also present some applications in the theory of random matrices. In particular, we can deduce from our results asymptotics for partition functions with singularities, central limit theorems, correlations of the characteristic polynomials, and gap probabilities for (piecewise constant) thinned Laguerre and Jacobi-type ensembles. Finally, we mention some links with the topics of rigidity and Gaussian multiplicative chaos.
\end{abstract}
\section{Introduction}
Hankel determinants with Fisher-Hartwig (FH) singularities appear naturally in random matrix theory. Among others, they can express correlations of the characteristic polynomial of a random matrix, or gap probabilities in the point process of the thinned spectrum, see e.g. the introductions of \cite{Krasovsky,DIK,Charlier} for more details. In these applications, the size $n$ of an $n \times n$ Hankel determinant is equal to the size of the underlying $n \times n$ random matrices. Large $n$ asymptotics for such determinants have already been widely studied, see e.g. \cite{Krasovsky, ItsKrasovsky, DeiftItsKrasovsky, BerWebbWong, Charlier}. Recent developments in the theory of Gaussian multiplicative chaos \cite{Webb,BerWebbWong} and in the study of the global rigidity of random matrix eigenvalues \cite{ClaeysFahsLambertWebb} provide a renewed interest in these asymptotics. 

\medskip In the present work, we focus our attention on the large $n$ asymptotics of the Hankel determinant
\begin{equation}\label{Hankel introduction}
\det \left( \int_{\mathcal{I}} x^{j+k}w(x)dx \right)_{j,k=0,\ldots,n-1},
\end{equation}
whose weight $w$ is supported on an interval $\mathcal{I}\subset \mathbb{R}$, and is of the form
\begin{equation}\label{weight introduction}
w(x) = e^{-nV(x)}e^{W(x)}\omega(x).
\end{equation}
The function $W$ is continuous on $\mathcal{I}$ and $\omega$ contains the FH singularities. In case $\mathcal{I}$ is unbounded, we also require that $W(x) = \bigO(V(x))$ as $x \to \pm \infty, x \in \mathcal{I}$. The functions $W$ and $\omega$ will be described in more detail below. The potential $V$ is real analytic on $\mathcal{I}$ and, in case $\mathcal{I}$ is unbounded, satisfies $\lim_{x \to \pm \infty, x \in \mathcal{I}}V(x)/\log|x| = + \infty$. Furthermore, we assume that $V$ is one-cut and regular. These properties are described in terms of the equilibrium measure $\mu_{V}$, which is the unique minimizer of the functional
\begin{equation}
\iint \log |x-y|^{-1} d\mu(x)d\mu(y) + \int V(x)d\mu(x)
\end{equation}
among all Borel probability measures $\mu$ on $\mathcal{I}$. One-cut means that the support of $\mu_{V}$ consists of a single interval. For convenience, and without loss of generality, we will assume that this interval is $[-1,1]$. It is known (see e.g. \cite{SaTo}) that $\mu_{V}$ is completely characterized by the Euler-Lagrange variational conditions
\begin{align}
2 \int_{-1}^{1} \log |x-s| d\mu_{V}(s) = V(x) - \ell, & & \mbox{ for } x \in [-1,1], \label{var equality} \\
2 \int_{-1}^{1} \log |x-s| d\mu_{V}(s) \leq V(x) - \ell, & & \mbox{ for } x \in \mathcal{I}\setminus [-1,1], \label{var inequality}
\end{align} 
where $\ell \in \mathbb{R}$ is a constant.
Regular means that the Euler-Lagrange inequality \eqref{var inequality} is strict on $\mathcal{I}\setminus [-1,1]$, and that the density of the equilibrium measure is positive on $(-1,1)$. The three canonical cases are the following:
\begin{enumerate}
\setlength{\itemindent}{0cm}
\item $\mathcal{I} = \mathbb{R}$ and $d\mu_{V}(x) = \psi(x)\sqrt{1-x^{2}}dx$, 
\item \vspace{-0.2cm} $\mathcal{I}= [-1,\infty)$ and $d\mu_{V}(x) = \psi(x)\sqrt{\frac{1-x}{1+x}}dx$,
\item \vspace{-0.2cm} $\mathcal{I} = [-1,1]$ and $d\mu_{V}(x) = \psi(x)\frac{1}{\sqrt{1-x^{2}}}dx$,
\end{enumerate}
where $\psi$ is real analytic on $\mathcal{I}$, such that $\psi(x) > 0$ for all $x \in [-1,1]$. We will refer to these three cases as Gaussian-type, Laguerre-type and Jacobi-type weights, respectively. Note that \eqref{var inequality} is automatically satisfied for Jacobi-type weights, since $\mathcal{I} = [-1,1]$. Well-known examples for potentials of such weights are 
\begin{enumerate}
\item $V(x) = 2x^{2}$ for Gaussian-type weights, with $\ell = 1+2\log 2$ and $\psi(x) = \frac{2}{\pi}$,
\item $V(x) = 2(x+1)$ for Laguerre-type weights, with $\ell = 2+2\log 2$ and $\psi(x) = \frac{1}{\pi}$,
\item $V(x) = 0$ for Jacobi-type weights, with $\ell = 2\log 2$ and $\psi(x) = \frac{1}{\pi}$.
\end{enumerate}
In the language of random matrix theory, the interval $(-1,1)$ is called the bulk, and $\pm 1$ are the edges. An edge is said to be ``soft" if there can be eigenvalues beyond it, and ``hard" if this is impossible. On the level of the equilibrium measure, a soft edge translates into a square root vanishing of $\frac{d\mu_{V}}{dx}$, while a hard edge means that $\frac{d\mu_{V}}{dx}$ blows up like an inverse square root. Thus, there are two soft edges at $\pm 1$ for Gaussian-type weights, one hard edge at $-1$ and one soft edge at $1$ for Laguerre-type weights, and two hard edges at $\pm 1$ for Jacobi-type weights.

\medskip The function $\omega$ that appears in \eqref{weight introduction} is defined by
\begin{equation}\label{weight FH}
\omega(x) = \prod_{j=1}^{m} \omega_{\alpha_{j}}(x)\omega_{\beta_{j}}(x) \times  \left\{ \begin{array}{c l}
\ds 1, & \mbox{for Gaussian-type weights}, \\[0.1cm]
\ds (x+1)^{\alpha_{0}}, & \mbox{for Laguerre-type weights}, \\[0.1cm]
\ds (x+1)^{\alpha_{0}}(1-x)^{\alpha_{m+1}}, & \mbox{for Jacobi-type weights},
\end{array}  \right.
\end{equation}
where
\begin{equation}\label{FH pieces}
\omega_{\alpha_{k}}(x) = |x-t_{k}|^{\alpha_{k}}, \qquad \omega_{\beta_{k}}(x) = \left\{ \begin{array}{l l}
e^{i\pi\beta_{k}}, & \mbox{ if } x < t_{k}, \\
e^{-i \pi \beta_{k}}, & \mbox{ if } x > t_{k},
\end{array}  \right. 
\end{equation}
with 
\begin{equation}
-1 < t_{1} < \ldots < t_{m} < 1.
\end{equation}
The functions $\omega_{\alpha_{k}}$ and $\omega_{\beta_{k}}$ represent the root-type and jump-type singularities at $t_{k}$, respectively. These singularities are named after Fisher and Hartwig, due to their pioneering work in their identification \cite{FisherHartwig}. Since $\omega_{\beta_{k}+1} = - \omega_{\beta_{k}}$, we can assume without loss of generality that $\Re \beta_{k} \in (-\frac{1}{2},\frac{1}{2}]$ for all $k$. Finally, to ensure integrability of the weight (at least for sufficiently large $n$), we require that $\Re \alpha_{k} > -1$ for all $k$ and, in case $\mathcal{I}$ is unbounded, that $W(x) = \bigO(V(x))$ as $x \to \pm \infty, x \in \mathcal{I}$.

\medskip To summarise, the $n \times n$ Hankel determinant given by \eqref{Hankel introduction} depends on $n$, $m$, $V$, $W$, $\vec{t} = (t_{1},\ldots,t_{m})$, $\vec{\beta} = (\beta_{1},\ldots,\beta_{m})$ and $\vec{\alpha}$, where
\begin{equation*}
\vec{\alpha} = \left\{ \begin{array}{l l}
(\alpha_{1},\ldots,\alpha_{m}), & \mbox{if $w$ is a Gaussian-type weight}, \\
(\alpha_{0},\alpha_{1},\ldots,\alpha_{m}), & \mbox{if $w$ is a Laguerre-type weight}, \\
(\alpha_{0},\alpha_{1},\ldots,\alpha_{m},\alpha_{m+1}), & \mbox{if $w$ is a Jacobi-type weight}.
\end{array} \right.
\end{equation*}
This determinant will be denoted by $G_{n}(\vec{\alpha},\vec{\beta},V,W)$, $L_{n}(\vec{\alpha},\vec{\beta},V,W)$ or $J_{n}(\vec{\alpha},\vec{\beta},V,W)$, depending on whether the weight is of Gaussian, Laguerre or Jacobi-type, respectively.

\medskip We mention that the original work \cite{FisherHartwig} of Fisher and Hartwig deals with Toeplitz determinants whose entries are the Fourier coefficients of a symbol defined on the unit circle. Asymptotics of large Toeplitz determinants with Fisher-Hartwig singularities have been conjectured in \cite{FisherHartwig}, and also by Lenard \cite{Lenard}, and then proved by Widom \cite{Widom}, Basor \cite{Basor, Basor2} and B\"ottcher and Silbermann \cite{BS} under certain restrictions on the parameters $\alpha_{j}$'s and $\beta_{j}$'s. The generalization of these results for general values of $\alpha_{j}$'s and $\beta_{j}$'s has been proved by Deift, Its and Krasovsky in \cite{DIK, DeiftItsKrasovsky}. %We mention that, in all the aforementioned works on Toeplitz determinants, the symbol is independent of $n$ (or depends mildly on $n$). An important extra difficulty that one faces in the study of the Hankel determinant \eqref{Hankel introduction} is that the weight $w$ depends heavily on $n$. %In fact the results of \cite{BorGui} are valid for general $\beta$ ensembles. In this work we consider only $\beta=2$.

\medskip Many authors have contributed over the years to the large $n$ asymptotics of $G_{n}(\vec{\alpha},\vec{\beta},V,W)$ in certain particular cases of the parameters $\vec{\alpha}$, $\vec{\beta}$, $V$ and $W$, and we briefly review these historical developments here. The first result in this direction is due to Johansson \cite{Johansson2},\footnote{In fact the works \cite{BorGui, FRW2017, Johansson2} deal with general $\beta$ ensembles. In this paper we restrict ourselves to $\beta = 2$.} who obtained large $n$ asymptotics of the ratio 
\begin{align*}
\frac{G_{n}(\vec{0},\vec{0},V,W)}{G_{n}(\vec{0},\vec{0},V,0)}.
\end{align*}
Shortly afterward, the existence of a full asymptotic expansion of $G_{n}(\vec{0},\vec{0},V,0)$ was proved in \cite{ErcMcL,BleIts}, in the case where the potential $V$ is a polynomial satisfying some mild regularity assumptions. It was later shown in \cite{ClaeysGravaMcLaughlin} that the assumptions on $V$ made in \cite{ErcMcL} always hold, so that the result of \cite{ErcMcL} actually holds for any one-cut polynomial $V$. A conjecture for the large $n$ asymptotics of the ratio
\begin{align*}
\frac{G_{n}(\vec{\alpha},\vec{0},V,W)}{G_{n + \frac{\mathcal{A}}{2}}(\vec{0},\vec{0},V,W+\frac{\mathcal{A}}{2}V)}, \qquad \mathcal{A}:= \sum_{j=1}^{m}\alpha_{j}
\end{align*}
was formulated by Forrester and Frankel in \cite[Conjecture 8]{ForFra} for general $m$, $V$, and $W$ (we mention that this conjecture, specialized to $m=1$ and $W=0$, was formulated in the earlier work of Br\'{e}zin and Hikami \cite{BrezinHikami}). The conjecture of \cite{ForFra}, specialized to 
\begin{align*}
\frac{\alpha_{1}}{2},\frac{\alpha_{2}}{2},\ldots,\frac{\alpha_{m}}{2} \in \mathbb{N}:=\{0,1,2,\ldots\}, \qquad V(x)=2x^{2}, \qquad \mbox{ and } \qquad W=0,
\end{align*}
was proved rigorously by Garoni in \cite{Garoni}. Simultaneously to \cite{Garoni}, the asymptotics of $G_{n}(\vec{\alpha},\vec{0},2x^{2},0)$ for general values of $\Re \alpha_{j}>-1$ were obtained by Krasovsky in \cite{Krasovsky}. This result was recently generalized by Beresticky, Webb and Wong in \cite{BerWebbWong}, who obtained large $n$ asymptotics for $G_{n}(\vec{\alpha},\vec{0},V,W)$ for general $V$ (not necessarily a polynomial), $W$ and $\vec{\alpha}$. In particular, they were able to prove \cite[Conjecture 8]{ForFra} in its full generality. In a different direction, Its and Krasovsky in \cite{ItsKrasovsky} obtained the asymptotics of $G_{n}(\vec{0},\vec{\beta},2x^{2},0)$, in the case where $m=1$. Finally, the asymptotics of $G_{n}(\vec{\alpha},\vec{\beta},V,W)$ for general values of the parameters were obtained in \cite{Charlier} (see also Theorem \ref{theorem G} below for the precise statement). For another review of these historical developments, see also the introduction of \cite{Charlier}. It is worth to mention that the asymptotics of \cite{Charlier} are only valid for $\Re \beta_{k} \in (-\frac{1}{4},\frac{1}{4})$ and not in the whole strip $\Re \beta_{k} \in (-\frac{1}{2},\frac{1}{2}]$. This is due to technical reasons, and we comment more on that in Remark \ref{remark real part beta} below.

\medskip Much less is known about the large $n$ asymptotics of $L_{n}(\vec{\alpha},\vec{\beta},V,W)$ and $J_{n}(\vec{\alpha},\vec{\beta},V,W)$, and we briefly discuss this here. 

\medskip The quantities $L_{n}((\alpha_{0},0,\ldots,0),\vec{0},V,0)$ and $J_{n}((\alpha_{0},0,\ldots,0,\alpha_{m+1}),\vec{0},V,0)$ (i.e. $W = 0$ and no other singularities than $\alpha_{0}$ and $\alpha_{m+1}$) represent partition functions of certain random matrix ensembles, see also Section \ref{subsection:Applications} below. In some very special cases of $V$ (like $V(x) = 2(x+1)$ for Laguerre-type weights and $V(x) = 0$ for Jacobi-type weights), these Hankel determinants reduce to Selberg integrals and are thus computable explicitly. The existence of expansions to all orders for $L_{n}(\vec{0},\vec{0},V,0)$ and $J_{n}(\vec{0},\vec{0},V,0)$ as $n \to + \infty$ for a general $V$ was proved in \cite{BorGui}. However, to the best of our knowledge, the explicit values of the coefficients appearing in these expansions, which are given by Theorem \ref{theorem L} and Theorem \ref{theorem J} below with $\alpha_{1}=\ldots=\alpha_{m}=0$, $\vec{\beta} = \vec{0}$ and $W = 0$, are already new results. % (even with $\alpha_{0}=\alpha_{m+1}=0$).

\medskip Forrester and Frankel, in \cite[Conjecture 9]{ForFra}, have formulated a conjecture for the asymptotics of the ratios 
\begin{align*}
\frac{L_{n}(\vec{\alpha}_{0},\vec{0},V,W)}{L_{n}(\vec{\alpha}_{0},\vec{0},V,0)} \quad \mbox{and} \quad \frac{L_{n}(\vec{\alpha},\vec{0},V,W)}{L_{n+\frac{\mathcal{A}_{0}}{2}}(\vec{\alpha}_{0},\vec{0},V,W+\frac{\mathcal{A}_{0}}{2}V)}, \qquad \mbox{where} \quad \mathcal{A}_{0}:= \sum_{j=1}^{m}\alpha_{j},
\end{align*}
and where $\vec{\alpha}_{0}:=(\alpha_{0},0,\ldots,0)$. This conjecture, specialized to
\begin{align}\label{ratio Laguerre conjecture}
\frac{\alpha_{1}}{2},\frac{\alpha_{2}}{2},\ldots,\frac{\alpha_{m}}{2} \in \mathbb{N}:=\{0,1,2,\ldots\}, \qquad V(x)=2(x+1), \qquad \mbox{ and } \qquad W=0,
\end{align}
was also verified rigorously by Garoni in \cite{Garoni}. More recently, the asymptotics of 
\begin{align*}
\frac{L_{n}(\vec{\alpha}_{0},\vec{0},2(x+1),W)}{L_{n}(\vec{\alpha}_{0},\vec{0},2(x+1),0)}
\end{align*}
have been obtained in \cite[Proposition 3.9]{FRW2017} using loop equations. To the best of our knowledge, there are no other results available in the literature (prior to this work) for Hankel determinants associated to a Laguerre-type weight with FH singularities in the bulk. 

\medskip There is more known about Jacobi-type weights. 
Asymptotics for $J_{n}((\alpha_{0},0,\ldots,0,\alpha_{m+1}),\vec{0},0,W)$ (i.e. root-type singularities only at the edges) were computed in \cite{KMcLVAV}, however without the constant term. Major progress was achieved in \cite{DIK, DeiftItsKrasovsky}, in which the authors derived large $n$ asymptotics for $J_{n}(\vec{\alpha},\vec{\beta},0,W)$ including the constant term (under weak assumptions on $W$, and for general values of $\vec{\beta}$ such that $\Re \beta_{k} \in (-\frac{1}{2},\frac{1}{2}]$).

\medskip The goal of the present paper is to fill a gap in the literature on the large $n$ asymptotics of Hankel determinants with a one-cut potential and FH singularities. In Theorem \ref{theorem L} and Theorem \ref{theorem J} below, we obtain large $n$ asymptotics for $L_{n}(\vec{\alpha},\vec{\beta},V,W)$ and $J_{n}(\vec{\alpha},\vec{\beta},V,W)$ including the constant term. First, we rewrite (in a slightly different way) the result of \cite{Charlier} in Theorem \ref{theorem G} for the reader's convenience, in order to ease the comparison between the three canonical types of weights. %We show in Remark \ref{Remark: ForFra} that Theorem \ref{theorem L} below, when specialized to $V=2(x+1)$, is consistent with \cite[Proposition 3.9]{FRW2017}. We also show that a small correction is needed to \cite[Conjecture 9]{ForFra} if $\alpha_{0} \neq 0$ and $W \neq 0$.  %As a corollary of our theorem and of the result in \cite{Charlier}, by using a correspondence between Hankel determinants and Toeplitz determinants found in \cite{DIK}, we also find large $n$ asymptotics up to the constant of Toeplitz determinants with a general one-cut potential, therefore generalizing the well-known result of Deift-Its-Krasovsky in \cite{DIK}.

\newpage \begin{theorem}[from \cite{Charlier} for Gaussian-type weights]\label{theorem G}\ \\
Let $m \in \mathbb{N}$, and let $t_{j}$, $\alpha_{j}$ and $\beta_{j}$ be such that
\begin{equation*}
-1 < t_{1} < \ldots < t_{m} < 1, \quad \mbox{ and } \quad \Re \alpha_{j} > -1, \quad \Re \beta_{j} \in (-\tfrac{1}{4},\tfrac{1}{4}) \quad \mbox{ for } j=1,\ldots,m.
\end{equation*} 
Let $V$ be a one-cut regular potential whose equilibrium measure is supported on $[-1,1]$ with density $\psi(x)\sqrt{1-x^{2}}$, and let $W: \mathbb{R}\to\mathbb{R}$ be analytic in a neighbourhood of $[-1,1]$, locally H\"{o}lder-continuous on $\mathbb{R}$ and such that $W(x) = \bigO(V(x)), \mbox{ as } |x| \to \infty$. As $n \to \infty$, we have
\begin{equation}\label{asymp thm Gn}
G_{n}(\vec{\alpha},\vec{\beta},V,W) = \exp\left(C_{1} n^{2} + C_{2} n + C_{3} \log n + C_{4} + \bigO \Big( \frac{\log n}{n^{1-4\beta_{\max}}} \Big)\right),
\end{equation}
with $\beta_{\max} = \max \{ |\Re \beta_{1}|,\ldots,|\Re \beta_{m}| \}$ and 
\begin{align}
& C_{1} = - \log 2 - \frac{3}{4} - \frac{1}{2} \int_{-1}^{1} (V(x)-2x^{2})\left( \frac{2}{\pi} + \psi(x) \right)\sqrt{1-x^{2}}dx, \\
& C_{2} = \log(2\pi) - \mathcal{A}\log 2 - \frac{\mathcal{A}}{2\pi}\int_{-1}^{1} \frac{V(x)}{\sqrt{1-x^{2}}}dx + \int_{-1}^{1}W(x)\psi(x)\sqrt{1-x^{2}}dx  \\
& \hspace{1cm} + \sum_{j=1}^{m} \frac{\alpha_{j}}{2}V(t_{j}) + \sum_{j=1}^{m} \pi i \beta_{j} \left( 1 - 2 \int_{t_{j}}^{1}\psi(x)\sqrt{1-x^{2}}dx \right) , \nonumber \\
& C_{3} = - \frac{1}{12} + \sum_{j=1}^{m} \bigg( \frac{\alpha_{j}^{2}}{4} - \beta_{j}^{2} \bigg), \\
& C_{4} = \zeta^{\prime}(-1) - \frac{1}{24}\log \left( \frac{\pi}{2}\psi(-1) \right) - \frac{1}{24}\log \left( \frac{\pi}{2}\psi(1) \right) + \sum_{j=1}^{m} \bigg(\frac{\alpha_{j}^{2}}{4}-\beta_{j}^{2}\bigg)\log\left(\frac{\pi}{2}\psi(t_{j})\right) \nonumber \\
& \hspace{1cm}  + \sum_{1\leq j < k \leq m} \Bigg[ \log  \Bigg(  \frac{\big(1-t_{j}t_{k}-\sqrt{(1-t_{\smash{j}}^{2})(1-t_{k}^{2})}\big)^{2\beta_{j}\beta{k}}}{2^{\frac{\alpha_{j}\alpha_{k}}{2}}|t_{j}-t_{k}|^{\frac{\alpha_{j}\alpha_{k}}{2} + 2\beta_{j}\beta_{k}}} \Bigg) + \frac{i\pi}{2} (\alpha_{k}\beta_{j}-\alpha_{j}\beta_{k})  \Bigg] \nonumber \\
& \hspace{1cm} + \sum_{j=1}^{m} \left( \frac{\alpha_{j}^{2}}{4} \log \big( 2 \sqrt{1-t_{\smash{j}}^{2}} \big) - \beta_{j}^{2} \log \Big( 8(1-t_{j}^{2})^{3/2} \Big) \right) + \mathcal{A}\sum_{j=1}^{m} i \beta_{j}\arcsin t_{j}  \nonumber \\
&  \hspace{1cm}+\sum_{j=1}^{m} \log \frac{G(1+\frac{\alpha_{j}}{2}+\beta_{j})G(1+\frac{\alpha_{j}}{2}-\beta_{j})}{G(1+\alpha_{j})}  \\
& \hspace{1cm}+ \frac{\mathcal{A}}{2\pi} \int_{-1}^{1} \frac{W(x)}{\sqrt{1-x^{2}}}dx - \sum_{j=1}^{m} \frac{\alpha_{j}}{2}W(t_{j}) +\sum_{j=1}^{m}  \frac{i\beta_{j}}{\pi} \sqrt{1-t_{\smash{j}}^{2}} \dashint_{-1}^{1} \frac{W(x)}{\sqrt{1-x^{2}}(t_{j}-x)}dx  \nonumber \\
& \hspace{1cm} + \frac{1}{4\pi^{2}}\int_{-1}^{1}  \frac{W(x)}{\sqrt{1-x^{2}}} \bigg(\dashint_{-1}^{1} \frac{W^{\prime}(y)\sqrt{1-y^{2}}}{x-y}dy \bigg) dx, \nonumber
\end{align}
where $G$ is Barnes' $G$-function, $\zeta$ is Riemann's zeta-function, where we use the notations $\dashint$ for the Cauchy principal value integral, and
\begin{equation}
\mathcal{A} = \sum_{j=1}^{m} \alpha_{j}.
\end{equation}
Furthermore, the error term in \eqref{asymp thm Gn} is uniform for all $\alpha_{k}$ in compact subsets of \\$\{ z \in \mathbb{C}: \Re z >-1 \}$, for all $\beta_{k}$ in compact subsets of  $\{ z \in \mathbb{C}: \Re z \in \big( \frac{-1}{4},\frac{1}{4} \big) \}$, and uniform in $t_{1},\ldots,t_{m}$, as long as there exists $\delta > 0$ independent of $n$ such that
\begin{equation}
\min_{j\neq k}\{ |t_{j}-t_{k}|,|t_{j}-1|,|t_{j}+1|\} \geq \delta.
\end{equation}
\end{theorem}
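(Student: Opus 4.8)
The plan is to derive the asymptotics of $G_{n}(\vec{\alpha},\vec{\beta},V,W)$ via a Riemann--Hilbert (RH) analysis of the orthogonal polynomials with respect to the weight $w(x) = e^{-nV(x)}e^{W(x)}\omega(x)$, following the now-standard Deift--Zhou steepest descent philosophy. Since the Hankel determinant equals $\prod_{k=0}^{n-1}h_{k}$, where $h_{k}$ is the squared norm of the degree-$k$ monic orthogonal polynomial, and since $h_{k}$ (as well as the recurrence coefficients) is read off from the $2\times 2$ solution $Y$ of the RH problem, the task reduces to producing uniform asymptotics for $Y$ and then integrating a differential identity in $n$. The main point is that essentially all of this has already been carried out in \cite{Charlier}; what remains for Theorem \ref{theorem G} is merely a \emph{reorganization} of that output so that $C_{1},\dots,C_{4}$ are displayed in a form parallel to the Laguerre- and Jacobi-type statements. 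So the proof proposal is: first I would recall the RH characterization of $(h_k)$ and state the differential identity (in $n$, or in a deformation parameter turning the FH singularities on) used in \cite{Charlier}; then I would invoke the uniform asymptotics of $Y$ obtained there through the sequence of transformations $Y \mapsto T \mapsto S \mapsto R$, with global parametrix built from a Szeg\H{o}-type function incorporating $e^{W}$ and $\omega$, Airy parametrices at the two soft edges $\pm 1$, and confluent-hypergeometric parametrices at each $t_j$.

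Concretely, the steps in order: (1) Normalization at infinity via the $g$-function $g(z)=\int_{-1}^{1}\log(z-s)\,d\mu_{V}(s)$, using the Euler--Lagrange conditions \eqref{var equality}--\eqref{var inequality} to make the jumps exponentially small off $[-1,1]$ and oscillatory on $(-1,1)$; here the regularity of $V$ guarantees the strict inequality in \eqref{var inequality} and positivity of $\psi$ on $[-1,1]$. (2) Opening of lenses around $(-1,1)$. (3) Construction of the global parametrix on $\mathbb{C}\setminus[-1,1]$, whose scalar part is the Szeg\H{o} function $D(z)$ associated to $e^{W}\omega$; this is where the $\alpha_j,\beta_j$ data and the $W$-dependent Cauchy transforms enter, and the Barnes $G$-function prefactors ultimately emerge from matching the local parametrices. (4) Airy parametrices in fixed disks around $\pm 1$ (soft edges, hence the $-\tfrac{1}{24}\log(\tfrac{\pi}{2}\psi(\pm1))$ and $\zeta'(-1)$ contributions) and confluent-hypergeometric (Kummer) parametrices in fixed disks around each $t_j$ (producing the $\frac{\alpha_j^2}{4}-\beta_j^2$ terms, the $\log G$ ratios, and the pairwise interaction terms in $C_4$). (5) Small-norm analysis of $R = $ (error), giving $R = I + \bigO(n^{-1+4\beta_{\max}})$ uniformly in the stated parameter ranges — this is exactly where $\Re\beta_k\in(-\tfrac14,\tfrac14)$ is needed, so that the jump $v_R-I$ on the circles around $t_j$ is $o(1)$. (6) Substituting the asymptotics of $Y$ into the differential identity and integrating, first in the FH parameters from $\vec\alpha=\vec\beta=\vec 0$ (reducing to the known $G_n(\vec 0,\vec 0,V,W)$ via \cite{Johansson2,ErcMcL,BleIts,ClaeysGravaMcLaughlin}) and then assembling $C_1,\dots,C_4$.

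The honest statement, though, is that in this paper the proof of Theorem \ref{theorem G} is a citation plus bookkeeping: the analytic content is \cite{Charlier}, and here one only needs to check that the constants $C_1,C_2,C_3,C_4$ written above are algebraically equal to the constants in \cite{Charlier}. The main obstacle is therefore not analysis but the verification of these identities — in particular, rewriting the $V$-dependent double integrals in a symmetric $\frac{2}{\pi}+\psi$ form in $C_1$, and checking that the $W$-dependent single and double principal-value integrals, the $\arcsin t_j$ terms, and the $\log$ of the $\big(1-t_jt_k-\sqrt{(1-t_j^2)(1-t_k^2)}\big)$ factors match after using the explicit form of the Szeg\H{o} function and the relation between $g$, the equilibrium density $\psi(x)\sqrt{1-x^2}$, and the Joukowski map $z\mapsto \tfrac12(z+z^{-1})$. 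These are routine but lengthy manipulations with elliptic-type integrals and Cauchy transforms; once they are done, the theorem follows. For the companion Laguerre- and Jacobi-type results one would redo steps (1)--(6) with the Airy parametrix at the hard edge(s) replaced by a Bessel parametrix (introducing the $\alpha_0$- and $\alpha_{m+1}$-dependent $\log G(1+\alpha_0)$-type terms and the $\tfrac{1}{24}\log(\tfrac{\pi}{2}\psi(\pm1))$ corrections characteristic of hard edges), which is the genuinely new work of the present paper.
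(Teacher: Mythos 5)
Your proposal is correct and matches the paper exactly: Theorem \ref{theorem G} is not proved here but imported (modulo a rewriting of the constants) from \cite{Charlier}, whose proof follows precisely the differential-identity-plus-Riemann--Hilbert scheme you describe, and which also serves as the template for the new Laguerre- and Jacobi-type results. The only quibble is your attribution of the restriction $\Re\beta_{k}\in(-\tfrac14,\tfrac14)$ to the small-norm step (the jumps on $\partial\mathcal{D}_{t_{k}}$ are already $o(1)$ whenever $|\Re\beta_{k}|<\tfrac12$); it is really needed so that the $\bigO(n^{-1+4\beta_{\max}})$ errors accumulated in the successive integrations tend to zero, as Remark \ref{remark real part beta} indicates.
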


\newpage
\begin{theorem}[for Laguerre-type weights]\label{theorem L}\ \\
Let $m \in \mathbb{N}$, and let $t_{j}$, $\alpha_{j}$ and $\beta_{j}$ be such that
\begin{equation*}
-1 = t_{0} < t_{1} < \ldots < t_{m} < 1, \quad \mbox{ and } \quad \Re \alpha_{j} > -1, \quad \Re \beta_{j} \in (-\tfrac{1}{4},\tfrac{1}{4}) \quad \mbox{ for } j=0,\ldots,m,
\end{equation*} 
with $\beta_{0} = 0$. Let $V$ be a one-cut regular potential whose equilibrium measure is supported on $[-1,1]$ with density $\psi(x)\sqrt{\frac{1-x}{1+x}}$, and let $W: [-1,+\infty)\to\mathbb{R}$ be analytic in a neighbourhood of $[-1,1]$, locally H\"{o}lder-continuous on $\mathbb{R}^{+}$ and such that $W(x) = \bigO(V(x)), \mbox{ as } x \to +\infty$. As $n \to \infty$, we have
\begin{equation}\label{asymp thm Ln}
L_{n}(\vec{\alpha},\vec{\beta},V,W) = \exp\left(C_{1} n^{2} + C_{2} n + C_{3} \log n + C_{4} + \bigO \Big( \frac{\log n}{n^{1-4\beta_{\max}}} \Big)\right),
\end{equation}
with $\beta_{\max} = \max \{ |\Re \beta_{1}|,\ldots,|\Re \beta_{m}| \}$ and 
\begin{align}
& C_{1} = - \log 2 - \frac{3}{2} - \frac{1}{2} \int_{-1}^{1} (V(x)-2(x+1))\left( \frac{1}{\pi} + \psi(x) \right)\sqrt{\frac{1-x}{1+x}}dx, \\
& C_{2} = \log(2\pi) - \mathcal{A}\log 2 - \frac{\mathcal{A}}{2\pi}\int_{-1}^{1} \frac{V(x)}{\sqrt{1-x^{2}}}dx + \int_{-1}^{1}W(x)\psi(x)\sqrt{\frac{1-x}{1+x}}dx  \\
& \hspace{1cm} + \sum_{j=0}^{m} \frac{\alpha_{j}}{2}V(t_{j}) + \sum_{j=1}^{m} \pi i \beta_{j} \left( 1 - 2 \int_{t_{j}}^{1}\psi(x)\sqrt{\frac{1-x}{1+x}}dx \right) , \nonumber \\
& C_{3} = - \frac{1}{6} + \frac{\alpha_{0}^{2}}{2} + \sum_{j=1}^{m} \bigg( \frac{\alpha_{j}^{2}}{4} - \beta_{j}^{2} \bigg), \\
& C_{4} = 2\zeta^{\prime}(-1) - \frac{1-4\alpha_{0}^{2}}{8}\log \left( \pi\psi(-1) \right) - \frac{1}{24}\log \left( \pi\psi(1) \right) + \sum_{j=1}^{m} \bigg(\frac{\alpha_{j}^{2}}{4}-\beta_{j}^{2}\bigg)\log\left(\pi\psi(t_{j})\right) \nonumber \\
& \hspace{0.8cm} + \frac{\alpha_{0}}{2}\log(2\pi) + \sum_{0\leq j < k \leq m} \Bigg[ \log  \Bigg(  \frac{\big(1-t_{j}t_{k}-\sqrt{(1-t_{\smash{j}}^{2})(1-t_{k}^{2})}\big)^{2\beta_{j}\beta{k}}}{2^{\frac{\alpha_{j}\alpha_{k}}{2}}|t_{j}-t_{k}|^{\frac{\alpha_{j}\alpha_{k}}{2} + 2\beta_{j}\beta_{k}}} \Bigg) + \frac{i\pi}{2} (\alpha_{k}\beta_{j}-\alpha_{j}\beta_{k})  \Bigg] \nonumber \\
& \hspace{0.8cm} + \sum_{j=1}^{m} \left( \frac{\alpha_{j}^{2}}{4} \log \sqrt{\frac{1-t_{\smash{j}}}{1+t_{\smash{j}}}} - \beta_{j}^{2} \log \Big( 4(1-t_{j})^{3/2}(1+t_{j})^{1/2} \Big) \right) + \mathcal{A}\sum_{j=1}^{m} i \beta_{j}\arcsin t_{j}  \nonumber \\
&  \hspace{0.8cm}- \log G (1+\alpha_{0})+\sum_{j=1}^{m} \log \frac{G(1+\frac{\alpha_{j}}{2}+\beta_{j})G(1+\frac{\alpha_{j}}{2}-\beta_{j})}{G(1+\alpha_{j})}   \\
& \hspace{0.8cm}+ \frac{\mathcal{A}}{2\pi} \int_{-1}^{1} \frac{W(x)}{\sqrt{1-x^{2}}}dx - \sum_{j=0}^{m} \frac{\alpha_{j}}{2}W(t_{j}) +\sum_{j=1}^{m}  \frac{i\beta_{j}}{\pi} \sqrt{1-t_{\smash{j}}^{2}} \dashint_{-1}^{1} \frac{W(x)}{\sqrt{1-x^{2}}(t_{j}-x)}dx  \nonumber \\
& \hspace{0.8cm} + \frac{1}{4\pi^{2}}\int_{-1}^{1}  \frac{W(x)}{\sqrt{1-x^{2}}} \bigg(\dashint_{-1}^{1} \frac{W^{\prime}(y)\sqrt{1-y^{2}}}{x-y}dy \bigg) dx, \nonumber
\end{align}
where $G$ is Barnes' $G$-function, $\zeta$ is Riemann's zeta-function, where we use the notations $\dashint$ for the Cauchy principal value integral, and
\begin{equation}
\mathcal{A} = \sum_{j=0}^{m} \alpha_{j}.
\end{equation}
Furthermore, the error term in \eqref{asymp thm Ln} is uniform for all $\alpha_{k}$ in compact subsets of \\$\{ z \in \mathbb{C}: \Re z >-1 \}$, for all $\beta_{k}$ in compact subsets of  $\{ z \in \mathbb{C}: \Re z \in \big( \frac{-1}{4},\frac{1}{4} \big) \}$, and uniform in $t_{1},\ldots,t_{m}$, as long as there exists $\delta > 0$ independent of $n$ such that
\begin{equation}
\min_{j\neq k}\{ |t_{j}-t_{k}|,|t_{j}-1|,|t_{j}+1|\} \geq \delta.
\end{equation}
\end{theorem}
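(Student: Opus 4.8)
The plan is to combine a Deift--Zhou steepest descent analysis of the Riemann--Hilbert (RH) problem for the orthogonal polynomials attached to the weight $w(x)=e^{-nV(x)}e^{W(x)}\omega(x)$ on $[-1,\infty)$ with a chain of differential identities connecting $L_{n}(\vec\alpha,\vec\beta,V,W)$ to an explicitly computable reference determinant. Writing $L_{n}=\prod_{k=0}^{n-1}h_{k}$, with $h_{k}$ the squared norm of the $k$-th monic orthogonal polynomial, one expresses $h_{n-1}$, $h_{n}$ and the recurrence coefficients through the $2\times2$ RH solution $Y$, and obtains identities of the form $\partial_{\nu}\log L_{n}=(\text{explicit functional of }Y)$ for $\nu$ ranging over $s$ (with $W$ replaced by $sW$), the bulk parameters $\alpha_{1},\dots,\alpha_{m},\beta_{1},\dots,\beta_{m}$, and a deformation parameter $\tau$ of the potential. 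The reference at the end of the chain is $L_{n}((\alpha_{0},0,\dots,0),\vec 0,V_{0},0)$ with $V_{0}(x)=2(x+1)$: by the shift invariance of Hankel determinants and a rescaling this equals the Laguerre--Selberg determinant $(2n)^{-n^{2}-n\alpha_{0}}\,G(n+1)\,G(n+1+\alpha_{0})/G(1+\alpha_{0})$, whose large $n$ asymptotics (Stirling together with the asymptotics of Barnes' $G$) give the base values $C_{1}=-\log2-\tfrac32$, $C_{2}=\log(2\pi)-\alpha_{0}\log2-\alpha_{0}$, $C_{3}=-\tfrac16+\tfrac{\alpha_{0}^{2}}{2}$, $C_{4}=2\zeta'(-1)+\tfrac{\alpha_{0}}{2}\log(2\pi)-\log G(1+\alpha_{0})$, which one checks are exactly the right-hand sides of \eqref{asymp thm Ln} when $V=V_{0}$, $W=0$ and $\alpha_{1}=\dots=\alpha_{m}=\beta_{1}=\dots=\beta_{m}=0$.

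\textbf{The chain of identities.} I would integrate successively: (i) $s$ from $1$ to $0$, removing $W$; (ii) $(\alpha_{j},\beta_{j})\to(0,0)$ for $j=1,\dots,m$, removing the bulk Fisher--Hartwig singularities one at a time; and (iii) $\tau$ from $1$ to $0$ along a smooth path $V_{\tau}$ of one-cut regular potentials with $V_{1}=V$ and $V_{0}(x)=2(x+1)$, all having equilibrium measure supported on $[-1,1]$ with a hard edge at $-1$ and a soft edge at $+1$; such a path may be built by interpolating the equilibrium densities linearly, $\psi_{\tau}=(1-\tau)\psi+\tau/\pi$ (which keeps total mass $1$ and stays positive and real analytic on $[-1,1]$), and reconstructing $V_{\tau}$ from $\mu_{\tau}=\psi_{\tau}\sqrt{(1-x)/(1+x)}\,dx$ via the variational construction. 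Throughout, $\alpha_{0}$ is left fixed, the hard edge at $-1$ being handled by a Bessel parametrix for all $\Re\alpha_{0}>-1$.

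\textbf{The steepest descent.} For every value of the parameters the analysis follows the standard scheme $Y\to T\to S\to R$: the $g$-function associated with $\mu_{V}$ (resp.\ $\mu_{\tau}$) and the Euler--Lagrange conditions \eqref{var equality}--\eqref{var inequality} normalise the problem at infinity and produce the $n^{2}$ and $n$ contributions, once the logarithmic energy of $\mu_{V}$ and the relevant integrals of $V$ are evaluated; the lenses are opened around $[-1,1]$; the global parametrix is built from a Szeg\H{o} function associated with $e^{W}\omega$, whose local expansions at $\pm1$ and at the $t_{j}$ yield the $W$-integrals (including the double principal value integral), the $\arcsin t_{j}$ terms, and the logarithms in $1\pm t_{j}$ and in $1-t_{j}t_{k}-\sqrt{(1-t_{j}^{2})(1-t_{k}^{2})}$; the local parametrices are a Bessel parametrix of order $\alpha_{0}$ at the hard edge $-1$, an Airy parametrix at the soft edge $+1$, and the confluent hypergeometric (Its--Krasovsky) parametrix with parameters $(\alpha_{j},\beta_{j})$ at each $t_{j}$; on the unbounded part of $\mathcal I$ the exponential smallness of the weight (using $W=\bigO(V)$) makes the contribution negligible, as in the classical Laguerre/Hermite analysis. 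Matching on the local circles gives the jump for $R$, and the small norm theorem yields an expansion $R=I+R_{1}/n+\dots$ with $R_{1}$ an explicit rational matrix with simple poles at $\pm1$ and the $t_{j}$, the error being controlled precisely enough to produce the error term in \eqref{asymp thm Ln}. The restriction $\Re\beta_{k}\in(-\tfrac14,\tfrac14)$ and the exponent $1-4\beta_{\max}$ in that error come from the confluent hypergeometric parametrices, exactly as in Theorem~\ref{theorem G} and \cite{Charlier}.

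\textbf{Assembly and main obstacle.} Substituting the expansion of $Y$ (equivalently of $R$ and of the parametrices) into each $\partial_{\nu}\log L_{n}$ and integrating along the chain, with the integration constants fixed by the reference determinant, one obtains \eqref{asymp thm Ln}. The $n^{2}$, $n$ and $\log n$ terms are read off directly; in particular $C_{3}$ combines the $-\tfrac1{24}$ of the Airy model, the $\tfrac{\alpha_{0}^{2}}{2}-\tfrac18$ of the Bessel model, and the $\tfrac{\alpha_{j}^{2}}{4}-\beta_{j}^{2}$ of the confluent hypergeometric models. The constant $C_{4}$ is the sum of the reference constant $2\zeta'(-1)+\tfrac{\alpha_{0}}{2}\log(2\pi)-\log G(1+\alpha_{0})$, the constants produced by the Szeg\H{o} function (the $W$-terms, the $\arcsin$ and the $\log(1\pm t_{j})$, $\log(1-t_{j}t_{k}-\dots)$ terms), the constants $-\tfrac{1-4\alpha_{0}^{2}}{8}\log(\pi\psi(-1))$ and $-\tfrac1{24}\log(\pi\psi(1))$ from the edge parametrices, the Barnes-$G$ constants $\sum_{j}\log\frac{G(1+\frac{\alpha_{j}}{2}+\beta_{j})G(1+\frac{\alpha_{j}}{2}-\beta_{j})}{G(1+\alpha_{j})}$ from the bulk parametrices, and the cross terms coming from the interaction between the $g$-function, the Szeg\H{o} function and the local models. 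I expect the main difficulty to be twofold: first, the exact bookkeeping of all $\bigO(1)$ contributions making up $C_{4}$, in particular obtaining the interaction terms in the double sum over $0\le j<k\le m$ --- which now includes the hard edge $t_{0}=-1$ --- with the right branches, phases, and powers of $|t_{j}-t_{k}|$ and $1-t_{j}t_{k}-\sqrt{(1-t_{j}^{2})(1-t_{k}^{2})}$; second, the construction of the interpolating family $V_{\tau}$ and the verification that it stays one-cut regular on $[-1,\infty)$ with unchanged edge behaviour, so that the steepest descent is uniformly valid along the whole chain of identities. The hard-edge/soft-edge asymmetry and the unboundedness of $\mathcal I$ add bookkeeping relative to the Gaussian case of Theorem~\ref{theorem G} but are otherwise standard; as a consistency check, at $W=0$ and $\vec\alpha=\vec\beta=\vec 0$ the formula must reduce to the all-order expansion of $L_{n}(\vec 0,\vec 0,V,0)$ from \cite{BorGui}.
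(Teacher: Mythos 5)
Your proposal is correct in substance and rests on the same machinery as the paper: the same reference determinant $L_{n}((\alpha_{0},0,\dots,0),\vec{0},2(x+1),0)$ evaluated through classical Laguerre polynomials and Barnes' $G$ (matching your base values of $C_{1},\dots,C_{4}$ exactly), the same chain of differential identities integrated along parameter deformations, the same linear interpolation of equilibrium densities $\psi_{s}=(1-s)/\pi+s\psi$ (obtained in the paper by interpolating the potentials, $V_{s}=(1-s)2(x+1)+sV$, which yields your density path directly and settles the one-cut regularity you worry about), and the same Bessel/Airy/confluent hypergeometric parametrices with the same accounting for $C_{3}$ and for the $n^{-1+4\beta_{\max}}$ error. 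The one substantive difference is the order of the chain. The paper goes $L_{n}(\vec{\alpha}_{0},\vec{0},2(x+1),0)\mapsto L_{n}(\vec{\alpha},\vec{\beta},2(x+1),0)\mapsto L_{n}(\vec{\alpha},\vec{\beta},V,0)\mapsto L_{n}(\vec{\alpha},\vec{\beta},V,W)$: the Fisher--Hartwig parameters are integrated \emph{first}, at the fixed potential $2(x+1)$, precisely because there $w'(x)=\big(-2n+\sum_{j}\alpha_{j}/(x-t_{j})\big)w(x)$ lets the differential identity of Proposition \ref{prop: diff identity Laguerre} localize, after integration by parts, at the finitely many points $t_{0},\dots,t_{m}$ through the regularized values $\widetilde{Y}(t_{j})$; the potential is then deformed with all singularities present, and $W$ is switched on last via $W_{t}=\log(1-t+te^{W})$ rather than $tW$ (a minor technical choice). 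Your order --- deform the potential first, then insert the bulk singularities at general $V$ --- would force the $\alpha_{j},\beta_{j}$ integrations to be run with the non-localized identity of Proposition \ref{prop: general diff identity}, i.e.\ an integral of the Christoffel--Darboux kernel against $\log|x-t_{j}|$ or against a jump, since for non-polynomial $V$ the integration by parts no longer reduces to data at the singular points; this is the route of Berestycki--Webb--Wong for root singularities and is workable, but it is a genuinely harder computation for the jump parameters $\beta_{j}$ and is not what the paper does. Everything else you flag --- the bookkeeping of the $0\le j<k\le m$ interaction terms including the hard edge $t_{0}=-1$, and the uniformity along the deformation --- is exactly where the paper's effort goes.
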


\newpage
\begin{theorem}[for Jacobi-type weights]\label{theorem J}\ \\
Let $m \in \mathbb{N}$, and let $t_{j}$, $\alpha_{j}$ and $\beta_{j}$ be such that
\begin{equation*}
-1 = t_{0} < t_{1} < \ldots < t_{m} < t_{m+1} = 1, \quad \mbox{ and } \quad \Re \alpha_{j} > -1, \quad \Re \beta_{j} \in (-\tfrac{1}{4},\tfrac{1}{4}) \quad \mbox{ for } j=0,\ldots,m+1,
\end{equation*}
with $\beta_{0} = 0 = \beta_{m+1}$. Let $V$ be a one-cut regular potential whose equilibrium measure is supported on $[-1,1]$ with density $\frac{\psi(x)}{\sqrt{1-x^{2}}}$, and let $W: [-1,1]\to\mathbb{R}$ be analytic in a neighbourhood of $[-1,1]$. \\ As $n \to \infty$, we have
\begin{equation}\label{asymp thm Jn}
J_{n}(\vec{\alpha},\vec{\beta},V,W) = \exp\left(C_{1} n^{2} + C_{2} n + C_{3} \log n + C_{4} + \bigO \Big( \frac{\log n}{n^{1-4\beta_{\max}}} \Big)\right),
\end{equation}
with $\beta_{\max} = \max \{ |\Re \beta_{1}|,\ldots,|\Re \beta_{m}| \}$ and 
\begin{align}
& C_{1} = - \log 2 - \frac{1}{2} \int_{-1}^{1} V(x)\left( \frac{1}{\pi} + \psi(x) \right)\frac{dx}{\sqrt{1-x^{2}}}, \\
& C_{2} = \log(2\pi) - \mathcal{A}\log 2 - \frac{\mathcal{A}}{2\pi}\int_{-1}^{1} \frac{V(x)}{\sqrt{1-x^{2}}}dx + \int_{-1}^{1}W(x)\frac{\psi(x)}{\sqrt{1-x^{2}}}dx  \\
& \hspace{1cm} + \sum_{j=0}^{m+1} \frac{\alpha_{j}}{2} V(t_{j}) + \sum_{j=1}^{m} \pi i \beta_{j} \left( 1 - 2 \int_{t_{j}}^{1}\frac{\psi(x)}{\sqrt{1-x^{2}}} dx \right) , \nonumber \\
& C_{3} = - \frac{1}{4} + \frac{\alpha_{0}^{2}+\alpha_{m+1}^{2}}{2} + \sum_{j=1}^{m} \bigg( \frac{\alpha_{j}^{2}}{4} - \beta_{j}^{2} \bigg), \\
& C_{4} = 3\zeta^{\prime}(-1) + \frac{\log 2}{12} - \frac{1 \hspace{-0.05cm}-\hspace{-0.04cm}4\alpha_{0}^{2}}{8}\log \left( \pi\psi(-1) \right) - \frac{1\hspace{-0.05cm}-\hspace{-0.04cm}4\alpha_{m+1}^{2}}{8}\log \left( \pi\psi(1) \right) + \sum_{j=1}^{m} \bigg(\frac{\alpha_{j}^{2}}{4}-\beta_{j}^{2}\bigg)\log\left(\pi\psi(t_{j})\right) \nonumber \\
& \hspace{0.5cm} +  \frac{\alpha_{0}+\alpha_{m+1}}{2}\log(2\pi) + \hspace{-0.15cm} \sum_{0\leq j < k \leq m+1} \hspace{-0.1cm} \Bigg[ \hspace{-0.1cm} \log  \Bigg(  \frac{\big(1-t_{j}t_{k}-\sqrt{(1-t_{\smash{j}}^{2})(1-t_{k}^{2})}\big)^{2\beta_{j}\beta{k}}}{2^{\frac{\alpha_{j}\alpha_{k}}{2}}|t_{j}-t_{k}|^{\frac{\alpha_{j}\alpha_{k}}{2} + 2\beta_{j}\beta_{k}}} \Bigg) + \frac{i\pi}{2} (\alpha_{k}\beta_{j}-\alpha_{j}\beta_{k})  \Bigg] \nonumber \\
& \hspace{0.5cm} + \sum_{j=1}^{m} \left( \frac{\alpha_{j}^{2}}{4} \log \frac{1}{\sqrt{1-t_{\smash{j}}^{2}}} - \beta_{j}^{2} \log \Big( 4 \sqrt{1-t_{\smash{j}}^{2}} \Big) \right) + \mathcal{A}\sum_{j=1}^{m} i \beta_{j}\arcsin t_{j} - \frac{\al^2_0+\al^2_{m+1}}{2}\log 2  \nonumber \\
&  \hspace{0.5cm}- \log G (1+\alpha_{0}) - \log G (1+\alpha_{m+1}) +\sum_{j=1}^{m} \log \frac{G(1+\frac{\alpha_{j}}{2}+\beta_{j})G(1+\frac{\alpha_{j}}{2}-\beta_{j})}{G(1+\alpha_{j})}   \\
& \hspace{0.5cm}+ \frac{\mathcal{A}}{2\pi} \int_{-1}^{1} \frac{W(x)}{\sqrt{1-x^{2}}}dx - \sum_{j=0}^{m+1} \frac{\alpha_{j}}{2}W(t_{j}) +\sum_{j=1}^{m}  \frac{i\beta_{j}}{\pi} \sqrt{1-t_{\smash{j}}^{2}} \dashint_{-1}^{1} \frac{W(x)}{\sqrt{1-x^{2}}(t_{j}-x)}dx  \nonumber \\
& \hspace{0.5cm} + \frac{1}{4\pi^{2}}\int_{-1}^{1}  \frac{W(x)}{\sqrt{1-x^{2}}} \bigg(\dashint_{-1}^{1} \frac{W^{\prime}(y)\sqrt{1-y^{2}}}{x-y}dy \bigg) dx, \nonumber
\end{align}
where $G$ is Barnes' $G$-function, $\zeta$ is Riemann's zeta-function, where we use the notations $\dashint$ for the Cauchy principal value integral, and
\begin{equation}
\mathcal{A} = \sum_{j=0}^{m+1} \alpha_{j}.
\end{equation}
Furthermore, the error term in \eqref{asymp thm Jn} is uniform for all $\alpha_{k}$ in compact subsets of \\$\{ z \in \mathbb{C}: \Re z >-1 \}$, for all $\beta_{k}$ in compact subsets of  $\{ z \in \mathbb{C}: \Re z \in \big( \frac{-1}{4},\frac{1}{4} \big) \}$, and uniform in $t_{1},\ldots,t_{m}$, as long as there exists $\delta > 0$ independent of $n$ such that
\begin{equation}
\min_{j\neq k}\{ |t_{j}-t_{k}|,|t_{j}-1|,|t_{j}+1|\} \geq \delta.
\end{equation}
\end{theorem}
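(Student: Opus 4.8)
The plan is to combine a Riemann--Hilbert (RH) steepest descent analysis of the orthogonal polynomials for the weight \eqref{weight introduction}--\eqref{weight FH} with a differential identity in a linear deformation of the potential, using the asymptotics of \cite{DIK,DeiftItsKrasovsky} as the initial condition. What makes this available in the Jacobi-type case is that $\mathcal{I}=[-1,1]$ is compact and fixed: the variational conditions \eqref{var equality}--\eqref{var inequality} then force $\mu_{V}$ to be supported on all of $[-1,1]$, and differentiating \eqref{var equality} shows that $\tfrac{d\mu_{V}}{dx}$ depends affinely on $V'$. Hence, writing $V_{s}=sV$ for $s\in[0,1]$, the density of $\mu_{V_{s}}$ equals $\tfrac{\psi_{s}(x)}{\sqrt{1-x^{2}}}$ with $\psi_{s}=(1-s)\tfrac1\pi+s\psi>0$ on $[-1,1]$; so every $V_{s}$ is one-cut regular of Jacobi type, with $V_{0}=0$. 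The determinant $J_{n}(\vec\alpha,\vec\beta,0,W)$ is precisely of the type treated in \cite{DIK,DeiftItsKrasovsky} (with the constant term, for $\Re\beta_{k}\in(-\tfrac12,\tfrac12]$), so only the variation of $\log J_{n}(\vec\alpha,\vec\beta,sV,W)$ for $s\in[0,1]$ must be computed.

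By Jacobi's formula for the derivative of a determinant,
\[
\frac{d}{ds}\log J_{n}(\vec\alpha,\vec\beta,sV,W)=-\,n\int_{-1}^{1}V(x)\,K_{n}^{(s)}(x,x)\,dx,
\]
where $K_{n}^{(s)}$ is the Christoffel--Darboux kernel of the orthonormal polynomials for the weight $e^{-nsV}e^{W}\omega$ on $[-1,1]$. The technical core is the Deift--Zhou analysis of the RH problem for these polynomials, carried out uniformly in $s\in[0,1]$, in the Fisher--Hartwig parameters in compact subsets, and in $t_{1},\dots,t_{m}$ subject to the separation hypothesis: normalize at infinity with the $g$-function of $\mu_{V_{s}}$; open lenses along $[-1,1]$; build the global parametrix from a Szeg\H{o}-type function absorbing $e^{W}$, the bulk factors $|x-t_{j}|^{\alpha_{j}}\omega_{\beta_{j}}$ and the hard-edge factors $(1+x)^{\alpha_{0}}$, $(1-x)^{\alpha_{m+1}}$; construct \emph{two} Bessel parametrices, at the hard edges $\pm1$, with exponents $\alpha_{0}$ and $\alpha_{m+1}$ (the occurrence of two hard edges, rather than two soft edges or one of each, is the structural feature separating Theorem~\ref{theorem J} from Theorems~\ref{theorem G} and~\ref{theorem L}); and construct confluent hypergeometric parametrices at each $t_{j}$. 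Solving the resulting small-norm $R$-problem, and extracting the subleading terms, then gives the asymptotics of $K_{n}^{(s)}(x,x)$, which equals $n\tfrac{\psi_{s}(x)}{\sqrt{1-x^{2}}}$ plus explicit lower-order terms in the bulk, with modified behaviour in shrinking neighbourhoods of $\pm1$ and of the $t_{j}$ governed by the Bessel and confluent hypergeometric kernels.

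Substituting into the differential identity, splitting the integral into a bulk part and local parts at $\pm1$ and at each $t_{j}$, integrating over $s\in[0,1]$ (with $\int_{0}^{1}\psi_{s}\,ds=\tfrac12(\tfrac1\pi+\psi)$ for the $n^{2}$ coefficient), and adding the \cite{DIK,DeiftItsKrasovsky} value at $s=0$, one recovers \eqref{asymp thm Jn}. Since the deformation leaves the Fisher--Hartwig data untouched, the $s$-independent pieces --- $3\zeta'(-1)+\tfrac{\log2}{12}$, the Barnes $G$ terms $-\log G(1+\alpha_{0})-\log G(1+\alpha_{m+1})$ and $\sum_{j}\log\frac{G(1+\tfrac{\alpha_{j}}{2}+\beta_{j})G(1+\tfrac{\alpha_{j}}{2}-\beta_{j})}{G(1+\alpha_{j})}$, the $\arcsin t_{j}$ sums, and the pairwise sums $\sum_{0\le j<k\le m+1}$ --- are those of \cite{DIK,DeiftItsKrasovsky} specialized to $\psi\equiv\tfrac1\pi$, while the deformation produces exactly the $V$- and $W$-dependent terms and the replacement of $\tfrac1\pi$ by $\psi(-1),\psi(1),\psi(t_{j})$ inside the logarithms in $C_{4}$; the coefficient $C_{3}$, being $V$-independent, is inherited unchanged.

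The two points I expect to require the most care are the following. First, the bookkeeping in the final step: after the $s$-integration one must verify that the local contributions at $\pm1$ and at the $t_{j}$, combined with the \cite{DIK,DeiftItsKrasovsky} constant, reassemble into precisely the stated coefficients --- for instance $-\tfrac{1-4\alpha_{0}^{2}}{8}\log(\pi\psi(-1))$ at the hard edge $-1$, the term $\tfrac{\alpha_{j}^{2}}{4}\log(\pi\psi(t_{j}))$ at $t_{j}$, and the delicate double integral $\tfrac{1}{4\pi^{2}}\int_{-1}^{1}\tfrac{W(x)}{\sqrt{1-x^{2}}}\big(\dashint_{-1}^{1}\tfrac{W'(y)\sqrt{1-y^{2}}}{x-y}\,dy\big)dx$, which emerges from the second-order term of the Szeg\H{o} function along the deformation. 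Second, all the estimates --- the matching of the confluent hypergeometric parametrices, the solution of the $R$-problem, and the expansion of $K_{n}^{(s)}(x,x)$ --- must hold \emph{uniformly} in $s$; the pairwise interactions between local parametrices carry factors $n^{\pm2(\beta_{j}-\beta_{k})}$, and the bound $n^{2|\Re(\beta_{j}-\beta_{k})|}\le n^{4\beta_{\max}}$ is exactly what forces the restriction $\Re\beta_{k}\in(-\tfrac14,\tfrac14)$ and yields the error term $\bigO\big(n^{-1+4\beta_{\max}}\log n\big)$ in \eqref{asymp thm Jn}, in parallel with Theorem~\ref{theorem G} and Remark~\ref{remark real part beta}.
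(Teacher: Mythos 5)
Your proposal is correct in substance and deploys the same analytic machinery as the paper (Szeg\H{o}-type global parametrix, two Bessel parametrices at the hard edges $\pm1$, confluent hypergeometric parametrices at the $t_j$, a small-norm $R$-problem, and a differential identity in a linear potential deformation integrated from the Deift--Its--Krasovsky asymptotics), but you organize the deformations differently. The paper follows the two-step scheme \eqref{lol4}: it starts from $J_n(\vec\alpha,\vec\beta,0,0)$ (Theorem \ref{DIK-Starting Point Jacobi}, i.e.\ DIK with $W=0$), integrates in $s$ along $V_s=sV$ with $W\equiv0$ (Proposition \ref{prop 72}), and only then switches on $W$ via the separate deformation $W_t=\log(1-t+te^{W})$ (Proposition \ref{81 L}). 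You instead take $J_n(\vec\alpha,\vec\beta,0,W)$ with general $W$ as the DIK input and perform a single $s$-integration; your kernel identity $\partial_s\log J_n=-n\int V\,K_n^{(s)}$ is the same as \eqref{diff identity J s}, since $\tfrac{1}{2\pi i}[Y^{-1}Y']_{21}=\sum_{j=0}^{n-1}p_j^2$ on the diagonal. Your route saves one deformation and leans more heavily on the full strength of \cite{DIK,DeiftItsKrasovsky}; the price is that the residues of $J_R^{(1)}$ at $\pm1$ and at the $t_j$ now carry $W$-dependence (through $D_W$ and the conjugations by $e^{\frac{W}{2}\sigma_3}$), so the explicit residue computations the paper performs only for $W\equiv 0$ in Subsection \ref{subsection: small norm} must be redone with $W$ present, and you must then verify that these extra pieces integrate over $s$ to exactly the cross term $n\int_{-1}^{1}W(x)\big(\psi(x)-\tfrac1\pi\big)\tfrac{dx}{\sqrt{1-x^{2}}}$ at order $n$ and to nothing new at order one (consistent with the $W$-terms in $C_4$ being $V$-independent). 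The paper's ordering avoids precisely this extra bookkeeping and keeps the Jacobi and Laguerre analyses structurally parallel, since no DIK-type input with general $W$ exists at the Laguerre starting point. One technicality you should still address explicitly, as the paper does below \eqref{lol 4}: the differential identity is only valid where $p_0,\dots,p_n$ exist, and must be extended to all parameter values by continuity using the local finiteness of the exceptional set.
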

\begin{remark}\label{remark real part beta}
The assumption $\Re \beta_{k} \in (-\frac{1}{4},\frac{1}{4})$ comes from some technicalities in our analysis. Similar difficulties were encountered in \cite{ItsKrasovsky} for $G_{n}(\vec{0},\vec{\beta},2x^{2},0)$ with $m = 1$ (i.e. $\vec{\beta} = \beta_{1}$), and in \cite{DeiftItsKrasovsky} for $J_{n}(\vec{\alpha},\vec{\beta},0,W)$. In \cite{DeiftItsKrasovsky}, the authors overcame these technicalities, and were able to extend their results from $\Re \beta_{k} \in (-\frac{1}{4},\frac{1}{4})$ to $\Re \beta_{k} \in (-\frac{1}{2},\frac{1}{2})$ by using Vitali's theorem. Their argument relies crucially on $w$ being independent of $n$ (which is true only for Jacobi-type weights with $V = 0$) and cannot be adapted straightforwardly to the situation of Theorems \ref{theorem G}, \ref{theorem L} and \ref{theorem J}. However, the method presented in this paper allows in principle, but with significant extra effort, to obtain asymptotics for the whole region $\Re \beta_{k} \in (-\frac{1}{2},\frac{1}{2})$. Finally, extending the result from $\Re \beta_{k} \in (-\frac{1}{2},\frac{1}{2})$ to $\Re \beta_{k} \in (-\frac{1}{2},\frac{1}{2}]$ would rely on so-called FH representations of the weight, see \cite{DIK} for more details.
\end{remark}

\begin{remark}\label{remark: Toeplitz and Hankel}
Starting with a function $f$ defined on the unit circle, the associated Toeplitz determinant is given by
\begin{equation}
\det \left( \frac{1}{2\pi}\int_{-\pi}^{\pi}f(e^{i\theta}) e^{-i(j-k)\theta}d\theta\right)_{j,k=0,\ldots,n-1}.
\end{equation}
Asymptotics of large Toeplitz determinants is another topic of high interest, which presents applications similar to those of Hankel determinants, but for point processes defined on the unit circle instead of the real line. In \cite{DIK}, the authors obtained first large $n$ asymptotics for certain Toeplitz determinants (with the zero potential), and deduced from them large $n$ asymptotics for $J_{n}(\vec{\alpha},\vec{\beta},0,W)$. It is therefore natural to wonder if one can translate the results of Theorems \ref{theorem G}, \ref{theorem L} and \ref{theorem J} into asymptotics for Toeplitz determinants with a one-cut regular potential. We explain here why we believe this is not obvious. 

\vspace{0.3cm}\hspace{-0.55cm}The main tool used in \cite{DIK} is a relation of Szeg\"{o} from \cite{Szego OP}. If 
\begin{equation}\label{transformation symbol weight}
f(e^{i\theta}) = w(\cos \theta) |\sin \theta |,
\end{equation}
we can express orthogonal polynomials on the unit circle associated to $f$ in terms of orthogonal polynomials on the real line associated to $w$. Note that this transformation can only work in all generality from Toeplitz to Hankel, and not the other way around. Indeed, the weight $w$ can be arbitrary, but the function $f$ is of a very particular type (in particular it satisfies $f(e^{i\theta}) = f(e^{-i\theta})$). 

%\vspace{0.3cm}\hspace{-0.55cm}We also believe that asymptotics for Toeplitz determinants with a one-cut regular potential and FH singularities would not imply Theorems \ref{theorem G} and \ref{theorem L}, but possibly Theorem \ref{theorem J}. The main reason is that, regarding Gaussian- and Laguerre-type weights, the change of variables $s = \cos \theta$ in \eqref{var equality} shows that the associated equilibrium measure $\mu_{\widehat{V}}$ on the unit circle vanishes as a square at $\theta = 0$ (and also at $\theta = \pi$ for Gaussian-type weight), which is not a ``regular" weight. To avoid this problem, one could by a simple change of variables shrink the support of $\mu_{V}$ into $[-a,a]$ with $0<a<1$, but then $\mu_{\widehat{V}}$ would be supported on two disjoint intervals. The situation is different for Jacobi-type weights, since in this case $\mu_{\widehat{V}}$ remains bounded at $\theta=0$ and $\theta=\pi$.
\end{remark}
\begin{remark}\label{Remark: ForFra}
%The first part of \cite[Conjecture 9]{ForFra} states 
Using Theorem \ref{theorem L}, we find that
\begin{multline}\label{expectation}
\frac{L_{n}(\vec{\alpha}_{0},\vec{0},V,W)}{L_{n}(\vec{\alpha}_{0},\vec{0},V,0)} = \exp\bigg( n \int_{-1}^{1}W(x)\psi(x)\sqrt{\frac{1-x}{1+x}}dx + \frac{1}{4\pi^{2}} \int_{-1}^{1}dx \frac{W(x)}{\sqrt{1-x^{2}}}\int_{-1}^{1}dy \frac{W'(y)\sqrt{1-y^{2}}}{x-y} \bigg) \\ \times \exp \bigg( \frac{\alpha_{0}}{2\pi}\int_{-1}^{1} \frac{W(x)-W(-1)}{\sqrt{1-x^{2}}}dx + o(1)\bigg),
\end{multline}
where $\vec{\alpha}_{0}:=(\alpha_{0},0,\ldots,0)$. These asymptotics, when specialized to $V=2(x+1)$, are consistent with \cite[Proposition 3.9]{FRW2017}. For general $V$ and $W$, \eqref{expectation} shows that a small correction is needed to \cite[Conjecture 9]{ForFra} if $\alpha_{0} \neq 0$. 
\end{remark}
\section{Applications}\label{subsection:Applications}
In this section, we provide several applications of Theorems \ref{theorem G}, \ref{theorem L} and \ref{theorem J} in random matrix theory. For each type of weight, there corresponds a particular type of matrix ensemble. Assume that $V$ is the potential of a Gaussian-type weight. The associated Gaussian-type matrix ensemble consists of the space of $n \times n$ complex Hermitian matrices endowed with the probability measure
\begin{equation}
\frac{1}{\widehat{Z}_{n}^{G}}e^{-n \mathrm{Tr}(V(M))}dM, \qquad dM = \prod_{i=1}^{n} dM_{ii} \prod_{1 \leq i < j \leq n} d \Re M_{ij} d \Im M_{ij},
\end{equation}
with $\widehat{Z}_{n}^{G}$ the normalizing constant. Laguerre-type matrix ensembles are usually defined on $n \times n$ complex positive definite Hermitian matrices. Here we instead assume, for Laguerre-type matrix ensembles, that all matrices have eigenvalues greater than $-1$ (this assumption eases the comparison between the three cases). Such ensembles have a probability measure of the form
\begin{equation}
\frac{1}{\widehat{Z}_{n}^{L}}\det(I+M)^{\alpha_{0}}e^{-n \mathrm{Tr}(V(M))}dM, \qquad \alpha_{0}>-1,
\end{equation}
where $V$ is of Laguerre-type, and $\widehat{Z}_{n}^{L}$ is the normalizing constant. Finally, a Jacobi-type matrix ensemble consists of the space of $n \times n$ Hermitian matrices whose spectrum lies the interval $[-1,1]$, with a probability measure of the form
\begin{equation}
\frac{1}{\widehat{Z}_{n}^{J}}\det(I+M)^{\alpha_{0}}\det(I-M)^{\alpha_{m+1}}e^{-n \mathrm{Tr}(V(M))}dM, \qquad \alpha_{0}, \alpha_{m+1}>-1,
\end{equation}
with a Jacobi-type potential $V$ and $\widehat{Z}_{n}^{J}$ is again the normalizing constant. These three types of matrix ensembles are invariant under unitary conjugation and induce the following probability measures on the eigenvalues $x_{1},\ldots,x_{n}$:
\begin{align}
& \frac{1}{Z_{n}^{G}} \prod_{1 \leq j < k \leq n}(x_{k}-x_{j})^{2} \prod_{j=1}^{n}e^{-nV(x_{j})}dx_{j}, & & x_{1},\ldots,x_{n} \in \mathbb{R}, \label{Gaussian distribution} \\
& \frac{1}{Z_{n}^{L}} \prod_{1 \leq j < k \leq n}(x_{k}-x_{j})^{2} \prod_{j=1}^{n}(1+x_{j})^{\alpha_{0}}e^{-nV(x_{j})}dx_{j}, & & x_{1},\ldots,x_{n} \in [-1,\infty), \label{Laguerre distribution} \\
& \frac{1}{Z_{n}^{J}} \prod_{1 \leq j < k \leq n}(x_{k}-x_{j})^{2} \prod_{j=1}^{n}(1+x_{j})^{\alpha_{0}}(1-x_{j})^{\alpha_{m+1}}e^{-nV(x_{j})}dx_{j},& & x_{1},\ldots,x_{n} \in[-1,1], \label{Jacobi distribution}
\end{align}
where the first, second and third line read for Gaussian, Laguerre, and Jacobi-type matrix ensembles, respectively, and $Z_{n}^{G}$, $Z_{n}^{L}$ and $Z_{n}^{J}$ are the normalizing constants, also called the partition functions.

\vspace{-0.2cm}\paragraph{Partition function asymptotics in the one-cut regime.} By Heine's formula, the partition functions can be rewritten as Hankel determinants of the form \eqref{Hankel introduction} with $W = 0$, $\vec{\beta} = \vec{0}$ and $\alpha_{1}=...=\alpha_{m}=0$ and thus their large $n$ asymptotics can be deduced from Theorems \ref{theorem G}, \ref{theorem L} and \ref{theorem J}. The existence of an all orders expansion as $n \to + \infty$ for $Z_{n}^{G}$ has been proved in some particular cases of $V$ in \cite{BleIts, ErcMcL} using RH methods, and then for $Z_{n}^{G}$, $Z_{n}^{L}$ and $Z_{n}^{J}$ in \cite{BorGui} using loop equations (under the extra condition that $\alpha_{0} = \alpha_{m+1} = 0$). The large $n$ asymptotics for $Z_{n}^{G}$ have been obtained only recently in \cite{BerWebbWong} for general one-cut Gaussian-type potential $V$, and for convenience we rewrite them in \eqref{partition Gaussian} below. To the best of our knowledge, the large $n$ asymptotics of $Z_{n}^{L}$ and $Z_{n}^{J}$ written in Corollary \ref{coro: partition functions} below are already new results. %Then large $n$ asymptotics for $Z_{n}^{G}$, $Z_{n}^{L}$ and $Z_{n}^{J}$ were all obtained in \cite{BorGui} using loop equations, however these asymptotics are valid only without singularities, i.e. only for $\alpha_{0} = 0$ for $Z_{n}^{L}$ and only for $\alpha_{0} = \alpha_{m+1} = 0$ for $Z_{n}^{J}$.
\begin{corollary}\label{coro: partition functions}
As $n \to + \infty$, we have
\begin{align}
& Z_{n}^{G} = & & \exp \bigg(- \left( \log 2 + \frac{3}{4} + \frac{1}{2} \int_{-1}^{1}(V(x)-2x^{2})\Big( \frac{2}{\pi} + \psi(x) \Big) \sqrt{1-x^{2}}dx\right)n^{2} \label{partition Gaussian} \\
&     & & \hspace{-0.4cm}+ \log(2\pi) n - \frac{1}{12}\log n + \zeta^{\prime}(-1) - \frac{1}{24}\log \left( \frac{\pi}{2}\psi(-1) \right) - \frac{1}{24}\log \left( \frac{\pi}{2}\psi(1) \right) + \bigO \Big( \frac{\log n}{n}\Big)\bigg), \nonumber \\
& Z_{n}^{L} = & & \exp \bigg( - \bigg( \log 2 + \frac{3}{2} + \frac{1}{2}\int_{-1}^{1}(V(x)-2(x+1))\Big( \frac{1}{\pi}+\psi(x) \Big)\sqrt{\frac{1-x}{1+x}}dx \bigg)n^{2} \\
& & & \hspace{-0.4cm}+ \bigg( \log(2\pi) - \alpha_{0} \log 2 - \frac{\alpha_{0}}{2\pi} \int_{-1}^{1} \frac{V(x)-2(x+1)}{\sqrt{1-x^{2}}}dx  + \frac{\alpha_{0}}{2}(V(-1)-2) \bigg)n +\Big(\frac{\alpha_{0}^{2}}{2}-\frac{1}{6} \Big) \log n \nonumber \\
& & &  \hspace{-0.4cm}+ 2 \zeta^{\prime}(-1) - \frac{1 \hspace{-0.05cm}-\hspace{-0.04cm}4\alpha_{0}^{2}}{8}\log \big( \pi \psi(-1) \big) - \frac{1}{24}\log(\pi \psi(1) + \frac{\alpha_{0}}{2}\log(2\pi) - \log G(1+\alpha_{0}) + \bigO \Big( \frac{\log n}{n}\Big)\bigg), \nonumber \\
& Z_{n}^{J} = & & \exp \bigg( -\bigg( \log 2 + \frac{1}{2}\int_{-1}^{1} V(x) \Big( \frac{1}{\pi}+\psi(x) \Big) \frac{dx}{\sqrt{1-x^{2}}} \bigg)n^{2} \\
& & & \hspace{-0.4cm}+ \hspace{-0.05cm} \bigg( \hspace{-0.05cm} \log(2\pi) - (\alpha_{0}+\alpha_{m+1})\log 2 - \frac{\alpha_{0}+\alpha_{m+1}}{2\pi} \int_{-1}^{1} \frac{V(x)}{\sqrt{1-x^{2}}}dx + \frac{\alpha_{0}}{2}V(-1) + \frac{\alpha_{m+1}}{2}V(1) \bigg)n \nonumber \\
& & & \hspace{-0.4cm}+ \bigg( -\frac{1}{4}+\frac{\alpha_{0}^{2}+\alpha_{m+1}^{2}}{2} \bigg) \log n + 3 \zeta^{\prime}(-1) + \frac{\log 2}{12} - \frac{1 \hspace{-0.05cm}-\hspace{-0.04cm}4\alpha_{0}^{2}}{8}\log \left( \pi\psi(-1) \right) - \frac{1\hspace{-0.05cm}-\hspace{-0.04cm}4\alpha_{m+1}^{2}}{8}\log \left( \pi\psi(1) \right) \nonumber \\
& & &  \hspace{-0.4cm} + \frac{\alpha_{0}+\alpha_{m+1}}{2}\log(2\pi) - \frac{(\alpha_{0}+\alpha_{m+1})^{2}}{2}\log 2 - \log \big( G(1+\alpha_{0})G(1+\alpha_{m+1}) \big) + \bigO \Big( \frac{\log n}{n}\Big)\bigg). \nonumber
\end{align}
\end{corollary}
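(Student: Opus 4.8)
The plan is to deduce Corollary~\ref{coro: partition functions} directly from Theorems~\ref{theorem G}, \ref{theorem L} and \ref{theorem J}. The first step is to use Heine's formula to realise each partition function as one of the Hankel determinants \eqref{Hankel introduction}: the eigenvalue weights in \eqref{Gaussian distribution}, \eqref{Laguerre distribution} and \eqref{Jacobi distribution} are exactly the weights \eqref{weight introduction}--\eqref{weight FH} with $W=0$, $\vec\beta=\vec0$ and $\alpha_1=\cdots=\alpha_m=0$ (with $\alpha_0$, $\alpha_{m+1}$ as given), so that $Z_n^G = G_n(\vec0,\vec0,V,0)$, $Z_n^L = L_n((\alpha_0,0,\ldots,0),\vec0,V,0)$ and $Z_n^J = J_n((\alpha_0,0,\ldots,0,\alpha_{m+1}),\vec0,V,0)$. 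All hypotheses on $W$ in the three theorems hold trivially for $W=0$, and since $\beta_{\max}=0$ for these parameters the error term $\bigO(\log n/n^{1-4\beta_{\max}})$ becomes $\bigO(\log n/n)$, as claimed.

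It then only remains to evaluate $C_1,C_2,C_3,C_4$ for these special parameters. This is elementary bookkeeping: every term containing a factor $W$, every sum $\sum_{j=1}^m$ over the bulk singularities, and every term proportional to some $\beta_j$ drops out, while $\mathcal{A}$ collapses to $\alpha_0$ in the Laguerre case and to $\alpha_0+\alpha_{m+1}$ in the Jacobi case. For the Gaussian-type weight this reproduces \eqref{partition Gaussian} verbatim (with error improved to $\bigO(\log n/n)$). For the Jacobi-type weight the double sum $\sum_{0\le j<k\le m+1}$ in $C_4$ keeps only the cross term $(j,k)=(0,m+1)$; since $t_0=-1$ and $t_{m+1}=1$, so that $|t_0-t_{m+1}|=2$, this term equals $-\alpha_0\alpha_{m+1}\log 2$, and adding it to the term $-\tfrac{\alpha_0^2+\alpha_{m+1}^2}{2}\log2$ already present in $C_4$ yields the factor $-\tfrac{(\alpha_0+\alpha_{m+1})^2}{2}\log2$ of the statement; all remaining terms match \eqref{asymp thm Jn} directly.

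For the Laguerre-type weight one additional manipulation is needed, since $C_1$ and $C_2$ of Theorem~\ref{theorem L} are written relative to the reference potential $2(x+1)$. The $n^2$-coefficient $C_1$ is already in the advertised form. For the $n$-coefficient, one uses $\int_{-1}^1\frac{dx}{\sqrt{1-x^2}}=\pi$ and $\int_{-1}^1\frac{x\,dx}{\sqrt{1-x^2}}=0$, hence $\int_{-1}^1\frac{2(x+1)}{\sqrt{1-x^2}}dx=2\pi$, to rewrite
\[
C_2 = \log(2\pi)-\alpha_0\log2-\frac{\alpha_0}{2\pi}\int_{-1}^1\frac{V(x)}{\sqrt{1-x^2}}dx+\frac{\alpha_0}{2}V(-1)
\]
equivalently with $V(x)$ replaced by $V(x)-2(x+1)$ inside the integral and the boundary term replaced by $\tfrac{\alpha_0}{2}(V(-1)-2)$; the two forms differ by $\tfrac{\alpha_0}{2\pi}\cdot2\pi-\alpha_0=0$. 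The coefficients $C_3=-\tfrac16+\tfrac{\alpha_0^2}{2}$ and $C_4=2\zeta'(-1)-\tfrac{1-4\alpha_0^2}{8}\log(\pi\psi(-1))-\tfrac1{24}\log(\pi\psi(1))+\tfrac{\alpha_0}{2}\log(2\pi)-\log G(1+\alpha_0)$ then follow from Theorem~\ref{theorem L} with no further work.

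There is no genuine obstacle here: the entire content of the corollary is the specialisation of the three main theorems, and the only points requiring a little care are matching the three different reference potentials entering $C_1$ and, in the Jacobi case, checking that the surviving cross term of the double sum recombines with the existing $\log2$-contributions exactly as asserted.
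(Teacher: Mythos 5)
Your proposal is correct and follows exactly the route the paper intends: the corollary is stated as an immediate specialisation of Theorems \ref{theorem G}, \ref{theorem L} and \ref{theorem J} to $W=0$, $\vec\beta=\vec 0$, $\alpha_1=\cdots=\alpha_m=0$, and the paper gives no further proof. Your bookkeeping — in particular the cancellation $\tfrac{\alpha_0}{2\pi}\cdot 2\pi-\alpha_0=0$ in the Laguerre $C_2$ and the recombination $-\alpha_0\alpha_{m+1}\log 2-\tfrac{\alpha_0^2+\alpha_{m+1}^2}{2}\log 2=-\tfrac{(\alpha_0+\alpha_{m+1})^2}{2}\log 2$ in the Jacobi $C_4$ — checks out.
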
 
\paragraph{Central limit theorems (CLTs).} The function $W$ allows to obtain information about
the global fluctuation properties of the spectrum around the equilibrium measure. In \cite{Johansson2}, Johansson obtained a CLT for Gaussian-type ensembles (and is reproduced in \eqref{CLT johansson} below for convenience). All three types of ensembles considered here fall into the class of ``polynomial ensembles" considered in \cite{BD CLT JAMS}. The general theorem of \cite[Theorem 2.5]{BD CLT JAMS} states that, if the recurrence coefficients for the associated orthogonal polynomials have a limit, then we have
\begin{align*}
\sum_{i=1}^{n} W(x_{i})- \mathbb{E}\bigg[ \sum_{i=1}^{n} W(x_{i}) \bigg] \quad \overset{d}{\longrightarrow} \quad \mathcal{N}(0,\sigma^{2}), \qquad \mbox{as } n \to + \infty,
\end{align*}
for a certain explicitly computable $\sigma^{2}=\sigma^{2}(W)$, where $\overset{d}{\longrightarrow}$ means convergence in distribution, and $\mathcal{N}(0,\sigma^{2})$ is a zero-mean normal random variable with variance $\sigma^{2}$. Since the asymptotics for the recurrence coefficients of the orthogonal polynomials associated to certain particular Laguerre-type and Jacobi-type weights were obtained in \cite{Vanlessen Laguerre} and \cite{KMcLVAV}, respectively, we can obtain CLTs for these particular cases by combining the results of \cite{BD CLT JAMS}, \cite{Vanlessen Laguerre}, and \cite{KMcLVAV}. In Corollary \ref{coro: CLTs} below, we obtain two CLTs for general Laguerre and Jacobi-type ensembles as a rather straightforward consequence of Theorem \ref{theorem L} and Theorem \ref{theorem J}. We also mention that two CLTs for the Laguerre-type potential $V(x) = 2(x+1)$ and the Jacobi-type potential $V(x)=0$ were obtained by Berezin and Bufetov in \cite{BerezinBufetov} independently and simultaneously to the present paper.
\begin{corollary}\label{coro: CLTs}
(a) Let $x_{1},\ldots,x_{n}$ be distributed according to \eqref{Gaussian distribution} and $V$ and $W$ be as in Theorem \ref{theorem G}. As $n \to +\infty$, we have
\begin{align}\label{CLT johansson}
\sum_{i=1}^{n} W(x_{i}) - n \int_{-1}^{1} W(x) \psi(x) \sqrt{1-x^{2}}dx \quad \overset{d}{\longrightarrow} \quad \mathcal{N}(0,\sigma^{2}),
\end{align}
where
\begin{equation}\label{variance}
\sigma^{2} = \frac{1}{2\pi^{2}}\int_{-1}^{1}  \frac{W(x)}{\sqrt{1-x^{2}}} \bigg(\dashint_{-1}^{1} \frac{W^{\prime}(y)\sqrt{1-y^{2}}}{x-y}dy \bigg) dx.
\end{equation}
(b) Let $x_{1},\ldots,x_{n}$ be distributed according to \eqref{Laguerre distribution} and $V$ and $W$ be as in Theorem \ref{theorem L}. As $n \to +\infty$, we have
\begin{align}
\sum_{i=1}^{n} W(x_{i}) - n \int_{-1}^{1} W(x) \psi(x) \sqrt{\frac{1-x}{1+x}}dx \quad \overset{d}{\longrightarrow} \quad \mathcal{N}(\mu_{L},\sigma^{2}),
\end{align}
where $\sigma^{2}$ is given by \eqref{variance} and the mean $\mu_{L}$ is given by
\begin{equation}
\mu_{L} = \frac{\alpha_{0}}{2\pi} \int_{-1}^{1} \frac{W(x)}{\sqrt{1-x^{2}}}dx - \frac{\alpha_{0}}{2}W(-1).
\end{equation}
(c) Let $x_{1},\ldots,x_{n}$ be distributed according to \eqref{Jacobi distribution} and $V$ and $W$ be as in Theorem \ref{theorem J}. As $n \to +\infty$, we have
\begin{align}
\sum_{i=1}^{n} W(x_{i}) - n \int_{-1}^{1} W(x) \frac{\psi(x)}{\sqrt{1-x^{2}}}dx \quad \overset{d}{\longrightarrow} \quad \mathcal{N}(\mu_{J},\sigma^{2}),
\end{align}
where $\sigma^{2}$ is given by \eqref{variance} and the mean $\mu_{J}$ is given by
\begin{equation}
\mu_{J} = \frac{\alpha_{0}+\alpha_{m+1}}{2\pi} \int_{-1}^{1} \frac{W(x)}{\sqrt{1-x^{2}}}dx - \frac{\alpha_{0}}{2}W(-1)- \frac{\alpha_{m+1}}{2}W(1).
\end{equation}
\end{corollary}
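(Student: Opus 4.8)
\noindent\emph{Proof proposal.}
The plan is to use the Laplace-transform (moment generating function) method. For each of the three ensembles, Heine's formula realises both the partition function and the $W$-tilted partition function as Hankel determinants of the form treated in Theorems \ref{theorem G}--\ref{theorem J}, and their ratio is precisely the moment generating function of the linear statistic. Convergence of this ratio, after centering, to a Gaussian moment generating function then yields the CLT by a standard continuity theorem.

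Take the Laguerre case (b) for definiteness, and set $S_{n} := \sum_{i=1}^{n} W(x_{i})$ under \eqref{Laguerre distribution}. Since $W(x) = \bigO(V(x))$ as $x \to +\infty$, for every fixed real $\lambda$ the function $(1+x)^{\alpha_{0}} e^{-nV(x)}e^{\lambda W(x)}$ is integrable on $[-1,\infty)$ once $n$ is large, and $\lambda W$ satisfies the hypotheses of Theorem \ref{theorem L}. Heine's formula then gives, with $Z_{n}^{L} = L_{n}(\vec{\alpha}_{0},\vec{0},V,0)$,
\[
\mathbb{E}\big[ e^{\lambda S_{n}} \big] = \frac{L_{n}(\vec{\alpha}_{0},\vec{0},V,\lambda W)}{L_{n}(\vec{\alpha}_{0},\vec{0},V,0)}, \qquad \vec{\alpha}_{0} := (\alpha_{0},0,\ldots,0).
\]
Both determinants have $\beta_{\max} = 0$, so Theorem \ref{theorem L} applies with error $\bigO(\log n / n)$. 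In the ratio, the coefficient $C_{1}$, the coefficient $C_{3}$, and the $W$-independent parts of $C_{2}$ and $C_{4}$ cancel, and what survives is exactly the expression already recorded in Remark \ref{Remark: ForFra}, with $W$ replaced by $\lambda W$ (here $\mathcal{A} = \alpha_{0}$, and the $\beta$-terms drop out). Using $\int_{-1}^{1} \frac{dx}{\sqrt{1-x^{2}}} = \pi$, this reads
\[
\mathbb{E}\big[ e^{\lambda S_{n}} \big] = \exp\!\Big( \lambda n \!\int_{-1}^{1}\! W(x)\psi(x)\sqrt{\tfrac{1-x}{1+x}}\,dx + \lambda \mu_{L} + \tfrac{\lambda^{2}}{2}\sigma^{2} + o(1) \Big),
\]
with $\mu_{L}$ and $\sigma^{2}$ exactly as in the statement. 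Hence the moment generating function of the centered statistic $S_{n} - n\int_{-1}^{1} W\psi\sqrt{\frac{1-x}{1+x}}\,dx$ converges, for every real $\lambda$ in a fixed neighbourhood of $0$, to $\exp(\lambda\mu_{L} + \tfrac{1}{2}\lambda^{2}\sigma^{2})$, the moment generating function of $\mathcal{N}(\mu_{L},\sigma^{2})$; since the latter is everywhere finite, Curtiss's theorem (pointwise convergence of Laplace transforms on an interval around the origin implies convergence in distribution) gives (b).

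Parts (a) and (c) follow the same way, using Theorem \ref{theorem G} and Theorem \ref{theorem J} respectively; the only change is the bookkeeping of the linear-in-$W$ terms of $C_{4}$. For (a), $\mathcal{A} = 0$, so the mean correction vanishes and one recovers Johansson's CLT \eqref{CLT johansson}; for (c) one takes $\vec{\alpha}_{0} = (\alpha_{0},0,\ldots,0,\alpha_{m+1})$ with $\mathcal{A} = \alpha_{0}+\alpha_{m+1}$, and the surviving linear term is precisely $\mu_{J}$, while the quadratic-in-$W$ term of $C_{4}$ produces the same $\sigma^{2}$ in all three cases. I do not expect a genuine obstacle: all the analytic substance sits in Theorems \ref{theorem G}--\ref{theorem J}. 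The points requiring a little care are the integrability justifying the moment generating function identity (handled by $W = \bigO(V)$ together with $n$ large), the fact that $\lambda W$ is again an admissible $W$ so that the theorems apply verbatim to the tilted determinant, and the observation that one only needs \emph{pointwise} convergence of the moment generating functions on a neighbourhood of $0$ — so neither uniformity in $\lambda$ nor a complex-$W$ version of the theorems is needed.
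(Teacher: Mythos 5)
Your proposal is correct and follows essentially the same route as the paper: the authors likewise write the moment generating function of the linear statistic as a ratio of Hankel determinants via Heine's formula, apply Theorems \ref{theorem G}--\ref{theorem J} to the tilted weight to obtain $\mathbb{E}[e^{tX_{n}}] = \exp(t\mu + \tfrac{t^{2}}{2}\sigma^{2} + \bigO(\log n/n))$, and conclude by pointwise convergence of moment generating functions (they work out the Jacobi case and cite Billingsley where you cite Curtiss, but the argument is the same).
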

\begin{proof}
We only prove the result for Jacobi-type ensembles. The proofs for the other cases are similar. From Heine's formula, we have
\begin{equation}
\mathbb{E}_{J}\Big[ e^{t\sum_{j=1}^{n}W(x_{j})} \Big] = \frac{J_{n}\big( (\alpha_{0},0,...,0,\alpha_{m+1}),\vec{0},V,tW \big)}{J_{n}\big( (\alpha_{0},0,...,0,\alpha_{m+1}),\vec{0},V,0 \big)}, \qquad t \in \mathbb{R},
\end{equation}
where $\mathbb{E}_{J}$ means that the expectation is taken with respect to \eqref{Jacobi distribution}. Let $X_{n}$ be the random variable defined by
\begin{equation}
X_{n} = \sum_{j=1}^{n}W(x_{j}) - n \int_{-1}^{1} W(x) \frac{\psi(x)}{\sqrt{1-x^{2}}}dx.
\end{equation}
Theorem \ref{theorem J} then implies
\begin{equation}
\mathbb{E}_{J}\Big[e^{t X_{n}} \Big] = \exp \left( t \mu_{J} + \frac{t^{2}}{2}\sigma^{2}+\bigO\Big( \frac{\log n}{n} \Big) \right), \qquad \mbox{as } n \to + \infty.
\end{equation}
Thus, for each $t \in \mathbb{R}$, $(X_{n})$ is a sequence of random variables whose moment generating functions converge to $e^{t \mu_{J} + \frac{t^{2}}{2}\sigma^{2}}$ as $n \to + \infty$ (the convergence is pointwise in $t \in \mathbb{R}$). Convergence in distribution follows from well-known convergence theorems (see e.g. \cite{Billingsley}).
\end{proof}
\paragraph{Correlations of the characteristic polynomials.} Let $p_{n}(t)= \prod_{j=1}^{n} (t-x_{j})$ be the characteristic polynomial associated to a matrix from a Gaussian-type, Laguerre-type or Jacobi-type ensemble. Supported by numerical evidence, numerous conjectures in the literature have been formulated about links between $p_{n}(t)$ and the behavior of the Riemann $\zeta$-functions along the critical line (see e.g. \cite{KeaSna}). For Gaussian-type ensembles, correlations with root-type singularities were studied in \cite{Krasovsky} for $V(x) = 2x^{2}$ and in \cite{BerWebbWong} for general $V$. Large $n$ asymptotics for more general correlations with both root-type and jump-type singularities were obtained in \cite{Charlier}. However, the cases of Laguerre or Jacobi-type ensembles were still open. In the same way as in \cite[equation (1.16)]{Charlier}, we can express these correlations in terms of Hankel determinants with FH singularities as follows\footnote{There is an $n$ missing in \cite[equations (1.16) and (1.22)]{Charlier}: $e^{-i \pi \beta_{k}}$ should instead be $e^{-i n \pi \beta_{k}}$ and $s_{k}^{1/2}$ should instead be $s_{k}^{n/2}$. The correct expressions are given by \eqref{correlation finite n} and \eqref{gap prob thinned} of the present work.}:
\begin{equation}\label{correlation finite n}
\mathbb{E}_{D} \bigg[ \prod_{k=1}^{m} |p_{n}(t_{k})|^{\alpha_{k}}e^{2i \beta_{k} \arg p_{n}(t_{k})} \bigg] = \frac{D_{n}(\vec{\alpha},\vec{\beta},V,0)}{Z_{n}^{D}} \prod_{k=1}^{m}e^{-i n \pi \beta_{k}}, \qquad D = G,L,J,
\end{equation}
where $\mathbb{E}_{G}$, $\mathbb{E}_{L}$ and $\mathbb{E}_{J}$ are the expectations taken with respect to \eqref{Gaussian distribution}, \eqref{Laguerre distribution} and \eqref{Jacobi distribution}, respectively, and where
\begin{equation}
\arg p_{n}(t) = \sum_{j=1}^{n} \arg(t-x_{j}), \qquad \mbox{ with } \qquad \arg(t-x_{j}) = \left\{ \begin{matrix}
0, & \mbox{if } x_{j}<t, \\
-\pi, & \mbox{if } x_{j}>t.
\end{matrix} \right.
\end{equation}
Therefore, as an immediate corollary of Theorem \ref{theorem L} and Theorem \ref{theorem J}, we can obtain large $n$ asymptotics for the correlations given in \eqref{correlation finite n} for Laguerre and Jacobi-type ensembles.

\paragraph*{Gap probabilities in piecewise constant thinned point processes.} Given a point process, a constant thinning consists of removing each point independently with a certain probability $s \in [0,1]$. The remaining points, denoted by $y_{1},\ldots,y_{N}$, form a thinned point process, and can be interpreted in certain applications as observed points \cite{BohigasPato1, BohigasPato2}. Probabilities of observing a large gap in the thinned sine, Airy and Bessel point processes, as well as for thinned eigenvalues of Haar distributed unitary matrices, have been studied in \cite{BDIK}, \cite{BB2018}, \cite{BIP2019} and \cite{ChCl2}, respectively. A more general operation, which was considered in \cite{Charlier} for Gaussian-type ensembles, consists of applying a piecewise constant thinning. Large gap asymptotics for the piecewise constant thinned sine, Airy and Bessel point processes were obtained recently in \cite{ChSine}, \cite{ChCl3} and \cite{ChBessel}, respectively. From Theorem \ref{theorem L} and Theorem \ref{theorem J}, we can deduce large gap asymptotics for (piecewise constant) thinned Laguerre and Jacobi-type ensembles. Following \cite{Charlier}, we consider $\mathcal{K} \subseteq \{1,...,m+1\}$. For each $k \in \mathcal{K}$, we remove each point on $(t_{k-1},t_{k})$ with a probability $s_{k} \in (0,1]$. In the same way as shown in \cite[equations (1.20)--(1.22)]{Charlier}, we can express gap probabilities in the piecewise thinned spectrum of Gaussian, Laguerre and Jacobi-type ensembles as follows:
\begin{equation}\label{gap prob thinned}
\mathbb{P}_{D}\Big(\sharp \{ y_{j} \hspace{-0.05cm} \in \hspace{-0.05cm} \bigcup_{k\in \mathcal{K}} (t_{k-1},t_{k}) \} = 0\Big) =  \frac{D_{n}(\vec{\alpha},\vec{\beta},V,0)}{Z_{n}^{D}}\prod_{k \in \mathcal{K}} s_{k}^{n/2}, \qquad D = G,L,J,
\end{equation}
with $\alpha_{1}=...=\alpha_{m}=0$ and $\vec{\beta} = (\beta_{1},...,\beta_{m})$ given by
\begin{equation}
2i \pi \beta_{j} = \log \left( \frac{\widetilde{s}_{j}}{\widetilde{s}_{j+1}} \right), \qquad \widetilde{s}_{j} = \left\{ \begin{matrix}
s_{j}, & \mbox{if } j \in \mathcal{K}, \\
1, & \mbox{if } j \notin \mathcal{K},
\end{matrix} \right.
\end{equation}
and where again $\mathbb{P}_{G}$, $\mathbb{P}_{L}$ and $\mathbb{P}_{J}$ are probabilities taken with respect to \eqref{Gaussian distribution}, \eqref{Laguerre distribution} and \eqref{Jacobi distribution}, respectively.
%\begin{corollary}
%As $n \to + \infty$, we have
%\end{corollary}
\paragraph*{Rigidity and Gaussian multiplicative chaos.} Let us consider a sequence of matrices $M_{n}$ taken from either Gaussian, Laguerre, or Jacobi-type ensembles. As $n \to + \infty$, the logarithm of the characteristic polynomial of $M_{n}$ behaves like a log-correlated field. A fundamental tool in describing some properties of the limiting field is a class of random measures, known as Gaussian multiplicative chaos measures. Roughly speaking, these measures are  exponential of the field, however a precise definition is rather subtle.  This subject was introduced by Kahane in \cite{Kahane}, and we refer to \cite{RhodesVargas} for a recent review. For Gaussian-type ensembles, it is known (from \cite{BerWebbWong}) that a sufficiently small power of the absolute value of the characteristic polynomial converges weakly in
distribution to a Gaussian multiplicative chaos measure. Large $n$ asymptotics for Hankel determinants with root-type singularities provide crucial estimates in the proof. Theorem \ref{theorem L} and Theorem \ref{theorem J} provide similar estimates for Laguerre and Jacobi-type ensembles, which could probably be used to prove analogous results for the Laguerre and Jacobi cases. Another related topic is the study of rigidity, which attempts to answer the question: ``How much can the eigenvalues of a random matrix fluctuate?". For Gaussian-type ensembles, this question has been answered in \cite{ClaeysFahsLambertWebb}. This time, it is large $n$ asymptotics for Hankel determinants with jump-type singularities that are crucial in the analysis. In particular, the proof of \cite{ClaeysFahsLambertWebb} relies heavily on Theorem \ref{theorem G} (with $\vec{\alpha} = \vec{0}$). Theorem \ref{theorem L} and Theorem \ref{theorem J} provide similar estimates for Laguerre and Jacobi-type ensembles, which we believe are relevant to prove similar rigidity results for these ensembles.

\subsection*{Outline}
The general strategy of our proof is close to the one done in \cite{Charlier} (itself inspired by the earlier works \cite{Krasovsky,ItsKrasovsky,DIK,BerWebbWong}), and can be schematized as
\begin{equation}\label{lol4}
\begin{array}{l l l l l l l}
L_{n}(\vec{0},\vec{0},2(x+1),0) & \mapsto & L_{n}(\vec{\alpha},\vec{\beta},2(x+1),0) & \mapsto & L_{n}(\vec{\alpha},\vec{\beta},V,0) & \mapsto & L_{n}(\vec{\alpha},\vec{\beta},V,W), \\
 & & J_{n}(\vec{\alpha},\vec{\beta},0,0) &   \mapsto & J_{n}(\vec{\alpha},\vec{\beta},V,0) &  \mapsto & J_{n}(\vec{\alpha},\vec{\beta},V,W).
\end{array}
\end{equation}
In Section \ref{Section:OP and Y}, we recall a well-known correspondence between Hankel determinants and orthogonal polynomials (OPs), and the characterization of these OPs in terms of a Riemann-Hilbert (RH) problem found by Fokas, Its and Kitaev \cite{FokasItsKitaev}, and whose solution is denoted by $Y$. In Section \ref{Section:diff identities LUE}, we derive suitable differential identities, which express the quantities
\begin{equation}\label{lol2}
\begin{array}{l l l}
\partial_{\nu}\log L_{n}(\vec{\alpha},\vec{\beta},2(x+1),0), & \qquad \partial_{s}\log L_{n}(\vec{\alpha},\vec{\beta},V_{s},0), & \qquad \partial_{t}\log L_{n}(\vec{\alpha},\vec{\beta},V,W_{t}), \\
& \qquad \partial_{s}\log J_{n}(\vec{\alpha},\vec{\beta},V_{s},0), & \qquad \partial_{t}\log J_{n}(\vec{\alpha},\vec{\beta},V,W_{t}),
\end{array}
\end{equation}
in terms of $Y$, where $\nu \in \{\alpha_{0},\ldots,\alpha_{m},\beta_{1},\ldots,\beta_{m}\}$, and $s \in [0,1]$ and $t \in [0,1]$ are smooth deformation parameters (more details on these deformations are given in Section \ref{Section: integration in V} and Section \ref{Section: integration in W}). In Section \ref{Section: steepest descent}, we perform a Deift/Zhou steepest descent analysis of the RH problem to obtain large $n$ asymptotics for $Y$. We deduce from them asymptotics for the log derivatives given in \eqref{lol2}, and we also proceed with their successive integrations (represented schematically by an arrow in \eqref{lol4}). These computations are rather long, and we organise them in several sections: Section \ref{Section: FH integration} is devoted to integration in $\vec{\alpha}$ and $\vec{\beta}$, Section \ref{Section: integration in V} to integration in $s$ and Section \ref{Section: integration in W} to integration in $t$. We mention that the method of successive integrations with respect to the parameters of singularities $\alpha_{j}$'s to obtain the asymptotics of a Hankel determinant originated in \cite{Krasovsky}. Each integration only gives us asymptotics for a ratio of Hankel determinants. Therefore, it is important to chose carefully the starting point of integration in the set of parameters $(\vec{\alpha},\vec{\beta},V,W)$. For Laguerre-type weights, we chose this point to be $(\vec{0},\vec{0},2(x+1),0)$ and for Jacobi-type weights, we use the result of \cite{DIK} and chose $(\vec{\alpha},\vec{\beta},0,0)$. We recall the large $n$ asymptotics for $L_{n}(\vec{0},\vec{0},2(x+1),0)$ and for $J_{n}(\vec{\alpha},\vec{\beta},0,0)$ in Section \ref{Section: starting points of integrations}.

\paragraph{Notations.} We will use repetitively through the paper the convention $t_{0} = -1$, $t_{m+1} = 1$, $\beta_{0} = 0$ and $\beta_{m+1} = 0$. Furthermore, for Laguerre-type weights, we define $\alpha_{m+1} = 0$ and for Gaussian-type weights, we define $\alpha_{0} = 0$ and $\alpha_{m+1} = 0$. This allows us for example to rewrite $\omega$ given in \eqref{weight FH} as
\begin{equation}
\omega(x) = \prod_{j=0}^{m+1} \omega_{\alpha_{j}}(x)\omega_{\beta_{j}}(x).
\end{equation}

\section{A Riemann-Hilbert problem for orthogonal polynomials}
\label{Section:OP and Y}

We consider the family of OPs associated to the weight $w$ given in \eqref{weight introduction}. The degree $k$ polynomial $p_{k}$ is characterized by the relations
\begin{equation}\label{orthogonality_relation}
\int_{\mathcal{I}} p_k(x)x^{j}w(x)dx = \kappa_{k}^{-1}\de_{jk}, \qquad j=0,1,2,\ldots,k,
\end{equation}
where $\kappa_{k} \neq 0$ is the leading order coefficient of $p_{k}$. If $\be_j \in i\R$ and $\Re \al_j > -1, \ j=0,\ldots, m+1$, then $w(x) > 0$ for almost all $x \in \mathcal{I}$. In this case, we can rewrite \eqref{orthogonality_relation} as an inner product and it is a simple consequence of Gram-Schmidt that the OPs exist. However, for general values of $\alpha_{j}$ and $\beta_{j}$, the weight $w$ is complex-valued and existence is no more guaranteed. This fact introduces some technicalities in the analysis that are briefly discussed in Section \ref{Section: FH integration}, Section \ref{Section: integration in V} and Section \ref{Section: integration in W}. 

\vspace{0.3cm}\hspace{-0.55cm}We associate to these OPs a RH problem for a $2 \hspace{-0.08cm} \times \hspace{-0.08cm} 2$ matrix-valued function $Y$, due to \cite{FokasItsKitaev}. As mentioned in the outline, it will play a crucial role in our proof.

\subsubsection*{RH problem for $Y$}
\begin{itemize}
\item[(a)] $Y: \C \setminus \mathcal{I} \to \C^{2 \times 2}$ is analytic.
\item[(b)] The limits of $Y(z)$ as $z$ tends to $x \in \mathcal{I} \setminus \{-1,t_1,\ldots, t_m,1\}$ from the upper and lower half plane exist, and are denoted $Y_{\pm}(x)$ respectively. Furthermore, the functions $x \mapsto Y_{\pm}(x)$ are continuous on $\mathcal{I} \setminus \{-1,t_1,\ldots, t_m,1\}$ and are related by
\begin{equation}\label{Y_Jump}
Y_+(x)=Y_-(x) \begin{pmatrix}
1 & w(x) \\
0 & 1
\end{pmatrix},
\qquad x \in \mathcal{I} \setminus \{-1,t_1,\ldots, t_m,1\}.
\end{equation} 
\item[(c)] As $z \to \infty$,  \begin{equation}\label{Y_Asymptotics_infty}
Y(z)=\big( I + \mathcal{O}(z^{-1}) \big) z^{n \sigma_3} , \quad \mbox{where} \quad \sigma_3 = \begin{pmatrix}
1 & 0 \\
0 & -1
\end{pmatrix}.
\end{equation}
\item[(d)] As $z \to t_j$, for $j=0,1,\ldots,m+1$ (with $t_{0}:=-1$ and $t_{m+1}:=1$), we have
\begin{equation}\label{Y_Asymptotics_tj}
Y(z)=\begin{cases}\begin{pmatrix}
\mathcal{O}(1) & \mathcal{O}(1) +\mathcal{O}((z-t_j)^{\al_j})\\
\mathcal{O}(1) & \mathcal{O}(1) + \mathcal{O}((z-t_j)^{\al_j})\end{pmatrix}, & \mbox{if } \Re \al_j \neq 0,  \\
\begin{pmatrix}
\mathcal{O}(1) & \mathcal{O}(\log(z-t_j))\\
\mathcal{O}(1) & \mathcal{O}(\log(z-t_j))\end{pmatrix},  & \mbox{if } \Re \al_j = 0.
\end{cases}
\end{equation}
\end{itemize}
The solution of the RH problem for $Y$ is always unique, exists if and only if $p_{n}$ and $p_{n-1}$ exist, and is explicitly given by
\begin{equation}\label{OPsolution}
Y(z) = \begin{pmatrix}
\di \ka^{-1}_n p_n(z) & \di  \frac{\ka^{-1}_n}{2\pi i} \int_{\mathcal{I}} \frac{p_n(x)w(x)}{x-z}dx \\[0.35cm]
\di	-2\pi i \ka_{n-1} p_{n-1}(z) & \di -\ka_{n-1} \int_{\mathcal{I}} \frac{p_{n-1}(x)w(x)}{x-z}dx
\end{pmatrix}.
\end{equation}
The fact that $Y$ given by \eqref{OPsolution} satisfies the condition (b) of the RH problem for $Y$ follows from the Sokhotski formula and relies on the assumption that $W$ is locally H\"{o}lder  continuous on $\mathcal{I}$ (see e.g. \cite{Gakhov}).

\section{Differential identities}\label{Section:diff identities LUE}

In this section, we express the logarithmic derivatives given in \eqref{lol2} in terms of $Y$.

\subsection{Identity for $\partial_{\nu}\log L_{n}(\vec{\alpha},\vec{\beta},2(x+1),0)$ with $\nu \in \{\alpha_{0},\ldots,\alpha_{m},\beta_{1},\ldots,\beta_{m}\}$}
In this subsection, we specialize to the Laguerre-type weight $w(x) = \omega(x)e^{-2n(x+1)}$.

\medskip Note that the second column of $Y$ blows up as $z \to t_{k}$, $k=0,1,\ldots,m$ as shown in \eqref{Y_Asymptotics_tj}. The terms of order $1$ in these asymptotics will contribute in our identity for $\partial_{\nu}\log L_{n}(\vec{\alpha},\vec{\beta},2(x+1),0)$. To prepare ourselves for that matter, following \cite[eq (3.6)]{Charlier}, for each $k \in \{1,\ldots,m\}$ we define a regularized integral by
\begin{equation}\label{Reg_k}
\mbox{Reg}_k(f) = \lim_{\ep \to 0_{+}} \left[ \al_k \int_{\mathcal{I} \setminus [t_{k}-\epsilon,t_{k}+\epsilon]} \frac{f(x)\om(x)}{x-t_k}dx - f(t_k)\om_{t_k}(t_k)(e^{\pi i \be_k} - e^{-\pi i \be_k})\ep^{\al_k} \right],
\end{equation}
where $\mathcal{I}=[-1,+\infty)$, $f:\mathcal{I}\to \mathbb{C}$ is smooth and such that $f\omega$ is integrable on $\mathcal{I}$, and
\begin{equation}
\omega_{t_{k}}(x) = \prod_{\substack{0 \leq j \leq m \\ j \neq k }} \omega_{\alpha_{j}}(x)\omega_{\beta_{j}}(x).
\end{equation}

\vspace{-0.2cm}\hspace{-0.55cm}For $k = 0$, we define the regularized integral as above, with $e^{\pi i \beta_{k}}$ replaced by $0$ and $e^{-\pi i \beta_{k}}$ replaced by $1$ (we also recall that $t_{0} = -1$), i.e. we have
\begin{equation}\label{Reg_0}
\mbox{Reg}_0(f) := \lim_{\ep \to 0_{+}} \left[ \al_0 \int_{\mathcal{I} \setminus [t_{0},t_{0}+\epsilon]} \frac{f(x)\om(x)}{x-t_0}dx + f(t_0)\om_{-1}(t_0)\ep^{\al_0} \right].
\end{equation}
%Similarly, if the weight is of Jacobi-type, we define also a regularized integral for $k = m+1$ by replacing $e^{\pi i \beta_{k}}$ by $1$ and $e^{-\pi i \beta_{k}}$ by $0$ in \eqref{Reg_k}, i.e. we have
%\begin{equation}\label{Reg_m+1}
%\mbox{Reg}_{m+1}(f) := \lim_{\ep \to 0_{+}} \left[ \al_{m+1} \int_{\mathcal{I} \setminus [t_{m+1}-\epsilon,t_{m+1}+\epsilon]} \frac{f(x)\om(x)}{x-t_{m+1}}dx - f(t_{m+1})\om_k(t_{m+1})\ep^{\al_{m+1}} \right].
%\end{equation}
%If the weight is of Laguerre-type, there is no need to define a regularized integral for $k = m+1$. 
%The following proposition is an adaptation of \cite[Proposition 3.1]{Charlier} (itself being a generalization for $\vec{\beta} \neq \vec{0}$ of a result of \cite{Krasovsky}). This will enable us to write the differential identities in terms of the solution of the Riemann-Hilbert problem for $Y$.
	
\begin{proposition}\label{reg-integrals}
Let $k \in \{0,1,\ldots,m\}$, and let $f: \mathcal{I}\to \mathbb{C}$ be a smooth function, analytic in a neighborhood of $t_{k}$, and such that $f \omega$ is integrable on $\mathcal{I}$. We have
\begin{equation}\label{Reg_k versus the Cauchy integral}
\mbox{Reg}_k(f)  =\lim_{z \to t_k} \al_k \int_{\mathcal{I}} \frac{f(x)\om(x)}{x-z}dx-\mathcal{J}_k(z)
\end{equation}
where the limit is taken along a path in the upper-half plane which is non-tangential to the real line. If $k \in \{ 1,\ldots,m\}$, $\mathcal{J}_k(z)$ is given by 
\begin{equation}\label{J_k}
\mathcal{J}_k(z)=\begin{cases}
\di \frac{\pi \al_k}{\sin(\pi \al_k)}f(t_k)\om_{t_k}(t_k)(e^{\pi i \be_k} - e^{-\pi i \al_k}e^{-\pi i \be_k}) (z-t_k)^{\al_k}, & \mbox{if }\Re \al_k \leq 0, \ \al_k \neq 0, \\
f(t_k)\om_{t_k}(t_k)(e^{\pi i \be_k} - e^{-\pi i \be_k}), & \mbox{if }\al_k=0, \\
0, & \mbox{if }\Re \al_k > 0.
\end{cases}
\end{equation}
If $k = 0$, we have
\begin{equation}\label{J_0}
\mathcal{J}_0(z)=\begin{cases}
\di -\frac{\pi \al_0 e^{-\pi i \alpha_{0}}}{\sin(\pi \al_0)}f(t_0)\om_{-1}(t_0) (z-t_0)^{\al_0}, & \mbox{if } \Re \al_0 \leq 0, \ \al_0 \neq 0, \\
-f(t_0)\om_{-1}(t_0), & \mbox{if } \al_0=0, \\
0, & \mbox{if } \Re \al_0 > 0.
\end{cases}
\end{equation}
%and 
%\begin{equation}\label{J_m+1}
%J_{m+1}(z)=\begin{cases}
%\di \frac{\pi \al_{m+1}}{\sin(\pi \al_{m+1})}f(t_{m+1})\om_k(t_{m+1}) (z-t_{m+1})^{\al_{m+1}}, & \Re \al_{m+1} \leq 0, \ \al_{m+1} \neq 0, \\
%f(t_{m+1})\om_k(t_{m+1}), & \al_{m+1}=0, \\
%0, & \Re \al_{m+1} > 0.
%\end{cases}
%\end{equation}
\end{proposition}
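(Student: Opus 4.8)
The plan is to compute $\lim_{z\to t_k}\big(\al_k\int_{\mathcal I}\frac{f(x)\om(x)}{x-z}dx - \mathrm{Reg}_k(f)\big)$ directly by localizing the singular behaviour of the Cauchy integral near $t_k$. First I would split the integral as $\int_{\mathcal I}=\int_{\mathcal I\setminus[t_k-\ep,t_k+\ep]}+\int_{[t_k-\ep,t_k+\ep]\cap\mathcal I}$ for a small fixed $\ep>0$; the first piece is continuous in $z$ up to $t_k$ and contributes exactly the $\al_k\int_{\mathcal I\setminus[t_k-\ep,t_k+\ep]}\frac{f\om}{x-t_k}dx$ term appearing in the definition \eqref{Reg_k} of $\mathrm{Reg}_k(f)$. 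So everything reduces to analyzing $\al_k\int_{[t_k-\ep,t_k+\ep]}\frac{f(x)\om(x)}{x-z}dx$ as $z\to t_k$ non-tangentially from the upper half plane, and matching it against the counterterm $-f(t_k)\om_{t_k}(t_k)(e^{\pi i\be_k}-e^{-\pi i\be_k})\ep^{\al_k}$ (resp. $+f(t_0)\om_{-1}(t_0)\ep^{\al_0}$ for $k=0$).

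On that short interval I would replace $f(x)\om_{t_k}(x)$ by its value $f(t_k)\om_{t_k}(t_k)$ at $t_k$ up to an error that is $O(\ep)$ uniformly (here analyticity of $f$ near $t_k$ and of $\om_{t_k}$, which has no singularity at $t_k$, is used), so that the leading contribution is $f(t_k)\om_{t_k}(t_k)\,\al_k\int_{t_k-\ep}^{t_k+\ep}\frac{|x-t_k|^{\al_k}\om_{\be_k}(x)}{x-z}dx$. After the substitution $x=t_k+u$, $z=t_k+\zeta$ this is a model integral $\al_k\int_{-\ep}^{\ep}\frac{|u|^{\al_k}\om_{\be_k}(t_k+u)}{u-\zeta}du$, with $\om_{\be_k}(t_k+u)=e^{i\pi\be_k}$ for $u<0$ and $e^{-i\pi\be_k}$ for $u>0$; for $k=0$ the range is $u\in(0,\ep]$ only and $\om_{\be_0}\equiv1$. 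This integral can be evaluated explicitly: one writes $\al_k\int_{-\ep}^{\ep}\frac{|u|^{\al_k}}{u-\zeta}du$-type pieces, uses $\al_k\int_0^{\ep}\frac{u^{\al_k}}{u-\zeta}du = \ep^{\al_k} + \zeta\,\al_k\int_0^\ep\frac{u^{\al_k-1}}{u-\zeta}du$ (integration by parts, valid for $\Re\al_k>0$; for $\Re\al_k\le 0$, $\al_k\ne0$, one instead extracts the convergent tail and the singular part $\int_0^\infty\frac{u^{\al_k}}{u-\zeta}du=\zeta^{\al_k}\cdot\frac{-\pi}{\sin\pi\al_k}e^{\mp i\pi\al_k}$ via the standard Beta-function / keyhole computation, with the branch dictated by $\zeta$ lying in the upper half plane), and similarly on $(-\ep,0)$ after $u\mapsto-u$. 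The $\ep^{\al_k}$ boundary terms produced this way cancel precisely the counterterms in $\mathrm{Reg}_k(f)$, and the residual $\zeta$-dependent pieces give, in the limit $\zeta\to0$ non-tangentially: $0$ when $\Re\al_k>0$ (the remaining integral converges and $\zeta\cdot(\text{bounded})\to0$), the jump constant $f(t_k)\om_{t_k}(t_k)(e^{i\pi\be_k}-e^{-i\pi\be_k})$ when $\al_k=0$ (here the two logarithms from $\int\frac{du}{u-\zeta}$ over $(-\ep,0)$ and $(0,\ep)$ combine with the jump in $\om_{\be_k}$), and the $(z-t_k)^{\al_k}$ term with the coefficient $\frac{\pi\al_k}{\sin\pi\al_k}f(t_k)\om_{t_k}(t_k)(e^{i\pi\be_k}-e^{-i\pi\al_k}e^{-i\pi\be_k})$ when $\Re\al_k<0$. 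The case $k=0$ is the same computation with only the $u>0$ half-line and $\om_{\be_0}\equiv1$, which yields \eqref{J_0}.

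The main obstacle I anticipate is bookkeeping the branch cuts: one must fix $(z-t_k)^{\al_k}$ and $u^{\al_k}$, $(-u)^{\al_k}$ consistently, track which sheet $\zeta^{\al_k}$ lands on given that $z\to t_k$ from the upper half plane non-tangentially (this is exactly why the asymmetric factor $e^{-i\pi\al_k}$ appears, and why $k=0$ — a hard edge with $\om_{\be_0}\equiv1$ and integration only on one side — gets the extra $e^{-i\pi\al_0}$ instead of a $\sin$), and confirm that all the spurious $\ep$-dependent terms genuinely cancel so that the final answer is $\ep$-independent as it must be. A secondary technical point is justifying the interchange of $\lim_{z\to t_k}$ with the various integrals and the $O(\ep)$ Taylor remainder estimate uniformly in $z$ near $t_k$; this follows from dominated convergence together with the non-tangential approach (which keeps $|u-\zeta|\gtrsim |u|$), and is the kind of estimate already implicit in the Sokhotski–Plemelj arguments used for the RH problem for $Y$.
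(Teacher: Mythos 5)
Your argument is correct in outline and the key computations check out, but it takes a genuinely different route from the paper. The paper (for $k=0$; the bulk case is quoted from \cite[Proposition 3.1]{Charlier}, itself following \cite{Krasovsky}) does not localize and evaluate a model Cauchy transform; instead it introduces the auxiliary contour integral $\int_{\mathcal C}\frac{f(x)\om_{-1}(x)(x-t_0)^{\al_0}}{(x-z)(x-t_0)}dx$ over a keyhole contour around the cut and evaluates it once by residues and once by parametrization. The discontinuity factor $1-e^{2\pi i\al_0}$ of $(x-t_0)^{\al_0}$ across the cut then produces, in a single identity, the truncated integral, the $\ep^{\al_0}$ counterterm, and the singular term $\frac{\pi\al_0}{\sin(\pi\al_0)}e^{-i\pi\al_0}(z-t_0)^{\al_0}$ simultaneously, so the matching you do by hand is automatic there. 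Your approach instead splits off $[t_k-\ep,t_k+\ep]$, freezes $f\om_{t_k}$ at $t_k$ (the $O(\ep^{1+\Re\al_k})$ error estimate is fine, and your observation that the non-tangential approach gives $|u-\zeta|\gtrsim|u|$ is exactly what makes it uniform in $z$), and computes $\al_k\int_0^{\ep}\frac{u^{\al_k}}{u\mp\zeta}du$ via the Beta integral; I verified that the branch bookkeeping ($(-\zeta)^{\al_k}=e^{-i\pi\al_k}\zeta^{\al_k}$ for $\Im\zeta>0$ on one side, $\zeta^{\al_k}$ on the other) reproduces both the asymmetric coefficient $e^{\pi i\be_k}-e^{-\pi i\al_k}e^{-\pi i\be_k}$ in \eqref{J_k} and the one-sided formula \eqref{J_0}, and that the $\ep^{\al_k}$ boundary terms cancel the counterterms in \eqref{Reg_k} and \eqref{Reg_0}. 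What your method buys is transparency about where each term comes from; what it costs is one extra technical point you should address: for $\Re\al_k=0$, $\al_k\neq 0$ (allowed in the first case of \eqref{J_k}) the tail $\int_\ep^{\infty}u^{\al_k-1}du$ you subtract is not convergent, so you must either evaluate the finite model integral exactly (incomplete Beta / hypergeometric) or establish the identity for $\Re\al_k<0$ and extend to $\Re\al_k=0$ by analytic continuation in $\al_k$ — the same continuity-in-parameters device the paper uses elsewhere. Also note that for $\al_k=0$ the left-hand side carries the prefactor $\al_k=0$, so that case is immediate rather than a logarithm cancellation; your formula is still consistent with it as the $\al_k\to0$ limit.
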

\begin{proof}
The proof for $k = 1,\ldots,m$ can be found in \cite[Proposition 3.1]{Charlier} (which is itself based on \cite{Krasovsky}). The case $k = 0$ can be proved as in \cite{Krasovsky}, or by replacing $e^{\pi i \beta_{k}}$ by $0$ and $e^{-\pi i \beta_{k}}$ by 1 in the proof of \cite[Proposition 3.1]{Charlier}. For the convenience of the reader, we also sketch the proof for $k=0$ here. The statement for $\alpha_{0}=0$ and $\Re \alpha_{0} > 0$ is trivial, and we assume from now that $\Re \alpha_{0} \leq 0, \; \alpha_{0}\neq 0$. Let $z$ be fixed, with $\Im z \neq 0$, but close enough to $t_{0}$ so that $z$ lies in the domain of analyticity of $f$. Let $\epsilon>0$ be such that $\epsilon < |z|$, let $r_{0}>t_{0}+\epsilon$ be fixed and such that $f$ is analytic at $r_{0}$, and let $\mathcal{C}$ be a close contour lying in the domain of analyticity of $f$ and surrounding $z$ in the positive direction as shown in Figure \ref{fig:contour C of Krasovsky}.
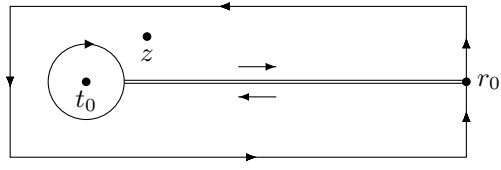
\begin{figure}
\begin{center}
\begin{tikzpicture}
\node at (0,0) {};
\draw[->-=0.203,->-=0.35,->-=0.503,->-=0.67,->-=0.805] (0.5,0.02)--(5,0.02)--(5,1)--(-1,1)--(-1,-1)--(5,-1)--(5,-0.02)--(0.5,-0.02);
\draw[fill] (0,0) circle (0.05);
\node at (0,-0.25) {$t_{0}$};
\draw (0,0) circle (0.5);
\draw[fill] (0.8,0.6) circle (0.05);
\node at (0.8,0.35) {$z$};
\draw[black,arrows={-Triangle[length=0.12cm,width=0.12cm]}]
($(0.1,0.5)$) --  ++(0:0.001);
\draw[->-=1] (2,0.2)--(2.5,0.2);
\draw[-<-=0] (2,-0.2)--(2.5,-0.2);

\node at (5.3,0) {$r_{0}$};
\draw[fill] (5,0) circle (0.05);
\end{tikzpicture}
\end{center}
\caption{The contour $\mathcal{C}$. The circle surrounding $t_{0}$ is part of $\mathcal{C}$ and its radius is equal to $\epsilon > 0$.}\label{fig:contour C of Krasovsky}
\end{figure}
In this proof, we consider the branch of $(x-t_{0})^{\alpha_{0}}$ associated to $\arg(x-t_{0}) \in [0,2\pi)$. By evaluating the integral
\begin{align*}
\int_{\mathcal{C}} \frac{f(x)\omega_{-1}(x)(x-t_{0})^{\alpha_{0}}}{(x-z)(x-t_{0})}dx
\end{align*}
both by residue and by direct parametrization, we obtain
\begin{multline*}
2\pi i f(z) \omega_{-1}(z)(z-t_{0})^{\alpha_{0}-1} = G_{1}(z) + (1-e^{2\pi i \alpha_{0}}) \int_{t_{0}+\epsilon}^{r_{0}} \frac{f(x)\omega_{-1}(x)|x-t_{0}|^{\alpha_{0}}}{(x-z)(x-t_{0})}dx \\ - \frac{1-e^{2\pi i \alpha_{0}}}{\alpha_{0}(z-t_{0})}f(t_{0})\omega_{-1}(t_{0})\epsilon^{\alpha_{0}} + \bigO(\epsilon^{\Re \alpha_{0}+1})
\end{multline*}
as $\epsilon \to 0$, where $G_{1}$ is analytic in a neighbourhood of $t_{0}=-1$. Multiplying by $\frac{\alpha_{0}(z-t_{0})}{1-e^{2\pi i \alpha_{0}}}$, and noting that
\begin{align*}
& \frac{1}{(x-z)(x-t_{0})} = \frac{1}{z-t_{0}}\bigg( \frac{1}{x-z}-\frac{1}{x-t_{0}} \bigg), \\
& \int_{r_{0}}^{+\infty} \frac{f(x)\omega(x)}{x-z}dx = \int_{r_{0}}^{+\infty} \frac{f(x)\omega(x)}{x-t_{0}}dx + (z-t_{0})G_{2}(z),
\end{align*}
where $G_{2}$ is analytic at $z=t_{0}$, we find
\begin{multline*}
\alpha_{0} \int_{\mathcal{I}\setminus[-1,-1+\epsilon]} \frac{f(x)\omega(x)}{x-t_{0}}dx + f(t_{0})\omega_{-1}(t_{0}) \epsilon^{\alpha_{0}} = \alpha_{0} \int_{\mathcal{I}} \frac{f(x)\omega(x)}{x-z}dx  \\ +\frac{\pi\alpha_{0} f(z)\omega_{-1}(z)}{\sin(\pi\alpha_{0})}(z-t_{0})^{\alpha_{0}}e^{- i \pi\alpha_{0}} + \frac{G_{3}(z)(z-t_{0}) + \bigO(\epsilon^{\Re\alpha_{0}+1})}{2i e^{i \pi\alpha_{0}}\sin (\pi\alpha_{0})}\alpha_{0},
\end{multline*}
where $G_{3}$ is analytic at $z=t_{0}$. We obtain the claim after first letting $\epsilon \to 0_{+}$ and then $z \to t_{0}$. 
\end{proof}
Since the second column of $Y(z)$ blows up as $z \to t_{j}$, $j=0,\ldots,m$, we regularize $Y$ at these points using the definitions \eqref{Reg_k} and \eqref{Reg_0} as follows:
\begin{equation}\label{Reg_Solution-(t_j)}
\widetilde{Y}(t_j) := \di \begin{pmatrix}
Y_{11}(t_j) & \mbox{Reg}_j\Big( \di \frac{1}{2 \pi i} Y_{11}(x)e^{-2n(x+1)} \Big)\\[0.25cm]
Y_{21}(t_j) & \mbox{Reg}_j\Big( \di \frac{1}{2 \pi i} Y_{21}(x)e^{-2n(x+1)}\Big)
\end{pmatrix}.
\end{equation}
From Proposition \ref{reg-integrals}, we have
\begin{equation}\label{RHPsolution_versus_Regsolution_k}
\widetilde{Y}_{k2}(t_j) = \lim_{z \to t_j} \al_j Y_{k2}(z) - c_j Y_{k1}(t_j)(z-t_j)^{\al_j}, \ \ \ \ \ k=1,2,
\end{equation}
where the limit is taken along a path in the upper half plane non-tangential to the real line. For $j = 1,\ldots,m$, $c_{j}$ is given by
\begin{equation}\label{c_j}
c_j = \frac{\pi \al_j}{\sin(\pi \al_j)}\frac{e^{-2n(t_j+1)}}{2\pi i}\om_{t_j}(t_j)(e^{\pi i \be_j} - e^{-\pi i \al_j}e^{-\pi i \be_j}),
\end{equation}
and for $j = 0$ we have
\begin{equation}\label{c_0}
c_0 = \frac{\pi \al_0}{\sin(\pi \al_0)}\frac{- e^{-\pi i \al_0}}{2\pi i}\om_{-1}(-1).
\end{equation}
Note that $\det \widetilde{Y}(t_j)$ is not equal to $1$, but instead we have
\begin{equation}\label{determinant of Y tilde}
\det \widetilde{Y}(t_j) = \al_j, \qquad j=0,1,\ldots, m.
\end{equation}
\begin{proposition}\label{prop: diff identity Laguerre}
Let $p_{0},p_{1},\ldots$ be the family of OPs with respect to the weight $w(x) = \omega(x)e^{-2n(x+1)}$, whose leading coefficients are denoted by
\begin{equation}\label{Orthogonal_Polynomials}
p_{k}(x) = \kappa_{k}(x^{k} + \eta_{k}x^{k-1}+\ldots).
\end{equation}
Let $\nu \in \{\alpha_{0},\alpha_{1},\beta_{1},\ldots,\alpha_{m},\beta_{m}\}$ and let $n$, $\vec{\alpha}$ and $\vec{\beta}$ be such that $p_{0},p_{1},\ldots,p_{n}$ exist. We have the following identity: 
\begin{multline}\label{Diff_identity_alphas_betas}
\partial_{\nu} \log L_n(\vec{\al},\vec{\be}, 2(x+1),0) = -(n + \mathcal{A})\partial_{\nu}\log(\ka_n\ka_{n-1})+2n \partial_{\nu} \eta_n \\ +  \sum^{m}_{j=0} \Big( \widetilde{Y}_{22}(t_j)\partial_{\nu}Y_{11}(t_j) - \widetilde{Y}_{12}(t_j)\partial_{\nu}Y_{21}(t_j) + Y_{11}(t_j)\widetilde{Y}_{22}(t_j)\partial_{\nu}\log(\ka_n\ka_{n-1}) \Big),
\end{multline}
where $\mathcal{A} = \sum_{j=0}^{m} \alpha_{j}$.
\end{proposition}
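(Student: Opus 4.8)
The plan is to express $L_n(\vec\alpha,\vec\beta,2(x+1),0)$ as a product involving the leading coefficients $\kappa_k$ of the orthogonal polynomials, differentiate this product with respect to $\nu$, and then convert the resulting expression into one involving $Y$ and $\widetilde Y$. First I would recall from Section \ref{Section:OP and Y} the standard identity $L_n(\vec\alpha,\vec\beta,2(x+1),0)=\prod_{k=0}^{n-1}\kappa_k^{-2}$ (Heine's formula together with the orthogonality normalization \eqref{orthogonality_relation}), so that $\partial_\nu\log L_n=-2\sum_{k=0}^{n-1}\partial_\nu\log\kappa_k$. The task is then to find a tractable expression for $\partial_\nu\log\kappa_k$, or rather for the telescoped sum, since differentiating a single $\kappa_k$ is awkward. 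I expect it is cleaner to work instead with a direct differential identity: writing the Hankel determinant $H_n=\det(\mu_{j+k})_{j,k=0}^{n-1}$ with $\mu_j=\int_{\mathcal I}x^j w\,dx$, one has $\partial_\nu\log H_n=\mathrm{tr}\big(H_n^{-1}\partial_\nu H_n\big)$, and the matrix entries $\partial_\nu\mu_{j+k}$ can be written in terms of the logarithmic derivative of the weight, $\partial_\nu\log w(x)$. For $\nu=\alpha_k$ this derivative is $\log|x-t_k|$, and for $\nu=\beta_k$ it is $\pm i\pi$; in both cases the singularity at $t_k$ is exactly what forces the regularized integrals $\mathrm{Reg}_j$ and the regularized matrix $\widetilde Y(t_j)$ to enter.

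The key technical step is to re-express $\mathrm{tr}(H_n^{-1}\partial_\nu H_n)$ using the orthogonal polynomials. The standard route is to use the reproducing-kernel / Christoffel–Darboux structure: $H_n^{-1}$ has entries built from the coefficients of $p_0,\dots,p_{n-1}$, and contracting against $\partial_\nu\mu_{j+k}$ produces integrals of the form $\int p_j(x)p_k(x)\,\partial_\nu\log w(x)\,w(x)\,dx$. By orthogonality, most terms vanish or simplify, and one is left with a boundary-type contribution coming from the non-smooth part of $\partial_\nu\log w$ near $t_j$, plus the term $2n\,\partial_\nu\eta_n$ (the $x^{n-1}$-coefficient of $p_n$) and the $(n+\mathcal A)\partial_\nu\log(\kappa_n\kappa_{n-1})$ term, which arise when one carefully tracks how the degree-$n$ and degree-$(n-1)$ polynomials interact with the derivative. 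The presence of $\mathcal A=\sum\alpha_j$ in the coefficient of $\partial_\nu\log(\kappa_n\kappa_{n-1})$ comes from the large-$x$ behaviour of $\partial_\nu\log w$ when $\nu=\alpha_k$ (since $\log|x-t_k|\sim\log|x|$), which shifts the effective degree count; this bookkeeping is delicate and, together with the regularization at $t_0=-1$ (which is a hard edge and has the asymmetric $\mathrm{Reg}_0$), I expect to be the main obstacle.

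To handle the singular contributions at the points $t_j$, I would insert the definition \eqref{Reg_Solution-(t_j)} of $\widetilde Y(t_j)$ and use the relation \eqref{RHPsolution_versus_Regsolution_k} together with Proposition \ref{reg-integrals} to rewrite the principal-value-type integrals that appear. The combination $\widetilde Y_{22}(t_j)\partial_\nu Y_{11}(t_j)-\widetilde Y_{12}(t_j)\partial_\nu Y_{21}(t_j)$ is recognizable as the off-diagonal entry of $\widetilde Y(t_j)^{-1}\partial_\nu Y(t_j)$ up to the factor $\det\widetilde Y(t_j)=\alpha_j$ from \eqref{determinant of Y tilde}; this is why the $\widetilde Y$-regularization (rather than $Y$ itself, whose second column blows up) is exactly the right object, and why the extra term $Y_{11}(t_j)\widetilde Y_{22}(t_j)\partial_\nu\log(\kappa_n\kappa_{n-1})$ appears — it compensates the $z^{n\sigma_3}$ normalization in condition (c) of the RH problem when one passes between $Y$ and the monic-polynomial normalization.

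Finally I would assemble the pieces: the bulk of the trace computation yields $-(n+\mathcal A)\partial_\nu\log(\kappa_n\kappa_{n-1})+2n\partial_\nu\eta_n$, and the localized contributions at $t_0,\dots,t_m$ yield the sum in \eqref{Diff_identity_alphas_betas}. A useful cross-check is the case $\nu=\beta_k$, for which $\partial_\nu\log w$ has no growth at infinity, so $\mathcal A$-dependent terms should drop out of the corresponding part of the computation, consistent with the fact that the coefficient $(n+\mathcal A)$ multiplies $\partial_\nu\log(\kappa_n\kappa_{n-1})$ uniformly (the $\beta_k$-derivative of the weight being $L^1$-integrable and odd around $t_k$, its contribution to $\partial_\nu\log(\kappa_n\kappa_{n-1})$ itself is what carries the dependence). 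The whole argument is a variant of the differential-identity computations in \cite{Krasovsky} and \cite[Section 3]{Charlier}, adapted to the Laguerre-type weight and, crucially, to the hard edge at $t_0=-1$ where the one-sided regularization \eqref{Reg_0} replaces the symmetric one; verifying that the hard-edge term $j=0$ fits into the same uniform formula is the step I would be most careful about.
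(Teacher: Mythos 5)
Your proposal assembles the right supporting objects (the regularized integrals $\mbox{Reg}_j$, the matrix $\widetilde Y(t_j)$, the role of $\det\widetilde Y(t_j)=\al_j$, and the asymmetric regularization at the hard edge $t_0=-1$), but the computational engine you propose does not lead to the stated identity. You start from $\partial_\nu\log L_n=\mathrm{tr}(H_n^{-1}\partial_\nu H_n)$, which after diagonalization gives $\sum_{k=0}^{n-1}\int p_k(x)^2\,\partial_\nu\log w(x)\,w(x)\,dx$; at that point orthogonality has already done all it can, and the remaining integrals $\int p_k^2\log|x-t_j|\,w\,dx$ do \emph{not} ``vanish or simplify by orthogonality'' --- $\log|x-t_j|$ is not a polynomial, and no boundary terms appear because nothing has been integrated by parts. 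Moreover, the target formula \eqref{Diff_identity_alphas_betas} contains $\partial_\nu\eta_n$, $\partial_\nu\log(\ka_n\ka_{n-1})$ and $\partial_\nu Y_{k1}(t_j)$, i.e.\ $\nu$-derivatives of the polynomials themselves, which simply never arise in the trace-formula route (it only ever produces the polynomials integrated against $\partial_\nu\log w$). The step ``one is left with a boundary-type contribution \dots plus the term $2n\,\partial_\nu\eta_n$ and the $(n+\mathcal A)\partial_\nu\log(\ka_n\ka_{n-1})$ term'' is therefore not a computation but an assertion of the answer.

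The paper's proof goes through a different identity, due to Krasovsky: $\partial_\nu\log L_n=-n\partial_\nu\log\ka_{n-1}+\frac{\ka_{n-1}}{\ka_n}(I_1-I_2)$ with $I_1=\int p_{n-1}'\,\partial_\nu p_n\,w$ and $I_2=\int p_n'\,\partial_\nu p_{n-1}\,w$. One then integrates by parts on $\mathcal I_\ep$ using $w'/w=-2n+\sum_j\frac{\al_j}{x-t_j}$; the $-2n$ piece paired with $\int p_{n-1}\partial_\nu p_n w=\frac{\ka_n}{\ka_{n-1}}\partial_\nu\eta_n$ produces $2n\,\partial_\nu\eta_n$, and the $\frac{\al_j}{x-t_j}$ pieces contribute $\al_j\frac{\partial_\nu\ka_n}{\ka_{n-1}}$ each (since $\frac{\partial_\nu p_n(x)-\partial_\nu p_n(t_j)}{x-t_j}$ is a polynomial of degree $n-1$ with leading coefficient $\partial_\nu\ka_n$), which is where $\mathcal A$ enters --- not, as you suggest, from the large-$x$ behaviour of $\log|x-t_k|$. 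The leftover $\partial_\nu p_n(t_j)\int\frac{p_{n-1}w}{x-t_j}$ terms are the ones requiring regularization and become the $\widetilde Y$ sum; the term $Y_{11}\widetilde Y_{22}\partial_\nu\log(\ka_n\ka_{n-1})$ is a product-rule artifact of converting $\partial_\nu p_n$ into $\partial_\nu Y_{11}$ (since $Y_{11}=\ka_n^{-1}p_n$), not a compensation of the $z^{n\sigma_3}$ normalization. Two minor further slips: the combination $\widetilde Y_{22}\partial_\nu Y_{11}-\widetilde Y_{12}\partial_\nu Y_{21}$ is $\al_j$ times the $(1,1)$ (diagonal, not off-diagonal) entry of $\widetilde Y^{-1}\partial_\nu Y$; and your cross-check for $\nu=\beta_k$ is vacuous since $(n+\mathcal A)$ multiplies $\partial_\nu\log(\ka_n\ka_{n-1})$ regardless of $\nu$. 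To repair the argument you should replace the trace-formula step by the identity \eqref{general_diff_identity} and carry out the integration by parts against $w'/w$ as in \cite[Subsection 3.1]{Charlier}.
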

\begin{remark}
We do not need an analogous formula for $\partial_{\nu} \log J_{n}(\vec{\alpha},\vec{\beta},0,0)$ as large $n$ asymptotics of $J_{n}(\vec{\alpha},\vec{\beta},0,0)$ are already known from \cite{DIK}, see the outline.
\end{remark}
\begin{proof}
The proof is an adaptation of \cite[Subsection 3.1]{Charlier} where the author obtained a differential identity for $\partial_{\nu} \log G_n(\vec{\al},\vec{\be}, 2x^{2},0)$ (this proof was itself a generalization of \cite{Krasovsky, ItsKrasovsky}). Here, the proof is even slightly easier, due to the fact that the potential is a polynomial of degree $1$ (and not of degree $2$ as in \cite{Krasovsky,ItsKrasovsky,Charlier}). Since we assume that $p_{0},\ldots,p_{n}$ exist, we can use the following general identity, which was obtained in \cite{Krasovsky}
\begin{equation}\label{general_diff_identity}
\partial_{\nu} \log L_n(\vec{\al},\vec{\be}, 2(x+1),0) = -n \partial_{\nu} \log \ka_{n-1}+\frac{\ka_{n-1}}{\ka_n}(I_1-I_2),
\end{equation}
where 
\begin{equation}\label{J_1 and J_2}
I_1 = \int_{\mathcal{I}} p'_{n-1}(x) \partial_{\nu} p_n(x)w(x)dx, \quad \mbox{and} \quad I_2 = \int_{\mathcal{I}} p'_{n}(x) \partial_{\nu} p_{n-1}(x)w(x)dx. 
\end{equation}
Since $\Re \alpha_{j} > -1$ for all $j = 0,1,\ldots,m$, we first note that
\begin{equation}\label{I_1 as a limit}
I_{1} = \lim_{\epsilon \to 0_{+}} \int_{\mathcal{I}_{\epsilon}} p'_{n-1}(x) \partial_{\nu} p_n(x)w(x)dx,
\end{equation}
where $\mathcal{I}_{\epsilon}$ is the union of $m+1$ intervals given by
\begin{equation*}
\mathcal{I}_{\epsilon} = [t_{0}+\epsilon, t_{1}-\epsilon] \cup [t_{1}+\epsilon,t_{2}-\epsilon] \cup \ldots \cup [t_{m-1}+\epsilon,t_{m}-\epsilon] \cup [t_{m}+\epsilon,\infty).
\end{equation*}
Along each of these $m+1$ intervals, we integrate by parts (for each fixed and sufficiently small $\epsilon$), using 
\begin{equation}\label{w'_L}
w^{\prime}(x)= \bigg( -2n + \sum_{j=0}^{m} \frac{\al_j}{x-t_j} \bigg) w(x), \qquad x \in (-1,\infty)\setminus \{t_{1},\ldots,t_{m}\}.
\end{equation}
Then, we simplify the expression using the orthogonality relations \eqref{orthogonality_relation}, and more precisely using
\begin{align*}
& \int_{\mathcal{I}_{\epsilon}} p_{n-1}(x)\partial_{\nu}p_{n}(x)w(x)dx = \frac{\kappa_{n}}{\kappa_{n-1}}\partial_{\nu}\eta_{n} + o(1), \\
& \int_{\mathcal{I}_{\epsilon}} p_{n-1}(x)\partial_{\nu}p_{n}'(x)w(x)dx = n\frac{\partial_{\nu}\kappa_{n}}{\kappa_{n-1}} + o(1), \\
& \int_{\mathcal{I}_{\epsilon}}p_{n-1}(x)\frac{\partial_{\nu}p_{n}(x)}{x-t_{j}}w(x)dx = \frac{\partial_{\nu}\kappa_{n}}{\kappa_{n-1}} + \partial_{\nu}p_{n}(t_{j}) \int_{\mathcal{I}_{\epsilon}} \frac{p_{n-1}(x)}{x-t_{j}}w(x)dx+o(1), 
\end{align*}
as $\epsilon \to 0$. Finally, we substitute these expression in the limit \eqref{I_1 as a limit}, using \eqref{Reg_k} and \eqref{Reg_0}, and we find
\begin{equation}\label{J1_main}
I_1 =   -(n + \mathcal{A}) \frac{\partial_{\nu} \ka_n}{\ka_{n-1}} +2n \frac{\ka_n}{\ka_{n-1}} \partial_{\nu} \eta_n    - \sum_{j=0}^{m} \partial_{\nu} p_n(t_j) \mbox{Reg}_j\left[p_{n-1}(x)e^{-2n(x+1)}\right].
\end{equation}
We proceed similarly for $I_{2}$. Using that
\begin{align*}
& \int_{\mathcal{I}_{\epsilon}} p_{n}(x)\partial_{\nu}p_{n-1}(x)w(x)dx = o(1), \\
& \int_{\mathcal{I}_{\epsilon}} p_{n}(x)\partial_{\nu}p_{n-1}'(x)w(x)dx = o(1), \\
& \int_{\mathcal{I}_{\epsilon}}p_{n}(x)\frac{\partial_{\nu}p_{n-1}(x)}{x-t_{j}}w(x)dx = \partial_{\nu}p_{n-1}(t_{j}) \int_{\mathcal{I}_{\epsilon}} \frac{p_{n}(x)}{x-t_{j}}w(x)dx+o(1), 
\end{align*}
as $\epsilon \to 0$, we find
\begin{equation}
\label{J2_main}
I_2 =  - \sum_{j=0}^{m} \partial_{\nu} p_{n-1}(t_j) \mbox{Reg}_j\left[p_{n}(x)e^{-2n(x+1)}\right].
\end{equation}
By rewriting first $I_{1}$ and $I_{2}$ in terms of $Y$ and $\widetilde{Y}$, then by substituting these expressions into \eqref{general_diff_identity}, and finally by using \eqref{determinant of Y tilde}, we obtain the claim.
\end{proof}
\subsection{A general differential identity}
We recall here a differential identity that is valid for all three types of weights. In Section \ref{Section: integration in V} and Section \ref{Section: integration in W}, we will use Proposition \ref{prop: general diff identity} below with $\nu = s$ or $\nu = t$ to obtain identities for the quantities in \eqref{lol2} (save the case of $\partial_{\nu}L_{n}(\vec{\alpha},\vec{\beta},2(x+1),0)$ for which we will use Proposition \ref{prop: diff identity Laguerre}).
\begin{proposition}\label{prop: general diff identity}
Let $D_{n}$ be a Hankel determinant whose weight $w$ depends smoothly on a parameter $\nu$. Let us assume that the associated orthonormal polynomials $p_{0}$,\ldots,$p_{n}$ exist. Then we have
\begin{align}
& \label{Diff_identity very general}
\partial_{\nu} \log D_n = \frac{1}{2\pi i} \int_{\mathcal{I}} [Y^{-1}(x)Y'(x)]_{21} \partial_\nu w(x) dx,
\end{align}
where $\mathcal{I}$ is the support of $w$, and $Y$ is given by \eqref{OPsolution}.
\end{proposition}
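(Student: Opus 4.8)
The plan is to reduce \eqref{Diff_identity very general} to the classical formula expressing $\partial_\nu\log D_n$ through the Christoffel--Darboux kernel, and then to read that kernel off the explicit form \eqref{OPsolution} of $Y$. Write $\mu_\ell=\int_{\mathcal I}x^\ell w(x)\,dx$, so that $D_n=\det(\mu_{j+k})_{j,k=0}^{n-1}$, and recall the well-known relation $D_n=\prod_{k=0}^{n-1}\kappa_k^{-2}$ between the Hankel determinant and the leading coefficients of the (formally) orthonormal polynomials; it holds verbatim for complex-valued $w$, since \eqref{orthogonality_relation} involves no complex conjugation. Taking $\partial_\nu\log$ of this relation gives $\partial_\nu\log D_n=-2\sum_{k=0}^{n-1}\partial_\nu\kappa_k/\kappa_k$, so the first task is to evaluate $\partial_\nu\kappa_k/\kappa_k$.

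To this end I would differentiate the normalisation $\int_{\mathcal I}p_k(x)^2w(x)\,dx=1$, which follows from \eqref{orthogonality_relation}. This yields $2\int_{\mathcal I}p_k\,(\partial_\nu p_k)\,w\,dx+\int_{\mathcal I}p_k^2\,\partial_\nu w\,dx=0$. Since $\partial_\nu p_k$ is a polynomial of degree at most $k$ whose leading coefficient is $\partial_\nu\kappa_k$, its expansion $\partial_\nu p_k=\sum_{j=0}^{k}c_j p_j$ in the orthonormal basis has $c_k=\partial_\nu\kappa_k/\kappa_k$, hence $\int_{\mathcal I}p_k(\partial_\nu p_k)w\,dx=c_k=\partial_\nu\kappa_k/\kappa_k$ by orthogonality. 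Therefore $\partial_\nu\kappa_k/\kappa_k=-\tfrac12\int_{\mathcal I}p_k(x)^2\,\partial_\nu w(x)\,dx$, and summing over $k$,
\begin{equation*}
\partial_\nu\log D_n=\sum_{k=0}^{n-1}\int_{\mathcal I}p_k(x)^2\,\partial_\nu w(x)\,dx=\int_{\mathcal I}K_n(x,x)\,\partial_\nu w(x)\,dx,\qquad K_n(x,y):=\sum_{k=0}^{n-1}p_k(x)p_k(y).
\end{equation*}

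It remains to identify $K_n(x,x)$ with $\tfrac{1}{2\pi i}[Y^{-1}Y']_{21}$. Because $\det Y\equiv1$, one has $Y^{-1}=\left(\begin{smallmatrix}Y_{22}&-Y_{12}\\-Y_{21}&Y_{11}\end{smallmatrix}\right)$, so $[Y^{-1}Y']_{21}=Y_{11}Y_{21}'-Y_{21}Y_{11}'$; this involves only the first column of $Y$, which by \eqref{OPsolution} equals $(\kappa_n^{-1}p_n,\,-2\pi i\,\kappa_{n-1}p_{n-1})^{\mathrm{T}}$ and is entire, so the expression is unambiguous on $\mathcal I$. Substituting these entries and using the confluent Christoffel--Darboux formula $K_n(x,x)=\tfrac{\kappa_{n-1}}{\kappa_n}\bigl(p_n'(x)p_{n-1}(x)-p_{n-1}'(x)p_n(x)\bigr)$ gives $[Y^{-1}(x)Y'(x)]_{21}=2\pi i\,K_n(x,x)$, which combined with the previous display is exactly \eqref{Diff_identity very general}.

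Everything above is elementary algebra; the points that genuinely require attention are analytic. One must justify differentiating under the integral sign in the definition of $\mu_\ell$ and in the normalisation of $p_k$, and this is where the standing hypotheses matter: $\partial_\nu w$ must remain integrable on $\mathcal I$ --- for $\nu\in\{\alpha_0,\dots,\alpha_m\}$ the derivative $\partial_{\alpha_j}w$ acquires a $\log|x-t_j|$ factor but stays integrable because $\Re\alpha_j>-1$, while for the deformation parameters $\nu\in\{s,t\}$ the growth conditions on $V$ and $W$ control the tails when $\mathcal I$ is unbounded. One also uses that $D_n\neq0$ near the given parameters, which is guaranteed by the assumed existence of $p_0,\dots,p_n$, so that $\log D_n$ is well defined. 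I expect these verifications to be the main (if routine) obstacle, the remainder being the computation sketched above; alternatively, the same conclusion can be reached by differentiating $\det(\mu_{j+k})$ directly and using that $\sum_{j,k}(M^{-1})_{jk}x^jy^k=K_n(x,y)$ with $M=(\mu_{j+k})$.
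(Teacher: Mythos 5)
Your argument is correct and is precisely the route the paper takes (its proof is a four-line sketch: start from $D_n=\prod_{k=0}^{n-1}\kappa_k^{-2}$, take the logarithm, differentiate in $\nu$, use orthogonality, and substitute $Y$); you have simply filled in the intermediate steps, namely $\partial_\nu\kappa_k/\kappa_k=-\tfrac12\int p_k^2\,\partial_\nu w\,dx$ and the identification $[Y^{-1}Y']_{21}=2\pi i\,K_n(x,x)$ via the confluent Christoffel--Darboux formula. Your closing remarks on integrability of $\partial_\nu w$ and on the well-definedness of the first column of $Y$ on $\mathcal{I}$ are exactly the right points to flag.
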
 
\begin{proof}
It suffices to start from the well-known \cite{Szego OP} identity
\begin{equation}\label{det as product}
D_{n} = \prod_{j=0}^{n-1} \kappa_{j}^{-2},
\end{equation}
take the log, differentiate with respect to $\nu$, use the orthogonality relations and finally substitute $Y$ in the expression.
\end{proof}
	
\section{Steepest descent analysis}\label{Section: steepest descent}

In this section we will construct an asymptotic solution to the RH problem for $Y$ through the Deift/Zhou steepest descent method, for Laguerre-type and Jacobi-type weights. The analysis goes via a series of transformations $Y \mapsto T \mapsto S \mapsto R$. The $Y \mapsto T$ transformation of Subsection \ref{Subsection: Y to T} normalizes the RH problem at $\infty$ by means of a so-called $g$-function (whose properties are presented in Subsection \ref{subsection: eq measure}). We proceed with the opening of the lenses $T \mapsto S$ in Subsection \ref{Subsection: T to S}. As a preliminary to the last step $S\mapsto R$, we first construct approximations (called ``parametrices") for $S$ in different regions of the complex plane: a global parametrix in Subsection \ref{subsection:global parametrix}, local parametrices in the bulk around $t_{k}$ in Subsection \ref{subsection: local param t_k}, and local parametrices at the edges $\pm 1$ in Subsection \ref{subsection: local param near 1} and Subsection \ref{subsection: local param near -1}. These parametrices are rather standard: our global parametrix is close to the one done in \cite{Charlier} and local parametrices in the bulk are built out of confluent hypergeometric functions (as in \cite{FouMarSou,ItsKrasovsky,DIK}), local parametrices at soft edges in terms of Airy functions (as in \cite{Deiftetal}) and at a hard edge, in terms of Bessel functions (as in \cite{KMcLVAV}). Finally, the last step $S \mapsto R$ is carried out in Subsection \ref{subsection: small norm}.

\subsection{Equilibrium measure and $g$-function}\label{subsection: eq measure}
It is convenient for us to introduce the notation $\rho$ for the density of $\mu_{V}$:
\begin{equation}\label{def of rho}
d\mu_{V}(x) = \rho(x)dx = \left\{ \begin{array}{l l}
\ds \psi(x) \frac{\sqrt{1-x}}{\sqrt{1+x}}dx, & \mbox{for Laguerre-type weight}, \\[0.3cm]
\ds \psi(x) \frac{1}{\sqrt{1-x^{2}}}dx, & \mbox{for Jacobi-type weight},
\end{array} \right.
\end{equation}
where we recall that by assumption $\psi: \mathcal{I}\to \mathbb{R}$ is analytic and positive on $[-1,1]$. Let $U_{V}$ be the maximal open neighbourhood of $\mathcal{I}$ in which $V$ is analytic, and $U_{W}$ be an open neighbourhood of $[-1, 1]$ in which $W$ is analytic, sufficiently small such that $U_{W} \subset U_{V}$. The so-called $g$-function is defined by	
\begin{equation}\label{g function}
g(z) = \int^{1}_{-1} \log(z-s)\rho(s)ds, \qquad \mbox{for }  z \in \C \setminus (-\infty,1],
\end{equation}
where the principal branch is chosen for the logarithm. The $g$-function is analytic in $\C \setminus (-\infty,1]$ and has the following properties
\begin{align}
& g_{+}(x) + g_{-}(x) = 2 \int^{1}_{-1} \log|x-s|\rho(s)ds, & & x \in \R \label{sum}, \\
& g_{+}(x) - g_{-}(x) = 2 \pi i, & & x \in (-\infty,-1), \label{g+_minus_g- 1} \\
& g_{+}(x) - g_{-}(x) = 2 \pi i \di \int_{x}^{1} \rho(s)ds, & & x \in [-1,1]. \label{g+_minus_g- 2}
\end{align}
The Euler-Lagrange conditions \eqref{var equality}-\eqref{var inequality} can be rewritten in terms of the $g$-function as follows:
\begin{align}
& g_{+}(x) + g_{-}(x) = V(x)-\ell, & & x \in [-1,1], \label{EL =} \\[0.1cm]
& 2g(x) < V(x)-\ell, & & x \in \mathcal{I}\setminus[-1,1]. \label{EL <}
\end{align}
The above inequality is relevant only for Laguerre-type weight (since for Jacobi-type weight $\mathcal{I}\setminus [-1,1]= \emptyset$), and is strict since we assume that $V$ is regular.

\vspace{0.3cm}\hspace{-0.55cm}For $z \in U_{V} \setminus [-1,1]$, we define
\begin{equation}
\widetilde{\rho}(z) = \left\{ \begin{array}{l l}
\ds -i\psi(z) \frac{\sqrt{z-1}}{\sqrt{z+1}}, & \mbox{for Laguerre-type weight}, \\[0.3cm]
\ds i\psi(z) \frac{1}{\sqrt{z^{2}-1}}, & \mbox{for Jacobi-type weight},
\end{array} \right.
\end{equation}
where the principal branches are chosen for $\sqrt{z-1}$ and $\sqrt{z+1}$. Note that for $x \in (-1,1)$ we have $\widetilde{\rho}_{+}(s) = - \widetilde{\rho}_{-}(s) = \rho(s)$. Let us also define 
\begin{equation}\label{def-of-xi}
\xi(z) = -\pi i \int_{1}^{z} \widetilde{\rho}(s) ds, \qquad z \in U_{V} \setminus (-\infty,1),
\end{equation}
where the path of integration lies in $U_{V}\setminus (-\infty,1)$. Since $\xi_{+}(x) +\xi_{-}(x) = 0$ for $x \in (-1,1)$, by \eqref{g+_minus_g- 2} and \eqref{EL =}, we have 
\begin{equation}\label{2xi+}
2 \xi_{\pm}(x) = g_{\pm}(x) - g_{\mp}(x) = 2g_{\pm}(x) - V(x) + \ell.
\end{equation}
By analytic continuation, we have 
\begin{equation}\label{analytic-continuation-of-xi}
\xi(z)=g(z)+\frac{\ell}{2}-\frac{V(z)}{2},  \qquad z \in U_{V} \setminus (-\infty,1).
\end{equation}
Thus, the Euler-Lagrange inequality \eqref{EL <} can be simply rewritten as $2 \xi(x)< 0$ for $x \in \mathcal{I}\setminus [-1,1]$. Furthermore, since $g(z) \sim \log(z)$ as $z \to \infty$, we have that $ (\xi_{+}(x)+\xi_{-}(x))/V(x)\to-1$ as $x \to + \infty$, $x \in \mathcal{I}$. Finally, by a standard and straightforward analysis of $\xi$, we conclude that there exists a small enough neighbourhood of $(-1,1)$ such that, for $z$ in this neighbourhood with $\Im z \neq 0$, we have $\Re{\xi(z)}>0$. 

\vspace{0.3cm}\hspace{-0.55cm}We will also need later large $z$ asymptotics of $e^{ng(z)}$ for the Laguerre-type potential $V(x) = 2(x+1)$. In this case, we recall that $\psi(x) = \frac{1}{\pi}$, and after a straightforward calculation we obtain
\begin{equation}\label{e^ng_asym_Laguerre}
e^{ng(z)} = z^{n} \Big( 1+\frac{n}{2z} + \bigO(z^{-2}) \Big), \qquad \mbox{as } z \to \infty.
\end{equation}

\subsection{First transformation: $Y \mapsto T$}\label{Subsection: Y to T}
We normalize the RH problem for $Y$ at $\infty$ by the standard transformation	
\begin{equation}\label{YtoT}
T(z) := e^{\frac{n \ell}{2} \sigma_3} Y(z) e^{-ng(z) \sigma_3} e^{-\frac{n \ell}{2} \sigma_3}.
\end{equation}
$T$ satisfies the following RH problem.
\subsubsection*{RH problem for $T$}
\begin{itemize}
\item[(a)] $T: \C \setminus \mathcal{I} \to \C^{2 \times 2}$ is analytic.
\item[(b)] The jumps for $T$ follows from \eqref{g+_minus_g- 1}, \eqref{EL =} and \eqref{analytic-continuation-of-xi}. We obtain
\begin{align}
& T_+(x)=T_-(x)\begin{pmatrix}
e^{-2n\xi_{+}(x)} & e^{W(x)}\om(x) \\
0 & e^{2n\xi_{+}(x)}
\end{pmatrix}, & & \mbox{if }x\in (-1,1) \setminus \{ t_1,\cdots,t_m \}, \\
& T_+(x)=T_-(x)\begin{pmatrix}		
1 & e^{W(x)}\om(x) e^{2n\xi(x)} \\
0 & 1
\end{pmatrix}, & & \mbox{if } x \in \mathcal{I}\setminus [-1,1].	
\end{align}
\item[(c)] As $z \to \infty$, $T(z) = I + \bigO(z^{-1})$.
\item[(d)] As $z \to t_j$, for $j=0,1,\ldots,m+1$, we have
\begin{equation}\label{T_Asymptotics_tj}
T(z)=\begin{cases}\begin{pmatrix}
\mathcal{O}(1) & \mathcal{O}(1) +\mathcal{O}((z-t_j)^{\al_j})\\
\mathcal{O}(1) & \mathcal{O}(1) + \mathcal{O}((z-t_j)^{\al_j})\end{pmatrix}, & \mbox{if } \Re \al_j \neq 0,  \\
\begin{pmatrix}
\mathcal{O}(1) & \mathcal{O}(\log(z-t_j))\\
\mathcal{O}(1) & \mathcal{O}(\log(z-t_j))\end{pmatrix},  & \mbox{if } \Re \al_j = 0.
\end{cases}
\end{equation}
\end{itemize}
\subsection{Second transformation: $T \mapsto S$}\label{Subsection: T to S}
In this step, we will deform the contour of the RH problem. Therefore, we first consider the analytic continuations of the functions $\om_{\al_k}$ and $\om_{\be_k}$ from $\mathbb{R}\setminus \{t_{k}\}$ to $\C \setminus \{z: \Re(z)=t_k\}$. They are given by 
\begin{equation}
\om_{\al_k}(z) = \begin{cases}
(t_k-z)^{\al_k}, & \mbox{if } \Re z < t_k, \\
(z-t_k)^{\al_k}, & \mbox{if }\Re z > t_k, \\    
\end{cases} \qquad  \quad \om_{\be_k}(z) = \begin{cases}
e^{i \pi \be_k}, & \mbox{if }\Re z < t_k, \\
e^{-i \pi \be_k}, & \mbox{if }\Re z > t_k. \\    
\end{cases}
\end{equation}
For $k = 0,\ldots,m+1$, we also define 
\begin{equation}
\omega_{t_{k}}(z) = \prod_{\substack{0 \leq j \leq m \\ j \neq k }} \omega_{\alpha_{j}}(z)\omega_{\beta_{j}}(z).
\end{equation}
Note that for $x \in  (-1,1) \setminus \{ t_1,\ldots,t_m \}$ we have the following factorization for $J_T(x)$ :
\begin{multline}\label{factorization of J_T}
\begin{pmatrix}
e^{-2n\xi_{+}(x)} & e^{W(x)}\om(x) \\
0 & e^{2n\xi_{+}(x)}
\end{pmatrix} = \begin{pmatrix}
1 & 0 \\
e^{-W(x)}\om(x)^{-1}e^{-2n\xi_{-}(x)} & 1
\end{pmatrix} \\ \times \begin{pmatrix}
0 & e^{W(x)}\om(x) \\
-e^{-W(x)}\om(x)^{-1} & 0
\end{pmatrix}\begin{pmatrix}
1 & 0 \\
e^{-W(x)}\om(x)^{-1}e^{-2n\xi_{+}(x)} & 1
\end{pmatrix}.
\end{multline}
Let $\gamma_{+}$ and $\gamma_{-}$ be two curves (lying respectively in the upper and lower half plane) that join the points $-1,t_{1},\ldots,t_{m},1$ as depicted in Figure \ref{opening-lenses}. In order to be able to deform the contour of the RH problem, we choose them so that they both lie in $U_{W}$. In the constructions of the local parametrices, they will be required to make angles of $\frac{\pi}{4}$ with $\mathbb{R}$ at the points $t_{1},\ldots,t_{m}$, and angles of $\frac{\pi}{3}$ with $\mathbb{R}$ at the points $\pm 1$, and this is already shown in Figure \ref{opening-lenses}. Also, we denote $\Omega_{\pm}$ for the open regions delimited by $\gamma_{\pm}$ and $\mathbb{R}$, see Figure \ref{opening-lenses}. The next transformation is given by
\begin{figure}
\centering
\begin{tikzpicture}
\fill (0,0) circle (0.07cm);
\fill (3,0) circle (0.07cm);
\fill (7.5,0) circle (0.07cm);
\fill (10,0) circle (0.07cm);
\node at (0,-0.35) {$-1$};
\node at (3,-0.35) {$t_{1}$};
\node at (7.5,-0.35) {$t_{m}$};
\node at (10,-0.35) {$1$};
\draw[->-=0.5,thick] (0,0)--(3,0);
\draw[->-=0.5,thick] (3,0)--(7.5,0);
\draw[->-=0.5,thick] (7.5,0)--(10,0);
\draw[->-=0.5,thick] (10,0)--(12,0);
\draw[->-=0.5,black,thick] (0,0) to [out=60, in=135] (3,0);
\draw[->-=0.5,black,thick] (3,0) to [out=45, in=135] (7.5,0);
\draw[->-=0.5,black,thick] (7.5,0) to [out=45, in=120] (10,0);
\draw[->-=0.5,black,thick] (0,0) to [out=-60, in=-135] (3,0);
\draw[->-=0.5,black,thick] (3,0) to [out=-45, in=-135] (7.5,0);
\draw[->-=0.5,black,thick] (7.5,0) to [out=-45, in=-120] (10,0);
\node at (5,1.2) {$\gamma_{+}$};
\node at (5,-1.2) {$\gamma_{-}$};
\node at (5.2,0.5) {$\Omega_{+}$};
\node at (5.2,-0.5) {$\Omega_{-}$};
\end{tikzpicture}
\caption{The jump contour for the RH problem for $S$ with $m=2$ and a Laguerre-type weight. For Jacobi-type weights, the jump contour for $S$ is of the same shape, except that there are no jumps on $(1,+\infty)$.}
\label{opening-lenses}
\end{figure}
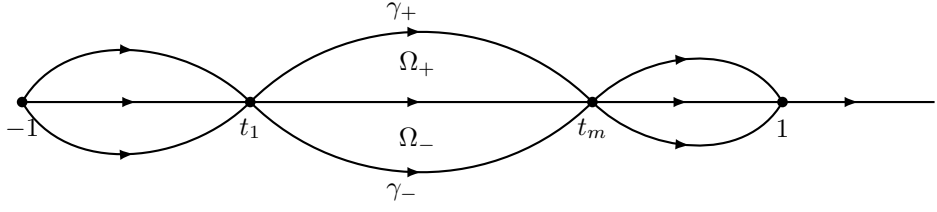
\begin{equation}\label{TtoS}
S(z) = T(z) \begin{cases}
\begin{pmatrix}
1 & 0 \\
-e^{-W(z)}\om(z)^{-1}e^{-2n\xi(z)} & 1
\end{pmatrix}, & \mbox{if } z \in  \Om_{+}, \\
\begin{pmatrix}
1 & 0 \\
e^{-W(z)}\om(z)^{-1}e^{-2n\xi(z)} & 1
\end{pmatrix}, & \mbox{if } z \in \Om_{-}, \\
I, & \mbox{if }z \in \C \setminus \overline{\left( \Om_{+} \cup \Om_{-} \cup (\mathcal{I}\setminus \mathcal{S}) \right)}.
\end{cases}
\end{equation} 
$S$ satisfies the following RH problem.
\subsubsection*{RH problem for $S$}
\begin{itemize}
\item[(a)] $S: \C \setminus (\mathcal{I}\cup \gamma_{+}\cup \gamma_{-}) \to \C^{2 \times 2}$ is analytic.
\item[(b)] The jumps for $S$ follows from those of $T$ and from \eqref{factorization of J_T}. They are given by
\begin{align}
& S_+(z)=S_-(z)\begin{pmatrix}
0 & e^{W(z)}\om(z) \\
- e^{-W(z)}\om(z)^{-1} & 0
\end{pmatrix}, & & \mbox{if }z\in (-1,1) \setminus \{ t_1,\ldots,t_m \}, \label{J_S 1} \\
& S_+(z)=S_-(z)\begin{pmatrix}	
1 & e^{W(z)}\om(z) e^{2n\xi(z)} \\
0 & 1
\end{pmatrix}, & & \mbox{if } z \in \mathcal{I}\setminus [-1,1], \\
& S_+(z)=S_-(z)\begin{pmatrix}
1 & 0 \\
e^{-W(z)}\om(z)^{-1}e^{-2n\xi(z)} & 1
\end{pmatrix}, & & \mbox{if } z \in \gamma_{+}\cup \gamma_{-}. \label{J_S 3}
\end{align}
\item[(c)] As $z \to \infty$, $S(z) = I + \bigO(z^{-1})$.
\item[(d)] As $z \to t_j$, for $j=0,1,\ldots,m+1$, we have
\begin{equation}\label{S-asymptotics-t_j}
S(z)=\begin{cases}\begin{pmatrix}
\mathcal{O}(1) & \mathcal{O}(1) \\
\mathcal{O}(1) & \mathcal{O}(1) \end{pmatrix} & \mbox{if } \Re \al_j > 0 , \ \ z \in \C \setminus \overline{(\Om_{+} \cup \Om_{-})},  \\[0.35cm]
\begin{pmatrix}
\mathcal{O}((z-t_j)^{-\al_j}) & \mathcal{O}(1) \\
\mathcal{O}((z-t_j)^{-\al_j}) & \mathcal{O}(1) \end{pmatrix} & \mbox{if } \Re \al_j > 0 , \ \ z \in \Om_{+} \cup \Om_{-},  \\[0.35cm]
\begin{pmatrix}
\mathcal{O}(1) &  \mathcal{O}((z-t_j)^{\al_j})\\
\mathcal{O}(1) & \mathcal{O}((z-t_j)^{\al_j})\end{pmatrix} & \mbox{if } \Re \al_j < 0 , \ \ \ z \notin \Ga_S, \\[0.35cm]
\begin{pmatrix}
\mathcal{O}(1) & \mathcal{O}(\log(z-t_j))\\
\mathcal{O}(1) & \mathcal{O}(\log(z-t_j))\end{pmatrix}  & \mbox{if } \Re \al_j = 0, \ \ z \in \C \setminus \overline{(\Om_{+} \cup \Om_{-})},  \\[0.35cm]
\begin{pmatrix}
\mathcal{O}(\log(z-t_j)) & \mathcal{O}(\log(z-t_j))\\
\mathcal{O}(\log(z-t_j)) & \mathcal{O}(\log(z-t_j))\end{pmatrix}  & \mbox{if } \Re \al_j = 0, \ \ z \in  \Om_{+} \cup \Om_{-}. \\
\end{cases}
\end{equation}
\end{itemize}
Now, the rest of the steepest descent analysis consists of finding good approximations to $S$ in different regions of the complex plane. If $z$ is away from neighbourhoods of $-1$, $t_{1}$, ..., $t_{m}$, $1$, then the jumps for $S$ are uniformly exponentially close to the identity matrix, except those on $(-1,1)$ (see the discussion at the end of Section \ref{subsection: eq measure}). By ignoring the jumps that tend to the identity matrix, we are left with an RH problem that does not depend on $n$, and whose solution will be a good approximation of $S$ away from $-1$, $t_{1}$, ..., $t_{m}$, $1$. This approximation is called the global parametrix, denoted by $P^{(\infty)}$, and will be given in Section \ref{subsection:global parametrix} below. Near the points $-1$, $t_{1}$, ..., $t_{m}$, $1$ we need to construct local approximations to $S$ (also called local parametrices and denoted in the present paper by $P^{(-1)}$, $P^{(t_{1})}$, $\ldots$, $P^{(1)}$). Let $\de > 0$, independent of $n$, be such that 
\begin{equation}
\de \leq \min_{0 \leq k \neq j \leq m+1} |t_j-t_k|.
\end{equation}
The local parametrix $P^{(t_{k})}$ (for $k \in \{0,1,\ldots,m,m+1\}$) solves an RH problem with the same jumps as $S$, but on a domain which is a disk $\mathcal{D}_{t_k}$ centered at $t_{k}$ of radius $\leq \de/3$. Furthermore, we require the following matching condition with $P^{(\infty)}$ on the boundary $\partial\mathcal{D}_{t_k}$. As $n \to \infty$, uniformly for $z \in \partial \mathcal{D}_{t_{k}}$, we have
\begin{equation}\label{lol1}
P^{(t_{k})}(z) = (I + o(1))P^{(\infty)}(z).
\end{equation} Again, these constructions are standard and well-known: near a FH singularity in the bulk, the local parametrix is given in terms of hypergeometric functions, near a soft edge in terms of Airy functions, and near a hard edge in terms of Bessel functions. The local parametrices are presented in Section \ref{subsection: local param t_k}, Section \ref{subsection: local param near 1} and Section \ref{subsection: local param near -1}.
\subsection{Global parametrix}\label{subsection:global parametrix}
By disregarding the jump conditions on the lenses $\ga_+ \cup \ga_-$ and on $\mathcal{I}\setminus [-1,1]$, we are left with the following RH problem for $P^{(\infty)}$ (condition (d) below ensures uniqueness of the RH problem and can not be seen from the RH problem for $S$).

\subsubsection*{RH problem for $P^{(\infty)}$}
\begin{itemize}
\item[(a)] $P^{(\infty)}: \C \setminus [-1,1] \to \C^{2 \times 2}$ is analytic.
\item[(b)] The jumps for $P^{(\infty)}$ are given by
\begin{align*}
& P^{(\infty)}_+(z)=P^{(\infty)}_-(z)\begin{pmatrix}
0 & e^{W(z)}\om(z) \\
-e^{-W(z)}\om(z)^{-1} & 0
\end{pmatrix}, & & \mbox{if }z\in (-1,1) \setminus \{ t_1,\ldots,t_m \}.
\end{align*}
\item[(c)] As $z \to \infty$, $P^{(\infty)}(z) = I + P_1^{(\infty)} z^{-1} + \bigO(z^{-2})$.
\item[(d)] As $z \to t_j$, for $j=1,\ldots,m$, we require
\begin{equation}\label{Global_Asymptotics_tj}
P^{(\infty)}(z)= \begin{pmatrix}
\bigO(1) & \bigO(1) \\
\bigO(1) & \bigO(1)
\end{pmatrix}(z-t_j)^{-(\frac{\al_j}{2}+\be_j)\sigma_3}.
\end{equation}
As $z \to t_{j}$ with $j \in \{0,m+1\}$ (we recall that $t_{0} = -1$ and $t_{m+1} = 1$, and that $\alpha_{m+1} = 0$ for Laguerre-type weight), we have
\begin{equation}\label{Global_Asymptotics_tj extreme}
P^{(\infty)}(z)= \begin{pmatrix}
\bigO((z-t_{j})^{-\frac{1}{4}}) & \bigO((z-t_{j})^{-\frac{1}{4}}) \\
\bigO((z-t_{j})^{-\frac{1}{4}}) & \bigO((z-t_{j})^{-\frac{1}{4}})
\end{pmatrix}(z-t_{j})^{-\frac{\al_j}{2}\sigma_3}.
\end{equation}
\end{itemize}
\begin{remark}
Note that this RH problem is the same regardless of the weight, the only exception being that $\alpha_{m+1} = 0$ for Laguerre-type weight (and not necessarily for Jacobi-type weight).
\end{remark}
This RH problem was solved first in \cite{Deiftetal} with $W \equiv 0$ and $ \omega \equiv 0$. In \cite{KMcLVAV}, the authors explain how to construct the solution to the above RH problem for general $W$ and $\omega$ by using Szeg\"{o} functions. Our RH problem for $P^{(\infty)}$ is close to the one obtained in \cite{Charlier} for Gaussian-type weights. The solution is given by	
\begin{equation}\label{Global_Parametrix}P^{(\infty)}(z) = D^{\sigma_3}_{\infty} \begin{pmatrix}
\frac{1}{2}(a(z)+a(z)^{-1}) & \frac{1}{2i}(a(z)-a(z)^{-1}) \\
-\frac{1}{2i}(a(z)-a(z)^{-1}) & \frac{1}{2}(a(z)+a(z)^{-1})
\end{pmatrix} D(z)^{-\sigma_3},
\end{equation}
where $a(z) = \sqrt[4]{\frac{z-1}{z+1}}$ is analytic on $\C \setminus [-1,1]$ and $a(z) \sim 1$ as $z \to \infty$. The Szeg\"{o} function $D$ is given by $D(z)=D_{\al}(z)D_{\be}(z)D_{W}(z)$, where
\begin{align}
& \hspace{-0.2cm} D_{W}(z) = \exp \left(  \frac{\sqrt{z^2-1}}{2\pi} \int_{-1}^{1} \frac{W(x)}{\sqrt{1-x^2}} \frac{dx}{z-x} \right), \label{D_W_L} \\
& \hspace{-0.2cm} D_{\al}(z)= \prod_{j=0}^{m+1}  \exp \left(  \frac{\sqrt{z^2-1}}{2\pi} \int_{-1}^{1} \frac{\log \om_{\al_j}(x)}{\sqrt{1-x^2}} \frac{dx}{z-x} \right) = \left(z+\sqrt{z^2-1} \right)^{-\frac{\mathcal{A}}{2}} \prod^{m+1}_{j=0} (z-t_j)^{\frac{\al_j}{2}} \label{D_al}, \\
& \hspace{-0.2cm} D_{\be}(z)= \prod_{j=1}^m  \exp \left(  \frac{\sqrt{z^2-1}}{2\pi} \int_{-1}^{1} \frac{\log \om_{\be_j}(x)}{\sqrt{1-x^2}} \frac{dx}{z-x} \right)=e^{\frac{i \pi \mathcal{B}}{2}} \prod_{j=1}^m \left( \frac{z t_j -1 -i \sqrt{(z^2-1)(1-t^2_j)}}{z-t_j} \right)^{\be_j}, \label{D_be}
\end{align}
where $\mathcal{A} = \sum_{j=0}^{m+1} \alpha_{j}$ and $\mathcal{B} = \sum_{j=1}^{m} \beta_{j}$. The simplified forms of \eqref{D_al} and \eqref{D_be} were found in \cite{KMcLVAV} and \cite{ItsKrasovsky}, respectively. Also, $D_{\infty} = \lim_{z \to \infty} D(z)$ appearing in \eqref{Global_Parametrix} is given by 
\begin{equation}\label{D_infty}
D_{\infty} = 2^{-\frac{\mathcal{A}}{2}}\exp\Big(i \sum_{j=1}^{m} \be_j \arcsin{t_j}\Big)\exp \left(  \frac{1}{2\pi} \int_{-1}^{1} \frac{W(x)}{\sqrt{1-x^2}} dx \right).
\end{equation}
%\begin{align}\label{D_infty}
%& D_{\infty}=D_{\al,\infty}D_{\be,\infty}D_{W,\infty}, \qquad D_{W,\infty}= \exp \left(  \frac{1}{2\pi} \int_{-1}^{1} \frac{W(x)}{\sqrt{1-x^2}} dx \right), \\
%& D_{\al,\infty} = 2^{-\frac{\mathcal{A}}{2}}, \qquad D_{\be,\infty} = \exp\Big(i \sum_{j=1}^{m} \be_j \arcsin{t_j}\Big).
%\end{align}
The following asymptotic expressions were obtained in \cite[Section 4.4]{Charlier} with $\alpha_{0} = \alpha_{m+1} = 0$. It is straightforward to adapt them for general $\alpha_{0}$ and $\alpha_{m+1}$. As $z \to t_{k}$, with $k \in \{1,\ldots,m\}$ and $\Im z>0$, we have
\begin{align}
& D_{\alpha}(z) = e^{-i \frac{\mathcal{A}}{2}\arccos t_{k}}\bigg(  \prod_{0 \leq j \neq k \leq m+1} |t_{k}-t_{j}|^{\frac{\alpha_{j}}{2}} \prod_{j=k+1}^{m}e^{\frac{i\pi\alpha_{j}}{2}} \bigg) (z-t_{k})^{\frac{\alpha_{k}}{2}}(1+\bigO(z-t_{k})), \label{asymptotics for D_alpha in D_t} \\
& D_{\beta}(z) = e^{-\frac{i\pi}{2}(\mathcal{B}_{k}+\beta_{k})}  \bigg( \prod_{1 \leq j\neq k \leq m} T_{kj}^{\beta_{j}}\bigg) (1-t_{k}^{2})^{-\beta_{k}}2^{-\beta_{k}}(z-t_{k})^{\beta_{k}}(1+\bigO(z-t_{k})), \label{asymptotics for D_beta in D_t} 
\end{align}
where
\begin{equation}
\mathcal{B}_{k} = \sum_{j=1}^{k-1}\beta_{j} - \sum_{j=k+1}^{m} \beta_{j}, \qquad T_{kj} = \frac{1-t_{k}t_{j}-\sqrt{(1-t_{k}^{2})(1-t_{j}^{2})}}{|t_{k}-t_{j}|}.
\end{equation}
Let us also define the following quantities:
\begin{equation}
\widetilde{\mathcal{B}}_{1} = 2 i \sum_{j=1}^{m} \sqrt{\frac{1+t_{j}}{1-t_{j}}}\beta_{j},\qquad \widetilde{\mathcal{B}}_{-1} = 2 i \sum_{j=1}^{m} \sqrt{\frac{1-t_{j}}{1+t_{j}}}\beta_{j}.
\end{equation}
As $z \to 1$, we have
\begin{align}\label{SzAsym+1}
& D_{\alpha}^{2}(z)\prod_{j=0}^{m+1}(z-t_{j})^{-\alpha_{j}} =  1-\sqrt{2} \mathcal{A} \sqrt{z-1} + \mathcal{A}^{2}(z-1)+ \bigO((z-1)^{3/2}), \\
& D_{\beta}^{2}(z)e^{i\pi\mathcal{B}} = 1+   \sqrt{2} \widetilde{\mathcal{B}}_{1} \sqrt{z-1} + \widetilde{\mathcal{B}}_{1}^{2}(z-1) + \bigO((z-1)^{3/2}). \label{SzAsym+1be}
\end{align}
As $z \to -1$, $\Im z > 0$, we have
\begin{align}\label{SzAsym-1}
& D_{\alpha}^{2}(z)\prod_{j=0}^{m+1}(t_{j}-z)^{-\alpha_{j}} =  1 + i\sqrt{2} \mathcal{A} \sqrt{z+1} - \mathcal{A}^{2}(z+1)+ \bigO((z+1)^{3/2}), \\
& D_{\beta}^{2}(z)e^{-i\pi\mathcal{B}} = 1 +  i\sqrt{2}  \widetilde{\mathcal{B}}_{-1} \sqrt{z+1} - \widetilde{\mathcal{B}}_{-1}^{2} (z+1) + \bigO((z+1)^{3/2}). \label{SzAsym-1be}
\end{align}
%\begin{align}
%& D_{\widetilde{\alpha}}^{2}(z)(-1-z)^{-\lambda}\prod_{j=1}^{m}(t_{j}-z)^{-\alpha_{j}} =  1\pm i\sqrt{2} \widetilde{\mathcal{A}} \sqrt{z+1} - \widetilde{\mathcal{A}}^{2}(z+1)+ \bigO((z+1)^{3/2}), \\
%& D_{\beta}^{2}(z)e^{-i\pi\mathcal{B}} = 1 \pm  i\sqrt{2}  \widetilde{\mathcal{B}}_{-1} \sqrt{z+1} - \widetilde{\mathcal{B}}_{-1}^{2} (z+1) + \bigO((z+1)^{3/2}). 
%\end{align}
As $z \to \infty$, with $W \equiv 0$, we have
\begin{equation}\label{P_1^inf}
P_{1}^{(\infty)} = \begin{pmatrix}
\displaystyle \sum_{j=0}^{m+1} \left( \frac{\alpha_{j}t_{j}}{2}+i\sqrt{1-t_{j}^{2}}\beta_{j}\right) & \displaystyle \frac{i}{2} D_{\infty}^{2} \\
\displaystyle -\frac{i}{2}D_{\infty}^{-2} & \displaystyle - \sum_{j=0}^{m+1} \left( \frac{\alpha_{j}t_{j}}{2}+i\sqrt{1-t_{j}^{2}}\beta_{j}\right)
\end{pmatrix},
\end{equation}
where we recall that $\beta_{0} = \beta_{m+1} = 0$.

\subsection{Local parametrix near $t_k$, $1 \leq k \leq m$}\label{subsection: local param t_k}

It is well-known \cite{FouMarSou,ItsKrasovsky,DIK} that $P^{(t_{k})}$ can be written in terms of hypergeometric functions. In \cite{Charlier}, the local parametrix was obtained for Gaussian-type weights, and it is straightforward to adapt the construction for Laguerre-type and Jacobi-type weights, the only difference being in the definition of $\xi$. Let us define the function $f_{t_{k}}$ by
\begin{equation}\label{local_conformal_map}
f_{t_k}(z) = -2 \begin{cases}
\xi(z)-\xi_{+}(t_k), & \Im z > 0, \\
-(\xi(z)-\xi_{-}(t_k)), & \Im z < 0,
\end{cases} \ = 
2\pi i  \int_{t_k}^{z} \rho(s)ds,
\end{equation}
where in the above expression $\rho$ is the analytic continuation on $U_{V} \setminus ((-\infty,-1)\cup(1,+\infty))$ of the density of the equilibrium measure ($\rho$ was previously only defined on $[-1,1]$). This is a conformal map from $\mathcal{D}_{t_{k}}$ to a neighbourhood of $0$, and its expansion as $z \to t_{k}$ is given by
\begin{equation}\label{f_t_kAsymptotics}
f_{t_{k}}(z) = 2\pi i \rho(t_{k}) (z-t_{k})(1+\bigO(z-t_{k})), \qquad \mbox{ as } z \to t_{k}.
\end{equation}
\begin{figure}
	\centering
	
	\begin{tikzpicture}[scale=1.5]

	\draw[->-=0.15,->-=0.9,black,thick] (10.78,0) to (13.22,0);		
	
	\draw[->-=0.3,black,thick] (11,0.7) to [out=0, in=135] (12,0);
	\draw[->-=0.3,black,thick] (11,-0.7) to [out=0, in=-135] (12,0);

	\draw[->-=0.75,black,thick] (12,0) to [out=-45, in=-180] (13,-0.7);
	\draw[->-=0.75,black,thick] (12,0) to [out=45, in=180] (13,0.7);	
	
	\fill (12,0) circle (0.035cm);
	
	\node at (11.4,0.72) {$\ga_{+}$}	;	\node at (11.4,-0.7) {$\ga_{+}$};
	\node at (12.7,-0.72) {$\ga_{-}$}	;
	\node at (12.7,+0.72) {$\ga_{+}$}	;
	\node at (12.1,-0.2) {\textbf{$t_k$}};	
	
	\draw[dashed] (12,0) circle (1.22cm);		
	
	\node at (12,+1.35) {$\partial \mathcal{D}_{t_k}$};
	
	% change of variables:
	
	\draw[->-=0.99,thick] (13.5,0.7) to [out=45, in=135] (14.7,0.7);
	
	\node at (14.1,1.2) {$f_{t_{k}}$}	;
	
	% stuff in the \ze plane
	
	\draw[dashed,black,thick] (15.25,0) to (17.75,0);
	
	\draw[->-=0.2,->-=0.8,black,thick] (16.5,-1.25) to (16.5,1.25);	
	\draw[->-=0.2,->-=0.8,black,thick] (15.25,-1.25) to (17.75,1.25);	
	\draw[->-=0.2,->-=0.8,black,thick] (17.75,-1.25) to (15.25,1.25);	
	
	\fill (16.5,0) circle (0.035cm);
	
	\node at (16.9,0.8) {$VIII$}	;
	\node at (17.5,0.4) {$VII$}	;
	\node at (16.9,-0.8) {$V$}	;
	\node at (17.5,-0.4) {$VI$}	;
	
	\node at (16.1,0.8) {$I$}	;
	\node at (15.5,0.4) {$II$}	;
	\node at (16.1,-0.8) {$IV$}	;
	\node at (15.5,-0.4) {$III$}	;
	
	\node at (16.5,1.4) {$\Ga_1$}	;
	\node at (15.25,1.4) {$\Ga_2$}	;
	\node at (16.5,-1.4) {$\Ga_5$}	;
	\node at (15.25,-1.4) {$\Ga_4$}	;		
	
	\node at (18,1.25) {$\Ga_8$}	;
	\node at (15,0) {$\Ga_3$}	;
	\node at (17.9,-1.25) {$\Ga_6$}	;
	\node at (17.9,0) {$\Ga_7$}	;				
	\node at (16.55,-0.13) {$0$}	;
	
	% the angle
	
	\coordinate (origin) at (16.5,0);
	
	\coordinate (west) at (15.25,0);
	
	\coordinate (northwest) at (15.25,1.25);

	\pic [draw, <->, "$\frac{\pi}{4}$", angle eccentricity=1.5] {angle = northwest--origin--west};
	\end{tikzpicture}
	
	\caption{The neighborhood $\mathcal{D}_{t_k}$ and its image under the mapping $f_{t_k}$.}
	\label{local-t_k}
\end{figure}
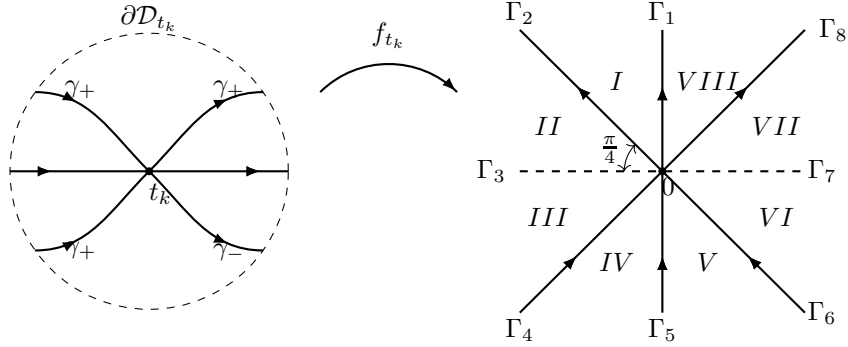
The lenses in a neighbourhood of $t_{k}$ are chosen such that $f_{t_{k}}(\gamma_{+} \cap \mathcal{D}_{t_{k}}) \subset \Gamma_{4} \cup \Gamma_{2}$ and $f_{t_{k}}(\gamma_{-} \cap \mathcal{D}_{t_{k}}) \subset \Gamma_{6} \cup \Gamma_{8}$, see Figure \ref{local-t_k}. Let us define $Q_{+,k}^{R} = f_{t_{k}}^{-1}(II)\cap \mathcal{D}_{t_{k}}$, that is, it is the subset of $\mathcal{D}_{t_{k}}$ that lies outside the lenses in the upper half plane and which is mapped by $f_{t_{k}}$ into a subset of $II$. All we need is to find the expression of $P^{(t_{k})}$ in the region $Q_{+,k}^{R}$. This was done in \cite[equation (4.48) and below (5.2)]{Charlier} for Gaussian-type weights. It is straightforward to adapt the construction in our situations, and we omit the details. For $z \in Q_{+,k}^{R}$, $P^{(t_k)}(z)$ is given by
\begin{multline}\label{P^{(t_k)}II}
P^{(t_k)}(z)=E_{t_k}(z) \times \\ \begin{pmatrix}
 \frac{\Gamma(1 + \frac{\alpha_{k}}{2}-\beta_{k})}{\Gamma(1+\alpha_{k})}G(\frac{\alpha_{k}}{2}+\beta_{k}, \alpha_{k}; nf_{t_{k}}(z))e^{-\frac{i\pi\alpha_{k}}{2}} & -\frac{\Gamma(1 + \frac{\alpha_{k}}{2}-\beta_{k})}{\Gamma(\frac{\alpha_{k}}{2}+\beta_{k})}H(1+\frac{\alpha_{k}}{2}-\beta_{k},\alpha_{k};nf_{t_{k}}(z)e^{-\pi i}) \\ 
 \frac{\Gamma(1 + \frac{\alpha_{k}}{2}+\beta_{k})}{\Gamma(1+\alpha_{k})}G(1+\frac{\alpha_{k}}{2}+\beta_{k},\alpha_{k};nf_{t_{k}}(z))e^{- \frac{i\pi\alpha_{k}}{2}}  & H(\frac{\alpha_{k}}{2}-\beta_{k},\alpha_{k};nf_{t_{k}}(z)e^{-\pi i})
\end{pmatrix} \\  \times (z-t_k)^{-\frac{\al_k}{2}\sigma_3}e^{\frac{\pi i \al_k}{4}\sigma_3} e^{-n\xi(z)\sigma_3}e^{-\frac{W(z)}{2}\sigma_3}\om_{t_k}(z)^{-\frac{\sigma_3}{2}},
\end{multline}
where $G$ and $H$ are given in terms of the Whittaker functions (see \cite[Chapter 13]{NIST}):
\begin{equation}\label{relation between G and H and Whittaker}
G(a,\alpha;z) = \frac{M_{\kappa,\mu}(z)}{\sqrt{z}}, \quad H(a,\alpha;z) = \frac{W_{\kappa,\mu}(z)}{\sqrt{z}}, \quad \mu = \frac{\alpha}{2}, \quad \kappa = \frac{1}{2}+\frac{\alpha}{2}-a.
\end{equation}
The function $E_{t_{k}}$ is analytic in $\mathcal{D}_{t_{k}}$  (see \cite[(4.49)-(4.51)]{Charlier}) and its value at $t_k$ is given by
\begin{multline}\label{E_{t_k}(t_k)}
\small E_{t_{k}}(t_{k}) = \frac{D_{\infty}^{\sigma_{3}}}{2 \sqrt[4]{1-t_{k}^{2}}}\begin{pmatrix}
e^{-\frac{\pi i}{4}}\sqrt{1+t_{k}} + e^{\frac{\pi i}{4}}\sqrt{1-t_{k}} & i \left( e^{-\frac{\pi i}{4}}\sqrt{1+t_{k}} - e^{\frac{\pi i}{4}}\sqrt{1-t_{k}} \right) \\
-i \left( e^{-\frac{\pi i}{4}}\sqrt{1+t_{k}} - e^{\frac{\pi i}{4}}\sqrt{1-t_{k}} \right) & e^{-\frac{\pi i}{4}}\sqrt{1+t_{k}} + e^{\frac{\pi i}{4}}\sqrt{1-t_{k}}
\end{pmatrix} \Lambda_{k}^{\sigma_{3}},
\end{multline}
where
\begin{equation}\label{def of Lambdak}
\Lambda_{k} = e^{\frac{W(t_{k})}{2}}D_{W,+}(t_{k})^{-1}e^{i \frac{\lambda_{k}}{2} } (4\pi \rho(t_{k})n(1-t_{k}^{2}))^{\beta_{k}} \prod_{1 \leq j\neq k \leq m} T_{kj}^{-\beta_{j}}, 
\end{equation}
and
\begin{equation}\label{def of lambdak}
\lambda_{k} = \mathcal{A} \arccos t_{k} - \frac{\pi}{2}\alpha_{k} - \sum_{j=k+1}^{m+1} \pi \alpha_{j} + 2\pi n \int_{t_{k}}^{1} \rho(s)ds.
\end{equation}
Also, we need a more detailed knowledge of the asymptotics \eqref{lol1}. By \cite[equation (4.52)]{Charlier}, we have
\begin{equation}\label{asymptotics on the disk D_t}
P^{(t_{k})}(z)P^{(\infty)}(z)^{-1} = I + \frac{v_{k}}{n f_{t_{k}}(z)} E_{t_{k}}(z) \begin{pmatrix}
-1 & \tau(\alpha_{k},\beta_{k}) \\ - \tau(\alpha_{k},-\beta_{k}) & 1
\end{pmatrix}E_{t_{k}}(z)^{-1} + \bigO (n^{-2+2|\Re\beta_{k}|}),
\end{equation}
uniformly for $z \in \partial \mathcal{D}_{t_k}$ as $n \to \infty$, where $v_{k} = \beta_{k}^{2}-\frac{\alpha_{k}^{2}}{4}$ and $\tau(\alpha_{k},\beta_{k}) = \frac{ - \Gamma(\frac{\alpha_{k}}{2}-\beta_{k})}{\Gamma(\frac{\alpha_{k}}{2}+\beta_{k}+1)}$.

\subsection{Local parametrix near $1$}\label{subsection: local param near 1}
The local parametrix near $1$ cannot be treated for both Laguerre-type and Jacobi-type weights simultaneously, since $1$ is a soft edge for Laguerre-type weights, and a hard edge for Jacobi-type weights. At a soft edge, the construction relies on the Airy model RH problem (whose solution is denoted $\Phi_{\mathrm{Ai}}$), and at a hard edge on the Bessel model RH problem (whose solution is denoted $\Phi_{\mathrm{Be}}$). For the reader's convenience, we recall these model RH problems in the appendix. 
\subsection*{Laguerre-type weights}
Let us define $f_{1}(z) = ( -\frac{3}{2}\xi(z) )^{2/3}$. This is a conformal map in $\mathcal{D}_{1}$ whose expansion as $z \to 1$ is given by
\begin{equation}\label{f_1 asymp L}
f_{1}(z) = \left(\frac{\pi \psi(1)}{\sqrt{2}}\right)^{2/3}(z-1)\left( 1-\frac{1}{10}\left( 1-4\frac{\psi^{\prime}(1)}{\psi(1)} \right)(z-1) + \bigO((z-1)^{2}) \right).
\end{equation}
The lenses $\gamma_{+}$ and $\gamma_{-}$ in a neighborhood of $1$ are chosen such that $f_{1}(\gamma_{+} \cap \mathcal{D}_{1})\subset e^{\frac{2\pi i}{3}}\mathbb{R}^{+}$ and $f_{1}(\gamma_{-} \cap \mathcal{D}_{1})\subset e^{-\frac{2\pi i}{3}}\mathbb{R}^{+}$. The local parametrix is given by
\begin{equation}
P^{(1)}(z) = E_{1}(z)\Phi_{\mathrm{Ai}}(n^{2/3}f_{1}(z))\omega(z)^{-\frac{\sigma_{3}}{2}}e^{-n\xi(z)\sigma_{3}}e^{-\frac{W(z)}{2}\sigma_{3}},
\end{equation}
where $E_{1}$ is analytic in $\mathcal{D}_{1}$ and given by
\begin{equation}
E_{1}(z) = P^{(\infty)}(z)e^{\frac{W(z)}{2}\sigma_{3}}\omega(z)^{\frac{\sigma_{3}}{2}}N^{-1}f_{1}(z)^{\frac{\sigma_{3}}{4}}n^{\frac{\sigma_{3}}{6}}, \qquad N = \frac{1}{\sqrt{2}}\begin{pmatrix}
1 & i \\ i & 1
\end{pmatrix},
\end{equation}
and $\Phi_{\mathrm{Ai}}(z)$ is the solution to the Airy model RH problem presented in the appendix (see Subsection \ref{subsection: model Airy}). Using \eqref{Asymptotics Airy}, we obtain a more detailed description of the matching condition (\ref{lol1}):
\begin{equation}\label{asymptotics on the disk D_1}
P^{(1)}(z)P^{(\infty)}(z)^{-1} = I + \frac{P^{(\infty)}(z)e^{\frac{W(z)}{2}\sigma_{3}}\omega(z)^{\frac{\sigma_{3}}{2}}}{8n f_{1}(z)^{3/2}}\begin{pmatrix}
\frac{1}{6} & i \\ i & -\frac{1}{6}
\end{pmatrix}\omega(z)^{-\frac{\sigma_{3}}{2}}e^{-\frac{W(z)}{2}\sigma_{3}}P^{(\infty)}(z)^{-1} + \bigO(n^{-2})
\end{equation}
uniformly for $z \in \partial \mathcal{D}_{1}$ as $n \to \infty$.

\subsection*{Jacobi-type weights}
In this case we define $f_{1}(z) = \xi(z)^{2}/4$. This is a conformal map in $\mathcal{D}_{1}$ whose expansion as $z \to 1$ is given by
\begin{equation}\label{f_1 asymp J}
f_{1}(z) = \left( \frac{\pi}{\sqrt{2}}\psi(1) \right)^{2}(z-1)\Big(1+\Big( \frac{2}{3}\frac{\psi^{\prime}(1)}{\psi(1)}-\frac{1}{6} \Big)(z-1)+ \bigO((z-1)^{2})\Big).
\end{equation}
The lenses $\gamma_{+}$ and $\gamma_{-}$ in a neighborhood of $1$  are again chosen such that $f_{1}(\gamma_{+} \cap \mathcal{D}_{1})\subset e^{\frac{2\pi i}{3}}\mathbb{R}^{+}$ and $f_{1}(\gamma_{-} \cap \mathcal{D}_{1})\subset e^{-\frac{2\pi i}{3}}\mathbb{R}^{+}$. The local parametrix is given by
\begin{equation}
P^{(1)}(z) = E_{1}(z) \Phi_{\mathrm{Be}}(n^{2}f_{1}(z);\alpha_{m+1})\omega_{1}(z)^{-\frac{\sigma_{3}}{2}}(z-1)^{-\frac{\alpha_{m+1}}{2}\sigma_{3}}e^{-n\xi(z)\sigma_{3}}e^{-\frac{W(z)}{2}\sigma_{3}},
\end{equation}
where the principal branch is taken for $(z-1)^{\frac{\alpha_{m+1}}{2}}$, $\Phi_{\mathrm{Be}}(z)$ is the solution to the Bessel model RH problem presented in Subsection \ref{ApB}, and $E_{1}$ is analytic in $\mathcal{D}_{1}$ and given by
\begin{equation}
E_{1}(z) = P^{(\infty)}(z)e^{\frac{W(z)}{2}\sigma_{3}}(z-1)^{\frac{\alpha_{m+1}}{2}\sigma_{3}}\omega_{1}(z)^{\frac{\sigma_{3}}{2}}N^{-1}(2\pi n f(z)^{1/2})^{\frac{\sigma_{3}}{2}}.
\end{equation}
In this case, using \eqref{large z asymptotics Bessel}, the matching condition (\ref{lol1}) can be written as 
\begin{multline}\label{asymptotics on the disk D_1J}
P^{(1)}(z)P^{(\infty)}(z)^{-1} = I + \frac{P^{(\infty)}(z)e^{\frac{W(z)}{2}\sigma_{3}}\omega_{1}(z)^{\frac{\sigma_{3}}{2}}(z-1)^{\frac{\alpha_{m+1}}{2}\sigma_{3}}}{16n f_{1}(z)^{1/2}} \\ \times \begin{pmatrix}
-(1+4\alpha_{m+1}^{2}) & -2i \\ -2i & 1+4\alpha_{m+1}^{2}
\end{pmatrix}(z-1)^{-\frac{\alpha_{m+1}}{2}\sigma_{3}}\omega_{1}(z)^{-\frac{\sigma_{3}}{2}}e^{-\frac{W(z)}{2}\sigma_{3}}P^{(\infty)}(z)^{-1} + \bigO(n^{-2}),
\end{multline}
uniformly for $z \in \partial \mathcal{D}_{1}$ as $n \to \infty$.
\subsection{Local parametrix near $-1$}\label{subsection: local param near -1}
Since Laguerre-type and Jacobi-type weights both have a hard edge at $-1$, the construction of this local parametrix can be treated simultaneously for both cases, the only difference being in the conformal map. This map is defined by $f_{-1}(z) = -(\xi(z)-\pi i)^{2}/4$, and its expansion as $z \to -1$ is given by
\begin{equation}\label{f_{-1}-asymp}
f_{-1}(z) = \small \left\{ \begin{array}{l l}
\ds \big(\sqrt{2}\pi \psi(-1)\big)^{2}(z+1)\Big(1+\Big( \frac{2}{3}\frac{\psi^{\prime}(-1)}{\psi(-1)}-\frac{1}{6} \Big)(z+1) + \bigO((z+1)^{2})\Big), & \mbox{for Laguerre-type weights}, \\[0.3cm]
\ds \Big(\di \frac{\pi}{\sqrt{2}} \psi(-1)\Big)^{2}(z+1)\Big(1+\Big( \frac{2}{3}\frac{\psi^{\prime}(-1)}{\psi(-1)}+\frac{1}{6} \Big)(z+1) + \bigO((z+1)^{2})\Big), & \mbox{for Jacobi-type weights}.
\end{array} \right.
\end{equation}
The local parametrix is given by
\begin{equation}\label{P^{(-1)}}
P^{(-1)}(z) = E_{-1}(z)\sigma_{3}\Phi_{\mathrm{Be}}(-n^{2}f_{-1}(z);\alpha_{0})\sigma_{3}\omega_{-1}(z)^{-\frac{\sigma_{3}}{2}}(-z-1)^{-\frac{\alpha_{0}}{2}\sigma_{3}}e^{-n\xi(z)\sigma_{3}}e^{-\frac{W(z)}{2}\sigma_{3}},
\end{equation}
where the principal branch is chosen for $(-z-1)^{-\frac{\alpha_{0}}{2}\sigma_{3}}$, and $E_{-1}$ is analytic in $\mathcal{D}_{-1}$ and given by
\begin{equation}\label{E-1 general}
E_{-1}(z) = (-1)^{n}P^{(\infty)}(z)e^{\frac{W(z)}{2}\sigma_{3}}\omega_{-1}(z)^{\frac{\sigma_{3}}{2}}(-z-1)^{\frac{\alpha_{0}}{2}\sigma_{3}}N(2\pi n (-f_{-1}(z))^{1/2})^{\frac{\sigma_{3}}{2}}.
\end{equation}
For Laguerre-type weights with $W \equiv 0$, by taking the limit $z \to -1$ in \eqref{E-1 general} (from e.g. the upper half plane) and using the asymptotics \eqref{SzAsym-1}--\eqref{SzAsym-1be} we have 
\begin{equation}\label{E_{-1}(-1)}
E_{-1}(-1)=(-1)^nD_{\infty}^{\sigma_3}\Big(N + \begin{pmatrix}
0 & \frac{i}{\sqrt{2}}(\mathcal{A}+\widetilde{\mathcal{B}}_{-1}) \\
0 & \frac{-1}{\sqrt{2}}(\mathcal{A}+\widetilde{\mathcal{B}}_{-1})
\end{pmatrix}\Big)(4\pi^2\psi(-1)n)^{\frac{\sigma_3}{2}}.    
\end{equation}
Furthermore, as $n \to \infty$, we have
\begin{multline}\label{asymptotics on the disk D_-1}
P^{(-1)}(z)P^{(\infty)}(z)^{-1} = I + \frac{P^{(\infty)}(z)e^{\frac{W(z)}{2}\sigma_{3}}\omega_{-1}(z)^{\frac{\sigma_{3}}{2}}(-z-1)^{\frac{\alpha_{0}}{2}\sigma_{3}}}{16 n (-f_{-1}(z))^{1/2}} \\ \times \begin{pmatrix}
-(1+4\alpha_{0}^{2}) & 2i \\ 2i & 1+4\alpha_{0}^{2}
\end{pmatrix}(-z-1)^{-\frac{\alpha_{0}}{2}\sigma_{3}}\omega_{-1}(z)^{-\frac{\sigma_{3}}{2}}e^{-\frac{W(z)}{2}\sigma_{3}}P^{(\infty)}(z)^{-1} + \bigO(n^{-2}),
\end{multline}
uniformly for $z \in \partial \mathcal{D}_{-1}$.
	
\subsection{Small norm RH problem}\label{subsection: small norm}
We are now in a position to do the last transformation. We recall that the disks are nonoverlapping. Using the parametrices, we define the matrix valued function $R$ as
\begin{equation}
R(z) = \left\{ \begin{array}{l l}
S(z)P^{(\infty)}(z)^{-1}, & \mbox{if }z \in \mathbb{C}\setminus \cup_{j=0}^{m+1}\overline{  \mathcal{D}_{t_j}}, \\
S(z)P^{(t_{j})}(z)^{-1}, & \mbox{if }z \in \mathcal{D}_{t_j}, \quad j  = 0,\ldots,m+1.
\end{array} \right.
\end{equation}
We recall that the local parametrices have the same jumps as $S$ inside the disks and also that the global parametrix has the same jumps as $S$ on $(-1,1)$, hence $R$ has jumps only on the contour $\Sigma_R$ depicted in Figure \ref{R-contour}, where the orientation of the jump contour on $\partial \mathcal{D}_{t_j}$ is chosen to be clockwise.  Since $P^{(t_j)}$ and $S$ have the same asymptotic behavior near $t_j$, $j=0,\ldots,m+1$, $R$ is bounded at these points. Therefore, it satisfies the following RH problem.
\subsubsection*{RH problem for $R$}
\begin{itemize}
	\item[(a)] $R: \C \setminus \Sigma_R \to \C^{2 \times 2}$ is analytic.
	\item[(b)] $R$ satisfies $R_+(z)=R_-(z)J_R(z)$ for $z$ on $\Sigma_R\setminus \{\mbox{intersection points of }\Sigma_R\}$ with
	\begin{align}\label{J_R}
	 J_R(z) = \begin{cases}
	 P^{(t_j)}(z)P^{(\infty)}(z)^{-1} & z \in \partial \mathcal{D}_{t_j}, \\
	 P^{(\infty)}(z)J_S(z)P^{(\infty)}(z)^{-1} & z \in \Sigma_R \setminus \cup^{m+1}_{j=0} \partial \mathcal{D}_{t_j},
	 \end{cases}
	\end{align}
where $J_{S}(z) := S_{-}^{-1}(z)S_{+}(z)$ is given in \eqref{J_S 1}--\eqref{J_S 3}.
	\item[(c)] As $z \to \infty$, $R(z) = I + R_1 z^{-1} + \bigO(z^{-2})$ for a certain matrix $R_{1}$ independent of $z$.
	
As $z\to z_{\star} \in \{\mbox{intersections points of } \Sigma_{R} \}$, $R(z)$ is bounded.
\end{itemize}	
We recall that outside fixed neighbourhoods of $t_{j}$, $j = 0,\ldots,m+1$, the jumps for $S$ on $\gamma_{+}\cup \gamma_{-}$ and on $\mathcal{I}\setminus [-1,1]$ are exponentially and uniformly close to the identity matrix (see the discussion at the end of Subsection \ref{Subsection: T to S}). Therefore, from (\ref{asymptotics on the disk D_t}), (\ref{asymptotics on the disk D_1}),(\ref{asymptotics on the disk D_1J}), (\ref{asymptotics on the disk D_-1}) and (\ref{J_R}), as $n \to \infty$ we have
 
\begin{equation}\label{J_R asymp}
 J_R(z) = \begin{cases}
 I+\mathcal{O}(e^{-cn}), & \mbox{uniformly for } z \in \Sigma_R \cap (\ga^+ \cup \ga^- \cup \R), \\
 I+\mathcal{O}(n^{-1}), & \mbox{uniformly for }  z \in \partial \mathcal{D}_{1} \cup \partial \mathcal{D}_{-1}, \\
 I+\mathcal{O}(n^{-1+2|\Re \be_k|}), & \mbox{uniformly for }  z \in \partial \mathcal{D}_{t_k} , k=1,\ldots,m,
 \end{cases}
 \end{equation}
for a positive constant $c$. By standard theory of small-norm RH problems (see e.g.  \cite{Deiftetal, Deiftetal2}), $R$ exists for sufficiently large $n$ (we also refer to \cite{Krasovsky, ItsKrasovsky, DeiftItsKrasovsky, Charlier} for very similar situations with more details provided). Furthermore, for any $r \in \mathbb{N}$, as $n \to \infty$, $R$ has an expansion given by
\begin{align}
& R(z) = I + \sum_{j=1}^{r} \frac{R^{(j)}(z)}{n^j} + R_R^{(r+1)}(z)n^{-r-1},  \label{R_large_n_asymp} \\
& R^{(j)}(z) = \bigO(n^{2\beta_{\max}}), \quad R^{(j)}(z)^{\prime} = \bigO(n^{2\beta_{\max}}) \quad R^{(r+1)}_{R}(z) = \bigO(n^{2\beta_{\max}}), \quad R^{(r+1)}_{R}(z)^{\prime} = \bigO(n^{2\beta_{\max}}), \nonumber 
\end{align}
uniformly for $z \in \mathbb{C}\setminus \Sigma_{R}$, uniformly for $(\vec{\al},\vec{\be})$ in any fixed compact set, and uniformly in $\vec{t}$ if there exists $\delta > 0$, independent of $n$, such that 
\begin{equation}\label{delta def in small norm section}
\min_{j \neq k} \{ |t_j-t_k|, |t_j-1|, |t_j+1| \} \geq \de.
\end{equation} 
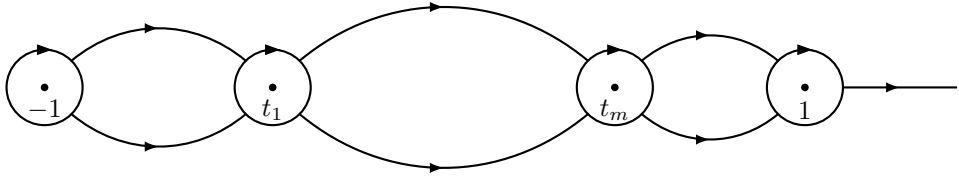
\begin{figure}
\centering
\begin{tikzpicture}
\fill (0,0) circle (0.05cm);
\fill (3,0) circle (0.05cm);
\fill (7.5,0) circle (0.05cm);
\fill (10,0) circle (0.05cm);
\draw [thick] (0,0) circle (0.5cm);
\draw [thick] (3,0) circle (0.5cm);
\draw [thick] (7.5,0) circle (0.5cm);
\draw [thick] (10,0) circle (0.5cm);
\draw [->-=0.5,thick] (10.5,0)--(12,0);
\draw[black,arrows={-Triangle[length=0.2cm,width=0.14cm]}]
($(0.1,0)+(90:0.5)$) --  ++(0.0001,0);
\draw[black,arrows={-Triangle[length=0.2cm,width=0.14cm]}]
($(3.1,0)+(90:0.5)$) --  ++(0.0001,0);
\draw[black,arrows={-Triangle[length=0.2cm,width=0.14cm]}]
($(7.6,0)+(90:0.5)$) --  ++(0.0001,0);
\draw[black,arrows={-Triangle[length=0.2cm,width=0.14cm]}]
($(10.1,0)+(90:0.5)$) --  ++(0.0001,0);

\node at (0,-0.3) {$-1$};
\node at (3,-0.3) {$t_{1}$};
\node at (7.5,-0.3) {$t_{m}$};
\node at (10,-0.3) {$1$};

\draw[->-=0.5,black,thick] ($(0,0)+(45:0.5)$) to [out=40, in=140] ($(3,0)+(135:0.5)$);
\draw[->-=0.5,black,thick] ($(3,0)+(45:0.5)$) to [out=40, in=140] ($(7.5,0)+(135:0.5)$);
\draw[->-=0.5,black,thick] ($(7.5,0)+(45:0.5)$) to [out=40, in=140] ($(10,0)+(135:0.5)$);

\draw[->-=0.5,black,thick] ($(0,0)+(-45:0.5)$) to [out=-40, in=-140] ($(3,0)+(-135:0.5)$);
\draw[->-=0.5,black,thick] ($(3,0)+(-45:0.5)$) to [out=-40, in=-140] ($(7.5,0)+(-135:0.5)$);
\draw[->-=0.5,black,thick] ($(7.5,0)+(-45:0.5)$) to [out=-40, in=-140] ($(10,0)+(-135:0.5)$);
\end{tikzpicture}
\caption{Jump contour $\Sigma_R$ for the RH problem for $R$ for Laguerre-type weights with $m=2$. For Jacobi-type weights, $\Sigma_R$ is of the same shape except that there are no jumps on $(1,\infty)\setminus \overline{\mathcal{D}_{1}}$.}
\label{R-contour}
\end{figure}
Furthermore, in the way as done in \cite{Charlier}, we show that
\begin{equation}\label{R^{(1)} derivative order}
\partial_{\nu} R^{(j)}(z) = \bigO (n^{2 \be_{max}}\log n), \qquad \partial_{\nu} R_{R}^{(r+1)}(z) = \bigO (n^{2 \be_{max}}\log n)
\end{equation}
for $\nu \in \{\alpha_{0},\alpha_{1},\ldots,\alpha_{m+1},\beta_{1},\ldots,\beta_{m}\}$.
From (\ref{asymptotics on the disk D_t}), (\ref{asymptotics on the disk D_1}),(\ref{asymptotics on the disk D_1J}), (\ref{asymptotics on the disk D_-1}), we show that $J_{R}$ admits an expansion as $n \to + \infty$ of the form
\begin{equation}
J_{R}(z) = I +  \sum_{j=1}^{r} \frac{J_R^{(j)}(z)}{n^j} + \bigO
(n^{-r-1+2\be_{max}}) , \qquad J_R^{(j)}(z) = \bigO(n^{2\beta_{\max}}),
\end{equation}
uniformly for $z \in \cup_{j=0}^{m+1}\partial \mathcal{D}_{t_{j}}$. The matrices $R^{(j)}$ are obtained in a recursive way via the Plemelj-Sokhotski formula (for instance see \cite{KMcLVAV}), in particular one has
\begin{equation}\label{R^(1)}
R^{(1)}(z) = \sum_{j=0}^{m+1}\frac{1}{2\pi i} \int_{\partial \mathcal{D}_{t_{j}}} \frac{J_{R}^{(1)}(s)}{s-z}ds,
\end{equation}
where we recall that the orientation on $\partial \mathcal{D}_{t_{j}}$ is clockwise.
The goal for the rest of this section is to explicitly compute $R^{(1)}$ in the case $W \equiv 0$ for Laguerre-type and Jacobi-type weights.
\subsection*{Laguerre-type weights}
From \eqref{asymptotics on the disk D_t}, \eqref{asymptotics on the disk D_1},  and \eqref{asymptotics on the disk D_-1} we easily show that $J_{R}^{(1)}$ has a double pole at $1$ and a simple pole at $t_j$, $j=0,\ldots,m$. Therefore $R^{(1)}(z)$ can be explicitly computed from (\ref{R^(1)}) via a residue calculation. For $z \in \mathbb{C}\setminus \cup_{j=0}^{m+1}\overline{\mathcal{D}_{t_j}}$, we have
\begin{equation}\label{R^{(1)}}
\begin{array}{r c l}
\displaystyle R^{(1)}(z) & = & \displaystyle \sum_{j=1}^{m} \frac{1}{z-t_{j}} \mbox{Res}\big(J_{R}^{(1)}(s),s=t_{j}\big) +\frac{1}{z+1} \mbox{Res}\big(J_{R}^{(1)}(s),s=-1\big) \\
 &  & \displaystyle +\frac{1}{z-1} \mbox{Res}\big(J_{R}^{(1)}(s),s=1\big)+ \frac{1}{(z-1)^{2}} \mbox{Res}\big((s-1)J_{R}^{(1)}(s),s=1\big). \\
\end{array}
\end{equation}
The residue at $t_{k}$ can be computed from \eqref{asymptotics on the disk D_t} (in the same way as in \cite[eq (4.82)]{Charlier})
\begin{equation}\label{res t_{k}}
\mbox{Res}\big(J_{R}^{(1)}(z),z=t_{k}\big) = \frac{v_{k} D_{\infty}^{\sigma_{3}}}{2\pi \rho(t_{k})\sqrt{1-t_{k}^{2}}}\begin{pmatrix}
t_{k} + \widetilde{\Lambda}_{I,k} & -i - i\widetilde{\Lambda}_{R,2,k} \\
-i + i \widetilde{\Lambda}_{R,1,k} & - t_{k} - \widetilde{\Lambda}_{I,k}
\end{pmatrix}D_{\infty}^{-\sigma_{3}},
\end{equation}
where
\begin{align}
& \widetilde{\Lambda}_{I,k} = \frac{\tau(\alpha_{k},\beta_{k})\Lambda_{k}^{2}-\tau(\alpha_{k},-\beta_{k})\Lambda_{k}^{-2}}{2i}, \\[0.3cm]
& \widetilde{\Lambda}_{R,1,k} = \frac{\tau(\alpha_{k},\beta_{k})\Lambda_{k}^{2}e^{i\arcsin t_{k}}+\tau(\alpha_{k},-\beta_{k})\Lambda_{k}^{-2}e^{-i\arcsin t_{k}}}{2}, \\[0.3cm] & \widetilde{\Lambda}_{R,2,k} = \frac{\tau(\alpha_{k},\beta_{k})\Lambda_{k}^{2}e^{-i\arcsin t_{k}}+\tau(\alpha_{k},-\beta_{k})\Lambda_{k}^{-2}e^{i\arcsin t_{k}}}{2}.
\end{align}
Furthermore, we note the following relation:
\begin{align}
& \widetilde{\Lambda}_{R,1,k} - \widetilde{\Lambda}_{R,2,k} = -2t_{k} \widetilde{\Lambda}_{I,k}. \label{connection Lambdas}
\end{align}
Now let us compute the other terms in (\ref{R^{(1)}}). We compute the residue at $-1$ from (\ref{Global_Parametrix}), (\ref{SzAsym-1}), (\ref{SzAsym-1be}), (\ref{f_{-1}-asymp}) and (\ref{asymptotics on the disk D_-1}), and we find
\begin{equation}\label{res at -1L}
\mbox{Res}\big(J_{R}^{(1)}(z),z=-1\big) = \frac{1-4\alpha_{0}^{2} }{2^{5} \pi \psi(-1)}D_{\infty}^{\sigma_{3}}\begin{pmatrix}
-1 & -i \\
-i & 1
\end{pmatrix}D_{\infty}^{-\sigma_{3}}.
\end{equation}
Similarly, from (\ref{Global_Parametrix}), (\ref{SzAsym+1}), (\ref{SzAsym+1be}), (\ref{f_1 asymp L}) and (\ref{asymptotics on the disk D_1}) we obtain 
\begin{equation}\label{res at 1'L}
\mbox{Res}\big((z-1)J_{R}^{(1)}(z),z=1\big) = \frac{5}{2^{4} 3 \pi \psi(1)} D_{\infty}^{\sigma_{3}} \begin{pmatrix}
-1 & i \\ i & 1
\end{pmatrix} D_{\infty}^{-\sigma_{3}},
\end{equation}
and
\begin{multline}\label{res at 1L}
\mbox{Res}\big(J_{R}^{(1)}(z),z=1\big) = \frac{D_{\infty}^{\sigma_{3}}}{2^{5}\pi \psi(1)} \times \\[0.3cm] \hspace{0.5cm} \begin{pmatrix}
-4(\mathcal{A}-\widetilde{\mathcal{B}_{1}})^{2} + 1 + 2\frac{\psi^{\prime}(1)}{\psi(1)}  & 4i \left( (\mathcal{A}-\widetilde{\mathcal{B}_{1}})^{2} + 2 (\mathcal{A}-\widetilde{\mathcal{B}_{1}}) + \frac{11}{12} - \frac{1}{2}\frac{\psi^{\prime}(1)}{\psi(1)} \right) \\
4i \left( (\mathcal{A}-\widetilde{\mathcal{B}_{1}})^{2} - 2 (\mathcal{A}-\widetilde{\mathcal{B}_{1}}) + \frac{11}{12} - \frac{1}{2}\frac{\psi^{\prime}(1)}{\psi(1)} \right) & 4(\mathcal{A}-\widetilde{\mathcal{B}_{1}})^{2} - 1 -2\frac{\psi^{\prime}(1)}{\psi(1)}
\end{pmatrix}D_{\infty}^{-\sigma_{3}}.
\end{multline}
The quantity $R^{(1)}(-1)$ will also play an important role in Section \ref{Section: FH integration}. From another residue calculation, we obtain
\begin{equation}\label{R^(1) Laguerre at -1}
\begin{array}{r c l}
\displaystyle R^{(1)}(-1) & = & \displaystyle \sum_{j=1}^{m} \frac{-1}{1+t_{j}} \mbox{Res}\big(J_{R}^{(1)}(s),s=t_{j}\big) - \mbox{Res}\big(\frac{J_{R}^{(1)}(s)}{s+1},s=-1\big) \\
 &  & \displaystyle -\frac{1}{2} \mbox{Res}\big(J_{R}^{(1)}(s),s=1\big)+ \frac{1}{4} \mbox{Res}\big((s-1)J_{R}^{(1)}(s),s=1\big). \\
\end{array}
\end{equation}
In (\ref{res t_{k}}), (\ref{res at 1'L}) and (\ref{res at 1L}) we have already computed the above residues at $t_{1},\ldots,t_{m}$ and at $1$, the other residue at $-1$ can be computed from (\ref{Global_Parametrix}), (\ref{SzAsym-1})--\eqref{SzAsym-1be}, (\ref{f_{-1}-asymp}) and (\ref{asymptotics on the disk D_-1})  from which we obtain:
\begin{multline}\label{new res -1 L}
\mbox{Res}\big(\frac{J_{R}^{(1)}(s)}{s+1},s=-1\big) =  \frac{D_{\infty}^{\sigma_{3}}}{2^{3}3 \pi \psi(-1)} \left( \begin{array}{l}
\frac{3}{2}(\mathcal{A}+\mathcal{B}_{-1})^{2}-2\alpha_{0}^{2}-1+\frac{1-4\alpha_{0}^{2}}{4}\frac{\psi^{\prime}(-1)}{\psi(-1)} \\
i \Big( \frac{3}{2}(\mathcal{A}+ \mathcal{B}_{-1})^{2}-3(\mathcal{A}+ \mathcal{B}_{-1}) + \alpha_{0}^{2} + \frac{5}{4}+\frac{1-4\alpha_{0}^{2}}{4}\frac{\psi^{\prime}(-1)}{\psi(-1)} \Big)
\end{array} \right.  \\
\cdots \left. \begin{array}{l}
 i \Big( \frac{3}{2}(\mathcal{A}+ \mathcal{B}_{-1})^{2}+3(\mathcal{A}+ \mathcal{B}_{-1}) + \alpha_{0}^{2} + \frac{5}{4}+\frac{1-4\alpha_{0}^{2}}{4}\frac{\psi^{\prime}(-1)}{\psi(-1)} \Big) \\ -\frac{3}{2}(\mathcal{A}+\mathcal{B}_{-1})^{2}+2\alpha_{0}^{2}+1-\frac{1-4\alpha_{0}^{2}}{4}\frac{\psi^{\prime}(-1)}{\psi(-1)}
\end{array} \right)D_{\infty}^{-\sigma_{3}}.
\end{multline}

\subsection*{Jacobi-type weights} 
In this case $J^{(1)}_R(z)$ has simple poles at all $t_j$, $j=0,1,\ldots, m+1$ as can be seen from  (\ref{asymptotics on the disk D_t}), (\ref{asymptotics on the disk D_1J}),  and (\ref{asymptotics on the disk D_-1}). For $z$ outside all of the disks $\mathcal{D}_{t_j}$, $j=0,1,\ldots, m+1$, we have
\begin{equation}\label{R^{(1)}Jacobi}
\begin{array}{r c l}
\displaystyle R^{(1)}(z) & = & \displaystyle \sum_{j=1}^{m} \frac{1}{z-t_{j}} \mbox{Res}\big(J_{R}^{(1)}(s),s=t_{j}\big) +\frac{1}{z+1} \mbox{Res}\big(J_{R}^{(1)}(s),s=-1\big) \\
 &  & \displaystyle +\frac{1}{z-1} \mbox{Res}\big(J_{R}^{(1)}(s),s=1\big). \\
\end{array}
\end{equation}
Here the residue at $t_{k}$ is again given by \eqref{res t_{k}} (with $\rho$ given by \eqref{def of rho}). The residues at $-1$ can be computed from (\ref{Global_Parametrix}), (\ref{SzAsym-1}), (\ref{SzAsym-1be}), (\ref{f_{-1}-asymp}) and (\ref{asymptotics on the disk D_-1}) and is given by
\begin{equation}
\mbox{Res}\big(J_{R}^{(1)}(z),z=-1\big) = \frac{1-4\alpha_{0}^{2}}{2^{4} \pi \psi(-1)}D_{\infty}^{\sigma_{3}}\begin{pmatrix}
-1 & -i \\
-i & 1
\end{pmatrix}D_{\infty}^{-\sigma_{3}}.
\end{equation}
Similarly, from (\ref{Global_Parametrix}), (\ref{SzAsym+1}), (\ref{SzAsym+1be}), (\ref{f_1 asymp J}) and (\ref{asymptotics on the disk D_1J}) we obtain the residue at $1$:

\begin{equation}
\mbox{Res}\big(J_{R}^{(1)}(z),z=1\big) = \frac{1-4\alpha_{m+1}^{2}}{2^{4}  \pi \psi(1)}D_{\infty}^{\sigma_{3}}\begin{pmatrix}
1 & -i \\
-i & -1
\end{pmatrix}D_{\infty}^{-\sigma_{3}}.
\end{equation}

\section{Starting points of integration}\label{Section: starting points of integrations}
Since we will find large $n$ asymptotics only for the logarithmic derivative of Hankel determinant, we still face the classical problem of finding a good starting point for the integration. It turns out that in our case, it can be obtained by a direct computation, using some known results in the literature concerning standard Laguerre and Jacobi polynomials, and using the formula \eqref{det as product}.
\begin{lemma}
As $n \to \infty$, we have
\begin{multline}\label{Laguerre starting point}
\log L_{n}((\alpha_{0},0,\ldots,0),\vec{0},2(x+1),0) = \left( -\frac{3}{2}-\log 2 \right)n^{2} + \left( \log(2\pi)-\alpha_{0}(1+\log 2) \right)n \\ + \left( \frac{\alpha_{0}^{2}}{2}-\frac{1}{6} \right) \log n + \frac{\alpha_{0}}{2} \log(2\pi) + 2\zeta^{\prime}(-1) - \log G(1+\alpha_{0}) + \bigO(n^{-1}).
\end{multline}
As $n \to \infty$, we have
\begin{multline}\label{Jacobi starting point}
\log J_{n}((\alpha_{0},0,\ldots,0,\alpha_{m+1}),\vec{0},0,0) = - n^{2} \log 2 + [(1-\alpha_{0}-\alpha_{m+1})\log 2 + \log \pi]n + \frac{2\alpha_{0}^{2}+2\alpha_{m+1}^{2}-1}{4}\log n \\ - \log (G(1+\alpha_{0})G(1+\alpha_{m+1})) + 3\zeta^{\prime}(-1)+\left( \frac{1}{12}-\frac{(\alpha_{0}+\alpha_{m+1})^{2}}{2}\right)\log 2  + \frac{\alpha_{0}+\alpha_{m+1}}{2} \log(2\pi) + \bigO(n^{-1}).
\end{multline}
\end{lemma}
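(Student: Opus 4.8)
The plan is to compute each of the two Hankel determinants directly by recognizing them as (essentially) the partition functions of the classical Laguerre and Jacobi unitary ensembles and then invoking \eqref{det as product}, which expresses $D_n$ as a product of reciprocal squared leading coefficients of the associated orthonormal polynomials. For the Laguerre case, the weight is $(x+1)^{\alpha_0}e^{-2n(x+1)}$ on $[-1,\infty)$; the affine change of variables $x\mapsto x/(2n)-1$ maps this to the standard Laguerre weight $x^{\alpha_0}e^{-x}$ on $[0,\infty)$ up to explicit Jacobian and exponential prefactors. Thus $L_n((\alpha_0,0,\ldots,0),\vec 0,2(x+1),0)$ equals a fully explicit product — the normalization constant of the Laguerre ensemble, given in closed form by a Selberg/Mehta-type integral as $\prod_{j=0}^{n-1} j!\,\Gamma(j+1+\alpha_0)$ — times a power of $2n$ coming from the scaling. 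For the Jacobi case, the weight $(1+x)^{\alpha_0}(1-x)^{\alpha_{m+1}}$ on $[-1,1]$ is already a classical Jacobi weight (with the usual $x\mapsto(1+x)/2$ sending it to $[0,1]$), so $J_n((\alpha_0,0,\ldots,0,\alpha_{m+1}),\vec 0,0,0)$ is exactly the Jacobi ensemble partition function, which is again a known Selberg integral expressible through products of Gamma functions.

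Concretely, the first steps are: (i) write down the explicit finite-$n$ formula for each determinant as a product of Gamma functions (and, for Laguerre, the extra $(2n)^{-(\ldots)}$ factor from rescaling, carefully tracking the power of $2n$ that appears from the $n^2$ entries of the Hankel matrix, i.e.\ the Vandermonde-type Jacobian in Heine's formula); (ii) take logarithms, converting the product $\prod_{j=0}^{n-1}$ into a sum $\sum_{j=0}^{n-1}\log\Gamma(j+c)$ for the relevant shifts $c$. The main analytic work is then (iii) the asymptotic evaluation of these sums. This is standard: one uses the asymptotics of the Barnes $G$-function, namely $\prod_{j=0}^{n-1}\Gamma(j+1+a) = \dfrac{G(n+1+a)}{G(1+a)}$ together with the known large-$z$ expansion
\begin{equation*}
\log G(z+1) = \frac{z^2}{2}\log z - \frac{3z^2}{4} + \frac{z}{2}\log(2\pi) - \frac{1}{12}\log z + \zeta'(-1) + \bigO(z^{-1}),
\end{equation*}
and more generally $\log G(n+1+a) = \tfrac{n^2}{2}\log n + n\bigl(a\log n - \tfrac34 n + \tfrac12\log 2\pi\bigr)+\cdots$ expanded to the order needed to capture the $n^2$, $n$, $\log n$, and constant terms. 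Collecting the contributions from each Gamma-product (there are two such products in the Laguerre case — the $j!$ factors give $G(n+1)$ with $\zeta'(-1)$, and the $\Gamma(j+1+\alpha_0)$ factors give $G(n+1+\alpha_0)/G(1+\alpha_0)$) and adding the rescaling power of $2n$ yields \eqref{Laguerre starting point}; the analogous bookkeeping in the Jacobi case, where one has products over $j!$, $\Gamma(j+1+\alpha_0)$, $\Gamma(j+1+\alpha_{m+1})$ and $\Gamma(j+1+\alpha_0+\alpha_{m+1})$ (or the precise combination dictated by the Jacobi Selberg integral), produces \eqref{Jacobi starting point}, including the $3\zeta'(-1)$, the $\log 2$ terms, and the $-\log(G(1+\alpha_0)G(1+\alpha_{m+1}))$ constant.

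The step I expect to be the main obstacle is not conceptual but bookkeeping: pinning down \emph{exactly} the right closed-form product for the finite-$n$ determinant, in particular getting all the powers of $2$, $n$, $\pi$, and the Jacobian/rescaling factors correct so that the $n^2$-, $n$-, $\log n$- and $\bigO(1)$-coefficients match \eqref{Laguerre starting point}–\eqref{Jacobi starting point} to the last constant. For the Laguerre case the rescaling $x\mapsto x/(2n)-1$ contributes both a factor $(2n)^{-n^2}$-type scaling to the Vandermonde/weight in Heine's formula and the shift in the exponential $e^{-2n(x+1)}\mapsto e^{-x}$, and one must also be careful that $\sum_{j=0}^{n-1}(j+\text{const})$ enters the exponent of $2n$; a clean way to avoid errors is to treat everything through the Selberg integral formula for $\int_{[0,\infty)^n}\prod_{i<j}|x_i-x_j|^2\prod_i x_i^{\alpha_0}e^{-x_i}dx_i = \prod_{j=1}^{n} j!\,\Gamma(j+\alpha_0)$ (and its Jacobi analogue on $[0,1]$), relate the Hankel determinant to this via Heine's formula and the eigenvalue density \eqref{Laguerre distribution}/\eqref{Jacobi distribution}, and only then perform the change of variables. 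Once the exact finite-$n$ formula is in hand, the asymptotic expansion via the Barnes $G$-function is routine and the stated error term $\bigO(n^{-1})$ follows from the corresponding error in the $G$-function asymptotics.
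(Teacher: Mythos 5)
Your proposal is correct and follows essentially the same route as the paper: the paper likewise obtains the exact finite-$n$ value of each determinant from the product formula $D_n=\prod_{j=0}^{n-1}\kappa_j^{-2}$ using the explicit leading coefficients of the classical Laguerre and Jacobi orthonormal polynomials (equivalently, the Selberg/Mehta normalization constants you invoke), rewrites the resulting Gamma-products via Barnes' $G$-function, and expands with the standard asymptotics of $\log G(z+1)$ and $\log\Gamma(z)$. The only detail to watch in the Jacobi case is that the Selberg product also carries denominator factors $\Gamma(2k+\alpha_0+\alpha_{m+1}+1)\Gamma(2k+\alpha_0+\alpha_{m+1}+2)$, producing a $G(2n+\alpha_0+\alpha_{m+1}+1)$ in the denominator, which your hedge about "the precise combination dictated by the Jacobi Selberg integral" correctly anticipates.
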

\begin{proof}
From \cite[equations (5.1.1) and (5.1.8)]{Szego OP}, the orthonormal polynomials of degree $k$ with respect to the weight $e^{-x}x^{\alpha_{0}}$ (supported on $(0,\infty)$) has a leading coefficient given by 
\begin{equation*}
\frac{(-1)^{k}}{\sqrt{k! \, \Gamma(k + \alpha_{0}+1)}}.
\end{equation*}
Therefore, by a simple change of variables, the degree $k$ orthonormal polynomials with respect to the weight $(x+1)^{\alpha_{0}}e^{-2n(x+1)}$ (supported on $(-1,\infty)$) has a leading coefficient given by
\begin{equation*}
\frac{(-1)^{k}(2n)^{k + \frac{1+\alpha_{0}}{2}}}{\sqrt{k! \, \Gamma(k+\alpha_{0}+1)}}.
\end{equation*}
By applying formula \eqref{det as product} for this weight, one obtains that
\begin{multline}\label{explciit Laguerre}
L_{n}((\alpha_{0},0,\ldots,0),\vec{0},2(x+1),0) = (2n)^{-n(n+\alpha_{0})} \prod_{k=1}^{n} \Gamma(k+\alpha_{0})\Gamma(k) \\ = (2n)^{-n(n+\alpha_{0})}\frac{G(n+1)G(n+\alpha_{0}+1)}{G(1+\alpha_{0})},
\end{multline}
where we have used $G(z+1) = \Gamma(z)G(z)$. The Barne's $G$-function has a known asymptotics for large argument (see \cite[eq (5.17.5)]{NIST}). As $z \to \infty$ with $|\arg z|< \pi$, we have
\begin{equation}\label{logG(z+1) large z}
\log G(z+1) = \frac{z^{2}}{4}+z \log \Gamma(z+1)-\left(\frac{z(z+1)}{2}+\frac{1}{12}\right)\log z - \frac{1}{12} + \zeta^{\prime}(-1) + \bigO(z^{-2}).
\end{equation}
The asymptotics of $\log\Ga(z)$ are given by
\begin{equation}\label{logGamma(z) large z}
\log \Gamma(z) = (z-\tfrac{1}{2})\log z - z + \tfrac{1}{2}\log(2\pi) + \frac{1}{12z} + \bigO(z^{-3}), \qquad \mbox{as } z \to \infty, \quad |\arg z| < \pi,
\end{equation}
(see \cite[eq (5.11.1)]{NIST}). We obtain \eqref{Laguerre starting point} by using the above asymptotic formulas in \eqref{explciit Laguerre}. Similarly, from \cite[equations (4.3.3) and (4.21.6)]{Szego OP}, the degree $k$ orthonormal polynomial with respect to the weight $(1-x)^{\alpha_{m+1}}(1+x)^{\alpha_{0}}$ has a leading coefficient given by
%\begin{equation*}
%2^{-k}\begin{pmatrix}
%2k + \alpha_{0} + \alpha_{m+1} \\ k
%\end{pmatrix} \left( \frac{2^{1+\alpha_{0}+\alpha_{m+1}}}{2k + \alpha_{0} + \alpha_{m+1} + 1} \frac{\Gamma(k + \alpha_{m+1}+1) \Gamma(k + \alpha_{0}+1)}{\Gamma(k+1)\Gamma(k + \alpha_{0}+\alpha_{m+1}+1)} \right)^{-1/2}.
%\end{equation*}
\begin{equation*}
\frac{2^{-k} \sqrt{2k + \alpha_{0} + \alpha_{m+1} + 1} \, \, \Gamma(2k + \alpha_{0} + \alpha_{m+1} + 1)}{\sqrt{2^{\alpha_{0} + \alpha_{m+1}+1}\Gamma(k+1)\Gamma(k + \alpha_{0} + 1)\Gamma(k + \alpha_{m+1} + 1)\Gamma(k + \alpha_{0} + \alpha_{m+1} + 1)}}
\end{equation*}
By applying formula \eqref{det as product} to this weight, one obtains
\begin{multline*}
J_{n}((\alpha_{0},0,\ldots,0,\alpha_{m+1}),\vec{0},0,0) = 2^{n^{2}+n(\alpha_{0}+\alpha_{m+1})} \\ \times \prod_{k=0}^{n-1} \frac{\Gamma(k+1)\Gamma(k + \alpha_{0}+1)\Gamma(k+\alpha_{m+1}+1)\Gamma(k + \alpha_{0}+\alpha_{m+1}+1)}{\Ga(2k + \alpha_{0} + \alpha_{m+1}+1)\Gamma(2k + \alpha_{0} + \alpha_{m+1}+2)}.
\end{multline*}
Using the functional equation $G(z+1) = \Gamma(z)G(z)$ we can simplify the above product. We obtain
\begin{multline}\label{Jn Jacobi inside proof}
J_{n}((\alpha_{0},0,\ldots,0,\alpha_{m+1}),\vec{0},0,0) =  2^{n^{2}+n(\alpha_{0}+\alpha_{m+1})} \\ \times \frac{G(n+1)G(n+\alpha_{0}+1)G(n+\alpha_{m+1}+1)G(n+\alpha_{0}+\alpha_{m+1}+1)}{G(1+\alpha_{0})G(1+\alpha_{m+1})G(2n+\alpha_{0}+\alpha_{m+1}+1)}.
\end{multline}
We obtain \eqref{Jacobi starting point} by expanding \eqref{Jn Jacobi inside proof} as $n \to + \infty$, using the asymptotic formulas \eqref{logG(z+1) large z} and \eqref{logGamma(z) large z}.
\end{proof}
As mentioned in the outline, large $n$ asymptotics for $J_{n}(\vec{\alpha},\vec{\beta},0,0)$ are known in the literature, and we reproduce the precise statement here.
\begin{theorem}\label{DIK-Starting Point Jacobi}(Deift-Its-Krasovsky \cite{DIK}).
As $n \to \infty$, we have
\begin{align}\label{Jacobi asymptotics of DIK}
&\log \frac{J_{n}(\vec{\alpha},\vec{\beta},0,0)}{ J_{n}(\vec{0},\vec{0},0,0)}  =   \bigg[ 2i \sum_{j=1}^{m}\beta_{j} \arcsin t_{j} - \mathcal{A} \log 2 \bigg] n +  \bigg[ \frac{\alpha_{0}^{2}+\alpha_{m+1}^{2}}{2} + \sum_{j=1}^{m} \Big( \frac{\alpha_{j}^{2}}{4}-\beta_{j}^{2} \Big) \bigg]\log n \nonumber \\ &  + i \mathcal{A} \sum_{j=1}^{m}\beta_{j} \arcsin t_{j} + \frac{i \pi}{2} \sum_{0 \leq j < k \leq m+1} (\alpha_{k}\beta_{j}-\alpha_{j}\beta_{k}) + \frac{\alpha_{0}+\alpha_{m+1}}{2} \log(2\pi) - \frac{\alpha_{0}^{2}+\alpha_{m+1}^{2}}{2}\log 2 \nonumber \\ &  +\sum_{0\leq j < k \leq m+1} \log  \Bigg(  \frac{\big(1-t_{j}t_{k}-\sqrt{(1-t_{j}^{2})(1-t_{k}^{2})}\big)^{2\beta_{j}\beta{k}}}{2^{\frac{\alpha_{j}\alpha_{k}}{2}}|t_{j}-t_{k}|^{\frac{\alpha_{j}\alpha_{k}}{2} + 2\beta_{j}\beta_{k}}} \Bigg) + \sum_{j=1}^{m} \log \frac{G(1+\frac{\alpha_{j}}{2}+\beta_{j})G(1+\frac{\alpha_{j}}{2}-\beta_{j})}{G(1+\alpha_{j})} \nonumber \\ & - \sum_{j=1}^{m} \Big( \frac{\alpha_{j}^{2}}{4}+\beta_{j}^{2} \Big) \log(\sqrt{1-t_{j}^{2}})   - \log (G(1+\alpha_{0})G(1+\alpha_{m+1})) - \sum_{j=1}^{m} 2 \beta_{j}^{2} \log 2 + \bigO \left( \frac{\log n}{n^{1-2\beta_{\max}}} \right),
\end{align}
where $\mathcal{A} = \sum_{j=0}^{m+1} \alpha_{j}$.
\end{theorem}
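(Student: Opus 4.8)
The statement is quoted from \cite{DIK}, so strictly speaking there is nothing to prove here; but let me indicate how I would obtain it if I had to, staying within the toolkit of this paper. The natural approach is a Deift/Zhou steepest descent analysis of the Fokas--Its--Kitaev matrix $Y$ for the pure Fisher--Hartwig weight $w(x)=\prod_{j=1}^m\omega_{\alpha_j}(x)\omega_{\beta_j}(x)\,(1+x)^{\alpha_0}(1-x)^{\alpha_{m+1}}$ on $[-1,1]$, combined with a differential identity that is then integrated from a computable starting point. The crucial simplification here is that $V=0$ and $W=0$, so the equilibrium measure is the arcsine distribution $\rho(x)=\tfrac1{\pi\sqrt{1-x^2}}$ and the $g$-function, $\xi$, and all conformal maps are explicit elementary functions; and, more importantly, the weight is independent of $n$, which (as recalled in Remark~\ref{remark real part beta} and in \cite{DeiftItsKrasovsky}) is what makes the range $\Re\beta_k\in(-\tfrac12,\tfrac12)$ accessible via a Vitali-type argument. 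The starting point of the integration is the edge-only determinant $J_n((\alpha_0,0,\dots,0,\alpha_{m+1}),\vec0,0,0)$, for which \eqref{Jacobi starting point} gives the full asymptotics (from the classical Jacobi polynomials and \eqref{det as product}), so one only needs to integrate in the bulk parameters $\alpha_1,\beta_1,\dots,\alpha_m,\beta_m$.

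First I would derive the differential identity. As in the proof of Proposition~\ref{prop: diff identity Laguerre} (or directly from Proposition~\ref{prop: general diff identity} with $\nu\in\{\alpha_1,\beta_1,\dots,\alpha_m,\beta_m\}$), integrating by parts on $\mathcal I_\epsilon=[-1,1]\setminus\bigcup_j(t_j-\epsilon,t_j+\epsilon)$ and using orthogonality produces an identity of the shape \eqref{Diff_identity_alphas_betas} expressing $\partial_\nu\log J_n(\vec\alpha,\vec\beta,0,0)$ through the regularized boundary values $\widetilde Y(t_j)$, together with $\kappa_n,\kappa_{n-1},\eta_n$; since the potential vanishes there is no $w'$-contribution of the form $-2n$, so the $2n\,\partial_\nu\eta_n$ and $-(n+\mathcal A)\partial_\nu\log(\kappa_n\kappa_{n-1})$ terms present in the Laguerre case drop out and the identity is in fact simpler.

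Next I would feed in the steepest descent. Carrying out $Y\mapsto T\mapsto S\mapsto R$ as in Section~\ref{Section: steepest descent} with $V=0$, $W=0$: the global parametrix is built from the Szegő functions $D_\alpha,D_\beta$ (with $D_W\equiv1$); the local parametrices are Bessel-type at the hard edges $\pm1$ (parameters $\alpha_0,\alpha_{m+1}$) and confluent-hypergeometric at $t_1,\dots,t_m$; and the small-norm analysis gives $R=I+O(n^{-1+2\beta_{\max}})$ with the derivative bound \eqref{R^{(1)} derivative order}. Substituting the resulting expansions of $Y_{11}(t_j),Y_{21}(t_j),\widetilde Y(t_j),\kappa_n,\eta_n$ into the differential identity, and simplifying with the explicit arcsine quantities and the $T_{kj}$ appearing in $D_\beta$, yields $\partial_\nu\log J_n(\vec\alpha,\vec\beta,0,0)$ as an explicit function of $(\vec\alpha,\vec\beta,\vec t)$ plus an error $O(n^{-1+2\beta_{\max}}\log n)$, uniform on compacts. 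One then integrates along a path from $(\vec0,\vec0)$ to $(\vec\alpha,\vec\beta)$: the antiderivative is elementary and reproduces the $n$- and $\log n$-coefficients, the double sum over $0\le j<k\le m+1$ (the $(1-t_jt_k-\sqrt{(1-t_j^2)(1-t_k^2)})^{2\beta_j\beta_k}$ factors and the $\tfrac{i\pi}{2}(\alpha_k\beta_j-\alpha_j\beta_k)$ cross terms), the Barnes-$G$ ratios $\log\frac{G(1+\frac{\alpha_j}{2}+\beta_j)G(1+\frac{\alpha_j}{2}-\beta_j)}{G(1+\alpha_j)}$ at each $t_j$ (from $\partial_{\alpha_j}$ of the confluent-hypergeometric contribution together with $\partial_a\log\Gamma$), and the $\log\sqrt{1-t_j^2}$ and $\log2$ pieces; the integration constant is fixed by \eqref{Jacobi starting point}, and after dividing by the $\vec\alpha=\vec\beta=\vec0$ determinant the $\zeta'(-1)$ and pure $\log2$ terms cancel, giving exactly \eqref{Jacobi asymptotics of DIK}.

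The main obstacle is twofold. One part is bookkeeping: tracking the branch cuts and $\arg$-conventions of $D_\beta$ so that the antisymmetric cross terms come out with the right sign and the $T_{kj}$-factors (not their reciprocals) appear — this is where the constant term is easiest to get wrong. The more serious part is the uniformity of the error as $\Re\beta_k$ approaches $\pm\tfrac12$, exactly the point flagged in Remark~\ref{remark real part beta}: the bulk confluent-hypergeometric parametrix degrades there, and to extend the range from $(-\tfrac14,\tfrac14)$ to $(-\tfrac12,\tfrac12)$ one must exploit that the Jacobi weight with $V=0$ is $n$-independent, so that both sides of \eqref{Jacobi asymptotics of DIK} are analytic in $\vec\beta$ and a Vitali/Montel argument (as in \cite{DeiftItsKrasovsky}) transfers the estimate from the smaller strip to the larger one.
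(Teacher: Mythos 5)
You correctly recognize that the paper offers no proof of this statement: Theorem~\ref{DIK-Starting Point Jacobi} is imported verbatim from \cite{DIK} precisely so that the authors can take $(\vec{\alpha},\vec{\beta},0,0)$ as the starting point of their integration in $s$ for the Jacobi case (see the outline around \eqref{lol4}), thereby avoiding a fresh integration in the Fisher--Hartwig parameters. Your sketched route --- a direct Riemann--Hilbert analysis of $Y$ on $[-1,1]$ with the arcsine equilibrium measure, a differential identity in $\nu\in\{\alpha_1,\beta_1,\dots\}$, and integration from the edge-only determinant \eqref{Jacobi starting point} --- is essentially the strategy this paper executes for the Laguerre case in Sections \ref{Section:diff identities LUE}--\ref{Section: FH integration}, and it would plausibly work here too. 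But it is genuinely different from what \cite{DIK} actually does: as Remark~\ref{remark: Toeplitz and Hankel} records, Deift--Its--Krasovsky first establish the asymptotics of \emph{Toeplitz} determinants with Fisher--Hartwig symbols on the unit circle and then transfer them to $J_n(\vec{\alpha},\vec{\beta},0,W)$ via the Szeg\H{o} correspondence \eqref{transformation symbol weight}. That detour through the circle is what buys the full range $\Re\beta_k\in(-\tfrac12,\tfrac12]$ (via Vitali plus FH representations, which rely on the weight being $n$-independent), whereas the real-line route as implemented in this paper is only controlled for $\Re\beta_k\in(-\tfrac14,\tfrac14)$ --- cf.\ Remark~\ref{remark real part beta}. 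One small inaccuracy in your sketch: for the Jacobi weight with $V=0$ the logarithmic derivative $w'/w=\sum_j \alpha_j/(x-t_j)-\alpha_{m+1}/(1-x)+\alpha_0/(1+x)$ still contributes, so only the $2n\,\partial_\nu\eta_n$ term (which originates from the $-2n$ in \eqref{w'_L}) disappears; the $-n\,\partial_\nu\log\kappa_{n-1}$ term from the general identity \eqref{general_diff_identity} and the $-\mathcal{A}\,\partial_\nu\kappa_n/\kappa_{n-1}$ term from the singularities do not drop out.
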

\begin{remark}
The asymptotics \eqref{Jacobi asymptotics of DIK} with $\vec{\beta} = \vec{0}$ and $\alpha_{1}=\ldots=\alpha_{m} = 0$ is consistent with \eqref{Jacobi starting point}.
\end{remark}

\begin{remark}
Our notation differs slightly from the one used in \cite{DIK}: $\alpha_{j}$ and $\beta_{j}$ in our paper corresponds to $2\alpha_{m+1-j}$ and $\beta_{m+1-j}$ in the paper \cite{DIK}.
\end{remark}
The goal of the next section is to obtain a similar formula as \eqref{Jacobi asymptotics of DIK} for $L_{n}(\vec{\alpha},\vec{\beta},2(x+1),0)$.

\section{Integration in $\vec{\alpha}$ and $\vec{\beta}$ for the Laguerre weight}\label{Section: FH integration}
In this section, we specialize to the classical Laguerre weight with FH singularities
\begin{equation}
w(x) = e^{-2n(x+1)}\omega(x),
\end{equation}
supported on $\mathcal{I} = [-1,+\infty)$. In this case, we recall that $\ell = 2+2\log 2$ and $\psi(x) = \frac{1}{\pi}$. We will find large $n$ asymptotics for the differential identity \eqref{Diff_identity_alphas_betas}, and then integrate in the parameters $\alpha_{0},\ldots,\alpha_{m},\beta_{1},\ldots,\beta_{m}$. We first focus on finding large $n$ asymptotics for $\widetilde{Y}(t_k)$, $k=0,\ldots,m$.
\begin{proposition}
For $k \in \{1,\ldots,m\}$, as $n \to + \infty$, we have
\begin{equation}\label{Ytilde-t_k}
\widetilde{Y}(t_k)=e^{-\frac{n\ell}{2}\sigma_3}(I + \mathcal{O}(n^{-1+2\be_{max}}))E_{t_k}(t_k) \begin{pmatrix}
\Phi_{k,11} &  \Phi_{k,12} \\
\Phi_{k,21} &  \Phi_{k,22} 
\end{pmatrix} e^{n(t_k+1)\sigma_3},
\end{equation}
where 
\begin{align}
& \Phi_{k,11}=\frac{\Ga(1+\frac{\al_k}{2}-\be_k)}{\Ga(1+\al_k)} \left(2n\frac{\sqrt{1-t_k}}{\sqrt{1+t_k}} \right)^{\frac{\al_k}{2}}\om^{-\frac{1}{2}}_k(t_k), \quad  \Phi_{k,12}= \frac{-\al_k\Ga(\al_k)}{\Ga(\frac{\al_k}{2}+\be_k)} \left(2n\frac{\sqrt{1-t_k}}{\sqrt{1+t_k}} \right)^{-\frac{\al_k}{2}}\om^{\frac{1}{2}}_k(t_k), \nonumber \\
& \Phi_{k,21}=\frac{\Ga(1+\frac{\al_k}{2}+\be_k)}{\Ga(1+\al_k)} \left(2n\frac{\sqrt{1-t_k}}{\sqrt{1+t_k}} \right)^{\frac{\al_k}{2}}\om^{-\frac{1}{2}}_k(t_k), \quad    \Phi_{k,22}= \frac{\al_k\Ga(\al_k)}{\Ga(\frac{\al_k}{2}-\be_k)} \left(2n\frac{\sqrt{1-t_k}}{\sqrt{1+t_k}} \right)^{-\frac{\al_k}{2}}\om^{\frac{1}{2}}_k(t_k). \label{Phik Gamma}
\end{align}
As $n \to + \infty$, we have
\begin{equation}\label{Ytilde_-1}
\widetilde{Y}(-1)=e^{-\frac{n\ell}{2}\sigma_3}\Big(I+\frac{R^{(1)}(-1)}{n}+\bigO(n^{-2+2\beta_{\max}})\Big)E_{-1}(-1) \begin{pmatrix}
\Phi_{0,11} &  \Phi_{0,12}  \\
\Phi_{0,21} &  \Phi_{0,22}
\end{pmatrix},
\end{equation}
where $R^{(1)}(-1)$ is given explicitly in \eqref{R^(1) Laguerre at -1} and
\begin{equation}\label{Phi0 Gamma}
\begin{split}
& \Phi_{0,11}=\frac{1}{\Ga(1+\al_0)} \left(\sqrt{2}n \right)^{\al_0}\om^{-\frac{1}{2}}_{-1}(-1), \ \ \ \ \  \Phi_{0,12}= -\frac{i\al_0\Ga(\al_0)}{2\pi} \left(\sqrt{2}n \right)^{-\al_0}\om^{\frac{1}{2}}_{-1}(-1), \\ & \Phi_{0,21}=-\frac{\pi i \al_0}{\Ga(1+\al_0)} \left(\sqrt{2}n \right)^{\al_0}\om^{-\frac{1}{2}}_{-1}(-1), \ \ \ \Phi_{0,22}= \frac{\al^2_0\Ga(\al_0)}{2} \left(\sqrt{2}n \right)^{-\al_0}\om^{\frac{1}{2}}_{-1}(-1).    
\end{split}    
\end{equation}
\end{proposition}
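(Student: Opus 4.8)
The plan is to unwind the chain of transformations $Y\mapsto T\mapsto S\mapsto R$ of Section~\ref{Section: steepest descent} to obtain, just above the real axis near each $t_k$, an exact formula for $Y$ in terms of $R$, the relevant local parametrix and the $g$-function; then to insert the explicit parametrix together with the local expansions of the model special functions; and finally to read off the first column of $\widetilde Y(t_k)$, which by \eqref{Reg_Solution-(t_j)} is literally $(Y_{11}(t_k),Y_{21}(t_k))^{T}$ and so is just the limiting value of the first column of $Y$, and the regularized second column via \eqref{RHPsolution_versus_Regsolution_k}.

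Consider first $1\le k\le m$. In the region $Q_{+,k}^{R}\subset\mathcal D_{t_k}$ (upper half-plane, outside the lenses) one has $T=S$ and $S=RP^{(t_k)}$, so \eqref{YtoT} gives $Y(z)=e^{-\frac{n\ell}{2}\sigma_3}R(z)P^{(t_k)}(z)e^{ng(z)\sigma_3}e^{\frac{n\ell}{2}\sigma_3}$ for $z\in Q_{+,k}^{R}$. Since here $W\equiv 0$ and $V(z)=2(z+1)$, relation \eqref{analytic-continuation-of-xi} shows that the factor $e^{-n\xi(z)\sigma_3}$ contained in \eqref{P^{(t_k)}II} combines with $e^{ng(z)\sigma_3}e^{\frac{n\ell}{2}\sigma_3}$ to give exactly $e^{n(z+1)\sigma_3}$, whose value at $t_k$ is $e^{n(t_k+1)\sigma_3}$. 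I then substitute \eqref{P^{(t_k)}II} and the $w\to 0$ behaviour of $G$ and $H$, which follows from the Whittaker representation \eqref{relation between G and H and Whittaker} and \cite[Chapter~13]{NIST}, namely $G(a,\alpha;w)=w^{\alpha/2}(1+\mathcal O(w))$ and $H(a,\alpha;w)=\frac{\Gamma(\alpha)}{\Gamma(a)}w^{-\alpha/2}(1+\mathcal O(w))+\frac{\Gamma(-\alpha)}{\Gamma(a-\alpha)}w^{\alpha/2}(1+\mathcal O(w))$, together with $f_{t_k}(z)=2\pi i\rho(t_k)(z-t_k)(1+\mathcal O(z-t_k))$ from \eqref{f_t_kAsymptotics} and $\rho(t_k)=\frac1\pi\sqrt{\frac{1-t_k}{1+t_k}}$. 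In the first column of $Y$ the factor $(z-t_k)^{-\alpha_k/2}$ from \eqref{P^{(t_k)}II} cancels the $(z-t_k)^{\alpha_k/2}$ produced by $G$, so this column has a finite limit at $t_k$; in the second column, after multiplying $H$ by $(z-t_k)^{\alpha_k/2}$ the $\frac{\Gamma(-\alpha_k)}{\Gamma(a-\alpha_k)}w^{\alpha_k/2}$-piece produces a term $\propto(z-t_k)^{\alpha_k}$ — precisely the Fisher--Hartwig singular part removed by the constant $c_k$ of \eqref{c_j} in \eqref{RHPsolution_versus_Regsolution_k} (this is the content of Proposition~\ref{reg-integrals}) — while the $\frac{\Gamma(\alpha_k)}{\Gamma(a)}w^{-\alpha_k/2}$-piece is analytic at $t_k$ and, multiplied by $\alpha_k$, is exactly what the regularization keeps. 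Once the fractional phases ($e^{\pm i\pi\alpha_k/4}$ from $e^{\frac{\pi i\alpha_k}{4}\sigma_3}$, from $i^{\pm\alpha_k/2}$, and from the $e^{-i\pi\alpha_k/2}$ and $e^{-\pi i}$ in \eqref{P^{(t_k)}II}) cancel, $2\pi i n\rho(t_k)$ turns into $2n\sqrt{\frac{1-t_k}{1+t_k}}$ and one arrives at $\widetilde Y(t_k)=e^{-\frac{n\ell}{2}\sigma_3}R(t_k)E_{t_k}(t_k)\Phi_k e^{n(t_k+1)\sigma_3}$ with $E_{t_k}(t_k)$ as in \eqref{E_{t_k}(t_k)} and $\Phi_k$ as in \eqref{Phik Gamma}. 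Since $R$ is analytic inside $\mathcal D_{t_k}$ and $R(z)=I+\mathcal O(n^{-1+2\beta_{\max}})$ by \eqref{R_large_n_asymp}--\eqref{J_R asymp}, in particular $R(t_k)=I+\mathcal O(n^{-1+2\beta_{\max}})$, which yields \eqref{Ytilde-t_k}.

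For $k=0$ the structure of the argument is identical, with the hard-edge Bessel parametrix \eqref{P^{(-1)}} playing the role of \eqref{P^{(t_k)}II}: just above the axis near $-1$ one has $Y(z)=e^{-\frac{n\ell}{2}\sigma_3}R(z)P^{(-1)}(z)e^{ng(z)\sigma_3}e^{\frac{n\ell}{2}\sigma_3}$, and again the exponential factors collapse to $e^{n(z+1)\sigma_3}\to I$ as $z\to-1$. I use the local behaviour of $\Phi_{\mathrm{Be}}(\,\cdot\,;\alpha_0)$ near $0$ from the appendix — built from $I_{\alpha_0}$ and $K_{\alpha_0}$, with the same $(z+1)^{\pm\alpha_0/2}$-structure as the Whittaker functions above — together with $f_{-1}(z)=2(z+1)(1+\mathcal O(z+1))$ from \eqref{f_{-1}-asymp} (recall $\psi(-1)=\frac1\pi$). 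The factor $(-z-1)^{-\frac{\alpha_0}{2}\sigma_3}$ in \eqref{P^{(-1)}} absorbs the fractional powers exactly as before: the first column is finite at $-1$, the $(z+1)^{\alpha_0}$-singular part of the second column is killed by $c_0$ of \eqref{c_0}, and what remains (times $\alpha_0$ from the regularization) is $\Phi_0$ of \eqref{Phi0 Gamma}, the powers $(\sqrt 2\,n)^{\pm\alpha_0}$ arising from $(-n^2 f_{-1}(z))^{\pm\alpha_0/2}$ after the phase cancellations, and $E_{-1}(-1)$ being the quantity already computed in \eqref{E_{-1}(-1)}. Finally, pulling $R(-1)$ out of the product and using \eqref{R_large_n_asymp} with $r=1$ to write $R(-1)=I+\frac{R^{(1)}(-1)}{n}+\mathcal O(n^{-2+2\beta_{\max}})$, with $R^{(1)}(-1)$ given explicitly by \eqref{R^(1) Laguerre at -1}, gives \eqref{Ytilde_-1}.

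The computation is in essence bookkeeping along the already-established steepest-descent scheme, so the only genuinely delicate point is keeping the branch choices consistent: the branch of $(z-t_k)^{\alpha_k}$ with $\arg\in[0,2\pi)$ used in building $P^{(t_k)}$, the principal branches in $a(z)$, $D(z)$, $f_{t_k}$, $f_{-1}$ and $\xi$, and the $e^{-\pi i}$ shift in the argument of $H$ in \eqref{P^{(t_k)}II}, all contribute phases of the form $e^{\pm i\pi\alpha\cdot(\text{rational})}$ that must cancel for the clean expressions \eqref{Phik Gamma} and \eqref{Phi0 Gamma} to emerge. One must also check that the terms isolated above as $\propto(z-t_k)^{\alpha_k}$, resp. $\propto(z+1)^{\alpha_0}$, are indeed the Fisher--Hartwig singular parts matched by the constants in \eqref{c_j}--\eqref{c_0} — which is exactly Proposition~\ref{reg-integrals} — and that the subleading corrections in the $G$, $H$, $\Phi_{\mathrm{Be}}$, $f_{t_k}$ and $f_{-1}$ expansions, being $\mathcal O(z-t_k)$ as $z\to t_k$ at fixed $n$, drop out of the limit, so that the only $n$-dependent error left in the final answer is the one inherited from $R$. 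The argument follows the scheme of \cite{Charlier}.
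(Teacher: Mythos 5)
Your proposal follows the paper's own proof essentially verbatim: invert $Y\mapsto T\mapsto S\mapsto R$ in $Q_{+,k}^{R}$ (resp.\ near $-1$), insert the explicit local parametrix with the small-argument expansions of $G$, $H$ (resp.\ $I_{\alpha_0}$, $K_{\alpha_0}$), identify the $(z-t_k)^{\alpha_k}$ piece with the one removed by $c_k$ via Proposition~\ref{reg-integrals}, and absorb $R$ into the error (resp.\ keep $R^{(1)}(-1)/n$). The only point the paper adds that you omit is that the expansions of $H$ and $K_{\alpha_0}$ used here require $\alpha_k\neq 0$ and $-\beta_k\pm\frac{\alpha_k}{2}\notin\{0,-1,-2,\dots\}$, so the result is first proved for generic parameters and then extended to the remaining values by continuity of $\widetilde Y(t_k)$ in $(\alpha_k,\beta_k)$.
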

\begin{proof}
For fixed $1\leq k\leq m$, let $z \in \mathcal{D}_{t_k} \cap Q^R_{+,k}$ be outside the lenses. By inverting the RH transformations $Y \mapsto T \mapsto S \mapsto R$, we obtain 
\begin{equation}\label{Y-near-t_k}
Y(z)=e^{-\frac{n\ell}{2} \sigma_3}R(z)P^{(t_k)}(z)e^{ng(z)\sigma_3}e^{\frac{n\ell}{2}\sigma_3}
\end{equation}	
where $P^{(t_k)}(z)$ is given by (\ref{P^{(t_k)}II}). From \cite[Section 13.14(iii)]{NIST}, we have	
\begin{equation}
G(a,\al_k;z)=z^{\frac{\al_k}{2}}(1+\mathcal{O}(z)), \qquad z \to 0,
\end{equation}
and, if $\al_k \neq 0$, and $a-\frac{\al_k}{2} \pm \frac{\al_k}{2} \neq 0, -1, -2,\ldots$, as $z \to 0$ we have
\begin{equation}
H(a,\al_k;z) = \begin{cases}
\di \frac{\Ga(\al_k)}{\Ga(a)}z^{-\frac{\al_k}{2}}+\mathcal{O}(z^{1-\frac{\Re \al_k}{2}}) + \mathcal{O}(z^{\frac{\Re \al_k}{2}}) & \Re \al_k > 0, \\
\di   \frac{\Ga(-\al_k)}{\Ga(a-\al_k)}z^{\frac{\al_k}{2}}+ \frac{\Ga(\al_k)}{\Ga(a)}z^{-\frac{\al_k}{2}}+\mathcal{O}(z^{1+\frac{\Re \al_k}{2}})  & -1 < \Re \al_k \leq 0.
\end{cases}
\end{equation}
Conditions $a-\frac{\al_k}{2} \pm \frac{\al_k}{2} \neq 0, -1, -2,\ldots$ for $a = \frac{\alpha_{k}}{2}-\beta_{k}$ and $a = 1+\frac{\alpha_{k}}{2}-\beta_{k}$ reduce to $-\beta_{k} \pm \frac{\alpha_{k}}{2} \neq 0,-1,-2,\ldots$. Recalling that $V(x)=2(x+1)$ and $\psi(x)=\di \frac{1}{\pi}$, and using (\ref{f_t_kAsymptotics}), we find that the leading terms of $E^{-1}_{t_k}(z)P^{(t_k)}(z)e^{n\xi(z)\sigma_3}$ as $z\to t_k$ for $\alpha_{k} \neq 0$, $-\beta_{k} \pm \frac{\alpha_{k}}{2} \neq 0,-1,-2,\ldots$ are given by
\begin{equation}\label{scriptE-k}
\begin{pmatrix}
\Phi_{k,11} & \al^{-1}_k \left( \Phi_{k,12} +\Tilde{c_k}\Phi_{k,11}(z-t_k)^{\al_k} \right) \\
\Phi_{k,21} & \al^{-1}_k \left( \Phi_{k,22} +\Tilde{c_k}\Phi_{k,21}(z-t_k)^{\al_k} \right)
\end{pmatrix},
\end{equation}
where
\begin{equation}
\Tilde{c_k} = \al_k \frac{\Ga(-\al_k)\Ga(1+\al_k)e^{-\frac{\pi i \al_k}{2}}\om_k(t_k)}{\Ga(-\frac{\al_k}{2}-\be_k)\Ga(1+\frac{\al_k}{2}+\be_k)} = \frac{\pi \al_k}{\sin(\pi \al_k)} \frac{e^{i\pi \be_k}-e^{-i\pi \al_k}e^{-i\pi \be_k}}{2\pi i}\om_k(t_k) = e^{2n(t_k+1)}c_k
\end{equation}
and $c_k$ is given by (\ref{c_j}). The claim \eqref{Ytilde-t_k} for $\alpha_{k} \neq 0$, $-\beta_{k} \pm \frac{\alpha_{k}}{2} \neq 0,-1,-2,\ldots$ follows from (\ref{RHPsolution_versus_Regsolution_k}), (\ref{c_j}), (\ref{analytic-continuation-of-xi}), (\ref{R_large_n_asymp}), (\ref{Y-near-t_k}) and (\ref{scriptE-k}). We extend it for general parameters $\alpha_{k}$ and $\beta_{k}$ (still subject to the constraint $\Re \alpha_{k} >-1$ and $\Re \beta_{k} \in (-\frac{1}{2},\frac{1}{2})$) by continuity of $\widetilde{Y}(t_{k})$ in $\alpha_{k}$ and $\beta_{k}$ (this can be shown by a simple contour deformation, see e.g. \cite[eq (29) and below]{Krasovsky}). Now we turn to the proof of \eqref{Ytilde_-1}. For $z \in \mathcal{D}_{-1} \setminus \ovl{\left(\Om_+\cup\Om_-\right)}$, from Section \ref{Section: steepest descent}, we have 
\begin{equation}
Y(z)=    e^{-\frac{n\ell}{2} \sigma_3}R(z)P^{(-1)}(z)e^{ng(z)\sigma_3}e^{\frac{n\ell}{2}\sigma_3}.
\end{equation}
In this region, by (\ref{P^{(-1)}}) and (\ref{Phi explicit}), $P^{(-1)}(z)$ is given by 
\begin{equation*}
\begin{split}
P^{(-1)}(z) & = E_{-1}(z)\sigma_{3} \begin{pmatrix}
I_{\alpha_0}(2n(-f_{-1}(z))^{\frac{1}{2}}) & \frac{ i}{\pi} K_{\alpha_0}(2n(-f_{-1}(z))^{\frac{1}{2}}) \\
2\pi i n (-f_{-1}(z))^{\frac{1}{2}} I_{\alpha_0}^{\prime}(2n(-f_{-1}(z))^{\frac{1}{2}}) & -2n (-f_{-1}(z))^{\frac{1}{2}} K_{\alpha_0}^{\prime}(2n(-f_{-1}(z))^{\frac{1}{2}})
\end{pmatrix} \\ & \times \sigma_{3}\omega_{-1}(z)^{-\frac{\sigma_{3}}{2}}(-z-1)^{-\frac{\alpha_{0}}{2}\sigma_{3}}e^{-n\xi(z)\sigma_{3}}.
\end{split}    
\end{equation*}
From \cite[Section 10.30(i)]{NIST}, we have the following asymptotic behaviors as $z \to 0$ for the modified Bessel functions
\begin{align*}
& I_{\al_0}(z) = \frac{1}{\Ga(\al_0+1)}\Big(\frac{z}{2}\Big)^{\al_0}(1+\mathcal{O}(z^2)), \\
& K_{\al_0}(z) = \begin{cases}
\frac{\Ga(\al_0)}{2}(\frac{z}{2})^{-\al_0}+\mathcal{O}(z^{1-\Re\al_0})+\mathcal{O}(z^{\Re\al_0}), & \mbox{if } \Re \al_0 \geq 0, \alpha_{0} \neq 0, \\
\frac{\Ga(-\al_0)}{2}(\frac{z}{2})^{\al_0}+\frac{\Ga(\al_0)}{2}(\frac{z}{2})^{-\al_0}+\mathcal{O}(z^{2+\Re\al_0}), & \mbox{if } -1<\Re \al_0 < 0.
\end{cases}
\end{align*}
Using (\ref{f_{-1}-asymp}), for $\alpha_{k} \neq 0$, we find that the leading terms of $E^{-1}_{-1}(z)P^{(-1)}(z)e^{n\xi(z)\sigma_3}$ as $z\to -1$  are given by
\begin{equation}\label{scriptE,0}
\begin{pmatrix}
\Phi_{0,11} & \al^{-1}_0 \left( \Phi_{0,12} +\Tilde{c_0}\Phi_{0,11}(-z-1)^{\al_0} \right) \\
\Phi_{0,21} & \al^{-1}_0 \left( \Phi_{0,22} +\Tilde{c_0}\Phi_{0,21}(-z-1)^{\al_0} \right)
\end{pmatrix},
\end{equation}
where
\begin{equation}
\Tilde{c_0} = \frac{i\al_0}{2\sin(\pi \al_0)}\om_{-1}(-1)=e^{\pi i \al_0}c_0
\end{equation}
and where we recall that $c_{0}$ is defined in \eqref{c_0}. This proves \eqref{Ytilde_-1} for $\alpha_{0} \neq 0$. The case $\alpha_{0} = 0$ follows by continuity of $\widetilde{Y}(-1)$.
\end{proof}

\subsection{Asymptotics for $\partial_{\nu} \log L_n(\vec{\al},\vec{\be}, 2(x+1),0)$, $\nu \in \{\alpha_{0},\ldots,\alpha_{m},\beta_{1},\ldots,\beta_{m}\}$}
From (\ref{OPsolution}) and \eqref{Orthogonal_Polynomials}, we have 
\begin{equation}\label{polynomial coefficients and Y}
\ka^2_{n-1}=\lim_{z \to \infty} \frac{iY_{21}(z)}{ 2\pi z^{n-1}}, \ \ \ \ \ \ka^{-2}_n= -2\pi i \lim_{z\to \infty} z^{n+1} Y_{12}(z) , \ \ \ \ \ \eta_n=\lim_{z\to \infty} \frac{Y_{11}(z)-z^n}{z^{n-1}}.    
\end{equation}
Inverting the transformations $Y \mapsto T \mapsto S \mapsto R$ for $z \in \C \setminus \overline{\left( \Om_+ \cup \Om_- \cup(\mathcal{I}\setminus \mathcal{S}) \cup^{m+1}_{j=0} \mathcal{D}_j  \right)}$ (i.e. outside the lenses and outside the disks) gives
\begin{equation}\label{traceback}
Y(z)=e^{-\frac{n \ell}{2}\sigma_3}R(z)P^{(\infty)}(z)e^{ng(z)\sigma_3}e^{\frac{n \ell}{2}\sigma_3}.
\end{equation}
From (\ref{e^ng_asym_Laguerre}), (\ref{P_1^inf}), (\ref{R_large_n_asymp}), (\ref{R^{(1)}}) and \eqref{polynomial coefficients and Y}, we find large $n$ asymptotic for $\ka^2_{n-1}, \ka^2_{n}$ and $\eta_n$. As $n \to + \infty$, we have
\begin{equation}\label{ka n-1 asymp}
\ka^2_{n-1} = e^{2n}2^{2(n-1)+\mathcal{A}} \pi^{-1} \exp \bigg(\hspace{-0.15cm}-2i\sum^{m}_{j=1}\be_j \arcsin{t_j}\bigg)  \Bigg(1+ \frac{R_{1,21}^{(1)}}{nP_{1,21}^{(\infty)}} + \mathcal{O}(n^{-2+2\be_{max}}) \Bigg),
\end{equation}
where $\mathcal{A} = \alpha_{0}+\alpha_{1}+\ldots+\alpha_{m}$ and
\begin{equation}
\frac{R_{1,21}^{(1)}}{P_{1,21}^{(\infty)}} = \sum^m_{j=1} \frac{v_j(1-\widetilde{\La}_{R,1,j})}{1-t_j}  + \frac{1-4\al^2_0}{16} -\frac{1}{4}\bigg( (\mathcal{A}-\widetilde{\mathcal{B}_{1}})^{2} - 2 (\mathcal{A}-\widetilde{\mathcal{B}_{1}}) + \frac{11}{12} \bigg). 
\end{equation}
Similarly, for $\ka^2_{n}$ we find
\begin{equation}\label{ka n asymp}
\ka^2_{n} = e^{2n}2^{2n + \mathcal{A}} \pi^{-1} \exp{\bigg(\hspace{-0.15cm}-2i\sum^{m}_{j=1}\be_j \arcsin{t_j}\bigg)} \Bigg(1- \frac{R_{1,12}^{(1)}}{nP_{1,12}^{(\infty)}}  + \mathcal{O}(n^{-2+2\be_{max}}) \Bigg),
\end{equation}
as $n \to + \infty$, where by \eqref{R^{(1)}} we have
\begin{equation}
- \frac{R_{1,12}^{(1)}}{P_{1,12}^{(\infty)}} = \sum^m_{j=1} \frac{v_j(1+\widetilde{\La}_{R,2,j})}{1-t_j}  + \frac{1-4\al^2_0}{16} -\frac{1}{4}\bigg( (\mathcal{A}-\widetilde{\mathcal{B}_{1}})^{2} + 2 (\mathcal{A}-\widetilde{\mathcal{B}_{1}}) + \frac{11}{12} \bigg).
\end{equation}
Finally, for $\eta_{n}$ we obtain
\begin{equation}\label{eta n asymp}
\eta_n = \frac{n}{2}+ P^{(\infty)}_{1,11}+\frac{R_{1,11}^{(1)}}{n} + \mathcal{O}(n^{-2+2\be_{max}}), \qquad \mbox{ as } n \to + \infty,
\end{equation}
where $P_{1,11}^{(\infty)}$ is given by \eqref{P_1^inf} and $R^{(1)}_{1,11}$ can be computed from \eqref{R^{(1)}} and is given by
\begin{align}
& R_{1,11}^{(1)} = \sum^{m}_{j=1} \frac{v_j(t_j+\widetilde{\Lambda}_{I,j})}{2(1-t_j)} - \frac{1-4\al^2_{0}}{32} + \frac{1-4(\mathcal{A}-\widetilde{\mathcal{B}_{1}})^{2}}{32}.
\end{align}
Let $\nu \in \{\alpha_{0},\alpha_{1},\ldots,\alpha_{m},\beta_{1},\ldots,\beta_{m}\}$. Then, from \eqref{R^{(1)} derivative order}, \eqref{ka n-1 asymp}, \eqref{ka n asymp} and \eqref{eta n asymp}, we find that the large $n$ asymptotics of the first part of the differential identity \eqref{Diff_identity_alphas_betas} are given by
\begin{multline}\label{first part FH}
 -(n + \mathcal{A})\partial_{\nu}\log(\ka_n\ka_{n-1})+2n \partial_{\nu} \eta_n =  \partial_{\nu} \Big( 2 \log D_{\infty} - \alpha_{0} + \sum_{j=1}^{m}t_{j}\alpha_{j} + 2i \sum_{j=1}^{m} \sqrt{1-t_{j}^{2}}\beta_{j} \Big) n + \\  2\mathcal{A} \partial_{\nu} \log D_{\infty} +  \partial_{\nu} \Big( \frac{\alpha_{0}^{2}}{2} \Big) + \partial_{\nu} \sum_{j=1}^{m} v_{j}(\widetilde{\Lambda}_{I,j}-1) + \bigO\left( \frac{\log n}{n^{1-4\beta_{\max}}} \right).
\end{multline}
Now we compute the second part of the differential identity \eqref{Diff_identity_alphas_betas}. First, we compute the contributions from $t_j$, $j=1,\ldots,m$ using \eqref{E_{t_k}(t_k)}, \eqref{R^{(1)} derivative order}, \eqref{Ytilde-t_k} and \eqref{Phik Gamma}. We obtain
\begin{align}\label{2nd 1->m}
& \sum^{m}_{j=1} \Big( \widetilde{Y}_{22}(t_j)\partial_{\nu}Y_{11}(t_j) - \widetilde{Y}_{12}(t_j)\partial_{\nu}Y_{21}(t_j) + Y_{11}(t_j)\widetilde{Y}_{22}(t_j)\partial_{\nu}\log(\ka_n\ka_{n-1}) \Big) = \nonumber \\ &
-(\mathcal{A}-\alpha_{0})\partial_{\nu} \log D_{\infty} + \sum_{j=1}^{m} \Big( \Phi_{j,22}\partial_{\nu}\Phi_{j,11}-\Phi_{j,12}\partial_{\nu} \Phi_{j,21}-2\beta_{j} \partial_{\nu} \log \Lambda_{j} \Big) + \bigO\Big( \frac{\log n}{n^{1-4\beta_{\max}}} \Big).
\end{align}
Note that $E_{-1}(-1) = \bigO(n^{\frac{\sigma_{3}}{2}})$ as $n \to + \infty$, while $E_{t_k}(t_k) = \bigO(n^{\be_k \sigma_{3}})$, $k=1,\ldots,m$. This makes the computations for the contribution from $-1$ more involved. From \eqref{E_{-1}(-1)}, \eqref{R^{(1)} derivative order} \eqref{Ytilde_-1} and \eqref{Phi0 Gamma}, we obtain
\begin{multline}\label{2nd 0}
\widetilde{Y}_{22}(-1)\partial_{\nu}Y_{11}(-1) - \widetilde{Y}_{12}(-1)\partial_{\nu}Y_{21}(-1) + Y_{11}(-1)\widetilde{Y}_{22}(-1)\partial_{\nu}\log(\ka_n\ka_{n-1}) =  \\ + \partial_{\nu}\Big( R_{11}^{(1)}(-1)-R_{22}^{(1)}(-1)+iD_{\infty}^{-2}R_{12}^{(1)}(-1)+iD_{\infty}^{2}R_{21}^{(1)}(-1) + iD_{\infty}^{-2}R_{1,12}^{(1)}+iD_{\infty}^{2}R_{1,21}^{(1)} \Big)    \\  - \alpha_{0} \partial_{\nu} \log D_{\infty}+\Phi_{0,22} \partial_{\nu}\Phi_{0,11}-\Phi_{0,12}\partial_{\nu} \Phi_{0,21} + \bigO\Big( \frac{\log n}{n^{1-4\beta_{\max}}} \Big).
\end{multline}
We observe significant simplifications using \eqref{res t_{k}}, \eqref{connection Lambdas}, \eqref{res at -1L}, \eqref{res at 1L}, \eqref{R^(1) Laguerre at -1}, and  \eqref{new res -1 L}:
\begin{multline}\label{huge cancellations}
R_{11}^{(1)}(-1)-R_{22}^{(1)}(-1)+iD_{\infty}^{-2}R_{12}^{(1)}(-1)+iD_{\infty}^{2}R_{21}^{(1)}(-1) +iD_{\infty}^{-2}R_{1,12}^{(1)} + iD_{\infty}^{2}R_{1,21}^{(1)} = -\sum_{j=1}^{m}v_{j}\widetilde{\Lambda}_{I,j}.
\end{multline}
Adding \eqref{first part FH}, \eqref{2nd 1->m} and \eqref{2nd 0} yields
\begin{multline}\label{explicit diff identity FH 1}
\partial_{\nu} \log L_{n}(\vec{\alpha},\vec{\beta},2(x+1),0) = \partial_{\nu} \Big( 2 \log D_{\infty} - \alpha_{0} + \sum_{j=1}^{m}t_{j}\alpha_{j} + 2i \sum_{j=1}^{m} \sqrt{1-t_{j}^{2}}\beta_{j} \Big) n + \mathcal{A} \partial_{\nu} \log D_{\infty}  \\  + \partial_{\nu} \Big( \frac{\alpha_{0}^{2}}{2} \Big) + \sum_{j=0}^{m} \Big( \Phi_{j,22}\partial_{\nu}\Phi_{j,11}-\Phi_{j,12}\partial_{\nu} \Phi_{j,21}\Big) - \sum_{j=1}^{m} (\partial_{\nu}v_{j}+2\beta_{j} \partial_{\nu} \log \Lambda_{j})  + \bigO\Big( \frac{\log n}{n^{1-4\beta_{\max}}} \Big),
\end{multline}
as $n \to + \infty$. Now, we perform some computations to make the above asymptotic formula more explicit. From \eqref{Phi0 Gamma} and using the identity $z\Ga(z)=\Ga(z+1)$ we have
\begin{align}\label{explicit Phi0 in identity}
& \Phi_{0,22}\partial_{\nu}\Phi_{0,11}-\Phi_{0,12}\partial_{\nu} \Phi_{0,21} =  \nonumber \\ & \frac{\alpha_{0}}{2} \partial_{\nu} \log \Big( \frac{\alpha_{0}}{\Gamma(1+\alpha_{0})^{2}}\Big) + \alpha_{0} \log(\sqrt{2}n) \partial_{\nu} \alpha_{0}
-\frac{\alpha_{0}}{2}\partial_{\nu}\bigg(\sum_{\ell=1}^{m} \alpha_{\ell}\log(1+t_{\ell})+ i\pi\sum_{\ell=1}^{m} \beta_{\ell} \bigg).
\end{align}
And from \eqref{Phik Gamma}, after a long computation, for $1 \leq j \leq m$ we obtain
\begin{align}\label{explicit Phij in identity}
	& \Phi_{j,22}\partial_{\nu}\Phi_{j,11}-\Phi_{j,12}\partial_{\nu} \Phi_{j,21} = \frac{\alpha_{j}}{2} \partial_{\nu} \log \frac{\Gamma(1+\frac{\alpha_{j}}{2}-\beta_{j})\Gamma(1+\frac{\alpha_{j}}{2}+\beta_{j})}{\Gamma(1+\alpha_{j})^{2}} + \frac{\alpha_{j}}{2}\log \Big( 2n \frac{\sqrt{1-t_{j}}}{\sqrt{1+t_{j}}} \Big) \partial_{\nu} \alpha_{j} \nonumber \\ & + \beta_{j} \partial_{\nu} \log \frac{\Gamma(1+\frac{\alpha_{j}}{2}+\beta_{j})}{\Gamma(1+\frac{\alpha_{j}}{2}-\beta_{j})} - \frac{\alpha_{j}}{2}\partial_{\nu} \bigg( \sum_{\substack{\ell=0\\ \ell \neq j}}^{m} \alpha_{\ell} \log |t_{\ell}-t_{j}| - i \pi \sum_{\ell=1}^{j-1} \beta_{\ell} + i \pi \sum_{\ell = j+1}^{m} \beta_{\ell} \bigg).
\end{align}
Also, from \eqref{def of Lambdak} and \eqref{def of lambdak}, we have
\begin{multline}\label{explicit Lambdaj in identity}
\partial_{\nu} \log \Lambda_{j} =  \partial_{\nu} \bigg( \frac{i \mathcal{A}}{2} \arccos t_{j} - \frac{\pi i}{4}\alpha_{j} - \frac{\pi i}{2} \sum_{\ell = j+1}^{m} \alpha_{\ell} + \beta_{j} \log (4\pi \rho(t_{j})n(1-t_{j}^{2})) - \sum_{\substack{\ell = 1 \\ \ell \neq j}}^{m} \beta_{\ell} \log T_{j\ell} \bigg).
\end{multline}
Substituting \eqref{explicit Phi0 in identity}--\eqref{explicit Lambdaj in identity} into \eqref{explicit diff identity FH 1}, and using the expression for $D_{\infty}$ and $v_{j}$ given by \eqref{D_infty} and below \eqref{asymptotics on the disk D_t}, we obtain
\begin{align}\label{final expression diff identity FH}
& \partial_{\nu} \log L_{n}(\vec{\alpha},\vec{\beta},2(x+1),0) = \partial_{\nu} \bigg( \sum_{j=0}^{m}(t_{j}-\log 2)\alpha_{j} + 2i \sum_{j=1}^{m}\beta_{j}\big(\arcsin t_{j} + \sqrt{1-t_{\smash{j}}^{2}}\, \big) \bigg)n \nonumber \\
& +\mathcal{A} \partial_{\nu} \Big( i \sum_{j=1}^{m} \beta_{j} \arcsin t_{j} - \frac{\mathcal{A}}{2} \log 2 \Big) + \partial_{\nu} \Big( \frac{\alpha_{0}^{2}}{2} \Big) + \frac{\alpha_{0}}{2}\partial_{\nu} \log \frac{\alpha_{0}}{\Gamma(1+\alpha_{0})^{2}} +\alpha_{0} \log(\sqrt{2}n)\partial_{\nu} \alpha_{0} \nonumber \\ & - \sum_{j=0}^{m} \frac{\alpha_{j}}{2} \partial_{\nu} \bigg( \sum_{\substack{\ell = 0 \\\ell \neq j}}^{m}  \alpha_{\ell} \log|t_{\ell}-t_{j}| - i \pi \sum_{\ell =1}^{j-1}\beta_{\ell} + i \pi \sum_{\ell = j+1}^{m}\beta_{\ell} \bigg) + \sum_{j=1}^{m} \frac{\alpha_{j}}{2} \log \Big( 2n \frac{\sqrt{1-t_{j}}}{\sqrt{1+t_{j}}} \Big) \partial_{\nu}  \alpha_{j} \nonumber \\ &
+ \sum_{j=1}^{m} \frac{\alpha_{j}}{2} \partial_{\nu} \log \frac{\Gamma(1+\frac{\alpha_{j}}{2}-\beta_{j})\Gamma(1+\frac{\alpha_{j}}{2}+\beta_{j})}{\Gamma(1+\alpha_{j})^{2}} + \sum_{j=1}^{m} \beta_{j} \partial_{\nu} \log \frac{\Gamma(1+\frac{\alpha_{j}}{2}+\beta_{j})}{\Gamma(1+\frac{\alpha_{j}}{2}-\beta_{j})} + \sum_{j=1}^{m} \partial_{\nu} \Big( \frac{\alpha_{j}^{2}}{4}-\beta_{j}^{2} \Big) \nonumber \\ &
- \sum_{j=1}^{m} 2\beta_{j} \partial_{\nu} \bigg( \frac{i\mathcal{A}}{2}\arccos t_{j} - \frac{\pi i}{4}\alpha_{j} - \frac{\pi i}{2} \sum_{\ell = j+1}^{m} \alpha_{\ell} + \beta_{j} \log (4\pi \rho(t_{j})n(1-t_{j}^{2})) - \sum_{\substack{\ell = 1 \\ \ell \neq j}}^{m} \beta_{\ell}\log T_{j\ell} \bigg) \nonumber \\
&  + \bigO\Big( \frac{\log n}{n^{1-4\beta_{\max}}} \Big), \qquad \mbox{ as } n \to + \infty,
\end{align}
where we recall that $t_{0} = -1$. From the discussion in Subsection \ref{subsection: small norm}, the above error term is uniform \textit{for all} $(\vec{\alpha},\vec{\beta})$ in a given compact set $\Omega$, and uniform in $\vec{t}$ such that \eqref{delta def in small norm section} holds. However, as stated in Proposition \ref{prop: diff identity Laguerre}, the identity \eqref{final expression diff identity FH} itself is valid on the subset $\Omega \setminus \widetilde{\Omega}$ for which $p_{0},\ldots,p_{n}$ exist. From the determinantal representation of orthogonal polynomials, $\widetilde{\Omega}$ is locally finite and we can extend \eqref{final expression diff identity FH} \textit{for all} $(\vec{\alpha},\vec{\beta}) \in \Omega$ by continuity (for $n$ large enough such that the r.h.s. exists). We refer to \cite{Krasovsky, ItsKrasovsky, DIK, Charlier} for very similar situations, with more details provided. Our goal for the rest of this section is to prove Proposition \ref{61} below.
\begin{proposition}\label{61}
As $n \to \infty$, we have 
\begin{multline}\label{61r}
\hspace{-0.3cm}\log \frac{L_{n}(\vec{\alpha},\vec{\beta},2(x+1),0)}{L_{n}(\vec{\alpha_0},\vec{0},2(x+1),0)} = 2i n \sum_{j=1}^{m} \beta_{j} \Big(\arcsin t_{j} + \sqrt{1-t_{\smash{j}}^{2}}\Big) + \sum_{j=1}^{m} (t_{j}-\log 2)\alpha_{j} n + \sum_{j=1}^{m} \frac{\alpha_{j}^{2}}{4}\log \Big( n \frac{\sqrt{1-t_{j}}}{\sqrt{1+t_{j}}} \Big) \\
- \sum_{j=1}^{m} \beta_{j}^{2} \log(4\pi \rho(t_{j})n(1-t_{j}^{2})) + \sum_{j=1}^{m} \log \frac{G(1+\frac{\alpha_{j}}{2}+\beta_{j})G(1+\frac{\alpha_{j}}{2}-\beta_{j})}{G(1+\alpha_{j})} + \frac{i\pi}{2} \sum_{0 \leq j < k \leq m}(\alpha_{k}\beta_{j}-\alpha_{k}\beta_{j}) \\
 + i \mathcal{A} \sum_{j=1}^{m}\beta_{j}\arcsin t_{j} + 2 \hspace{-0.2cm} \sum_{1 \leq j < k \leq m} \hspace{-0.3cm} \beta_{j}\beta_{k} \log T_{jk}  - \frac{\log 2}{2} \hspace{-0.3cm} \sum_{0 \leq j < k \leq m} \alpha_{j} \alpha_{k} 
- \hspace{-0.2cm} \sum_{0 \leq j < k \leq m} \hspace{-0.3cm} \frac{\alpha_{j}\alpha_{k}}{2} \log |t_{k}-t_{j}|  + \bigO \Big( \frac{\log n}{n^{1-4\beta_{\max}}} \Big),
\end{multline}
where $\log L_{n}(\vec{\alpha_0},\vec{0},2(x+1),0)$ is given by \eqref{Laguerre starting point}.
\end{proposition}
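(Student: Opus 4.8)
\emph{The plan.} I would prove this by the standard ``integration in the parameters of the singularities'' scheme (going back to \cite{Krasovsky}): integrate the asymptotic differential identity \eqref{final expression diff identity FH} along a path in parameter space joining the base point $(\vec{\alpha_{0}},\vec{0})$ to $(\vec{\alpha},\vec{\beta})$, keeping $\alpha_{0}$ fixed. I would take the straight segment $u\mapsto(\alpha_{0},u\alpha_{1},\ldots,u\alpha_{m};u\beta_{1},\ldots,u\beta_{m})=:(\vec{\alpha}(u),\vec{\beta}(u))$, $u\in[0,1]$: along it $\Re(u\alpha_{j})>-1$ and $|\Re(u\beta_{j})|<\tfrac14$ for every $j$, so the hypotheses of Proposition \ref{prop: diff identity Laguerre} and of \eqref{final expression diff identity FH} hold, and the largest $|\Re\cdot|$ among the $\beta$-components never exceeds $\beta_{\max}$. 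Since the error term in \eqref{final expression diff identity FH} is, by Subsection \ref{subsection: small norm}, uniform for $(\vec{\alpha},\vec{\beta})$ in compact sets and uniform in $\vec{t}$ subject to \eqref{delta def in small norm section}, integrating it over $[0,1]$ preserves the bound $\bigO(\log n/n^{1-4\beta_{\max}})$. Thus
\[
\log\frac{L_{n}(\vec{\alpha},\vec{\beta},2(x+1),0)}{L_{n}(\vec{\alpha_{0}},\vec{0},2(x+1),0)}
=\int_{0}^{1}\frac{d}{du}\log L_{n}\big(\vec{\alpha}(u),\vec{\beta}(u),2(x+1),0\big)\,du
=\int_{0}^{1}\Big(\textstyle\sum_{j=1}^{m}\alpha_{j}\,\partial_{\alpha_{j}}+\sum_{j=1}^{m}\beta_{j}\,\partial_{\beta_{j}}\Big)\log L_{n}\,du ,
\]
and by \eqref{final expression diff identity FH} this equals the $u$-integral of the explicit expression on the right-hand side of \eqref{final expression diff identity FH} (pulled back to the segment) plus $\bigO(\log n/n^{1-4\beta_{\max}})$. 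I would also note that $L_{n}(\vec{\alpha_{0}},\vec{0},2(x+1),0)\neq0$ for large $n$ by \eqref{Laguerre starting point}, and that \eqref{final expression diff identity FH}, a priori valid only where $p_{0},\ldots,p_{n}$ exist, extends to all admissible $(\vec{\alpha},\vec{\beta})$ by continuity, as recalled right after \eqref{final expression diff identity FH}.

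\emph{Evaluating the integral.} The remaining task is purely elementary: each term of \eqref{final expression diff identity FH} has the form $(\text{coefficient})\times\partial_{\nu}(\text{elementary function})$, so its pull-back integrates in closed form. I would organise it into blocks. The $\bigO(n)$-line integrates to $n\big(\sum_{j=1}^{m}(t_{j}-\log2)\alpha_{j}+2i\sum_{j=1}^{m}\beta_{j}(\arcsin t_{j}+\sqrt{1-t_{\smash{j}}^{2}})\big)$. The Gamma-function terms $\tfrac{\alpha_{j}}{2}\partial_{\nu}\log\tfrac{\Gamma(1+\frac{\alpha_{j}}{2}-\beta_{j})\Gamma(1+\frac{\alpha_{j}}{2}+\beta_{j})}{\Gamma(1+\alpha_{j})^{2}}+\beta_{j}\partial_{\nu}\log\tfrac{\Gamma(1+\frac{\alpha_{j}}{2}+\beta_{j})}{\Gamma(1+\frac{\alpha_{j}}{2}-\beta_{j})}$, together with the pieces $\partial_{\nu}(\tfrac{\alpha_{j}^{2}}{4}-\beta_{j}^{2})$, constitute the total $\nu$-differential of $\sum_{j=1}^{m}\log\tfrac{G(1+\frac{\alpha_{j}}{2}+\beta_{j})G(1+\frac{\alpha_{j}}{2}-\beta_{j})}{G(1+\alpha_{j})}$ — this is checked using the standard derivative formula for $\log G(1+z)$ as in \cite{Krasovsky} — and that quantity vanishes at the base point since $G(1)=1$. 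The $n$-dependent logarithmic terms integrate to $\sum_{j=1}^{m}\tfrac{\alpha_{j}^{2}}{4}\log\!\big(2n\tfrac{\sqrt{1-t_{j}}}{\sqrt{1+t_{j}}}\big)$ and $-\sum_{j=1}^{m}\beta_{j}^{2}\log\big(4\pi\rho(t_{j})n(1-t_{j}^{2})\big)$, while $\mathcal{A}\,\partial_{\nu}(-\tfrac{\mathcal{A}}{2}\log2)=\partial_{\nu}(-\tfrac{\log2}{4}\mathcal{A}^{2})$ integrates to $-\tfrac{\log2}{4}(\mathcal{A}^{2}-\alpha_{0}^{2})$; adding the spurious $\tfrac{\log2}{4}\sum_{j=1}^{m}\alpha_{j}^{2}$ from the previous item and using $t_{0}=-1$, $\beta_{0}=0$ this collapses to $-\tfrac{\log2}{2}\sum_{0\le j<k\le m}\alpha_{j}\alpha_{k}$. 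All the remaining terms are bilinear in $(\vec{\alpha},\vec{\beta})$ — those carrying $\log|t_{\ell}-t_{j}|$, $\log T_{j\ell}$, $\arcsin t_{j}$, $\arccos t_{j}$ and $i\pi$ — and after symmetrising, using $\arccos t=\tfrac{\pi}{2}-\arcsin t$, and rewriting the single sums as sums over $0\le j<k\le m$ (recalling $\beta_{0}=0$, $t_{0}=-1$), they should assemble into $i\mathcal{A}\sum_{j=1}^{m}\beta_{j}\arcsin t_{j}+2\sum_{1\le j<k\le m}\beta_{j}\beta_{k}\log T_{jk}+\tfrac{i\pi}{2}\sum_{0\le j<k\le m}(\alpha_{k}\beta_{j}-\alpha_{j}\beta_{k})-\sum_{0\le j<k\le m}\tfrac{\alpha_{j}\alpha_{k}}{2}\log|t_{k}-t_{j}|$. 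Finally one substitutes $\rho(t_{j})=\tfrac1\pi\sqrt{(1-t_{j})/(1+t_{j})}$, which is legitimate because for $V(x)=2(x+1)$ we have $\psi\equiv\tfrac1\pi$; the sum of all blocks is then exactly the right-hand side of \eqref{61r}.

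\emph{Where the difficulty lies.} There is no conceptual obstacle once the set-up above is in place; the whole substance of the proof is the bookkeeping in the last two blocks. I expect the genuinely delicate points to be: tracking the $\alpha_{0}$-dependence through $\mathcal{A}=\sum_{j=0}^{m}\alpha_{j}$ (only partially integrated, since $\alpha_{0}$ is held fixed), so that the cross-contributions $-\tfrac{\alpha_{0}}{2}\big(\sum_{\ell=1}^{m}\alpha_{\ell}\log(1+t_{\ell})+i\pi\sum_{\ell=1}^{m}\beta_{\ell}\big)$ land precisely in the $(0,k)$-summands of the double sums over $0\le j<k\le m$; making every stray factor of $2$, $\pi$ and $i\pi$ cancel against the $\mathcal{A}\log2$ and $\arccos$ contributions; and invoking the precise form of $\tfrac{d}{dz}\log G(1+z)$ so that the polynomial remainders of the Barnes-$G$ integration cancel exactly the $\partial_{\nu}(\tfrac{\alpha_{j}^{2}}{4}-\beta_{j}^{2})$ terms. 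All three have close precedents in \cite{Krasovsky,ItsKrasovsky,Charlier}, which I would cite rather than reproduce in full.
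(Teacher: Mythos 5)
Your proposal is correct and follows essentially the same route as the paper: both integrate the asymptotic differential identity \eqref{final expression diff identity FH} from the base point $(\vec{\alpha}_{0},\vec{0})$, relying on the uniformity of the error term, the continuity extension past the exceptional parameter set where the orthogonal polynomials fail to exist, and the Barnes $G$-function identities. The only difference is the integration path — you use the straight diagonal segment $u\mapsto(\alpha_{0},u\alpha_{1},\ldots,u\alpha_{m};u\beta_{1},\ldots,u\beta_{m})$ whereas Subsections \ref{Int al_k} and \ref{Int be_k} integrate one variable at a time — and since the explicit part of \eqref{final expression diff identity FH} differs from the exact differential $\partial_{\nu}\log L_{n}$ only by the uniformly small error, the two paths give the same result up to $\bigO(\log n/n^{1-4\beta_{\max}})$.
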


\subsection{Integration in $\alpha_{0}$}\label{Int al_0}
In this short subsection, we make a consistency check with \eqref{Laguerre starting point}. Let us set $\alpha_{1}=\ldots=\alpha_{m} = 0 = \beta_{1}=\ldots=\beta_{m}$ and $\nu = \alpha_{0}$ in \eqref{final expression diff identity FH}. With the notations $\vec{\alpha}_{0} = (\alpha_{0},0,\ldots,0) \in \C^{m+1}$ and $\vec{0} = (0,\ldots,0) \in \C^m$, this gives
\begin{multline}\label{diff alpha0}
\partial_{\alpha_{0}} \log L_{n}(\vec{\alpha}_{0},\vec{0},2(x+1),0) = -(1+\log 2)n - \frac{\log 2}{2}\alpha_{0} + \alpha_{0} + \frac{\alpha_{0}}{2}\partial_{\alpha_{0}} \log \frac{\alpha_{0}}{\Gamma(1+\alpha_{0})^{2}} \\ + \alpha_{0} \log(\sqrt{2}n) + \bigO\Big( \frac{\log n}{n} \Big)
\end{multline}
as $n \to + \infty$. Integrating \eqref{diff alpha0} from $\alpha_{0} = 0$ to an arbitrary $\alpha_{0}$, we obtain
\begin{multline}\label{lol3}
\log \frac{L_{n}(\vec{\alpha}_{0},\vec{0},2(x+1),0)}{L_{n}(\vec{0},\vec{0},2(x+1),0)} = -(1+\log 2)\alpha_{0}n + \frac{\alpha_{0}^{2}}{2}\Big( 1-\frac{\log 2}{2}\Big) + \int_{0}^{\alpha_{0}} \frac{x}{2}\partial_{x} \log \frac{x}{\Gamma(1+x)^{2}}dx  \\ + \frac{\alpha_{0}^{2}}{2}\log(\sqrt{2}n) + \bigO \Big( \frac{\log n}{n} \Big).
\end{multline}
From \cite[formula 5.17.4]{NIST}, we have
\begin{equation}\label{integral of Gamma function}
\int_{0}^{z}\log \Gamma (1+x) dx = \frac{z}{2} \log 2\pi - \frac{z(z+1)}{2} + z \log \Gamma(z+1) - \log G(z+1),
\end{equation}
where $G$ is Barnes' $G$-function. Therefore, after an integration by parts, \eqref{lol3} can be rewritten as
\begin{multline*}
\log \frac{L_{n}(\vec{\alpha}_{0},\vec{0},2(x+1),0)}{L_{n}(\vec{0},\vec{0},2(x+1),0)} = -(1+\log 2)\alpha_{0}n + \frac{\alpha_{0}^{2}}{2} \log n + \frac{\alpha_{0}}{2}\log 2\pi - \log G(1+\alpha_{0}) + \bigO \Big( \frac{\log n}{n} \Big),
\end{multline*}
which is consistent with \eqref{Laguerre starting point}.
\subsection{Integration in $\alpha_{1},\ldots,\alpha_{m}$}\label{Int al_k}
We set $\alpha_{2} = \ldots = \alpha_{m} = 0 = \beta_{1} = \ldots = \beta_{m}$ and $\nu = \alpha_{1}$ in \eqref{final expression diff identity FH}. With the notation $\vec{\alpha}_{1} = (\alpha_{0},\alpha_{1},0,\ldots,0) \in \C^{m+1}$, we obtain
\begin{multline}\label{diff alpha1}
\partial_{\alpha_{1}} \log L_{n}(\vec{\alpha}_{1},\vec{0},2(x+1),0) = (t_{1}-\log 2)n - \frac{\log 2}{2} \alpha_{0}  - \frac{\alpha_{0}}{2} \log |t_{1}-t_{0}| \\ + \frac{\alpha_{1}}{2} \log \Big( n \frac{\sqrt{1-t_{1}}}{\sqrt{1+t_{1}}} \Big) + \alpha_{1} \partial_{\alpha_{1}} \log \frac{\Gamma(1+\frac{\alpha_{1}}{2})}{\Gamma(1+\alpha_{1})} + \frac{\alpha_{1}}{2} + \bigO \Big( \frac{\log n}{n} \Big),
\end{multline}
as $n \to + \infty$. Using integration by parts and \eqref{integral of Gamma function} we obtain, we obtain the following relation
\begin{equation}\label{integral Gamma for alpha integration}
\int_{0}^{z} x \partial_{x} \log \frac{\Gamma(1+\frac{x}{2})}{\Gamma(1+x)}dx = - \frac{z^{2}}{4} + \log \frac{G(1+\tfrac{z}{2})^{2}}{G(1+z)}.
\end{equation}
Using \eqref{integral Gamma for alpha integration}, we integrate \eqref{diff alpha1} from $\al_1=0$ to an arbitrary $\al_1$. We get
\begin{multline}
\log \frac{L_{n}(\vec{\alpha}_{1},\vec{0},2(x+1),0)}{L_{n}(\vec{\alpha}_{0},\vec{0},2(x+1),0)} = (t_{1}-\log 2)\alpha_{1}n - \frac{\log 2}{2} \alpha_{0} \alpha_{1} - \frac{\alpha_{0}\alpha_{1}}{2}\log|t_{1}-t_{0}| \\ + \frac{\alpha_{1}^{2}}{4} \log \Big( n \frac{\sqrt{1-t_{1}}}{\sqrt{1+t_{1}}} \Big) + \log \frac{G(1+\frac{\alpha_{1}}{2})^{2}}{G(1+\alpha_{1})}+ \bigO \Big( \frac{\log n}{n} \Big), \qquad \mbox{as } n \to + \infty.
\end{multline}
We proceed in a similar way for the other variables, by integrating successively in $\al_2, \alpha_{3},\ldots,\alpha_{m}$. At the last step, setting $\beta_{1} = \ldots = \beta_{m} = 0$ and $\nu = \alpha_{m}$ in \eqref{final expression diff identity FH}, we obtain
\begin{multline}\label{diff alpham}
\partial_{\alpha_{m}}  \log L_{n}(\vec{\alpha},\vec{0},2(x+1),0) = (t_{m}-\log 2)n - \frac{\log 2}{2}(\mathcal{A}-\alpha_{m}) - \sum_{j=0}^{m-1} \frac{\alpha_{j}}{2} \log |t_{m}-t_{j}| \\
+ \frac{\alpha_{m}}{2} \log \Big( n \frac{\sqrt{1-t_{m}}}{\sqrt{1+t_{m}}} \Big) + \alpha_{m} \partial_{\alpha_{m}} \log \frac{\Gamma(1+\frac{\alpha_{m}}{2})}{\Gamma(1+\alpha_{m})} + \frac{\alpha_{m}}{2} + \bigO \Big( \frac{\log n}{n} \Big).
\end{multline} Integrating \eqref{diff alpham} from $\alpha_{m} = 0$ to an arbitrary $\alpha_{m}$ using again \eqref{integral Gamma for alpha integration}, and with the notation $\vec{\alpha}_{m-1} = (\alpha_{0},\ldots,\alpha_{m-1},0)$, we obtain
\begin{multline}
\log \frac{L_{n}(\vec{\alpha},\vec{0},2(x+1),0)}{L_{n}(\vec{\alpha}_{m-1},\vec{0},2(x+1),0)} = (t_{m}-\log 2)\alpha_{m}n - \frac{\log 2}{2} \sum_{j=0}^{m-1} \alpha_{j}\alpha_{m} - \sum_{j=0}^{m-1} \frac{\alpha_{j}\alpha_{m}}{2}\log|t_{m}-t_{j}|  \\    + \frac{\alpha_{m}^{2}}{4} \log \Big( n \frac{\sqrt{1-t_{m}}}{\sqrt{1+t_{m}}} \Big) + \log \frac{G(1+\frac{\alpha_{m}}{2})^{2}}{G(1+\alpha_{m})}+ \bigO \Big( \frac{\log n}{n} \Big),
\end{multline}
as $n \to + \infty$. Summing the contributions of each step, we arrive at
\begin{multline}\label{al0 to al integration}
\log \frac{L_{n}(\vec{\alpha},\vec{0},2(x+1),0)}{L_{n}(\vec{\alpha}_{0},\vec{0},2(x+1),0)} = \sum_{j=1}^{m} (t_{j}-\log 2)\alpha_{j} n - \frac{\log 2}{2} \sum_{0 \leq j < k \leq m} \alpha_{j} \alpha_{k} \\
- \sum_{0 \leq j < k \leq m} \frac{\alpha_{j}\alpha_{k}}{2} \log |t_{k}-t_{j}| + \sum_{j=1}^{m} \frac{\alpha_{j}^{2}}{4}\log \Big( n \frac{\sqrt{1-t_{j}}}{\sqrt{1+t_{j}}} \Big) + \sum_{j=1}^{m} \log \frac{G(1+\frac{\alpha_{j}}{2})^{2}}{G(1+\alpha_{j})} + \bigO\Big( \frac{\log n}{n} \Big),
\end{multline}
as $n \to + \infty$.

\subsection{Integration in $\beta_{1},\ldots,\beta_{m}$}\label{Int be_k}
For convenience, we introduce the notation
\begin{equation}\label{def of Aj}
\mathcal{A}_{k} = \sum_{j=0}^{k-1} \alpha_{j} - \sum_{j=k+1}^{m} \alpha_{j}, \qquad k = 0,1,\ldots,m.
\end{equation}
We set $\beta_{2} = \ldots = \beta_{m} = 0$ and $\nu = \beta_{1}$ in \eqref{final expression diff identity FH}. With the notation $\vec{\beta}_{1} = (\beta_{1},0,\ldots,0)$, we have
\begin{align}\label{diff beta1}
& \partial_{\beta_{1}} \log L_{n}(\vec{\alpha},\vec{\beta}_{1},2(x+1),0) = 2i \big( \arcsin t_{1} + \sqrt{1-t_{\smash{1}}^{2}} \, \big) n + i \mathcal{A} \arcsin t_{1} - \frac{i \pi}{2} \mathcal{A}_{1} \nonumber \\ 
& \hspace{1cm} + \frac{\alpha_{1}}{2} \partial_{\beta_{1}} \log \Gamma(1+\tfrac{\alpha_{1}}{2}-\beta_{1})\Gamma(1+\tfrac{\alpha_{1}}{2}+\beta_{1}) + \beta_{1} \partial_{\beta_{1}} \log \frac{\Gamma(1+\frac{\alpha_{1}}{2}+\beta_{1})}{\Gamma(1+\frac{\alpha_{1}}{2}-\beta_{1})} - 2\beta_{1} \nonumber \\ 
& \hspace{1cm} -2\beta_{1} \log \big( 4\pi \rho(t_{1})n(1-t_{1}^{2}) \big) + \bigO \Big( \frac{\log n}{n^{1-4\beta_{\max}}} \Big).
\end{align}
After some computations using \eqref{integral of Gamma function}, we obtain
\begin{multline}\label{integral Gamma for beta integration}
\int_{0}^{\beta_{1}} \bigg( \frac{\alpha_{1}}{2} \partial_{x} \log \Gamma(1+\tfrac{\alpha_{1}}{2}-x)\Gamma(1+\tfrac{\alpha_{1}}{2}+x) + x \partial_{x} \log \frac{\Gamma(1+\frac{\alpha_{1}}{2}+x)}{\Gamma(1+\frac{\alpha_{1}}{2}-x)} - 2x \bigg) dx \\ = \log \frac{G(1+\frac{\alpha_{1}}{2}+\beta_{1})G(1+\frac{\alpha_{1}}{2}-\beta_{1})}{G(1+\frac{\alpha_{1}}{2})^{2}}.
\end{multline}
Integrating \eqref{diff beta1} from $\beta_{1} = 0$ to an arbitrary $\beta_{1}$ and using \eqref{integral Gamma for beta integration}, we obtain
\begin{multline}
\log \frac{L_{n}(\vec{\alpha},\vec{\beta}_{1},2(x+1),0)}{L_{n}(\vec{\alpha},\vec{0},2(x+1),0)} = 2i \beta_{1} \big( \arcsin t_{1} + \sqrt{1-t_{\smash{1}}^{2}} \, \big) n + i \mathcal{A} \beta_{1} \arcsin t_{1} - \frac{i\pi}{2}\mathcal{A}_{1}\beta_{1} \\
 + \log \frac{G(1+\frac{\alpha_{1}}{2}+\beta_{1})G(1+\frac{\alpha_{1}}{2}-\beta_{1})}{G(1+\frac{\alpha_{1}}{2})^{2}} - \beta_{1}^{2} \log(4\pi \rho(t_{1})n(1-t_{1}^{2})) + \bigO \Big( \frac{\log n}{n^{1-4\beta_{\max}}} \Big).
\end{multline}
We integrate successively in $\beta_{2},\ldots,\beta_{m}$. At the last step, we set $\nu = \beta_{m}$ in \eqref{final expression diff identity FH}, which gives
\begin{align}\label{diff betam}
& \partial_{\beta_{m}} \log L_{n}(\vec{\alpha},\vec{\beta},2(x+1),0) = 2i \big( \arcsin t_{m} + \sqrt{1-t_{m}^{2}} \, \big) n + i \mathcal{A} \arcsin t_{m} - \frac{i \pi}{2} \mathcal{A}_{m} \nonumber \\ & + \frac{\alpha_{m}}{2} \partial_{\beta_{m}} \log \Gamma(1+\tfrac{\alpha_{m}}{2}-\beta_{m})\Gamma(1+\tfrac{\alpha_{m}}{2}+\beta_{m}) + \beta_{m} \partial_{\beta_{m}} \log \frac{\Gamma(1+\frac{\alpha_{m}}{2}+\beta_{m})}{\Gamma(1+\frac{\alpha_{m}}{2}-\beta_{m})} - 2\beta_{m} \nonumber \\ &
+\sum_{j=1}^{m-1}2\beta_{j} \log T_{jm}-2\beta_{m} \log \big( 4\pi \rho(t_{m})n(1-t_{m}^{2}) \big) + \bigO \Big( \frac{\log n}{n^{1-4\beta_{\max}}} \Big),
\end{align}
as $n \to + \infty$. Integrating \eqref{diff betam} from $\beta_{m} = 0$ to an arbitrary $\beta_{m}$, using the notation $\vec{\beta}_{m-1} = (\beta_{1},\ldots,\beta_{m-1},0)$, we obtain
\begin{equation}
\begin{array}{r c l}
\ds \log \frac{L_{n}(\vec{\alpha},\vec{\beta}_{1},2(x+1),0)}{L_{n}(\vec{\alpha},\vec{\beta}_{m-1},2(x+1),0)} & = & \ds 2i \beta_{m} \big( \arcsin t_{m} + \sqrt{1-t_{m}^{2}} \, \big) n + i \mathcal{A} \beta_{m} \arcsin t_{m} - \frac{i\pi}{2}\mathcal{A}_{m}\beta_{m} \\[0.4cm]
& & \ds \hspace{-0.65cm} + \log \frac{G(1+\frac{\alpha_{m}}{2}+\beta_{m})G(1+\frac{\alpha_{m}}{2}-\beta_{m})}{G(1+\frac{\alpha_{m}}{2})^{2}} - \beta_{m}^{2} \log(4\pi \rho(t_{m})n(1-t_{m}^{2})) \\[0.4cm]
& & \ds \hspace{-0.65cm}  + \sum_{j=1}^{m-1} 2\beta_{j}\beta_{m} \log T_{jm} + \bigO \Big( \frac{\log n}{n^{1-4\beta_{\max}}} \Big).
\end{array}
\end{equation}
Summing all the contributions, as $n \to + \infty$ we obtain

	\begin{align}\label{al to be integration}
& \log \frac{L_{n}(\vec{\alpha},\vec{\beta},2(x+1),0)}{L_{n}(\vec{\alpha},\vec{0},2(x+1),0)} = 2i n \sum_{j=1}^{m} \beta_{j} \Big(\arcsin t_{j} + \sqrt{1-t_{\smash{j}}^{2}}\Big) + i \mathcal{A} \sum_{j=1}^{m}\beta_{j}\arcsin t_{j} \nonumber \\ &
- \frac{i\pi}{2} \sum_{j=1}^{m} \mathcal{A}_{j}\beta_{j} + \sum_{j=1}^{m} \log \frac{G(1+\frac{\alpha_{j}}{2}+\beta_{j})G(1+\frac{\alpha_{j}}{2}-\beta_{j})}{G(1+\frac{\alpha_{j}}{2})^{2}} - \sum_{j=1}^{m} \beta_{j}^{2} \log(4\pi \rho(t_{j})n(1-t_{j}^{2})) \nonumber \\ &
+ 2 \sum_{1 \leq j < k \leq m} \beta_{j}\beta_{k} \log T_{jk} + \bigO \Big( \frac{\log n}{n^{1-4\beta_{\max}}} \Big).
\end{align}
The claim of Proposition \ref{61} follows now by summing \eqref{al0 to al integration} and \eqref{al to be integration} using the definition of $\mathcal{A}_{j}$ given in \eqref{def of Aj}.
\newpage
\section{Integration in $V$}\label{Section: integration in V}
In this section, we obtain asymptotics for general Laguerre-type and Jacobi-type weights by means of a deformation parameter $s$ and by using the analysis of Section \ref{Section: steepest descent} for the weight 
\begin{equation}\label{def weight s}
w_s(x)=e^{-nV_s(x)} \om(x),
\end{equation}
where we emphasize in the notation the dependence in $s$. We specify in Subsection \ref{subsection: def param s} the exact deformations we consider. In Subsection \ref{BWW JLG}, we adapt several identities from \cite{BerWebbWong} (that are valid for Gaussian-type weights) for our situations. Finally, we proceed with the integration in $s$ for Laguerre-type and Jacobi-type weights in Subsection \ref{subsection: s for Laguerre} and Subsection \ref{subsection: s for Jacobi}, respectively.
\subsection{Deformation parameters $s$}\label{subsection: def param s}
Inspired by \cite{BerWebbWong, Charlier, DeiftOpe, DeiftItsKrasovsky}, for each $s \in [0,1]$, we define
\begin{align}
& V_{s}(x) = (1-s)2(x+1) + s V(x), & & \mbox{for Laguerre-type weights,} \label{def Vs for Laguerre} \\
& V_{s}(x) = s V(x), & & \mbox{for Jacobi-type weights.} \label{def Vs for Jacobi}
\end{align}
If $s = 0$, we already know large $n$ asymptotics for the associated Hankel determinants (from Section \ref{Section: FH integration} and the result of \cite{DIK}, see Proposition \ref{61} and Theorem \ref{DIK-Starting Point Jacobi}). It follows easily from \eqref{var equality}-\eqref{var inequality} that $V_{s}$ is one-cut regular for each $s \in [0,1]$, and the associated density $\psi_{s}$ and Euler-Lagrange constant $\ell_{s}$ are given by
\begin{align}
& \psi_{s}(x) = (1-s) \frac{1}{\pi} + s \psi(x), & & \ell_{s} = (1-s)(2+2\log 2) + s \ell, \\
& \psi_{s}(x) = (1-s) \frac{1}{\pi} + s \psi(x), & & \ell_{s} = (1-s)2\log 2 + s \ell,
\end{align}
where the first and second lines read for Laguerre-type and Jacobi-type weights respectively. We will use the differential identities 
\begin{align}
& \partial_{s} \log L_{n}(\vec{\alpha},\vec{\beta},V_{s},0) =  \frac{1}{2\pi i}\int_{-1}^{+\infty}[Y^{-1}(x)Y^{\prime}(x)]_{21}\partial_{s}w_{s}(x)dx, \label{diff identity L s} \\
& \partial_{s} \log J_{n}(\vec{\alpha},\vec{\beta},V_{s},0) =  \frac{1}{2\pi i}\int_{-1}^{1}[Y^{-1}(x)Y^{\prime}(x)]_{21}\partial_{s}w_{s}(x)dx, \label{diff identity J s}
\end{align}
which were obtained in Proposition \ref{prop: general diff identity}. Our objective in this section is to compute asymptotics of these differential identities, and finally integrate them in the parameter $s$ from $0$ to $1$. 
\subsection{Some identities}\label{BWW JLG}
We generalize here several formulas of \cite{BerWebbWong} (valid only for Gaussian-type potentials) for all three-types of canonical one-cut regular potentials. Most of the proofs are minor modifications of those done in \cite{BerWebbWong}.
\begin{lemma}\label{lem1}
For $t \in [-1,1]$, we have
\begin{align}
& \dashint_{-1}^{1} \frac{V^{\prime}(x)\sqrt{1-x^{2}}}{x-t}dx = -2\pi + 2\pi^{2} \sqrt{1-t^{2}}\rho(t),  \label{principal value 1} \\
& \int_{t}^{1}\rho(x) dx = \frac{\sqrt{1-t^{2}}}{2\pi^{2}}\dashint_{-1}^{1} \frac{V(x)}{t-x} \frac{dx}{\sqrt{1-x^{2}}} + \frac{1}{\pi}\arccos t.  \label{principal value 2}
	\end{align}
\end{lemma}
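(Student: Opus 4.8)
The plan is to establish both identities in Lemma~\ref{lem1} as direct consequences of the Euler--Lagrange variational condition \eqref{var equality} together with classical formulas for finite Hilbert transforms on $[-1,1]$. First I would recall that for a one-cut regular potential whose equilibrium measure has the form $d\mu_V(x) = \psi(x)\frac{\sqrt{1-x}}{\sqrt{1+x}}dx$ (Laguerre-type) or $\frac{\psi(x)}{\sqrt{1-x^2}}dx$ (Jacobi-type), the density $\rho$ in \eqref{def of rho} satisfies the singular integral equation obtained by differentiating \eqref{var equality}: for $t\in(-1,1)$,
\begin{equation}\label{SIE}
2\,\dashint_{-1}^{1}\frac{\rho(x)}{t-x}\,dx = V'(t).
\end{equation}
The strategy is then to invert this against the known kernel identities on $[-1,1]$, being careful to track the endpoint behaviour of $\rho$ (which differs between the Laguerre and Jacobi cases, though both identities are stated for the \emph{density} $\rho$, so the proof can be carried out uniformly once \eqref{SIE} is in hand).

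For \eqref{principal value 1}, I would start from the standard airfoil/finite-Hilbert-transform formula
\begin{equation*}
\dashint_{-1}^{1}\frac{\sqrt{1-x^2}}{x-t}\,dx = -\pi t, \qquad t\in(-1,1),
\end{equation*}
and, more generally, use the inversion formula for the finite Hilbert transform: if $\dashint_{-1}^1 \frac{g(x)}{x-t}dx = h(t)$ with the solution $g$ bounded appropriately, one has explicit reconstruction formulas involving $\sqrt{1-x^2}$ and $\frac{1}{\sqrt{1-x^2}}$. Concretely, applying the finite Hilbert transform to $V'(x)\sqrt{1-x^2}$ and using \eqref{SIE} in the form $V'(x) = 2\dashint \frac{\rho(s)}{x-s}ds$, I would interchange the order of the two principal-value integrals via the Poincar\'e--Bertrand formula. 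The double integral $\dashint_{-1}^1\frac{\sqrt{1-x^2}}{x-t}\Big(\dashint_{-1}^1\frac{\rho(s)}{x-s}ds\Big)dx$ splits into a term proportional to $\pi^2\rho(t)\sqrt{1-t^2}$ (the Poincar\'e--Bertrand ``diagonal'' contribution) plus a term that, after using $\int_{-1}^1\rho = 1$ and the evaluation $\dashint\frac{\sqrt{1-x^2}}{x-t}dx = -\pi t$, collapses to the constant $-2\pi$. This yields \eqref{principal value 1}. For \eqref{principal value 2}, I would integrate \eqref{principal value 1}, or alternatively apply the finite Hilbert transform inversion directly: writing $F(t) = \int_t^1\rho(x)dx$, note $F'(t) = -\rho(t)$, and $\rho(t) = \psi_\bullet(t)\times(\text{endpoint factor})$; then use the known formula $\dashint_{-1}^1 \frac{1}{\sqrt{1-x^2}(t-x)}dx = 0$ together with $\dashint_{-1}^1\frac{dx}{\sqrt{1-x^2}} = \pi$ and the reconstruction $\arccos t = \int_t^1 \frac{dx}{\sqrt{1-x^2}}$ to identify all three pieces on the right-hand side of \eqref{principal value 2}. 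The cleanest route is probably: take the finite Hilbert transform (against $\frac{1}{\sqrt{1-x^2}}$) of both sides of \eqref{principal value 1}, invert, and match constants using the normalization $\int_{-1}^1 \rho = 1$, which fixes the additive freedom inherent in Hilbert-transform inversion.

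The main obstacle I anticipate is the careful justification of the Poincar\'e--Bertrand interchange and the control of endpoint singularities: in the Laguerre case $\rho$ blows up like $(1+x)^{-1/2}$ at $-1$, so the principal-value integrals are genuinely singular there and one must check that $V'(x)\sqrt{1-x^2}$ is smooth enough (it is, since $V$ is real analytic on a neighbourhood of $[-1,1]$) and that the formal manipulations are valid as improper integrals. A secondary, more bookkeeping-type difficulty is pinning down the correct additive constants ($-2\pi$ in the first identity, $\frac{1}{\pi}\arccos t$ versus, say, $\frac{1}{\pi}(\arccos t - \pi)$ in the second): these are determined by evaluating both sides at a convenient point such as $t\to 1^-$ or $t = 0$, using $F(1) = 0$ and $F(-1) = 1$, so I would build that sanity check into the argument. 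Since the statement mirrors \cite[Lemma]{BerWebbWong} for the Gaussian case, I expect the proof to be ``a minor modification'' as the authors indicate, the only genuinely new input being that the endpoint factor $\sqrt{(1-x)/(1+x)}$ or $(1-x^2)^{-1/2}$ replaces $\sqrt{1-x^2}$ in the formula for $d\mu_V$ while $\rho$ itself still satisfies \eqref{SIE} with the same right-hand side $V'$.
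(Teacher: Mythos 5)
Your plan is correct, but it proves \eqref{principal value 1} by a genuinely different route than the paper. The paper's proof is complex-analytic: it introduces $H(z)=2\pi\sqrt{z-1}\sqrt{z+1}\int_{-1}^{1}\frac{\rho(x)}{x-z}dx+\int_{-1}^{1}\frac{V'(x)\sqrt{1-x^2}}{x-z}dx$, checks via Plemelj and the differentiated Euler--Lagrange condition that $H_+=H_-$ on $(-1,1)$, that $H$ is bounded at $\pm1$ and that $H(\infty)=-2\pi$, concludes $H\equiv-2\pi$ by Liouville, and then reads off \eqref{principal value 1} from $H_++H_-$. You instead stay on the real line: substituting $V'(x)=2\dashint_{-1}^{1}\frac{\rho(s)}{x-s}ds$ into the left-hand side of \eqref{principal value 1} and applying Poincar\'e--Bertrand, the diagonal term produces $2\pi^{2}\sqrt{1-t^{2}}\rho(t)$ while the interchanged double integral, after the partial-fraction identity $\frac{1}{(x-t)(s-x)}=\frac{1}{s-t}\bigl(\frac{1}{x-t}+\frac{1}{s-x}\bigr)$ and the airfoil formula $\dashint_{-1}^{1}\frac{\sqrt{1-x^2}}{x-t}dx=-\pi t$, collapses to $2\pi\int_{-1}^{1}\rho=2\pi$; I checked this computation and it does give exactly $-2\pi+2\pi^{2}\sqrt{1-t^{2}}\rho(t)$. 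The trade-off is that the Liouville argument gets boundedness at the endpoints and the constant essentially for free, whereas your route requires justifying Poincar\'e--Bertrand in the presence of the $(1+x)^{-1/2}$ blow-up of $\rho$ at the hard edge (admissible, since such integrable power singularities are within the class covered by Gakhov, but it is a hypothesis you must verify rather than sidestep). For \eqref{principal value 2} the two arguments essentially coincide: the paper combines \eqref{principal value 1} with the identity \eqref{p=q} imported from \cite{BerWebbWong}, which is precisely your ``integrate the Hilbert transform of $V'(y)\sqrt{1-y^2}$ against $(1-x^2)^{-1/2}$ over $[t,1]$ and use $\int_{t}^{1}\frac{dx}{\sqrt{1-x^2}}=\arccos t$ together with $\int_{-1}^{1}\rho=1$'' step; your alternative of fixing the additive constant by evaluating at $t\to1^-$ is a sound substitute.
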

\begin{proof}
The proof goes as in \cite[Lemma 5.8]{BerWebbWong}. Let $H:\C \setminus [-1,1] \to \C$ be defined by
\begin{equation}
H(z)=2\pi \sqrt{z-1} \sqrt{z+1} \int_{-1}^{1} \frac{\rho(x)}{x-z}dx + \int_{-1}^{1} \frac{V'(x)\sqrt{1-x^2}}{x-z}dx
\end{equation} 
where the principal branches are chosen for $\sqrt{z-1}$ and $\sqrt{z+1}$. For $t \in (-1,1) $, one can check that  $H_+(t)=H_-(t)$. Also $H$ is bounded at $\pm 1$ and $H(\infty)=-2\pi$; so Liouville's theorem implies that $H(z)=-2\pi$. Considering $H_+(t)+H_-(t)$ for $t \in (-1,1)$ yields	(\ref{principal value 1}). Now, (\ref{principal value 2}) follows from (\ref{principal value 1}) and the following identity which is proved in \cite[eq (5.18) and below]{BerWebbWong} \begin{equation}\label{p=q}
 \sqrt{1-t^2} \dashint_{-1}^{1} \frac{V(x)}{t-x} \frac{dx}{\sqrt{1-x^{2}}}	=	\int_{t}^{1} \frac{1}{\sqrt{1-x^2}} \left(\dashint^{1}_{-1} \frac{V^{\prime}(y)}{y-x}\sqrt{1-y^2} dy \right) dx.
\end{equation}
\end{proof}
\begin{lemma}\label{lemma: integral tk a}
Let $\mathcal{C}$ be a closed curve surrounding $[-1,1]$ in the clockwise direction, let $a(z) = \sqrt[4]{\frac{z-1}{z+1}}$ be analytic on $\mathbb{C}\setminus [-1,1]$ such that $a(z) \sim 1$ as $z \to \infty$, and let $f$ be analytic in a neighbourhood of $[-1,1]$. We have
\begin{align}
	&\frac{1}{2\pi i} \int_{\mathcal{C}} \left[ \frac{a^2(z)}{a_+^2(t_j)} +\frac{a_+^2(t_j)}{a^2(z)}\right] \frac{f(z)}{(z-t_j)^2}dz = \frac{2}{\pi i \sqrt{1-t_{\smash{j}}^2}} \dashint^{1}_{-1} f^{\prime}(x) \frac{\sqrt{1-x^2}}{x-t_j}dx, \label{a+a} \\
& \frac{1}{2\pi i} \int_{\mathcal{C}} \left[   \frac{a^2(z)}{a_+^2(t_j)} - \frac{a_+^2(t_j)}{a^2(z)}  \right] \frac{f(z)}{(z-t_j)^2}dz = \frac{2}{\pi i \sqrt{1-t_{\smash{j}}^2}} \dashint^{1}_{-1} \frac{f(x)}{(t_j-x)\sqrt{1-x^2}}dx. \label{a-a}
\end{align}
\end{lemma}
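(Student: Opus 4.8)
Looking at Lemma \ref{lemma: integral tk a}, I need to prove two contour integral formulas involving the function $a(z) = \sqrt[4]{\frac{z-1}{z+1}}$ and its boundary values at $t_j \in (-1,1)$.

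\textbf{The plan.} The key observation is that $a^2(z) = \sqrt{\frac{z-1}{z+1}}$ is analytic on $\mathbb{C}\setminus[-1,1]$ with $a^2(z) \to 1$ as $z \to \infty$, and on $(-1,1)$ it satisfies $a_+^2(x) = -a_-^2(x)$ (the square root changes sign across the cut), with $a_+^2(x) = i\sqrt{\frac{1-x}{1+x}}$ when approached from the upper half-plane. For each formula I would deform the contour $\mathcal{C}$ (surrounding $[-1,1]$ clockwise) inward, collapsing it onto the segment $[-1,1]$. Since $f$ is analytic near $[-1,1]$, the only singularity of the integrand inside $\mathcal{C}$ is the double pole at $z = t_j$ coming from $(z-t_j)^{-2}$, together with the branch cut of $a^2$ along $[-1,1]$. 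Collapsing $\mathcal{C}$ to a contour hugging $[-1,1]$ and a small clockwise circle around $t_j$, I pick up (i) the residue contribution at the double pole $t_j$ and (ii) an integral of the jump of the integrand across $(-1,1)$.

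\textbf{Key steps in order.} First, write $\frac{1}{2\pi i}\int_{\mathcal{C}}(\cdots)dz$ as (minus, because of clockwise orientation) the sum of the residue at $z=t_j$ and $\frac{1}{2\pi i}\int_{-1}^1 \big[(\text{integrand})_+ - (\text{integrand})_-\big]dx$, being careful that $t_j$ also lies on the segment so the latter integral is a principal value. For \eqref{a+a}: the combination $\frac{a^2(z)}{a_+^2(t_j)} + \frac{a_+^2(t_j)}{a^2(z)}$ has the property that across the cut $a^2 \mapsto -a^2$, so $\frac{a^2}{a_+^2(t_j)} \mapsto -\frac{a^2}{a_+^2(t_j)}$ while $\frac{a_+^2(t_j)}{a^2} \mapsto -\frac{a_+^2(t_j)}{a^2}$; hence the whole bracket is odd across the cut, giving a nonzero jump, while at $z \to t_j$ the bracket $\to 2$ (up to the sign ambiguity which I must handle via $a_+^2(t_j)$). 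The residue of $\frac{[\text{bracket}]f(z)}{(z-t_j)^2}$ at the double pole requires differentiating; using $a_+^2(t_j)$ in the normalization makes the value of the bracket at $t_j$ equal to $2$ with vanishing derivative of the symmetric combination (this is the crucial cancellation), so the residue reduces to $2f'(t_j)$ — but one must check this against the principal-value term. For \eqref{a-a}: the antisymmetric combination $\frac{a^2(z)}{a_+^2(t_j)} - \frac{a_+^2(t_j)}{a^2(z)}$ vanishes at $z=t_j$ to first order, killing the double pole down to a simple pole, so the residue gives $f(t_j)\cdot(\text{something})$, and the jump across the cut produces the stated principal-value integral. I would then compute $a_+^2(x) = i\sqrt{\tfrac{1-x}{1+x}}$ explicitly to convert the resulting expressions into the $\sqrt{1-x^2}$ form in the statement.

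\textbf{Main obstacle.} The delicate point is bookkeeping the principal value: since $t_j \in (-1,1)$ lies both on the collapsed contour and is the location of the pole, one cannot naively separate "residue" from "boundary integral" — instead I would deform $\mathcal{C}$ to a contour consisting of two horizontal pieces just above and below $[-1,1]$ joined around the endpoints, together with a small semicircular indentation around $t_j$ on each side; the residue at $t_j$ then emerges as the half-residues from the two indentations plus the principal-value integral of the jump. Getting the $\frac{2}{\pi i \sqrt{1-t_j^2}}$ prefactor exactly right, and confirming that the derivative terms from the double pole in \eqref{a+a} combine with $f'$ rather than $f$, is where the real care is needed; the rest is the routine identity $a_\pm^2(x) = \pm i\sqrt{(1-x)/(1+x)}$ and algebraic simplification. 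I would cross-check the final formulas in the simple case $f \equiv 1$ (where both sides of \eqref{a+a} should vanish) and $f(z) = z$.
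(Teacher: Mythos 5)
Your overall strategy --- collapse the clockwise contour onto $[-1,1]$, use $a_+^2=-a_-^2$ so that the bracket is odd across the cut and the jump doubles the boundary value, and handle $t_j$ by symmetric indentations --- is the right one, and it is essentially the argument of \cite{BerWebbWong} to which the paper defers. For the antisymmetric identity \eqref{a-a} your plan closes as stated: one has $\frac{a^2(z)}{a_+^2(t_j)}-\frac{a_+^2(t_j)}{a^2(z)}=\frac{2(z-t_j)}{i\sqrt{1-t_{\smash{j}}^2}\,\sqrt{z^2-1}}$, which vanishes to first order at $t_j$, so only a simple pole survives, the two half-residues from the upper and lower indentations cancel (the residues computed with the two branches are opposite), and the jump integral is exactly the stated principal value.

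For \eqref{a+a}, however, there is a concrete gap. Near $t_j$ the jump integrand behaves like $2f(x)/(x-t_j)^2$ with $f(t_j)\neq 0$ in general, so the ``principal value of the jump'' does not exist in the ordinary sense, and the indentation semicircles around a \emph{double} pole contribute divergent $O(1/\epsilon)$ terms rather than half-residues; your decomposition ``residue at the double pole plus PV of the jump'' is therefore not yet meaningful, and the observation that the residue ``reduces to $2f'(t_j)$'' does not by itself produce the convergent principal value involving $f'$ on the right-hand side. The missing step is an integration by parts on the closed contour \emph{before} collapsing: one checks that
\begin{equation*}
\left[\frac{a^2(z)}{a_+^2(t_j)}+\frac{a_+^2(t_j)}{a^2(z)}\right]\frac{1}{(z-t_j)^2}
=\frac{2}{i\sqrt{1-t_{\smash{j}}^2}}\,\frac{zt_j-1}{(z-t_j)^2\sqrt{z^2-1}}
=-\frac{2}{i\sqrt{1-t_{\smash{j}}^2}}\,\frac{d}{dz}\left(\frac{\sqrt{z^2-1}}{z-t_j}\right),
\end{equation*}
so integrating by parts (no boundary terms on a closed contour) transfers the derivative onto $f$ and leaves only a simple pole multiplying $f'(z)$; collapsing the contour then yields \eqref{a+a} by exactly the same computation as for \eqref{a-a}. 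Without identifying this total derivative (or an equivalent regularization by parts on the real line), the plan stalls at a divergent expression whose cancellation is precisely the content of the lemma.
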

\begin{proof}
The proof is the same as in \cite[equations (5.22)-(5.23) and above]{BerWebbWong}.
\end{proof}
Applying Lemma \ref{lemma: integral tk a} to $f = \partial_{s}V_{s}$ (with $V_{s}$ given by \eqref{def Vs for Laguerre}--\eqref{def Vs for Jacobi}), and then simplifying using Lemma \ref{lem1}, we obtain
\begin{equation}\label{plus}
\int_{\mathcal{C}} \left[ \frac{a^2(z)}{a_+^2(t_j)} +\frac{a_+^2(t_j)}{a^2(z)}\right] \frac{\partial_s V_s(z)}{(z-t_j)^2}dz = \left\{ \begin{array}{l l}
\ds 8 \pi^{2} \big(  \psi(t_{j})-\tfrac{1}{\pi}\big) \frac{\sqrt{1-t_{j}}}{\sqrt{1+t_{j}}}, & \mbox{for Laguerre-type potentials} \\[0.3cm]
\ds 8 \pi^{2} \big(  \psi(t_{j})-\tfrac{1}{\pi}\big) \frac{1}{\sqrt{1-t_{\smash{j}}^{2}}} , & \mbox{for Jacobi-type potentials}
\end{array} \right. 
\end{equation}
and
\begin{equation}\label{minus}
\int_{\mathcal{C}} \left[   \frac{a^2(z)}{a_+^2(t_j)} - \frac{a_+^2(t_j)}{a^2(z)}  \right] \frac{\partial_s V_s(z)}{(z-t_j)^2}dz = \left\{ \begin{array}{l l}
\ds \frac{8\pi^2}{1-t^2_j}\int_{t_j}^{1} \big(\psi(x)-\tfrac{1}{\pi}\big) \frac{\sqrt{1-x}}{\sqrt{1+x}} dx, & \mbox{for Laguerre-type potentials} \\[0.3cm]
\ds  \frac{8\pi^2}{1-t^2_j}\int_{t_j}^{1} \big(\psi(x)-\tfrac{1}{\pi}\big) \frac{1}{\sqrt{1-x^2}} dx , & \mbox{for Jacobi-type potentials}
\end{array} \right.
\end{equation}
\newpage
\begin{lemma}\label{lemma: integral 1 -1}
Let $\mathcal{C}$ be a closed curve surrounding $[-1,1]$ in the clockwise direction, let $a(z) = \sqrt[4]{\frac{z-1}{z+1}}$ be analytic on $\mathbb{C}\setminus [-1,1]$ such that $a(z) \sim 1$ as $z \to \infty$, and let $f$ be analytic in a neighbourhood of $[-1,1]$. We have
\begin{align}
& \int_{\mathcal{C}} \frac{a(z)^{2}}{(z-1)^{2}}f(z)dz = 2i \int_{-1}^{1} \frac{f^{\prime}(x)\sqrt{1-x^{2}}}{x-1}dx, \label{lol5} \\
& \int_{\mathcal{C}} \frac{a(z)^{3}}{(z-1)^{3}}f(z)dz = -\frac{2i}{3}\int_{-1}^{1} \frac{f^{\prime}(x)\sqrt{1-x^{2}}}{x-1}dx + \frac{2i}{3} \left.\frac{d}{dt} \right|_{t=1} \dashint_{-1}^{1} \frac{f^{\prime}(x)\sqrt{1-x^{2}}}{x-t}dx, \label{lol6} \\
& \int_{\mathcal{C}} \frac{a(z)^{-2}}{(z-1)^{3}}f(z)dz = \frac{2i}{3}\int_{-1}^{1} \frac{f^{\prime}(x)\sqrt{1-x^{2}}}{x-1}dx + \frac{4i}{3} \left.\frac{d}{dt} \right|_{t=1} \dashint_{-1}^{1} \frac{f^{\prime}(x)\sqrt{1-x^{2}}}{x-t}dx, \label{lol7} \\
& \int_{\mathcal{C}} \frac{a(z)^{-2}}{(z+1)^{2}}f(z)dz = -2i \int_{-1}^{1} \frac{f^{\prime}(x)\sqrt{1-x^{2}}}{x+1}dx. \label{lol8}
\end{align}
\end{lemma}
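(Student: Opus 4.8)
The plan is to prove \eqref{lol5}--\eqref{lol8} by the same mechanism used for Lemma~\ref{lemma: integral tk a}: collapse the contour $\mathcal{C}$ onto the cut $[-1,1]$ (legitimate since $f$ is analytic on a neighbourhood of $[-1,1]$ that contains $\mathcal{C}$) and use the explicit boundary values of $a$. On $(-1,1)$ one has $a_{\pm}(x)^{2}=\pm i\sqrt{(1-x)/(1+x)}$ for the principal branches, so on each integrand the jump of $g_{\pm}$ across $(-1,1)$ is elementary, and the clockwise contour integral reduces to $\int_{-1}^{1}\big(g_{+}(x)-g_{-}(x)\big)\,dx$ together with the contributions of two small circles around $\pm1$. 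Since $a^{2}\sim\mathrm{const}\,(z+1)^{-1/2}$ near $-1$ and $a^{2}\sim\mathrm{const}\,(z-1)^{1/2}$ near $1$ (and reciprocally for $a^{-2}$), these integrands carry non-integrable endpoint singularities, so both the segment integral and the arc integrals diverge individually; making their cancellation explicit is the substance of the proof.

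For the double-pole identities \eqref{lol5} and \eqref{lol8} I would sidestep the endpoint analysis by reading them off from Lemma~\ref{lemma: integral tk a}. Taking the appropriate linear combination of \eqref{a+a} and \eqref{a-a} and multiplying by $a_{+}(t)^{2}$ (resp. by $a_{+}(t)^{-2}$) gives, for $t\in(-1,1)$, a closed formula for $\int_{\mathcal{C}}\tfrac{a^{2}(z)f(z)}{(z-t)^{2}}\,dz$ (resp. for $\int_{\mathcal{C}}\tfrac{a^{-2}(z)f(z)}{(z-t)^{2}}\,dz$) in terms of $\dashint_{-1}^{1}\tfrac{f'(x)\sqrt{1-x^{2}}}{x-t}\,dx$ and $\dashint_{-1}^{1}\tfrac{f(x)\,dx}{(t-x)\sqrt{1-x^{2}}}$. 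Letting $t\to1$ in the first and $t\to-1$ in the second then yields \eqref{lol5} and \eqref{lol8}, the only point requiring care being the limiting identity $\lim_{t\to\pm1}\dashint_{-1}^{1}\tfrac{f(x)\,dx}{(t-x)\sqrt{1-x^{2}}}=\int_{-1}^{1}\tfrac{f'(x)\sqrt{1-x^{2}}}{x\mp1}\,dx$; this follows by subtracting off $f(\pm1)$ (which drops out because the finite Hilbert transform of $(1-x^{2})^{-1/2}$ vanishes on $(-1,1)$) and one integration by parts, and it is here that the endpoint regularisation is quietly absorbed.

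For the triple-pole identities \eqref{lol6} and \eqref{lol7} the plan is to differentiate the $t$-parametrised formulas above in $t$ \emph{before} taking the limit: since $\tfrac{d}{dt}(z-t)^{-2}=2(z-t)^{-3}$, this produces $\int_{\mathcal{C}}\tfrac{a^{\pm2}(z)f(z)}{(z-t)^{3}}\,dz$ on the left and, on the right, a combination of $\dashint_{-1}^{1}\tfrac{f'(x)\sqrt{1-x^{2}}}{x-t}\,dx$, its $t$-derivative, and the auxiliary transform above (with its $t$-derivative). Sending $t\to1$ then gives right-hand sides of the stated form, with the term $\tfrac{d}{dt}\big|_{t=1}\dashint_{-1}^{1}\tfrac{f'(x)\sqrt{1-x^{2}}}{x-t}\,dx$ appearing precisely from the double zero created by differentiating in the pole location; the apparent $\tfrac{1}{1-t}$ and $\tfrac{1}{(1-t)^{2}}$ singularities cancel thanks to the relation $F(1)=G(1)$ obtained in the previous step (with $F,G$ the two transforms). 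The rational constants are then pinned down by the Taylor expansions of $a^{\pm2}$ at $\pm1$, for which \eqref{principal value 1} and \eqref{p=q} are convenient. I expect the main obstacle to be exactly this bookkeeping in the triple-pole cases: resolving the cancellation of the divergent pieces in the $t\to1$ limit and matching the residual finite expression to the compact right-hand sides of \eqref{lol6}--\eqref{lol7}; everything else is routine.
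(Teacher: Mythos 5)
Your route is necessarily different from the paper's, which for this lemma consists of a citation to \cite[Lemma 5.10]{BerWebbWong} for \eqref{lol5}--\eqref{lol7} and the remark that \eqref{lol8} is similar; a self-contained derivation from Lemma \ref{lemma: integral tk a} is therefore a legitimate and more informative alternative, and the mechanism you propose (isolate $\int_{\mathcal C}a^{\pm2}(z)f(z)(z-t)^{-2}dz$ for $t\in(-1,1)$, then send $t\to\pm1$, differentiating in $t$ where a higher-order pole is needed) is the right one. Writing $F(t)=\dashint_{-1}^{1}\frac{f'(x)\sqrt{1-x^2}}{x-t}dx$ and $G(t)=\dashint_{-1}^{1}\frac{f(x)}{(t-x)\sqrt{1-x^2}}dx$, your linear combinations give $\int_{\mathcal C}\frac{a^2(z)f(z)}{(z-t)^2}dz=\frac{2i}{1+t}\big(F(t)+G(t)\big)$ and $\int_{\mathcal C}\frac{a^{-2}(z)f(z)}{(z-t)^2}dz=\frac{-2i}{1-t}\big(F(t)-G(t)\big)$, so everything hinges on the endpoint behaviour of $G$. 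Here is your first genuine error: the limiting identity you state for $t\to-1$ has the wrong sign. One has $G(1)=F(1)$ but $G(-1)=-F(-1)$; the asymmetry is forced by \eqref{p=q}, which integrates from $t$ to $1$, or equivalently by the relation $(1-t^2)G'(t)=tG(t)-F(t)$ obtained by differentiating \eqref{p=q}. With your sign, $F(-1)-G(-1)=0$ and \eqref{lol8} would evaluate to $0$ rather than $-2iF(-1)$; with the correct sign it comes out right. Note also that all the endpoint data you need ($G(\pm1)$, $G'(1)=\tfrac{1}{3}(F'(1)-F(1))$, etc.) comes from repeatedly differentiating \eqref{p=q} and evaluating at $t=1$ --- this is the missing bookkeeping, and it is doable but not yet done in your sketch.

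Second, you should have noticed that \eqref{lol6} and \eqref{lol7} as printed are misprints: comparison with their use in \eqref{lol11}--\eqref{lol12} shows the intended integrands are $\frac{a(z)^{2}}{(z-1)^{3}}f(z)$ and $\frac{a(z)^{-2}}{(z-1)^{2}}f(z)$, and a test with $f(z)=z^{3}$ via the residue at infinity confirms this (the printed $a^{3}$ version of \eqref{lol6} gives $-3\pi i$ while its right-hand side equals $-4\pi i$). This matters for your plan. Differentiating the $a^{2}$ double-pole formula in $t$ and setting $t=1$ does yield the corrected \eqref{lol6}, with constants matching once $G'(1)=\tfrac{1}{3}(F'(1)-F(1))$ is inserted. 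But for \eqref{lol7} your differentiation step targets $\int_{\mathcal C}a^{-2}(z)f(z)(z-1)^{-3}dz$, whose value is $\tfrac{2i}{5}F''(1)+\tfrac{2i}{15}F'(1)-\tfrac{2i}{15}F(1)$ and is \emph{not} the stated right-hand side; the stated right-hand side $\tfrac{2i}{3}F(1)+\tfrac{4i}{3}F'(1)$ is the value of the \emph{double}-pole integral $\int_{\mathcal C}a^{-2}(z)f(z)(z-1)^{-2}dz$, obtained as the $0/0$ limit of $\frac{-2i}{1-t}(F(t)-G(t))$ as $t\to1$ (using $F(1)=G(1)$ and the formula for $G'(1)$), with no differentiation of the contour integral at all. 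So the ``bookkeeping obstacle'' you anticipate for \eqref{lol7} is not mere bookkeeping: aimed at the printed statement, the computation cannot close. Redirect it at the double-pole integral, fix the sign at $t=-1$, and the whole lemma follows by your method.
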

\begin{proof}
The proof of \eqref{lol5}--\eqref{lol7} is done in \cite[Lemma 5.10]{BerWebbWong}, and the proof for \eqref{lol8} is similar.
\end{proof}
Applying Lemma \ref{lemma: integral 1 -1} to $f(x) = \partial_{s}V_{s} = V(x)-2(x+1)$ with $V_{s}$ given by \eqref{def Vs for Laguerre} for Laguerre-type potentials, and then simplifying using Lemma \ref{lem1}, we obtain
\begin{align}
& \int_{\mathcal{C}} \frac{a(z)^{2}}{(z-1)^{2}}\partial_{s}V_{s}(z)dz = 0, \label{lol10} \\
& \int_{\mathcal{C}} \frac{a(z)^{2}}{(z-1)^{3}}\partial_{s}V_{s}(z)dz = - \frac{4\pi^{2}i}{3}\Big( \psi(1)-\frac{1}{\pi} \Big), \label{lol11} \\
& \int_{\mathcal{C}} \frac{a(z)^{-2}}{(z-1)^{2}}\partial_{s}V_{s}(z)dz = -\frac{8\pi^{2}i}{3}\Big( \psi(1)-\frac{1}{\pi} \Big), \label{lol12} \\
& \int_{\mathcal{C}} \frac{a(z)^{-2}}{(z-1)^{2}}\partial_{s}V_{s}(z)dz = -8\pi^{2}i \Big( \psi(-1)-\frac{1}{\pi} \Big). \label{lol13}
\end{align}
Similarly, for Jacobi-type weights with $f(x) = \partial_{s}V_{s} = V(x)$ with $V_{s}$ given by \eqref{def Vs for Jacobi} for Jacobi-type potentials, we obtain
\begin{align}
& \int_{\mathcal{C}} \frac{a(z)^{2}}{(z-1)^{2}}\partial_{s}V_{s}(z)dz = 4\pi^{2}i \Big( \psi(1)-\frac{1}{\pi} \Big), \label{lol14} \\
& \int_{\mathcal{C}} \frac{a(z)^{-2}}{(z+1)^{2}}\partial_{s}V_{s}(z)dz = -4\pi^{2}i \Big( \psi(-1)-\frac{1}{\pi} \Big). \label{lol15}
\end{align}

\subsection{Integration in $s$ for Laguerre-type weights}\label{subsection: s for Laguerre}
In this subsection we prove Proposition \ref{prop 71} below. 
\begin{proposition}\label{prop 71}
As $n \to +\infty$, we have
\begin{align}\label{prop 71 r}
& \log \frac{L_{n}(\vec{\alpha},\vec{\beta},V,0)}{L_{n}(\vec{\alpha},\vec{\beta},2(x+1),0)} = - \frac{n^{2}}{2} \int_{-1}^{1} \big(V(x)-2(x+1)\big)\Big( \frac{1}{\pi} + \psi(x)\Big)\sqrt{\frac{1-x}{1+x}}dx  \nonumber \\
	& + n \sum_{j=0}^{m} \frac{\alpha_{j}}{2}\big(V(t_{j})-2(1+t_{j})\big) - \frac{n\mathcal{A}}{2\pi} \int_{-1}^{1} \frac{V(x)-2(1+x)}{\sqrt{1-x^{2}}}dx -2\pi n \sum_{j=1}^{m} i \beta_{j} \int_{t_{j}}^{1} \Big(\psi(x)-\frac{1}{\pi}\Big)\sqrt{\frac{1-x}{1+x}}dx \nonumber \\
	& \hspace{0cm} + \sum_{j=1}^{m} \bigg( \frac{\alpha_{j}^{2}}{4} - \beta_{j}^{2} \bigg) \log \left( \pi \psi(t_{j}) \right) - \frac{1}{24}\log\left( \pi\psi(1) \right) - \frac{1-4\alpha_{0}^{2}}{8}\log(\pi \psi(-1)) + \bigO\big(n^{-1+4\beta_{\max}}\big).
\end{align}
\end{proposition}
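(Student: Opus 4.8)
The idea is to combine the differential identity \eqref{diff identity L s} with the large $n$ asymptotics of $Y$ produced by the Deift--Zhou analysis of Section~\ref{Section: steepest descent} applied to the weight $w_s = e^{-nV_s}\om$, with $V_s$ as in \eqref{def Vs for Laguerre}, and then to integrate the resulting expansion over $s\in[0,1]$; since $V_0(x)=2(x+1)$, this yields the left-hand side of \eqref{prop 71 r}. Throughout I specialize to $W\equiv 0$ and write $\psi_s,\rho_s,g_s,\xi_s,\ell_s$ for the $s$-dependent counterparts of the quantities of Subsection~\ref{subsection: eq measure}; note that $\psi_0\equiv\frac1\pi$, and that all the parametrices of Section~\ref{Section: steepest descent} and the remainder $R$ of \eqref{R_large_n_asymp} depend smoothly on $s$, uniformly in $s\in[0,1]$.

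First I would substitute $\partial_s w_s(x) = -n\,(V(x)-2(x+1))\,w_s(x)$ into \eqref{diff identity L s}. The factor $V-2(x+1)$ is analytic on $U_V\supseteq\mathcal I$, and by the strict Euler--Lagrange inequality \eqref{EL <} for $V_s$ the weight $w_s$ is exponentially small on $[1+\de,+\infty)$, so that portion of the range of integration contributes only $\bigO(e^{-cn})$. Using the jump relation for $Y$ on $(-1,1)$ one checks that $[Y^{-1}(x)Y'(x)]_{21}$ extends analytically across $(-1,1)$, which is the starting point of the contour manipulation of \cite[Section~5]{BerWebbWong}, adapted to Hankel determinants as in \cite{Charlier}: it allows one to bring the differential identity into a form to which the steepest-descent asymptotics can be fed, and the whole expression then localizes, up to exponentially small errors, to fixed neighbourhoods of $-1,t_1,\dots,t_m,1$. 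These steps transfer essentially verbatim from \cite{Charlier}, the only change being that $\xi,g,\rho,\psi,\ell$ carry the index $s$.

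Inverting $Y\mapsto T\mapsto S\mapsto R$, inserting the global parametrix together with $R = I + R^{(1)}/n + \bigO(n^{-2+2\be_{\max}})$ on the part of the contour away from the singular points, and the local parametrices of Subsections~\ref{subsection: local param t_k}--\ref{subsection: local param near -1} near $t_k$ and $\pm1$, I would obtain, uniformly for $s\in[0,1]$,
$$
\partial_s\log L_n(\vec\al,\vec\be,V_s,0) = n^2 A_2(s) + n\,A_1(s) + A_0(s) + \bigO\!\big(n^{-1+4\be_{\max}}\big),
$$
with explicit $A_j(s)$. Here $A_2(s) = -\int_{-1}^1 (V(x)-2(x+1))\,\psi_s(x)\sqrt{\tfrac{1-x}{1+x}}\,dx$ is the Coulomb-gas term $-\int\partial_sV_s\,d\mu_{V_s}$ coming from the ``$R=I$'' part of the global parametrix; $A_1(s)$ collects the contributions of $P_1^{(\infty)}$ and of the prefactors $\La_k,\la_k,E_{\pm1}$ of the parametrices, simplified by Lemma~\ref{lem1}; and $A_0(s)$ is assembled from the residues of $J_R^{(1)}$ at $t_k$ (via \eqref{res t_{k}} and the expansions \eqref{asymptotics for D_alpha in D_t}--\eqref{asymptotics for D_beta in D_t}) and at the soft edge $1$ and the hard edge $-1$ (via \eqref{f_1 asymp L}, \eqref{f_{-1}-asymp}, \eqref{asymptotics on the disk D_1}, \eqref{asymptotics on the disk D_-1}), after the sort of cancellations already witnessed in \eqref{huge cancellations}. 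Finally I would integrate in $s$ from $0$ to $1$, using $\int_0^1\psi_s\,ds = \tfrac12(\tfrac1\pi+\psi)$ and $\psi_s-\tfrac1\pi = s(\psi-\tfrac1\pi)$, the relation $\int_0^1\partial_s\log(\pi\psi_s(t))\,ds = \log(\pi\psi(t))$, and the closed-form contour identities \eqref{plus}--\eqref{minus} and \eqref{lol10}--\eqref{lol15} of Subsection~\ref{BWW JLG}; this reproduces \eqref{prop 71 r} line by line (the first line from $\int_0^1 A_2$, the $\bigO(n)$ terms from $\int_0^1 A_1$, and the $\log(\pi\psi(\cdot))$ constants from $\int_0^1 A_0$).

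The bulk of the work, and where I expect the main difficulty, is bookkeeping rather than any isolated conceptual hurdle: one must track all the $n$-dependent prefactors ($R^{(1)}$, $\La_k$, $\la_k$, $D_\infty$, $E_{\pm1}$), verify that the $\bigO(n^2)$ and $\bigO(n)$ parts of $\partial_s\log L_n$ receive no correction at the orders that matter, and---most delicately---extract the hard-edge constant at $-1$ with the correct $\al_0$-dependence while confirming that the accumulated $\bigO(n^{-1+2\be_{\max}})$ errors from $R$, together with their products with the $\bigO(n^{2\be_{\max}})$ residue terms, combine into the stated $\bigO(n^{-1+4\be_{\max}})$ remainder. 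Once the expansion of $\partial_s\log L_n$ is secured with this uniform error, the integration in $s$ is routine.
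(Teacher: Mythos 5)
Your proposal follows essentially the same route as the paper's proof: the differential identity \eqref{diff identity L s} is deformed into a contour integral around $[-1,1]$ (up to an $\bigO(e^{-cn})$ tail), split into the three pieces $I_{1,s}$, $I_{2,s}$, $I_{3,s}$ corresponding to your $n^{2}A_{2}(s)$, $nA_{1}(s)$, $A_{0}(s)$, evaluated via the contour identities of Subsection \ref{BWW JLG} and the residues of $J_{R}^{(1)}$, and then integrated over $s\in[0,1]$ using $\int_{0}^{1}\psi_{s}\,ds=\tfrac12(\tfrac1\pi+\psi)$. The one point you gloss over that the paper treats explicitly is that the differential identity only holds at those $(\vec{\alpha},\vec{\beta},s)$ for which $p_{0},\ldots,p_{n}$ exist; since the exceptional set is locally finite (with possible accumulation only at $s=0,1$), the asymptotic identity is first extended to all $s\in[0,1]$ by continuity before the integration is performed.
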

Let $\mathcal{C}$ be a closed contour surrounding $[-1,1]$ and the lenses $\ga_+ \cup \ga_-$, which is oriented clockwise and passes through $-1-\ep$ and $1+\ep$ for a certain $\ep>0$. Using the jumps for $Y$ given by \eqref{Y_Jump}, we rewrite the differential identity  \eqref{diff identity L s} as follows
\begin{equation}\label{diff s L Ln}
\partial_{s} \log L_{n}(\vec{\alpha},\vec{\beta},V_{s},0) =  \int^{+\infty}_{1+\ep} [Y^{-1}(x)Y^{\prime}(x)]_{21} \partial_s w_s(x)\frac{dx}{2\pi i} - \frac{1}{2\pi i} \int_{\mathcal{C}} [Y^{-1}(z)Y^{\prime}(z)]_{11} \partial_s \log w_s(z) \frac{dz}{2\pi i}. 
\end{equation}
From \eqref{sum}, \eqref{EL <} and by inverting the transformations $Y  \mapsto T \mapsto S \mapsto R$ outside the lenses and outside the disks, we conclude that the first integral in the r.h.s. of \eqref{diff s L Ln} is of order $\bigO(e^{-cn})$ as $n \to + \infty$, for a positive constant $c$, and that the integral over $\mathcal{C}$ can be decomposed into three integrals:
\begin{equation}\label{lol 4}
\begin{array}{r c l}
\displaystyle \partial_{s} \log L_{n}(\vec{\alpha},\vec{\beta},V_{s},0)  & = & \displaystyle I_{1,s} + I_{2,s} + I_{3,s} + \bigO(e^{-cn}), \qquad \mbox{as } n \to  \infty, \\[0.3cm]
\displaystyle I_{1,s} & = & \displaystyle \frac{-n}{2\pi i} \int_{\mathcal{C}} g^{\prime}(z) \partial_{s} \log w_{s}(z) dz, \\[0.3cm]
\displaystyle I_{2,s} & = & \displaystyle \frac{-1}{2\pi i} \int_{\mathcal{C}} [P^{(\infty)}(z)^{-1}P^{(\infty)}(z)^{\prime}]_{11} \partial_{s} \log w_{s}(z) dz, \\[0.3cm]
\displaystyle I_{3,s} & = & \displaystyle \frac{-1}{2\pi i} \int_{\mathcal{C}} [P^{(\infty)}(z)^{-1}R^{-1}(z)R^{\prime}(z)P^{(\infty)}(z)]_{11} \partial_{s} \log w_{s}(z) dz.
\end{array}
\end{equation}
In exactly the same way as in \cite{BerWebbWong, Charlier}, we show from a detailed analysis of the Cauchy operator associated to $R$ that the estimates in \eqref{R_large_n_asymp} hold uniformly for $(\vec{\alpha},\vec{\beta})$ in any fixed compact set $\Omega$, and uniformly in $s \in [0,1]$. However, from Proposition \ref{prop: general diff identity}, the identity \eqref{lol 4} itself is not valid for the values of $(\vec{\alpha},\vec{\beta},s)$ for which at least one of the polynomials $p_{0},\ldots,p_{n}$ does not exist. From \cite[beginning of Section 3]{Charlier}, this set is locally finite except possible some accumulation points at $s = 0$ and $s = 1$. As in \cite{Charlier}, we extend \eqref{lol 4} \textit{for all} $(\vec{\alpha},\vec{\beta},s)\in \Omega \times [0,1]$ (for sufficiently large $n$) using the continuity of the l.h.s. of \eqref{lol 4}. A similar reasoning holds also for \eqref{diff s L} below.

\vspace{0.3cm}\hspace{-0.55cm}Note from \eqref{def weight s} and \eqref{def Vs for Laguerre} that $\partial_{s} \log w_{s}(z) = -n \partial_{s} V_{s}(z) = -n(V(x)-2(x+1))$. Using the definition of $g$ given by \eqref{g function} and switching the order of integration, we get
\begin{equation}\label{I1s L}
I_{1,s} = -n^{2} \int_{-1}^{1} \rho_{s}(x) \partial_{s}V_{s}(x)dx = - n^{2} \int_{-1}^{1} (V(x)-2(x+1))\Big( (1-s)\frac{1}{\pi}+s\psi(x) \Big)\frac{\sqrt{1-x}}{\sqrt{1+x}}dx.
\end{equation}
Therefore, we have
\begin{equation}
\int_{0}^{1}I_{1,s}ds = -\frac{n^{2}}{2} \int_{-1}^{1} (V(x)-2(x+1)) \Big( \frac{1}{\pi}+\psi(x) \Big)\frac{\sqrt{1-x}}{\sqrt{1+x}}dx.
\end{equation}
From \eqref{Global_Parametrix}, \eqref{D_al}, \eqref{D_be} and a contour deformation, we obtain the following expression for $I_{2,s}$:
\begin{multline}\label{lol9}
I_{2,s} = n \sum_{j=0}^{m} \frac{\alpha_{j}}{2}\Big( V(t_{j})-2(1+t_{j}) \Big) - \frac{n\mathcal{A}}{2\pi} \int_{-1}^{1} \frac{V(x)-2(1+x)}{\sqrt{1-x^{2}}}dx \\
+ n \sum_{j=1}^{m} \frac{i \beta_{j}}{\pi}\sqrt{1-t_{j}^{2}} \dashint_{-1}^{1} \frac{V(x)-2(1+x)}{\sqrt{1-x^{2}}(x-t_{j})}dx.
\end{multline}
We simplify the last integral of \eqref{lol9} using \eqref{principal value 2}:
\begin{equation}
\sqrt{1-t_{j}^{2}}\dashint_{-1}^{1} \frac{V(x)-2(1+x)}{\sqrt{1-x^{2}}(x-t_{j})}dx = -2\pi^{2} \int_{t_{j}}^{1} \Big( \psi(x)-\frac{1}{\pi} \Big) \sqrt{\frac{1-x}{1+x}}dx.
\end{equation}
Then, integrating in $s$ (note that $I_{2,s}$ is in fact independent of $s$), we obtain
\begin{multline}
\int_{0}^{1}I_{2,s}ds = n \sum_{j=0}^{m} \frac{\alpha_{j}}{2}\Big( V(t_{j})-2(1+t_{j}) \Big) - \frac{n\mathcal{A}}{2\pi} \int_{-1}^{1} \frac{V(x)-2(1+x)}{\sqrt{1-x^{2}}}dx \\
- 2\pi n \sum_{j=1}^{m} i \beta_{j} \int_{t_{j}}^{1}\Big( \psi(x)-\frac{1}{\pi} \Big) \sqrt{\frac{1-x}{1+x}}dx.
\end{multline}
Using the expansion of $R$ given by \eqref{R_large_n_asymp}, we have
\begin{equation}
I_{3,s} = \frac{1}{2\pi i} \int_{\mathcal{C}} [P^{(\infty)}(z)^{-1}R^{(1)}(z)^{\prime}P^{(\infty)}(z)]_{11} \partial_{s}V_{s}(z)dz + \bigO\big(n^{-1+4\beta_{\max}}\big), \qquad \mbox{ as } n \to \infty,
\end{equation}
The leading term of $I_{3,s}$ can be written down more explicitly using the definition of $P^{(\infty)}$ given by \eqref{Global_Parametrix}, and we obtain
\begin{multline}\label{lol 12n}
I_{3,s} = \frac{1}{2\pi i} \int_{\mathcal{C}} \left( \frac{a(z)^{2}+a(z)^{-2}}{4}[R_{11}^{(1)}(z)^{\prime} - R_{22}^{(1)}(z)^{\prime}] + \frac{1}{2}[R_{11}^{(1)}(z)^{\prime}+R_{22}^{(1)}(z)^{\prime}] \right. \\ \left. + i \frac{a(z)^{2}-a(z)^{-2}}{4} [R_{12}^{(1)}(z)^{\prime}D_{\infty}^{-2} + R_{21}^{(1)}(z)^{\prime}D_{\infty}^{2}] \right) \big(V(z)-2(z+1)\big)dz + \bigO\big(n^{-1+4\beta_{\max}}\big).
\end{multline}
From \eqref{R^{(1)}}, \eqref{res t_{k}}, \eqref{res at -1L}, \eqref{res at 1'L} and \eqref{res at 1L} we have
\begin{align}
& R_{11}^{(1)\prime}(z) - R_{22}^{(1)\prime}(z) = \sum_{j=1}^{m} \frac{1}{(z-t_{j})^{2}} \frac{-2v_{j}(t_{j}+\widetilde{\Lambda}_{I,j})}{2\pi \rho_{s}(t_{j}) \sqrt{1-t_{\smash{j}}^{2}}}+ \frac{1}{(z-1)^{3}} \frac{5}{2^{2}3\pi \psi_{s}(1)} \nonumber \\
& \hspace{2.8cm} +\frac{1}{(z-1)^{2}} \frac{(\mathcal{A}-\widetilde{\mathcal{B}}_{1})^{2} - \frac{1}{4} - \frac{1}{2}\frac{\psi_{s}^{\prime}(1)}{\psi_{s}(1)}}{2^{2}\pi \psi_{s}(1)}+\frac{1}{(z+1)^{2}} \frac{1-4\alpha_{0}^{2}}{2^{4}\pi \psi_{s}(-1)}, \label{I3 1} \\[0.3cm]
& R_{11}^{(1)\prime}(z) + R_{22}^{(1)\prime}(z) = 0, \label{I3 2} \\
& i[R_{12}^{(1)\prime}(z)D_{\infty}^{-2}+R_{21}^{(1)\prime}(z)D_{\infty}^{2}] = \sum_{j=1}^{m} \frac{1}{(z-t_{j})^{2}} \frac{v_{j}(-2+\widetilde{\Lambda}_{R,1,j}-\widetilde{\Lambda}_{R,2,j})}{2\pi \rho_{s}(t_{j}) \sqrt{1-t_{\smash{j}}^{2}}}+ \frac{1}{(z-1)^{3}} \frac{5}{2^{2}3\pi \psi_{s}(1)}  \nonumber \\
& \hspace{4.5cm} +\frac{1}{(z-1)^{2}} \frac{(\mathcal{A}-\widetilde{\mathcal{B}}_{1})^{2} + \frac{11}{12} - \frac{1}{2}\frac{\psi_{s}^{\prime}(1)}{\psi_{s}(1)}}{2^{2}\pi \psi_{s}(1)} +\frac{1}{(z+1)^{2}} \frac{-(1-4\alpha_{0}^{2})}{2^{4}\pi \psi_{s}(-1)}. \label{I3 3}
\end{align}
Therefore, from \eqref{lol 12n}--\eqref{I3 3} and using the connection formula \eqref{connection Lambdas}, we obtain
\begin{equation}\label{lol 3 n}
I_{3,s} = \sum_{j=1}^{m} I_{3,s,t_{j}} + I_{3,s,1} + I_{3,s,-1} + \bigO\big(n^{-1+4\beta_{\max}}\big), \qquad \mbox{ as } n \to \infty,
\end{equation}
where
\begin{align}
& I_{3,s,t_{k}} = \frac{-v_{k}}{8\pi^{2} \rho_{s}(t_{k})} \int_{\mathcal{C}} \left[ \frac{a^{2}(z)}{a_{+}^{2}(t_{k})} + \frac{a_{+}^{2}(t_{k})}{a^{2}(z)} + \widetilde{\Lambda}_{I,k} \left( \frac{a^{2}(z)}{a_{+}^{2}(t_{k})} - \frac{a_{+}^{2}(t_{k})}{a^{2}(z)} \right) \right] \frac{\partial_{s}V_{s}(z)}{(z-t_{k})^{2}}dz, \nonumber \\
& I_{3,s,1} = \int_{\mathcal{C}} \Bigg[ \frac{a^{2}(z)}{4\pi \psi_{s}(1)} \Bigg( \frac{2(\mathcal{A}-\widetilde{\mathcal{B}}_{1})^{2}+\frac{2}{3}-\frac{\psi_{s}^{\prime}(1)}{\psi_{s}(1)}}{2^{2}(z-1)^{2}}+\frac{5}{6(z-1)^{3}} \Bigg) + \frac{a^{-2}(z)}{4(z-1)^{2}}\frac{-\frac{7}{6}}{2^{2}\pi \psi_{s}(1)}  \Bigg] \partial_{s}V_{s}(z)\frac{dz}{2\pi i}, \nonumber \\
& I_{3,s,-1}  =  \int_{\mathcal{C}}  \Bigg[  \frac{a^{-2}(z)}{4(z+1)^{2}}\frac{1-4\alpha_{0}^{2}}{2^{3}\pi \psi_{s}(-1)}  \Bigg]  \partial_{s}V_{s}(z)\frac{dz}{2\pi i}. \nonumber
\end{align}
Formulas \eqref{plus} and \eqref{minus} allow us to simplify $I_{3,s,t_{k}}$ as follows:
\begin{equation}
I_{3,s,t_{k}} = - \frac{v_{k}}{\psi_{s}(t_{k})}\Big( \psi(t_{k})-\frac{1}{\pi} \Big) - \frac{v_{k} \widetilde{\Lambda}_{I,k}}{\rho_{s}(t_{k})(1-t_{k}^{2})}\int_{t_{k}}^{1} \Big( \psi(x)-\frac{1}{\pi}\Big) \sqrt{\frac{1-x}{1+x}}dx.
\end{equation}
Integrating the above from $s = 0$ to $s = 1$, we have
\begin{equation}\label{int I3t}
\int_{0}^{1} I_{3,s,t_{k}} ds = -v_{k} \log (\pi \psi(t_{k})) - \frac{v_{k}}{1-t_{k}^{2}}\int_{t_{k}}^{1} \Big( \psi(x)-\frac{1}{\pi} \Big) \sqrt{\frac{1-x}{1+x}}dx \int_{0}^{1} \frac{\widetilde{\Lambda}_{I,k}}{\rho_{s}(t_{k})}ds.
\end{equation}
By the same argument as the one given in \cite[equations (6.23) and (6.24)]{Charlier}, the second term in the r.h.s of \eqref{int I3t} is of order $\bigO(n^{-1+2|\Re \be_k|})$ as $n \to +\infty$, that is,
\begin{equation}\label{int I3t L}
\int_{0}^{1} I_{3,s,t_{ k}} ds = -v_{k} \log (\pi \psi(t_{k})) +\bigO(n^{-1+2|\Re \beta_{k}|}).
\end{equation}
We can also simplify the expression for $I_{3,s,1}$. Using the formulas \eqref{lol10}--\eqref{lol12}, we obtain 
\begin{equation}
I_{3,s,1} = -\frac{1}{24}\frac{\psi(1)-\frac{1}{\pi}}{\psi_{s}(1)}, \qquad \mbox{and then} \qquad \int_{0}^{1}I_{3,s,1}ds = - \frac{1}{24} \log(\pi \psi(1)).
\end{equation}
Similarly, using \eqref{lol13} we get 
\begin{equation}
I_{3,s,-1} = -\frac{1-4\alpha_{0}^{2}}{8} \frac{\psi(-1)-\frac{1}{\pi}}{\psi_{s}(-1)}, \quad \mbox{and then} \quad \int_{0}^{1} I_{3,s,-1} ds = -\frac{1-4\alpha_{0}^{2}}{8} \log \big( \pi \psi(-1) \big).
\end{equation}
This finishes the proof of Proposition \ref{prop 71}.

\subsection{Jacobi-type weights}\label{subsection: s for Jacobi}
We prove here the analogue of Proposition \ref{prop 71} for Jacobi-type weights.

\begin{proposition}\label{prop 72}
As $n \to \infty$, we have
\begin{align}\label{prop 72 r}
& \log \frac{J_{n}(\vec{\alpha},\vec{\beta},V,0)}{J_{n}(\vec{\alpha},\vec{\beta},0,0)} = - \frac{n^{2}}{2} \int_{-1}^{1}\frac{V(x)}{\sqrt{1-x^{2}}}\Big( \frac{1}{\pi} + \psi(x)\Big)dx + n \sum_{j=0}^{m+1} \frac{\alpha_{j}}{2} V(t_{j}) \nonumber \\
&  - \frac{n\mathcal{A}}{2\pi} \int_{-1}^{1} \frac{V(x)}{\sqrt{1-x^{2}}}dx -2\pi n \sum_{j=1}^{m} i \beta_{j} \int_{t_{j}}^{1} \Big(\psi(x)-\frac{1}{\pi}\Big) \frac{dx}{\sqrt{1-x^{2}}} + \sum_{j=1}^{m} \bigg( \frac{\alpha_{j}^{2}}{4} - \beta_{j}^{2} \bigg) \log \left( \pi \psi(t_{j}) \right) \nonumber \\
& \hspace{0cm}   - \frac{1-4\alpha_{m+1}^{2}}{8}\log\left( \pi\psi(1) \right) - \frac{1-4\alpha_{0}^{2}}{8}\log(\pi \psi(-1)) + \bigO\big(n^{-1+4\beta_{\max}}\big).
\end{align}
\end{proposition}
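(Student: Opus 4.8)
\emph{Plan.} The proof mirrors, mutatis mutandis, that of Proposition~\ref{prop 71}, now using the differential identity \eqref{diff identity J s} together with the steepest descent analysis of Section~\ref{Section: steepest descent} applied to the Jacobi-type weight $w_s(x) = e^{-nV_s(x)}\omega(x)$ with $V_s = sV$ as in \eqref{def Vs for Jacobi}. Since $\mathcal{I} = [-1,1]$ there is no unbounded tail, which makes the computation slightly cleaner than in the Laguerre case. First I would use the jump \eqref{Y_Jump} of $Y$ to rewrite \eqref{diff identity J s} as a contour integral $-\tfrac{1}{(2\pi i)^{2}}\int_{\mathcal{C}}[Y^{-1}(z)Y'(z)]_{11}\,\partial_{s}\log w_{s}(z)\,dz$ over a clockwise closed curve $\mathcal{C}$ surrounding $[-1,1]$ and the lenses $\gamma_{+}\cup\gamma_{-}$. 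Inverting $Y\mapsto T\mapsto S\mapsto R$ outside the lenses and the disks, this decomposes, exactly as in \eqref{lol 4} (with the $\bigO(e^{-cn})$ tail term now absent), into $\partial_{s}\log J_{n}(\vec{\alpha},\vec{\beta},V_{s},0) = I_{1,s} + I_{2,s} + I_{3,s}$, where $I_{1,s}$ carries the $g$-function, $I_{2,s}$ the global parametrix $P^{(\infty)}$, and $I_{3,s}$ the correction $R$. As in \cite{BerWebbWong, Charlier}, the estimates \eqref{R_large_n_asymp} are uniform in $s\in[0,1]$ and in $(\vec{\alpha},\vec{\beta})$ in a compact set, and since the set where $p_{0},\ldots,p_{n}$ fail to exist is locally finite, one extends the identity by continuity to all $(\vec{\alpha},\vec{\beta},s)$.

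Next I would evaluate $I_{1,s}$ and $I_{2,s}$. Since $\partial_{s}\log w_{s}(z) = -nV(z)$, interchanging the order of integration in \eqref{g function} gives $I_{1,s} = -n^{2}\int_{-1}^{1} V(x)\bigl((1-s)\tfrac{1}{\pi}+s\psi(x)\bigr)\tfrac{dx}{\sqrt{1-x^{2}}}$, which integrates over $s\in[0,1]$ to the $n^{2}$-term of \eqref{prop 72 r}. For $I_{2,s}$, a contour deformation using \eqref{Global_Parametrix}, \eqref{D_al}, \eqref{D_be} (with $W\equiv 0$) produces $n\sum_{j=0}^{m+1}\tfrac{\alpha_{j}}{2}V(t_{j}) - \tfrac{n\mathcal{A}}{2\pi}\int_{-1}^{1}\tfrac{V(x)}{\sqrt{1-x^{2}}}dx + n\sum_{j=1}^{m}\tfrac{i\beta_{j}}{\pi}\sqrt{1-t_{j}^{2}}\dashint_{-1}^{1}\tfrac{V(x)}{\sqrt{1-x^{2}}(x-t_{j})}dx$; this is independent of $s$, and rewriting the principal value integral via \eqref{principal value 2} of Lemma~\ref{lem1} (the $\arccos$ term combining with $\int_{t_{j}}^{1}\tfrac{dx}{\pi\sqrt{1-x^{2}}}$) turns $\sqrt{1-t_{j}^{2}}\dashint_{-1}^{1}\tfrac{V(x)}{\sqrt{1-x^{2}}(x-t_{j})}dx$ into $-2\pi^{2}\int_{t_{j}}^{1}\bigl(\psi(x)-\tfrac{1}{\pi}\bigr)\tfrac{dx}{\sqrt{1-x^{2}}}$. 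Integrating in $s$ then yields exactly the $n$-order terms of \eqref{prop 72 r}.

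The term $I_{3,s}$ is where the parametrix enters. Using \eqref{R_large_n_asymp} one reduces it to $\tfrac{1}{2\pi i}\int_{\mathcal{C}}[P^{(\infty)}(z)^{-1}R^{(1)}(z)'P^{(\infty)}(z)]_{11}\,\partial_{s}V_{s}(z)\,dz + \bigO(n^{-1+4\beta_{\max}})$, expands the bracket in $a(z)^{\pm2}$, $R^{(1)\prime}_{ij}$ and $D_{\infty}^{\pm2}$ as in \eqref{lol 12n}, and substitutes $R^{(1)}$ from \eqref{R^{(1)}Jacobi}, whose poles are all \emph{simple} (at $t_{0}=-1, t_{1},\ldots,t_{m}, t_{m+1}=1$), with residue at $t_{k}$, $1\le k\le m$, given by \eqref{res t_{k}} and residues at $\pm 1$ by the formulas following \eqref{R^{(1)}Jacobi}. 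This splits $I_{3,s}$ into $\sum_{k=1}^{m}I_{3,s,t_{k}} + I_{3,s,1} + I_{3,s,-1} + \bigO(n^{-1+4\beta_{\max}})$. The $t_{k}$-pieces are handled by \eqref{plus}--\eqref{minus}; the $\widetilde{\Lambda}_{I,k}$-dependent part is oscillatory in $n$ and, by the argument of \cite{Charlier}, contributes only $\bigO(n^{-1+2|\Re\beta_{k}|})$, so $\int_{0}^{1}I_{3,s,t_{k}}\,ds = -v_{k}\log(\pi\psi(t_{k})) + \bigO(n^{-1+2|\Re\beta_{k}|})$ with $v_{k} = \beta_{k}^{2} - \tfrac{\alpha_{k}^{2}}{4}$. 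For the edges, because $J_{R}^{(1)}$ has only simple poles at $\pm1$ (Bessel local parametrices at two hard edges, not the double pole of the soft Laguerre edge), $I_{3,s,1}$ reduces to a multiple of $\int_{\mathcal{C}}\tfrac{a(z)^{2}}{(z-1)^{2}}\partial_{s}V_{s}(z)\,dz$, whence by \eqref{lol14} one gets $I_{3,s,1} = -\tfrac{1-4\alpha_{m+1}^{2}}{8}\tfrac{\psi(1)-1/\pi}{\psi_{s}(1)}$ and $\int_{0}^{1}I_{3,s,1}\,ds = -\tfrac{1-4\alpha_{m+1}^{2}}{8}\log(\pi\psi(1))$; similarly \eqref{lol15} gives $\int_{0}^{1}I_{3,s,-1}\,ds = -\tfrac{1-4\alpha_{0}^{2}}{8}\log(\pi\psi(-1))$. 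Summing $\int_{0}^{1}(I_{1,s}+I_{2,s}+I_{3,s})\,ds$ and inserting $v_{j} = \beta_{j}^{2}-\alpha_{j}^{2}/4$ into the $t_{j}$-sum reproduces \eqref{prop 72 r}.

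I expect the main obstacle to be the same one that already arises in \cite{Charlier}: controlling the $\widetilde{\Lambda}$-dependent (hence $n$-oscillatory) contributions to $I_{3,s,t_{k}}$ and showing they do not survive, which rests on the delicate estimate $\int_{0}^{1}\widetilde{\Lambda}_{I,k}/\rho_{s}(t_{k})\,ds = \bigO(n^{-1+2|\Re\beta_{k}|})$; this is also the source of the error term $\bigO(n^{-1+4\beta_{\max}})$. A secondary, purely bookkeeping difficulty is to track the $\arccos$ contributions from Lemma~\ref{lem1} and the constant $D_{\infty}$ so that the $n$-order coefficients assemble precisely as stated, and to note that the endpoint $s=0$ is exactly $J_{n}(\vec{\alpha},\vec{\beta},0,0)$, whose large $n$ asymptotics are supplied by Theorem~\ref{DIK-Starting Point Jacobi}.
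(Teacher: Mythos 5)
Your proposal is correct and follows essentially the same route as the paper: the same decomposition $\partial_{s}\log J_{n}=I_{1,s}+I_{2,s}+I_{3,s}$ via the contour rewriting of \eqref{diff identity J s}, the same evaluation of $I_{1,s}$, $I_{2,s}$ using Lemma \ref{lem1}, and the same treatment of $I_{3,s}$ via the simple poles of $R^{(1)}$ in \eqref{R^{(1)}Jacobi} together with \eqref{plus}--\eqref{minus} and \eqref{lol14}--\eqref{lol15}, including the deferral of the oscillatory $\widetilde{\Lambda}_{I,k}$ estimate to the argument of \cite{Charlier}. No gaps to report.
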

The computations of this subsection are organised similarly to those done in Subsection \ref{subsection: s for Laguerre}, and we provide less details. Let $\mathcal{C}$ be a closed contour surrounding $[-1,1]$ and the lenses $\ga_+ \cup \ga_-$, which is oriented clockwise and passes through $-1-\ep$ and $1+\ep$ for a certain $\ep>0$. Using the jumps for $Y$ \eqref{Y_Jump}, we rewrite the differential identity  \eqref{diff identity J s} as follows
\begin{equation}\label{diff s L}
\partial_{s} \log J_{n}(\vec{\alpha},\vec{\beta},V_{s},0) =  - \frac{1}{2\pi i} \int_{\mathcal{C}} [Y^{-1}(z)Y^{\prime}(z)]_{11} \partial_s \log w_s(z) \frac{dz}{2\pi i},
\end{equation}
where from \eqref{def weight s} and \eqref{def Vs for Jacobi}, we have $\partial_{s} \log w_{s}(z) = -n\partial_{s}V_{s}(z) = -nV(z)$. In the same way as done in \eqref{lol 4}, by inverting the transformations $Y \mapsto T \mapsto S \mapsto R$ in the region outside the lenses and outside the disks, we have
\begin{equation}
\partial_{s} \log J_{n}(\vec{\alpha},\vec{\beta},V_{s},0)  = \displaystyle I_{1,s} + I_{2,s} + I_{3,s},
\end{equation}
where $I_{1,s}$, $I_{2,s}$ and $I_{3,s}$ are given as in \eqref{lol 4}. For $I_{1,s}$, a simple calculation implies
\begin{equation}
I_{1,s} = -n^{2} \int_{-1}^{1} \rho_{s}(x) \partial_{s}V_{s}(x)dx = - n^{2} \int_{-1}^{1} \frac{1}{\sqrt{1-x^{2}}}V(x)\Big( (1-s)\frac{1}{\pi}+s\psi(x) \Big)dx,
\end{equation}
which gives
\begin{equation}
\int_{0}^{1}I_{1,s}ds = -\frac{n^{2}}{2} \int_{-1}^{1} \frac{V(x)}{\sqrt{1-x^{2}}}\Big( \frac{1}{\pi}+\psi(x) \Big)dx.
\end{equation}
The computations of $I_{2,s}$ are similar to those done for \cite[equations (6.10)--(6.15)]{Charlier} and for \eqref{lol9}. We obtain
%\begin{multline}\label{I2s J}
%I_{2,s} = n \sum_{j=0}^{m+1} \frac{\alpha_{j}}{2} V(t_{j}) - \frac{n\mathcal{A}}{2\pi} \int_{-1}^{1} \frac{V(x)}{\sqrt{1-x^{2}}}dx + n \sum_{j=1}^{m} \frac{i \beta_{j}}{\pi}\sqrt{1-t_{j}^{2}} \dashint_{-1}^{1} \frac{V(x)}{\sqrt{1-x^{2}}(x-t_{j})}dx.
%\end{multline}
%From \eqref{principal value 2}, we simplify the last term in the above equation
%\begin{equation}
%\sqrt{1-t_{j}^{2}}\dashint_{-1}^{1} \frac{V(x)}{\sqrt{1-x^{2}}(x-t_{j})}dx = -2\pi^{2} \int_{t_{j}}^{1} \Big( \psi(x)-\frac{1}{\pi} \Big) \frac{1}{\sqrt{1-x^{2}}}dx.
%\end{equation}
%Integrating in $s$, we obtain
\begin{multline}
\int_{0}^{1}I_{2,s}ds = n \sum_{j=0}^{m+1} \frac{\alpha_{j}}{2} V(t_{j}) - \frac{n\mathcal{A}}{2\pi} \int_{-1}^{1} \frac{V(x)}{\sqrt{1-x^{2}}}dx
- 2\pi n \sum_{j=1}^{m} i \beta_{j} \int_{t_{j}}^{1}\Big( \psi(x)-\frac{1}{\pi} \Big) \frac{1}{\sqrt{1-x^{2}}}dx.
\end{multline}
For $I_{3,s}$, similar to \eqref{lol 12n} we get
\begin{multline}\label{lol 12}
I_{3,s} = \frac{1}{2\pi i} \int_{\mathcal{C}} \left( \frac{a(z)^{2}+a(z)^{-2}}{4}[R_{11}^{(1)}(z)^{\prime} - R_{22}^{(1)}(z)^{\prime}] + \frac{1}{2}[R_{11}^{(1)}(z)^{\prime}+R_{22}^{(1)}(z)^{\prime}] \right. \\ \left. + i \frac{a(z)^{2}-a(z)^{-2}}{4} [R_{12}^{(1)}(z)^{\prime}D_{\infty}^{-2} + R_{21}^{(1)}(z)^{\prime}D_{\infty}^{2}] \right) V(z)dz + \bigO\big(n^{-1+4\beta_{\max}}\big).
\end{multline}
The quantities involving $R^{(1)}$ are made explicit using \eqref{R^{(1)}Jacobi}, we obtain
\begin{align*}
& R_{11}^{(1)\prime}(z) - R_{22}^{(1)\prime}(z) = \sum_{j=1}^{m} \frac{1}{(z-t_{j})^{2}} \frac{-2v_{j}(t_{j}+\widetilde{\Lambda}_{I,j})}{2\pi \rho_{s}(t_{j}) \sqrt{1-t_{\smash{j}}^{2}}} +\frac{1}{(z-1)^{2}} \frac{4\alpha_{m+1}^{2}-1}{2^{3}\pi \psi_{s}(1)}+\frac{1}{(z+1)^{2}} \frac{1-4\alpha_{0}^{2}}{2^{3}\pi \psi_{s}(-1)}, \nonumber \\[0.3cm]
& R_{11}^{(1)\prime}(z) + R_{22}^{(1)\prime}(z) = 0, \\
& i[R_{12}^{(1)\prime}(z)D_{\infty}^{-2}+R_{21}^{(1)\prime}(z)D_{\infty}^{2}] = \sum_{j=1}^{m} \frac{1}{(z-t_{j})^{2}} \frac{v_{j}(-2+\widetilde{\Lambda}_{R,1,j}-\widetilde{\Lambda}_{R,2,j})}{2\pi \rho_{s}(t_{j}) \sqrt{1-t_{\smash{j}}^{2}}}+ \frac{1}{(z-1)^{2}} \frac{-(1-4\alpha_{m+1}^{2})}{2^{3}\pi \psi_{s}(1)} \\
& \hspace{4.5cm} +\frac{1}{(z+1)^{2}} \frac{-(1-4\alpha_{0}^{2})}{2^{3}\pi \psi_{s}(-1)}.
\end{align*}
As in Subsection \ref{subsection: s for Laguerre}, we rewrite $I_{3,s}$ in the form
\begin{equation}\label{lol 3}
I_{3,s} = \sum_{j=1}^{m} I_{3,s,t_{j}} + I_{3,s,1} + I_{3,s,-1} + \bigO\big(n^{-1+4\beta_{\max}}\big), \qquad \mbox{ as } n \to \infty,
\end{equation}
where
\begin{align}
& I_{3,s,t_{k}} = \frac{-v_{k}}{8\pi^{2} \rho_{s}(t_{k})} \int_{\mathcal{C}} \left[ \frac{a^{2}(z)}{a_{+}^{2}(t_{k})} + \frac{a_{+}^{2}(t_{k})}{a^{2}(z)} + \widetilde{\Lambda}_{I,k} \left( \frac{a^{2}(z)}{a_{+}^{2}(t_{k})} - \frac{a_{+}^{2}(t_{k})}{a^{2}(z)} \right) \right] \frac{\partial_{s}V_{s}(z)}{(z-t_{k})^{2}}dz, \nonumber \\
& I_{3,s,1} = \frac{4\alpha_{m+1}^{2}-1}{2^{5}\pi^{2}i\psi_{s}(1)}\int_{\mathcal{C}} \frac{a^{2}(z)}{(z-1)^{2}} \partial_{s}V_{s}(z)dz, \nonumber \\
& I_{3,s,-1}  = \frac{1-4\alpha_{0}^{2}}{2^{5}\pi^{2}i\psi_{s}(-1)} \int_{\mathcal{C}}  \frac{a^{-2}(z)}{(z+1)^{2}} \partial_{s}V_{s}(z) dz. \nonumber
\end{align}
From \eqref{plus} and \eqref{minus}, $I_{3,s,t_{k}}$ simplifies to
\begin{equation}
I_{3,s,t_{k}} = - \frac{v_{k}}{\psi_{s}(t_{k})}\Big( \psi(t_{k})-\frac{1}{\pi} \Big) - \frac{v_{k} \widetilde{\Lambda}_{I,k}}{\rho_{s}(t_{k})(1-t_{k}^{2})}\int_{t_{k}}^{1} \Big( \psi(x)-\frac{1}{\pi}\Big) \frac{dx}{\sqrt{1-x^{2}}}
\end{equation}
and hence, similarly to \eqref{int I3t}--\eqref{int I3t L}, as $n\to + \infty$ we have
\begin{equation}
\int_{0}^{1} I_{3,s,t_{k}} ds = -v_{k} \log (\pi \psi(t_{k})) +\bigO(n^{-1+2|\Re \beta_{k}|}).
\end{equation}
Also, from \eqref{lol14}--\eqref{lol15}, we have
\begin{equation}
I_{3,s,1} = -\frac{1-4\alpha_{m+1}^{2}}{8\psi_{s}(1)}\Big( \psi(1)-\frac{1}{\pi} \Big) \qquad \mbox{and} \qquad I_{3,s,-1} = -\frac{1-4\alpha_{0}^{2}}{8\psi_{s}(-1)} \Big( \psi(-1)-\frac{1}{\pi} \Big),
\end{equation}
and hence
\begin{equation}
\int_{0}^{1}I_{3,s,1}ds = -\frac{1-4\alpha_{m+1}^{2}}{8}\log(\pi \psi(1)) \qquad \mbox{and} \qquad \int_{0}^{1} I_{3,s,-1} ds = -\frac{1-4\alpha_{0}^{2}}{8} \log \big( \pi \psi(-1) \big).
\end{equation}
This concludes the proof of proposition \ref{prop 72}.
\section{Integration in $W$}\label{Section: integration in W}
The main result of this section is the following.
\begin{proposition}\label{81 L}
As $n \to \infty$, we have
\begin{multline}\label{81 L r}
\log \frac{D_{n}(\vec{\alpha},\vec{\beta},V,W)}{D_{n}(\vec{\alpha},\vec{\beta},V,0)} = n\int_{-1}^{1}W(x)\rho(x)dx -\frac{1}{4\pi^{2}}\int_{-1}^{1}  \frac{W(y)}{\sqrt{1-y^{2}}} \bigg(\dashint_{-1}^{1} \frac{W^{\prime}(x) \sqrt{1-x^{2}}}{x-y}dx\bigg)dy \\
	+ \frac{\mathcal{A}}{2\pi}\int_{-1}^{1} \frac{W(x)}{\sqrt{1-x^{2}}}dx - \sum_{j=0}^{m+1} \frac{\alpha_{j}}{2}W(t_{j}) +  \sum_{j=1}^{m} \frac{i \beta_{j}}{\pi} \sqrt{1-t_{j}^{2}} \dashint_{-1}^{1} \frac{W(x)}{\sqrt{1-x^{2}}(t_{j}-x)}dx + \bigO\big(n^{-1+2\beta_{\max}}\big).
\end{multline}
where $D_n$ stands for either $L_n$ or $J_n$.
\end{proposition}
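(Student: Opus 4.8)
The plan is to run the same ``integration of a differential identity'' scheme as in Sections~\ref{Section: FH integration} and \ref{Section: integration in V}, now with $W$ as the deformation parameter. For $t\in[0,1]$ set $w_t(x)=e^{-nV(x)}e^{tW(x)}\omega(x)$, so that $w_0$ is the weight whose asymptotics are given by Proposition~\ref{prop 71} (for $L_n$) or Proposition~\ref{prop 72} (for $J_n$), while $w_1$ is the weight of Theorem~\ref{theorem L} or Theorem~\ref{theorem J}. Since $\partial_t\log w_t=W$, Proposition~\ref{prop: general diff identity} gives $\partial_t\log D_n(\vec\alpha,\vec\beta,V,W_t)=\frac{1}{2\pi i}\int_{\mathcal I}[Y^{-1}(x)Y'(x)]_{21}W(x)w_t(x)\,dx$. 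As in \eqref{diff s L Ln}, the jump relation \eqref{Y_Jump} gives $w_t[Y^{-1}Y']_{21}=[Y_-^{-1}Y_-']_{11}-[Y_+^{-1}Y_+']_{11}$; using that $W$ is analytic on a neighbourhood $U_W$ of $[-1,1]$ containing the lenses and the disks, and (for $L_n$) that the part of the integral over $[1+\ep,+\infty)$ is $\bigO(e^{-cn})$ by the Euler--Lagrange inequality \eqref{EL <}, one rewrites the identity as a clockwise contour integral of $[Y^{-1}(z)Y'(z)]_{11}W(z)$ over a closed curve $\mathcal C\subset U_W$ surrounding $[-1,1]\cup\gamma_+\cup\gamma_-$.

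Inverting the transformations $Y\mapsto T\mapsto S\mapsto R$ outside the lenses and the disks, where $Y=e^{-\frac{n\ell}{2}\sigma_3}R\,P^{(\infty)}e^{ng\sigma_3}e^{\frac{n\ell}{2}\sigma_3}$, I would decompose
\begin{equation}
\partial_t\log D_n(\vec\alpha,\vec\beta,V,W_t)=I_{1,t}+I_{2,t}+I_{3,t}+\bigO(e^{-cn}),
\end{equation}
with $I_{1,t}$, $I_{2,t}$, $I_{3,t}$ defined exactly as in \eqref{lol 4} but with $W$ in place of $-n\,\partial_sV_s$. The one structural difference with Section~\ref{Section: integration in V} is decisive: no factor of $n$ is produced by $\partial_t\log w_t$, so by \eqref{R_large_n_asymp} (which holds uniformly for $t\in[0,1]$, as in \cite{BerWebbWong,Charlier}) one gets $I_{3,t}=\bigO(n^{-1+2\beta_{\max}})$, and $\int_0^1 I_{3,t}\,dt$ is already part of the error term of \eqref{81 L r}. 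For $I_{1,t}=\frac{-n}{2\pi i}\int_{\mathcal C}g'(z)W(z)\,dz$, deforming $\mathcal C$ onto $[-1,1]$ and using $g'_+-g'_-=-2\pi i\rho$ (see \eqref{g function}) gives $I_{1,t}=n\int_{-1}^{1}W(x)\rho(x)\,dx$, which is $t$-independent and supplies the first term of \eqref{81 L r} after integration.

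The substantial computation is $I_{2,t}=\frac{-1}{2\pi i}\int_{\mathcal C}[P^{(\infty)}(z)^{-1}P^{(\infty)}(z)']_{11}W(z)\,dz$. Writing $P^{(\infty)}=D_\infty^{\sigma_3}MD^{-\sigma_3}$ with $D=D_\alpha D_\beta D_W^{\,t}$ as in \eqref{Global_Parametrix}--\eqref{D_infty}, the diagonal entry reduces to $[M^{-1}M']_{11}-(\log D_\alpha)'-(\log D_\beta)'-t\,(\log D_W)'$. Collapsing $\mathcal C$ onto $[-1,1]$ and onto small circles around $t_1,\dots,t_m,\pm1$, the contribution of $(\log D_\alpha)'$ (which has simple poles with residue $\tfrac{\alpha_j}{2}$ at the $t_j$ and a $\tfrac{1}{\sqrt{z^2-1}}$ cut) produces $-\sum_{j=0}^{m+1}\tfrac{\alpha_j}{2}W(t_j)+\tfrac{\mathcal A}{2\pi}\int_{-1}^{1}\tfrac{W(x)}{\sqrt{1-x^2}}\,dx$; the contribution of $(\log D_\beta)'$ produces $\sum_{j=1}^{m}\tfrac{i\beta_j}{\pi}\sqrt{1-t_j^2}\,\dashint_{-1}^{1}\tfrac{W(x)}{\sqrt{1-x^2}(t_j-x)}\,dx$ after the principal-value manipulations behind Lemma~\ref{lem1}; the $t$-independent term $[M^{-1}M']_{11}$, whose only singularities inside $\mathcal C$ lie on $[-1,1]$, is handled exactly as for Gaussian-type weights in \cite[Section~6]{Charlier}; and, after one integration by parts, $t\,(\log D_W)'$ contributes $\tfrac{t}{2\pi i}\int_{\mathcal C}(\log D_W)'(z)W(z)\,dz$, which upon deformation equals $-\tfrac{t}{2\pi^2}\int_{-1}^{1}\sqrt{1-x^2}\,W'(x)\dashint_{-1}^{1}\tfrac{W(y)}{\sqrt{1-y^2}(x-y)}\,dy\,dx$; its integral over $t\in[0,1]$ gives exactly the double integral $-\tfrac{1}{4\pi^2}\int_{-1}^{1}\tfrac{W(y)}{\sqrt{1-y^2}}\dashint_{-1}^{1}\tfrac{W'(x)\sqrt{1-x^2}}{x-y}\,dx\,dy$ of \eqref{81 L r}. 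Summing $\int_0^1(I_{1,t}+I_{2,t})\,dt$ and absorbing $\int_0^1 I_{3,t}\,dt$ into the error gives \eqref{81 L r}. Finally, as in Sections~\ref{Section: FH integration}--\ref{Section: integration in V}, the identity for $\partial_t\log D_n$ is a priori valid only where $p_0,\dots,p_n$ exist; since this fails only on a locally finite set and the left side of \eqref{81 L r} is continuous, the result extends to all parameters for $n$ large enough. The proof for $J_n$ is the same, with $\mathcal I=[-1,1]$ so that the tail contribution is absent.

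The main obstacle is the evaluation of $I_{2,t}$: one must keep careful track of the branch points of $P^{(\infty)}$ at $\pm1$ and at the $t_j$ when collapsing $\mathcal C$ onto $[-1,1]$, and one must reduce the iterated Cauchy principal-value integral coming from the $D_W$-factor, together with the $(\log D_\beta)'$- and $[M^{-1}M']_{11}$-contributions, to the explicit form of \eqref{81 L r}; this relies on the Tricomi-type identities underlying Lemma~\ref{lem1} and \eqref{p=q}. Everything else is a routine adaptation of computations already carried out for the Gaussian-type weight in \cite{Charlier,BerWebbWong}.
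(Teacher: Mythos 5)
Your proposal is correct and follows the same architecture as the paper's proof: a deformation of $W$ to $0$, the general differential identity of Proposition \ref{prop: general diff identity}, rewriting the identity as a contour integral via the jump relation \eqref{Y_Jump}, inverting $Y\mapsto T\mapsto S\mapsto R$, and evaluating $I_{1,t}$, $I_{2,t}$, $I_{3,t}$ by contour deformation exactly as in Section \ref{Section: integration in V}. The one substantive difference is the interpolation path: you take $W_t=tW$ (linear in the exponent), whereas the paper, following \cite{DeiftItsKrasovsky,BerWebbWong,Charlier}, takes $W_t=\log\big(1-t+te^{W}\big)$ and then refers to \cite[Section 7]{Charlier} for the (omitted) computations. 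Both paths stay inside the admissible class ($tW$ is analytic near $[-1,1]$, H\"older on $\mathcal{I}$, and $O(V)$ at infinity, so the steepest descent applies uniformly in $t$), and your choice has the advantage that $\partial_t\log w_t=W$ is $t$-independent and the Szeg\H{o} factor becomes $D_{W_t}=(D_W)^t$, so the quadratic term of \eqref{81 L r} falls out as $\int_0^1 t\,dt=\tfrac12$ times a fixed bilinear form rather than from a genuinely $t$-dependent integrand. Your individual evaluations check out: $I_{1,t}=n\int_{-1}^1 W\rho$ from $g_+'-g_-'=-2\pi i\rho$; $I_{3,t}=\bigO(n^{-1+2\beta_{\max}})$ because no factor of $n$ multiplies $R'=\bigO(n^{-1+2\beta_{\max}})$, which is exactly why the error here is $n^{-1+2\beta_{\max}}$ rather than the $n^{-1+4\beta_{\max}}$ of Propositions \ref{prop 71}--\ref{prop 72}; and in $I_{2,t}$ the term $[M^{-1}M']_{11}$ in fact vanishes identically (with $M_{11}=M_{22}=\tfrac12(a+a^{-1})$, $M_{12}=-M_{21}=\tfrac1{2i}(a-a^{-1})$ one gets $[M^{-1}M']_{11}=\tfrac12(M_{11}^2+M_{12}^2)'=0$), so only the $D_\alpha$, $D_\beta$ and $D_W$ logarithmic derivatives contribute, producing precisely the remaining terms of \eqref{81 L r} after the Sokhotski/principal-value manipulations you describe.
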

\begin{remark}
The difference between Laguerre-type and Jacobi-type weights in the r.h.s. of \eqref{81 L r} is only reflected in the definitions of $\rho$ and $\mathcal{A}$. 
\end{remark}
The proof of Proposition \ref{81 L r} goes in a similar way as in \cite{Charlier}. For each $t \in [0,1]$, we define
\begin{equation}
W_{t}(z) = \log \big( 1-t+te^{W(z)} \big),
\end{equation}
where the principal branch is taken for the log. For every $t \in [0,1]$, $W_{t}$ is analytic on a neighbourhood of $[-1,1]$ (independent of $t$) and is still Hölder continuous on $\mathcal{I}$. This deformation is the same as the one used in \cite{DeiftItsKrasovsky, BerWebbWong, Charlier}. Therefore, we can and do use the steepest descent analysis of Section \ref{Section: steepest descent} applied to the weight 
\begin{equation}
w_t(x)=e^{-nV(x)}e^{W_t(x)} \om(x).
\end{equation}
From Proposition \ref{prop: general diff identity}, we have the following differential identities 
\begin{align}
& \partial_{t} \log L_{n}(\vec{\alpha},\vec{\beta},V,W_{t}) =  \frac{1}{2\pi i}\int_{-1}^{+\infty}[Y^{-1}(x)Y^{\prime}(x)]_{21}\partial_{t}w_{t}(x)dx, \label{diff identity L t}\\
& \partial_{t} \log J_{n}(\vec{\alpha},\vec{\beta},V,W_{t}) =  \frac{1}{2\pi i}\int_{-1}^{1}[Y^{-1}(x)Y^{\prime}(x)]_{21}\partial_{t}w_{t}(x)dx.  \label{diff identity J t}
\end{align}
The rest of the proof consists of inverting the transformations $Y \mapsto T \mapsto S \mapsto R$ and evaluating certain integrals by contour deformations. These computations are identical to those done in \cite[Section 7]{Charlier} for Gaussian-type weights and we omit them here.
\section{Appendix}
We recall here some well-known model RH problems: the Airy model RH problem, whose solution is denoted $\Phi_{\mathrm{Ai}}$ and the Bessel model RH problem, whose solution is denoted $\Phi_{\mathrm{Be}}(\cdot) = \Phi_{\mathrm{Be}}(\cdot;\alpha)$, where the parameter $\alpha$ is such that $\Re \alpha >-1$.
\subsection{Airy model RH problem}\label{subsection: model Airy}
\begin{itemize}
\item[(a)] $\Phi_{\mathrm{Ai}} : \mathbb{C} \setminus \Sigma_{A} \rightarrow \mathbb{C}^{2 \times 2}$ is analytic, and $\Sigma_{A}$ is shown in Figure \ref{figAiry}.
\item[(b)] $\Phi_{\mathrm{Ai}}$ has the jump relations
\begin{equation}\label{jumps P3}
\begin{array}{l l}
\Phi_{\mathrm{Ai},+}(z) = \Phi_{\mathrm{Ai},-}(z) \begin{pmatrix}
0 & 1 \\ -1 & 0
\end{pmatrix}, & \mbox{ on } \mathbb{R}^{-}, \\

\Phi_{\mathrm{Ai},+}(z) = \Phi_{\mathrm{Ai},-}(z) \begin{pmatrix}
 1 & 1 \\
 0 & 1
\end{pmatrix}, & \mbox{ on } \mathbb{R}^{+}, \\

\Phi_{\mathrm{Ai},+}(z) = \Phi_{\mathrm{Ai},-}(z) \begin{pmatrix}
 1 & 0  \\ 1 & 1
\end{pmatrix}, & \mbox{ on } e^{ \frac{2\pi i}{3} }  \mathbb{R}^{+} , \\

\Phi_{\mathrm{Ai},+}(z) = \Phi_{\mathrm{Ai},-}(z) \begin{pmatrix}
 1 & 0  \\ 1 & 1
\end{pmatrix}, & \mbox{ on }e^{ -\frac{2\pi i}{3} }\mathbb{R}^{+} . \\
\end{array}
\end{equation}
\item[(c)] As $z \to \infty$, $z \notin \Sigma_{A}$, we have
\begin{equation}\label{Asymptotics Airy}
\Phi_{\mathrm{Ai}}(z) = z^{-\frac{\sigma_{3}}{4}}N \left( I + \sum_{k=1}^{\infty} \frac{\Phi_{\mathrm{Ai,k}}}{z^{3k/2}} \right) e^{-\frac{2}{3}z^{3/2}\sigma_{3}},
\end{equation}
where $N = \frac{1}{\sqrt{2}}\begin{pmatrix}
1 & i \\ i & 1
\end{pmatrix}$ and $\Phi_{\mathrm{Ai,1}} = \frac{1}{8}\begin{pmatrix}
\frac{1}{6} & i \\ i & -\frac{1}{6}
\end{pmatrix}$.

As $z \to 0$, we have
\begin{equation}
\Phi_{\mathrm{Ai}}(z) = \bigO(1).
\end{equation} 
\end{itemize}
The Airy model RH problem was introduced and solved in \cite{Deiftetal}. We have
\begin{figure}[t]
    \begin{center}
    \setlength{\unitlength}{1truemm}
    \begin{picture}(100,55)(-5,10)
        \put(50,40){\line(1,0){30}}
        \put(50,40){\line(-1,0){30}}
        \put(50,39.9){\thicklines\circle*{1.2}}
        \put(50,40){\line(-0.5,0.866){15}}
        \put(50,40){\line(-0.5,-0.866){15}}
        \qbezier(53,40)(52,43)(48.5,42.598)
        \put(53,43){$\frac{2\pi}{3}$}
        \put(50.3,36.8){$0$}
        \put(65,39.9){\thicklines\vector(1,0){.0001}}
        \put(35,39.9){\thicklines\vector(1,0){.0001}}
        \put(41,55.588){\thicklines\vector(0.5,-0.866){.0001}}
        \put(41,24.412){\thicklines\vector(0.5,0.866){.0001}}
        %\put(60,60){$\Sigma_{\widehat{\Psi}}$}
    \end{picture}
    \caption{\label{figAiry}The jump contour $\Sigma_{A}$ for $\Phi_{\mathrm{Ai}}$.}
\end{center}
\end{figure}
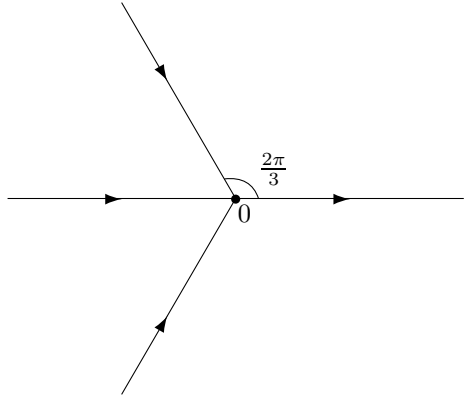
\begin{equation}
\Phi_{\mathrm{Ai}}(z) := M_{A} \times \left\{ \begin{array}{l l}
\begin{pmatrix}
\mbox{Ai}(z) & \mbox{Ai}(\omega^{2}z) \\
\mbox{Ai}^{\prime}(z) & \omega^{2}\mbox{Ai}^{\prime}(\omega^{2}z)
\end{pmatrix}e^{-\frac{\pi i}{6}\sigma_{3}}, & \mbox{for } 0 < \arg z < \frac{2\pi}{3}, \\
\begin{pmatrix}
\mbox{Ai}(z) & \mbox{Ai}(\omega^{2}z) \\
\mbox{Ai}^{\prime}(z) & \omega^{2}\mbox{Ai}^{\prime}(\omega^{2}z)
\end{pmatrix}e^{-\frac{\pi i}{6}\sigma_{3}}\begin{pmatrix}
1 & 0 \\ -1 & 1
\end{pmatrix}, & \mbox{for } \frac{2\pi}{3} < \arg z < \pi, \\
\begin{pmatrix}
\mbox{Ai}(z) & - \omega^{2}\mbox{Ai}(\omega z) \\
\mbox{Ai}^{\prime}(z) & -\mbox{Ai}^{\prime}(\omega z)
\end{pmatrix}e^{-\frac{\pi i}{6}\sigma_{3}}\begin{pmatrix}
1 & 0 \\ 1 & 1
\end{pmatrix}, & \mbox{for } -\pi < \arg z < -\frac{2\pi}{3}, \\
\begin{pmatrix}
\mbox{Ai}(z) & - \omega^{2}\mbox{Ai}(\omega z) \\
\mbox{Ai}^{\prime}(z) & -\mbox{Ai}^{\prime}(\omega z)
\end{pmatrix}e^{-\frac{\pi i}{6}\sigma_{3}}, & \mbox{for } -\frac{2\pi}{3} < \arg z < 0, \\
\end{array} \right.
\end{equation}
with $\omega = e^{\frac{2\pi i}{3}}$, Ai the Airy function and
\begin{equation}
M_{A} = \sqrt{2 \pi} e^{\frac{\pi i}{6}} \begin{pmatrix}
1 & 0 \\ 0 & -i
\end{pmatrix}.
\end{equation}
\subsection{Bessel model RH problem}\label{ApB}
\begin{itemize}
\item[(a)] $\Phi_{\mathrm{Be}} : \mathbb{C} \setminus \Sigma_{\mathrm{Be}} \to \mathbb{C}^{2\times 2}$ is analytic, where
$\Sigma_{\mathrm{Be}}$ is shown in Figure \ref{figBessel}.
\item[(b)] $\Phi_{\mathrm{Be}}$ satisfies the jump conditions
\begin{equation}\label{Jump for P_Be}
\begin{array}{l l} 
\Phi_{\mathrm{Be},+}(z) = \Phi_{\mathrm{Be},-}(z) \begin{pmatrix}
0 & 1 \\ -1 & 0
\end{pmatrix}, & z \in \mathbb{R}^{-}, \\

\Phi_{\mathrm{Be},+}(z) = \Phi_{\mathrm{Be},-}(z) \begin{pmatrix}
1 & 0 \\ e^{\pi i \alpha} & 1
\end{pmatrix}, & z \in e^{ \frac{2\pi i}{3} }  \mathbb{R}^{+}, \\

\Phi_{\mathrm{Be},+}(z) = \Phi_{\mathrm{Be},-}(z) \begin{pmatrix}
1 & 0 \\ e^{-\pi i \alpha} & 1
\end{pmatrix}, & z \in e^{ -\frac{2\pi i}{3} }  \mathbb{R}^{+}. \\
\end{array}
\end{equation}
\item[(c)] As $z \to \infty$, $z \notin \Sigma_{\mathrm{Be}}$, we have
\begin{equation}\label{large z asymptotics Bessel}
\Phi_{\mathrm{Be}}(z) = ( 2\pi z^{\frac{1}{2}} )^{-\frac{\sigma_{3}}{2}}N
\left(I+\sum_{k=1}^{\infty} \Phi_{\mathrm{Be},k} z^{-k/2}\right) e^{2z^{\frac{1}{2}}\sigma_{3}},
\end{equation}
where $\Phi_{\mathrm{Be},1} = \frac{1}{16}\begin{pmatrix}
-(1+4\alpha^{2}) & -2i \\ -2i & 1+4\alpha^{2}
\end{pmatrix}$.
\item[(d)] As $z$ tends to 0, the behaviour of $\Phi_{\mathrm{Be}}(z)$ is
\begin{equation}\label{local behaviour near 0 of P_Be}
\begin{array}{l l}
\displaystyle \Phi_{\mathrm{Be}}(z) = \left\{ \begin{array}{l l}
\begin{pmatrix}
\bigO(1) & \bigO(\log z) \\
\bigO(1) & \bigO(\log z) 
\end{pmatrix}, & |\arg z| < \frac{2\pi}{3}, \\
\begin{pmatrix}
\bigO(\log z) & \bigO(\log z) \\
\bigO(\log z) & \bigO(\log z) 
\end{pmatrix}, & \frac{2\pi}{3}< |\arg z| < \pi,
\end{array}  \right., & \displaystyle \mbox{ if } \Re \alpha = 0, \\[0.8cm]
\displaystyle \Phi_{\mathrm{Be}}(z) = \left\{ \begin{array}{l l}
\begin{pmatrix}
\bigO(1) & \bigO(1) \\
\bigO(1) & \bigO(1) 
\end{pmatrix}z^{\frac{\alpha}{2}\sigma_{3}}, & |\arg z | < \frac{2\pi}{3}, \\
\begin{pmatrix}
\bigO(z^{-\frac{\alpha}{2}}) & \bigO(z^{-\frac{\alpha}{2}}) \\
\bigO(z^{-\frac{\alpha}{2}}) & \bigO(z^{-\frac{\alpha}{2}}) 
\end{pmatrix}, & \frac{2\pi}{3}<|\arg z | < \pi,
\end{array} \right. , & \displaystyle \mbox{ if } \Re \alpha > 0, \\[0.8cm]
\displaystyle \Phi_{\mathrm{Be}}(z) = \begin{pmatrix}
\bigO(z^{\frac{\alpha}{2}}) & \bigO(z^{\frac{\alpha}{2}}) \\
\bigO(z^{\frac{\alpha}{2}}) & \bigO(z^{\frac{\alpha}{2}}) 
\end{pmatrix}, & \displaystyle \mbox{ if } \Re \alpha < 0.
\end{array}
\end{equation}
\end{itemize}
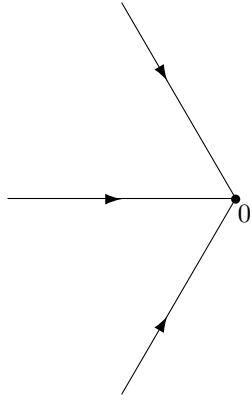
\begin{figure}[t]
    \begin{center}
    \setlength{\unitlength}{1truemm}
    \begin{picture}(100,55)(-5,10)
%        \dottedline{4}(1,0){30}
        \put(50,40){\line(-1,0){30}}
        \put(50,39.9){\thicklines\circle*{1.2}}
        \put(50,40){\line(-0.5,0.866){15}}
        \put(50,40){\line(-0.5,-0.866){15}}
        %\qbezier(53,40)(52,43)(48.5,42.598)
        %\put(53,43){$\frac{2\pi}{3}$}
        \put(50.3,36.8){$0$}
        %\put(65,39.9){\thicklines\vector(1,0){.0001}}
        \put(35,39.9){\thicklines\vector(1,0){.0001}}
        \put(41,55.588){\thicklines\vector(0.5,-0.866){.0001}}
        \put(41,24.412){\thicklines\vector(0.5,0.866){.0001}}
        %\put(60,60){$\Sigma_{\widehat{\Psi}}$}

    \end{picture}
    \caption{\label{figBessel}The jump contour $\Sigma_{B}$ for $P_{\mathrm{Be}}(\zeta)$.}
\end{center}
\end{figure}
This RH problem was introduced and solved in \cite{KMcLVAV}. Its unique solution is given by 
\begin{equation}\label{Phi explicit}
\Phi_{\mathrm{Be}}(z)=
\begin{cases}
\begin{pmatrix}
I_{\alpha}(2z^{\frac{1}{2}}) & \frac{ i}{\pi} K_{\alpha}(2z^{\frac{1}{2}}) \\
2\pi i z^{\frac{1}{2}} I_{\alpha}^{\prime}(2 z^{\frac{1}{2}}) & -2 z^{\frac{1}{2}} K_{\alpha}^{\prime}(2 z^{\frac{1}{2}})
\end{pmatrix}, & |\arg z | < \frac{2\pi}{3}, \\

\begin{pmatrix}
\frac{1}{2} H_{\alpha}^{(1)}(2(-z)^{\frac{1}{2}}) & \frac{1}{2} H_{\alpha}^{(2)}(2(-z)^{\frac{1}{2}}) \\
\pi z^{\frac{1}{2}} \left( H_{\alpha}^{(1)} \right)^{\prime} (2(-z)^{\frac{1}{2}}) & \pi z^{\frac{1}{2}} \left( H_{\alpha}^{(2)} \right)^{\prime} (2(-z)^{\frac{1}{2}})
\end{pmatrix}e^{\frac{\pi i \alpha}{2}\sigma_{3}}, & \frac{2\pi}{3} < \arg z < \pi, \\

\begin{pmatrix}
\frac{1}{2} H_{\alpha}^{(2)}(2(-z)^{\frac{1}{2}}) & -\frac{1}{2} H_{\alpha}^{(1)}(2(-z)^{\frac{1}{2}}) \\
-\pi z^{\frac{1}{2}} \left( H_{\alpha}^{(2)} \right)^{\prime} (2(-z)^{\frac{1}{2}}) & \pi z^{\frac{1}{2}} \left( H_{\alpha}^{(1)} \right)^{\prime} (2(-z)^{\frac{1}{2}})
\end{pmatrix}e^{-\frac{\pi i \alpha}{2}\sigma_{3}}, & -\pi < \arg z < -\frac{2\pi}{3},
\end{cases}
\end{equation}
where $H_{\alpha}^{(1)}$ and $H_{\alpha}^{(2)}$ are the Hankel functions of the first and second kind, and $I_\alpha$ and $K_\alpha$ are the modified Bessel functions of the first and second kind.

%	For the contour $\mathcal{C}$ of sections 7 and 8 and the function $a(z)=\sqrt[4]{\frac{z-1}{z+1}}$ we have
	
%	\begin{align}
%	&	\int_{\mathcal{C}} \frac{a^{-2}(z)}{(z-1)^2} \partial_s V_s(z) dz = 0 \\
%	&	\int_{\mathcal{C}} \frac{a^{-2}(z)}{(z-1)^3} \partial_s V_s(z) dz = -\frac{8\pi^2 i}{3} \times \begin{cases}
%	\psi(1)-\frac{2}{\pi} & \mbox{for a Gaussian-type potential} \ V \\
%	\psi(1)-\frac{1}{\pi} & \mbox{for a Laguerre or Jacobi-type potential} \ V \\	
%	\end{cases} \\
%	&	\int_{\mathcal{C}} \frac{a^{2}(z)}{(z-1)^2} \partial_sV(z) dz = -\frac{16\pi^2 i}{3} \times \begin{cases}
%	\psi(1)-\frac{2}{\pi} & \mbox{for a Gaussian-type potential} \ V \\
%	\psi(1)-\frac{1}{\pi} & \mbox{for a Laguerre or Jacobi-type potential} \ V 
%	\end{cases} \\
%	\end{align}
\newpage
\section*{Acknowledgements}
C.C. acknowledges support from the Swedish Research Council, Grant No. 2015-05430, and the European Research Council, Grant Agreement No. 682537, and R.G. acknowledges support by NSF-grant DMS-1700261. R.G. also acknowledges the support he received as a graduate student from the Department of Mathematical Sciences at IUPUI where this work was completed. We are grateful to Sergey Berezin, Maurice Duits and Peter Forrester for interesting discussions, and to Sergey Berezin and Alexander Bufetov for sharing with us a preliminary version of \cite{BerezinBufetov}. We also thank the referees for the numerous constructive remarks that have helped us to improve this paper.


\begin{thebibliography}{99}

\bibitem{Basor} E. Basor, Asymptotic formulas for Toeplitz determinants, {\em Trans.
Amer. Math. Soc.} {\bf 239}  (1978), 33--65.

\bibitem{Basor2} E. Basor, A localization theorem for Toeplitz determinants, {\em Indiana Univ. Math. J.} {\bf 28} (1979),
975--983.
		
\bibitem{BerWebbWong} N. Berestycki, C. Webb and M.D. Wong, Random Hermitian Matrices and Gaussian Multiplicative Chaos, \textit{Probab. Theory Related Fields}, \textbf{172} (2018), 103--189.

\bibitem{BerezinBufetov} S. Berezin and A. Bufetov, On the Rate of Convergence in the Central Limit Theorem
for Linear Statistics of Gaussian, Laguerre, and Jacobi Ensembles, arXiv:1904.09685.

\bibitem{Billingsley} P. Billingsley, Probability and measure. Anniversary edition. Wiley Series in Probability and Statistics, \textit{John Wiley and Sons, Inc., Hoboken, NJ} (2012).

\bibitem{BleIts} P. Bleher and A. Its, Asymptotics of the partition function of a random matrix model, \textit{Ann. Inst. Fourier} \textbf{55} (2005), 1943--2000.

\bibitem{BohigasPato1} O. Bohigas and M.P. Pato, Missing levels in correlated spectra, {\em Phys. Lett.} {\bf B595} (2004), 171--176.

\bibitem{BohigasPato2} O. Bohigas and M.P. Pato, Randomly incomplete spectra and intermediate statistics,  \textit{Phys. Rev. E} (3) {\bf 74} (2006).

\bibitem{BorGui} G. Borot and A. Guionnet, Asymptotic expansion of $\beta$ matrix models in the one-cut regime, \textit{Comm. Math. Phys.} \textbf{317} (2013), 447--483.

\bibitem{BB2018} T. Bothner and R. Buckingham, Large deformations of the Tracy-Widom distribution I: Non-oscillatory asymptotics, \textit{Comm. Math. Phys.} \textbf{359} (2018), 223--263. 

\bibitem{BDIK} T. Bothner, P. Deift, A. Its, and I. Krasovsky, On the asymptotic behavior of a log gas in the bulk scaling limit in the presence of a varying external potential I, \textit{Comm. Math. Phys.} {\bf 337} (2015), no. 3, 1397--1463. 

\bibitem{BIP2019} T. Bothner, A. Its and A. Prokhorov, On the analysis of incomplete spectra in random matrix theory through an extension of the Jimbo-Miwa-Ueno differential, \textit{Adv. Math.} \textbf{345} (2019), 483--551. 

\bibitem{BS}
A. B\"ottcher and B. Silbermann, Toeplitz operators and determinants generated by symbols with one Fisher-Hartwig singularity, {\em Math. Nachr.}  {\bf 127}  (1986), 95--123.

\bibitem{BD CLT JAMS} J. Breuer and M. Duits, Central limit theorems for biorthogonal ensembles and asymptotics of recurrence coefficients, \textit{J. Amer. Math. Soc.} \textbf{30} (2017), 27--66. 

\bibitem{BrezinHikami} E. Br\'{e}zin and S. Hikami, Characteristic polynomials of random matrices, \textit{Comm. Math. Phys.} \textbf{214} (2000), 111--135.
		
\bibitem{Charlier} C. Charlier, Asymptotics of Hankel determinants with a one-cut regular potential and Fisher-Hartwig singularities, {\em Int. Math. Res. Not.} {\bf 2018} (2018), 62 pp. https://doi.org/10.1093/imrn/rny009.

\bibitem{ChBessel} C. Charlier, Exponential moments and piecewise thinning for the Bessel point process, \textit{Int. Math. Res. Not. IMRN}, rnaa054, \texttt{https://doi.org/10.1093/imrn/ rnaa054}.

\bibitem{ChSine} C. Charlier, Large gap asymptotics for the generating function of the sine point process, to appear in \textit{Proc. Lond. Math. Soc.}, arXiv:1906.12130.

\bibitem{ChCl2} C. Charlier and T. Claeys, Thinning and conditioning of the Circular Unitary Ensemble, \textit{Random Matrices Theory Appl.} \textbf{6} (2017), 51 pp.

\bibitem{ChCl3} C. Charlier and T. Claeys,  Large gap asymptotics for Airy kernel determinants with discontinuities, \textit{Comm. Math. Phys.} \textbf{375} (2020), 1299--1339.

\bibitem{ClaeysFahsLambertWebb} T. Claeys, B. Fahs, G. Lambert and C. Webb, How much can the eigenvalues of a random Hermitian matrix fluctuate?, to appear in \textit{Duke Math J.}, arXiv:1906.01561.

\bibitem{ClaeysGravaMcLaughlin} T. Claeys, T. Grava and K. T-R McLaughlin, Asymptotics for the partition function in two-cut random matrix models, \textit{Comm. Math. Phys.} \textbf{339} (2015), 513--587.

\bibitem{Deift} P. Deift, Orthogonal polynomials and random matrices: A Riemann-Hilbert approach. Amer. Math. Soc. \textbf{3} (2000).

\bibitem{DeiftOpe} P. Deift, Integrable operators, \textit{Am. Math. Soc. Transl.}, \textbf{189} (1999), 69--84.

\bibitem{DIK} P. Deift, A. Its and I. Krasovky,
Asymptotics of Toeplitz, Hankel, and Toeplitz+Hankel determinants with Fisher-Hartwig singularities, \textit{Ann. Math.} \textbf{174} (2011), 1243--1299.

\bibitem{DeiftItsKrasovsky} P. Deift, A. Its and I. Krasovsky, On the asymptotics of a Toeplitz determinant with singularities, \textit{MSRI Publications} {\bf 65} (2014), Cambridge University Press.

\bibitem{Deiftetal}
P. Deift, T. Kriecherbauer, K.T-R. McLaughlin, S. Venakides, and X. Zhou, Strong asymptotics of orthogonal polynomials with respect to exponential weights, {\em Comm. Pure Appl. Math.} \textbf{52} (1999), 1491--1552.

\bibitem{Deiftetal2}
P. Deift, T. Kriecherbauer, K.T-R. McLaughlin, S. Venakides, and X. Zhou, Uniform asymptotics for polynomials orthogonal with respect to varying exponential weights and applications to universality questions in random matrix theory,
{\em Comm. Pure Appl. Math.} \textbf{52} (1999), 1335--1425.

\bibitem{ErcMcL} N.M. Ercolani and K.T-R. McLaughlin, Asymptotics of the partition function for random matrices via Riemann-Hilbert techniques and applications to graphical enumeration, \textit{Int. Math. Res. Not.} \textbf{14} (2003), 755--820. 

\bibitem{FisherHartwig} M.E. Fisher and R.E. Hartwig, Toeplitz determinants: Some applications, theorems, and conjectures, \textit{Advan. Chem. Phys.} \textbf{15} (1968), 333--353.

\bibitem{FokasItsKitaev} A.S. Fokas, A.R. Its and A.V. Kitaev, The isomonodromy approach to matrix models in
2D quantum gravity, \textit{Comm. Math. Phys.} \textbf{147} (1992), 395–430.

\bibitem{ForFra} P.J. Forrester and N.E. Frankel, Applications and generalizations of Fisher-Hartwig asymptotics, \textit{J. Math. Phys.} \textbf{45} (2004), 2003--2028. 

\bibitem{FRW2017} P.J. Forrester, A. Rahman and N.S. Witte, Large $N$ expansions for the Laguerre and Jacobi $\beta$-ensembles from the loop equations, \textit{J. Math. Phys.} \textbf{58} (2017), 25 pp.

\bibitem{FouMarSou} A. Foulqui\'{e} Moreno, A. Martinez-Finkelshtein and V.L. Sousa, On a conjecture of A. Magnus concerning the asymptotic behavior of the recurrence coefficients of the generalized Jacobi polynomials, \textit{J. Approx. Theory}, \textbf{162}(2010), 807--831.

\bibitem{Gakhov}  F. Gakhov, \textit{Boundary Value Problems}, Pergamon Press, Oxford (1966). Reprinted by Dover Publications, New York (1990). 

\bibitem{Garoni} T.M. Garoni, On the asymptotics of some large Hankel determinants generated by Fisher-Hartwig symbols defined on the real line, \textit{J. Math. Phys.} \textbf{46} (2005), 19 pp.
		
\bibitem{ItsKrasovsky} A. Its and I. Krasovsky, Hankel determinant and orthogonal polynomials for the Gaussian weight with a jump, \textit{Contemporary Mathematics} \textbf{458} (2008), 215--248.

\bibitem{Johansson2} K. Johansson, On fluctuations of eigenvalues of random Hermitian matrices, \textit{ Duke Math. J.} \textbf{91} (1998), 151–-204.

\bibitem{Kahane} J-P. Kahane, Sur le chaos multiplicatif, \textit{Ann. Sci. Math. Québec} \textbf{9} (1985), 105--150.

\bibitem{KeaSna} J.P. Keating and N.C. Snaith, Random matrix theory and $\zeta(1/2+it)$, {\em Comm. Math. Phys.} {\bf 214} (2001), 57--89.
		
\bibitem{Krasovsky} I. Krasovsky, Correlations of the characteristic polynomials in the Gaussian unitary ensemble or a singular Hankel determinant, \textit{Duke Math J.} \textbf{139} (2007), 581--619. 

\bibitem{KMcLVAV} A.B.J. Kuijlaars, K.T-R. McLaughlin, W. Van Assche and M. Vanlessen, The Riemann-Hilbert approach to strong asymptotics for orthogonal polynomials on $[-1,1]$, {\it Adv. Math.} \textbf{188} (2004), 337--398.

\bibitem{Lenard} A. Lenard, Some remarks on large Toeplitz determinants, \textit{Pacific J. Math.} \textbf{42} (1972), 137--145. 

\bibitem{Mehta} M.L. Mehta, {\em Random matrices, Third Edition}, Pure and Applied Mathematics Series \textbf{142} (2004), Elsevier Academic Press.
		
\bibitem{Vanlessen Laguerre} M. Vanlessen, 
Strong asymptotics of Laguerre-type orthogonal polynomials and applications in random matrix theory, \textit{Constr. Approx.} \textbf{25} (2007), 125--175.
		
\bibitem{NIST} F.W.J. Olver, D.W. Lozier, R.F. Boisvert and C.W. Clark, {\em NIST handbook of mathematical functions} (2010), Cambridge University Press.

\bibitem{RhodesVargas} R. Rhodes and V. Vargas, Gaussian multiplicative chaos and applications: a review, \textit{Probab. Surv.} \textbf{11} (2014), 315--392.

\bibitem{SaTo} E.B. Saff and V. Totik, {\em Logarithmic Potentials with External Fields}, Springer-Verlag (1997).
		
\bibitem{Szego OP} G. Szeg\H{o}, \textit{Orthogonal polynomials}, AMS Colloquium Publ. \textbf{23} (1959), New York: AMS.	

\bibitem{Webb} C. Webb, The characteristic polynomial of a random unitary matrix and Gaussian multiplicative chaos - the $L^{2}$-phase, \textit{Electron. J. Probab.} \textbf{20} (2015), 21 pp.

\bibitem{Widom} H. Widom, Toeplitz determinants with singular generating functions, \textit{Amer. J. Math.} \textbf{95} (1973), 333--383. 
\end{thebibliography}
\end{document}